\newtheoremstyle{space}
  {15pt} 
  {8pt} 
  {\itshape} 
  {} 
  {\bfseries} 
  {.} 
  {.5em} 
  {} 
\theoremstyle{space}
\newtheorem{proposition}{Proposition}
\newtheorem{theorem}{Theorem}
\theoremstyle{plain}
\newtheorem{lemma}{Lemma}
\newtheorem{assumption}{Assumption}
\theoremstyle{remark}
\newtheorem{rem}{Remark}[section]
\crefname{assumption}{assumption}{assumptions}
\Crefname{Assumption}{Assumption}{Assumptions}
\crefname{rem}{Remark}{Remarks}
\DeclareMathOperator*{\argmin}{argmin}
\newcommand{\bx}{\mathbf{x}}
\newcommand{\bX}{\mathbf{X}}
\newcommand{\bF}{\mathbf{F}}
\newcommand{\bv}{\mathbf{v}}
\newcommand{\bbv}{\boldsymbol{v}}
\newcommand{\bphi}{\boldsymbol\phi}
\newcommand{\bbeta}{\boldsymbol\beta}
\newcommand{\bLambda}{\boldsymbol\Lambda}
\newcommand{\blambda}{\boldsymbol\lambda}
\newcommand{\bxii}{\boldsymbol\xi}
\newcommand{\bSigma}{\boldsymbol\Sigma}
\newcommand{\bGamma}{\boldsymbol\Gamma}
\newcommand{\bgamma}{\boldsymbol\gamma}
\newcommand{\bchi}{\boldsymbol\chi}
\newcommand{\bO}{\boldsymbol{\mathcal{O}}}
\newcommand{\cred}{\textcolor{black}}
\title{Tail-robust estimation of factor-adjusted vector autoregressive models for high-dimensional time series}
\date{\today}
\author{Dylan Dijk \and Haeran Cho}
\begin{document}

\maketitle

\begin{abstract}
We study the problem of modelling high-dimensional, heavy-tailed time series data via a factor-adjusted vector autoregressive (VAR) model, which simultaneously accounts for pervasive co-movements of the variables by a handful of factors, as well as their remaining interconnectedness using a sparse VAR model.
\cred{To handle heavy tails, we propose an element-wise data truncation step followed by a two-stage estimation procedure for estimating the latent factors and the VAR parameter matrices. 
Assuming the existence of the $(2 + 2\epsilon)$-th moment only for some $\epsilon \in (0, 1)$, we derive the rates of estimation which, making explicit the effect of heavy tails through $\epsilon$, are comparable to the rates attainable in light-tailed settings as $\epsilon \to 1$.}
Numerically, we demonstrate the competitive performance of the proposed estimators on simulated datasets and in an application to forecasting macroeconomics indicators.
\end{abstract}

\vspace{0.2cm}

\textbf{Keywords:} Heavy tail, high-dimensional time series, factor modelling, vector autoregression, truncation

\section{Introduction} \label{sec:intro}

\cred{Vector autoregressive (VAR) models \citep{Lutkephol2005} are popularly adopted for capturing temporal and cross-sectional dependencies among multiple time series \citep{yilmaz2009,Gorrostieta2012,Bay2004}.}
High-dimensional time series are routinely collected in many areas such as finance, economics, medicine, engineering, natural and social sciences. As the dimension of the time series grows, the number of parameters to estimate in a VAR model grows quadratically, and its estimation quickly becomes a high-dimensional problem. Therefore, the assumption of sparsity is often made in conjunction with $\ell_1$-type regularisation-based methods in the literature \citep{Basu2015, Han2015}.

The sparsity assumption, however, may not be suitable for datasets typically observed in economics and finance, which exhibit strong cross-sectional correlations that cannot be accounted for by a small number of non-zero VAR coefficient values. In addition, theoretical investigation into the consistency of the VAR estimation often assumes that the largest eigenvalue of the spectral density matrix is uniformly bounded across the frequency, which is at odds with the behaviour of many real datasets where VAR modelling is popularly adopted. 
As an illustration, Figure~\ref{fig:eigen_fredmd} plots the largest eigenvalues of the sample covariance matrices of increasing dimensions obtained from the macroeconomic data example in Section~\ref{sec:real-data} where, with the number of cross-sections increasing, the largest eigenvalue of the thus-obtained sample covariance matrix grows linearly with the dimensionality.
Consequently, when modelling the dynamic dependence among a large number of time series, an increasingly popular approach is to adjust for the presence of latent factors driving the strong cross-sectional correlations, before a sparse model is fitted, which lends interpretability as well as better accounting for the characteristics of real-life datasets \citep{Fan2020,Fan2023,Fan2024,Barigozzi2024,Krampe2025}.

\begin{figure}[h!t!]
    \centering
    \includegraphics[width=0.4\textwidth]{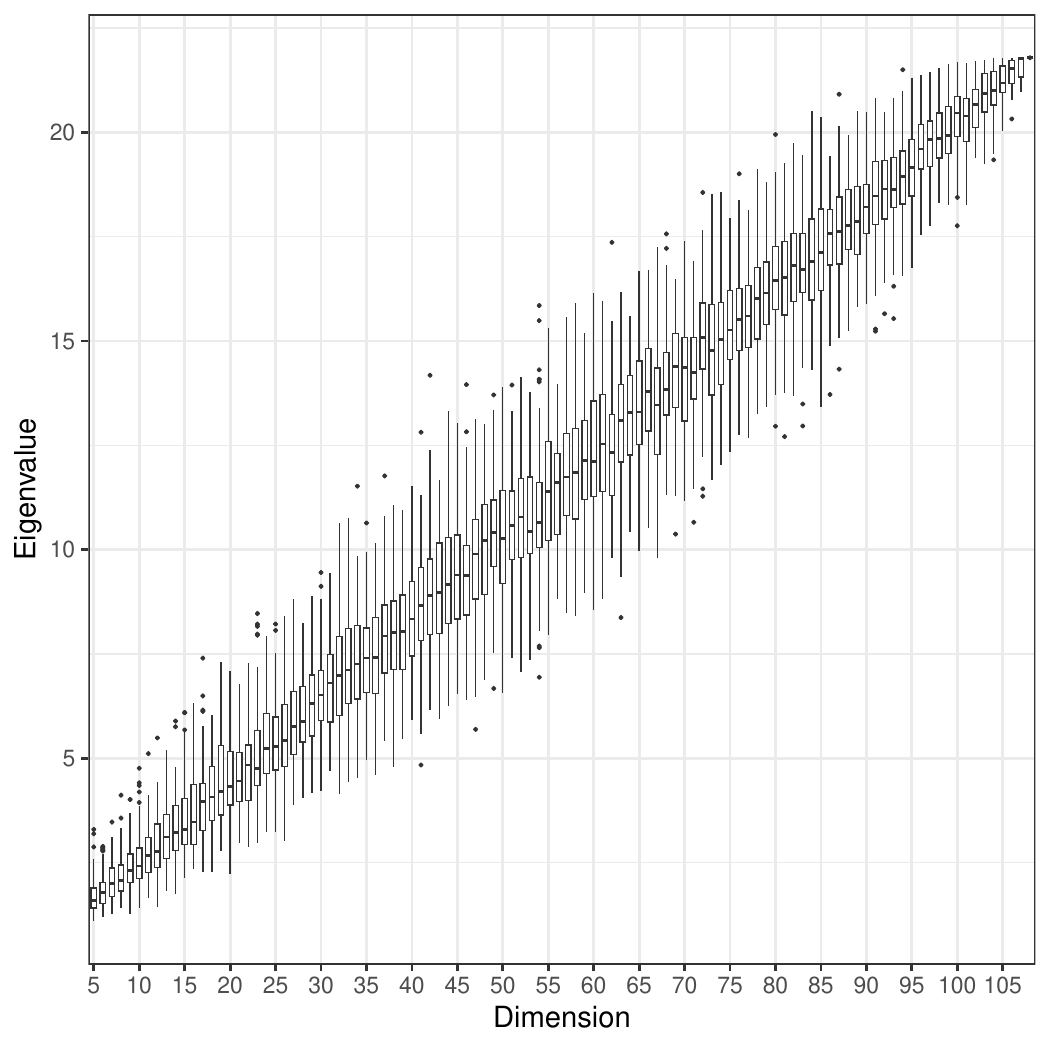}
    \caption{The largest eigenvalue ($y$-axis) of the covariance matrix estimated from the US macroeconomic dataset analysed in \Cref{sec:real-data} (February 1960 to November 2023, $n = 767$) with subsets of cross-sections randomly sampled $100$ times for each given dimension $p \in \{5, 6, \ldots, 108\}$ ($x$-axis).}
    \label{fig:eigen_fredmd}
\end{figure}

In addition to strong cross-sectional correlations, another common characteristic of high-dimensional time series is heavy-tailedness \citep{Fan2021}. It is well-documented that the performance of popular $\ell_1$-regularisation methods such as the Lasso, deteriorates in the presence of heavy tails, see e.g.\ \citet{Wu2016}. 
Similarly, principal component analysis, which is commonly adopted for time series factor modelling, is known to be sensitive to the presence of extreme observations due to heavy-tailedness \citep{Kristensen2014}.

To handle the issues arising from heavy tails, tail-robust estimation methods have been studied in the literature on factor modelling \citep{He2022,He2023,He2025}. \cred{We also mention \citet{Trucos2021} addressing the adjacent yet distinct problem of robust factor modelling and forecasting of high-dimensional time series in the presence of outliers; for the discussion on the distinction between the models for outliers and heavy tails, we refer to \citet{Raymaekers2024}.} 
In sparse high-dimensional linear modelling, the Huber loss has been shown to be robust against both heavy-tailedness and outliers \citep{Li2011, Fan2017, Wang2021}. 
For sparse VAR estimation, \citet{Barbaglia2020} investigate the maximum likelihood estimator assuming that the innovations follow Student's $t$-distributions, 
while \citet{Qiu2015} develop a quantile-based robust estimator under the assumption of an elliptical VAR model.
There are also methods based on data truncation, which has been utilised to mitigate the effects of heavy tails in high-dimensional covariance estimation \citep{Ke2019}, trace regression under the low-rank assumption \citep{Fan2021} or generalised linear modelling \citep{Han2023}.
In the context of VAR modelling, \citet{Liu2021} and \citet{Wang2023} investigate the truncation-based estimators via Lasso and constrained $\ell_1$-minimisation (a.k.a.\ Dantzig selector), under a weak assumption requiring the existence of a finite number of moments only.

\cred{In this paper, we study the problem of factor-adjusted VAR modelling under a weak moment condition, simultaneously addressing the strong cross-sectional dependence, heavy-tailedness and high dimensionality of modern data.
Methodologically, we propose to extend the two-stage estimation procedure of \citet{Barigozzi2024}, which builds upon the consistent estimation of (auto)covariance of the data for estimating the latent factor structure and VAR coefficients.
Specifically, for tail-robust estimation of the second-order properties, we incorporate a simple yet effective data truncation step for handling extreme observations attributed to heavy tails.}

\cred{Assuming the existence of $(2 + 2 \epsilon)$-th moment only, for some $\epsilon \in (0, 1)$, our theoretical investigation fully characterises the performance of the resultant estimators and the choice of the truncation parameter in terms of $\epsilon$.
We further show that as $\epsilon \to 1$, the rates attained by the proposed estimators are comparable to that of the state-of-the-art in time series factor modelling \citep{Bai2003, barigozzi2022asymptotic}, which requires the existence of the fourth moment or more, or in high-dimensional VAR modelling \citep{Basu2015} that calls for (sub-)Gaussianity. 
Additionally, our investigation into the second-stage Lasso estimation of the VAR model is of interest on its own;
in the special case of no factors, we derive a rate for the tail-robust $\ell_1$-regularised estimator which is sharper than those derived in the existing literature \citep{Wang2023, Liu2021}.}

The rest of the paper is organised as below.
\Cref{sec:model_est} introduces the factor-adjusted VAR model. 
\Cref{sec:method} introduces the two-stage estimation method for the factor structure and the VAR model, and \Cref{sec:theory} establishes their theoretical properties under a weak moment assumption.
\Cref{sec:simulations} demonstrates the good performance of the proposed estimators on simulated datasets, and \Cref{sec:real-data} presents an application to a macroeconomic dataset which illustrates the efficacy of the tail-robust method. Proofs of the theoretical results are given in the Appendix, and all the codes used to produce the numerical results are available at: \url{https://github.com/DylanDijk/truncFVAR}.

\paragraph{Notations.} 
By \textbf{I}, \textbf{O}, and $\mathbf{0}$, we denote an identity matrix, a matrix of zeros, and a vector of zeros, whose dimensions depend on the context. For any matrix $\bX  = [x_{ij}] \in \mathbb{R}^{n \times m}$, we denote the $i$-th 
column vector by $\bX_{(i)}$, and the $i$-th 
row vector by $\bX_{i}$ so that we can write $\bX = [\bX_{(1)}, \dots, \bX_{(m)}]  = [\bX_1, \dots, \bX_n]^\top$. 
When $n = m$, we denote the largest and smallest eigenvalues of $\bX$ in moduli by $\Lambda_{\max}(\bX)$ and $\Lambda_{\min}(\bX)$, respectively. The trace of a square matrix $\bX$ is denoted by $\text{Tr}(\bX) = \sum_{i=1}^n x_{ii}$. For a complex-valued matrix $\bX$, $\bX^*$ denotes its conjugate transpose.
For a random variable $X$, its $\nu$-th moment is denoted by $\|X\|_\nu = \mathbb{E}[|X|^{\nu}]^{1/\nu}$ for $\nu \ge 1$. The element-wise matrix $q$-norm is denoted by $|\bX|_q = {\left(\sum _{i=1}^{n}\sum _{j=1}^{m}|x_{ij}|^{q}\right)}^{1/q}$. 
Hence, $|\bX|_2 = \|\bX\|_F$ and  $|\bX|_{\infty}$, denote the Frobenius norm and entry-wise max norm respectively. 
The matrix norms induced by vector $q$-norms are denoted by $\|\bX\|_q$; e.g.\ $\|\bX\|_2$ denotes the spectral norm, and $\|\bX\|_\infty = \max_{1 \le i \le n} \sum_{j = 1}^m \vert x_{ij} \vert$. For any two real-valued sequences $x_k$ and $y_k$, we write $x_k \gtrsim y_k$ if there exists a $C > 0$ such that $x_k \geq Cy_k$ for all $k$. We write $x_k \asymp y_k$ if $x_k \gtrsim y_k$ and $y_k \gtrsim x_k$. For any two real numbers $x$ and $y$, let $x \wedge y$ denote their minimum, and $x \vee y$ denote their maximum. For sequences of random variables $\{X_n\}$ and positive constants $\{a_n\}$, we write $X_n = O_P(a_n)$ if $X_n/a_n$ is stochastically bounded.


\section{Factor-adjusted vector autoregressive model}\label{sec:model_est}


A zero-mean, $p$-variate process $\{\bxii_t\}_{t \in \mathbb{Z}}$ follows a VAR($d$) model if it
satisfies 
\begin{gather} \label{eq:VAR_model}
     \bxii_t = \mathbf{A}_1\bxii_{t-1} + \ldots + \mathbf{A}_d\bxii_{t-d} + \boldsymbol{\varepsilon}_t \, ,
\end{gather}
where $\mathbf{A}_\ell \in \mathbb{R}^{p \times p}$, $1 \leq \ell \leq d$, determine how the values of the vector time series depend on its past. The innovations $\boldsymbol{\varepsilon}_t$ are modelled as a zero-mean white noise process so that $\mathbb{E}(\boldsymbol{\varepsilon}_t) = \mathbf{0}$ and $\mathbb{E}(\boldsymbol{\varepsilon}_t\boldsymbol{\varepsilon}_s^\top) = \bSigma_{\boldsymbol{\varepsilon}} \cdot \mathbb{I}_{\{s = t\}}$ for some positive definite matrix $\bSigma_{\boldsymbol\varepsilon} \in \mathbb{R}^{p \times p}$. 
As the number of parameters in the VAR model grows quadratically with $p$, the assumption of sparsity is commonly found in the literature on high-dimensional VAR modelling.
As noted in the Introduction, however, such an assumption may be inadequate in modelling large time series data exhibiting pervasive cross-sectional correlations, and thus limit the applicability of the model to large time series data arising in economic and financial applications. 


Several approaches have emerged where strong cross-sectional and/or serial correlations are captured by a handful of latent factors, a.k.a.\ factor-adjusted VAR modelling.
In this framework, we observe $\bX_t = (X_{1t}, \dots, X_{pt})^\top$, $1 \leq t \leq n$, that admits the following decomposition
\begin{align}
\mathbf{X}_t = \bLambda \bF_t + \bxii_t = \bchi_t + \bxii_t,
\label{eq:fac_adj_model_def}
\end{align}
where $\bchi_t$ and $\bxii_t$ are latent with $\bxii_t$ following the VAR model as in~\eqref{eq:VAR_model} and $\bchi_t$ is a factor-driven component.
Following \citet{Fan2023} and \citet{Krampe2025}, we adopt a static factor modelling approach where $\bLambda \in \mathbb{R}^{p \times r}$
denotes the matrix of factor loadings, and $\{\bF_t\}_{t \in \mathbb{Z}}$ an $r$-variate vector time series of latent factors which is not necessarily second-order stationary, with the number of factors $r$ fixed. 
We mention that \citet{Barigozzi2024} explore factor-adjusted VAR modelling based on the generalised dynamic factor model (Forni et al. \cite*{Forni2000},\cite*{Forni2015}).
Throughout we assume that the VAR process $\{\bxii_t\}_{t \in \mathbb{Z}}$ is stable, i.e.\ 
all the solutions of the equation \( |\mathbf{\mathcal{A}}(z)| = 0 \) are outside the unit circle where \( \mathbf{\mathcal{A}}(z) = \mathbf{I}_p - \mathbf{A}_1 z - \cdots - \mathbf{A}_d z^d \) is referred to as the AR matrix polynomial in \( z \) \citep{Lutkephol2005}.



The loading matrix and factors are generally not identifiable since for any invertible matrix $\mathbf{R} \in \mathbb{R}^{r \times r}$, we have $\bchi_t = \bLambda \mathbf{R} \mathbf{R}^{-1} \bF_t$. 
Also, neither $\bchi_t$ nor $\bxii_t$ are observable, which calls for a condition that ensures their (asymptotic) identifiability.
To this end, we make the following assumptions. 

\begin{assumption}\label{assump:fac_adj_fac_loading}
\begin{enumerate}[topsep=0pt, label = (\roman*)]
    \item \label{loading_assum} 
    For all $p \ge 1$, $p^{-1} \bLambda^\top \bLambda$ is a diagonal matrix with positive and distinct diagonal entries, and $\mathrm{Cov}(\bF_t) = \mathbf{I}$ for all $t$.
    \item \label{finite_loadings} There exists a constant $M>0$ such that $|\bLambda|_{\infty} \leq M$.
\end{enumerate}
\end{assumption}

\Cref{assump:fac_adj_fac_loading}~\ref{loading_assum} implies that the eigenvalues of $\mathrm{Cov}(\bchi_t)$ diverge linearly in $p$, which in turn 
indicates that the factors are pervasive across the cross-sections of $\bX_t$. 
The identification condition is comparable to that found in \citet{POETFan2011}, and is slightly stronger than those found in \citet{Bai2003} or \citet{stock2002forecasting}. \Cref{assump:fac_adj_fac_loading}~\ref{finite_loadings} is a standard condition, see Assumption~B in \citet{Bai2003}.



\begin{assumption}
\begin{enumerate}[topsep=0pt, label = (\roman*)]\label{assump:VAR}
    \item The smallest and largest eigenvalues of $\mathrm{Cov}(\boldsymbol{\varepsilon}_t) = \bSigma_{\boldsymbol{\varepsilon}}$ are bounded as $0 < \Lambda_{\min}(\bSigma_{\boldsymbol{\varepsilon}}) \le  \Lambda_{\max}(\bSigma_{\boldsymbol{\varepsilon}}) < \infty$.
    \item Denoting $\mu_{\min}(\mathbf{\mathcal{A}}) = \min_{|z| = 1} \Lambda_{\max}(\mathbf{\mathcal{A}}^*(z)\mathbf{\mathcal{A}}(z))$ and $\mu_{\max}(\mathbf{\mathcal{A}}) = \max_{|z| = 1} \Lambda_{\max}(\mathbf{\mathcal{A}}^*(z)\mathbf{\mathcal{A}}(z))$, we have
    $0 < \mu_{\min}(\mathbf{\mathcal{A}}) \le  \mu_{\max}(\mathbf{\mathcal{A}})<\infty$.
\end{enumerate}
\end{assumption}

\citet[][Proposition 2.2]{Basu2015} provide lower and upper bounds for $\mu_{\min}(\mathbf{\mathcal{A}})$ and $\mu_{\max}(\mathbf{\mathcal{A}})$ in terms of the coefficients matrices $\mathbf{A}_{\ell}$'s.
\Cref{assump:VAR} ensures that the eigenvalues of the spectral density matrix of $\{\bxii_t\}_{t \in \mathbb{Z}}$ are bounded away from zero and finite uniformly in the frequency.
Specifically, let us denote the autocovariance matrix of $\{\boldsymbol{\xi}_t\}_{t \in \mathbb{Z}}$ at some lag $h \in \mathbb{Z}$ by $\boldsymbol{\Gamma}_{\boldsymbol{\xi}}(h) = \mathbb{E}(\boldsymbol{\xi}_t \boldsymbol{\xi}_{t - h}^\top)$, with $\bGamma_{\bxii}(-h) = \bGamma_{\bxii}(h)^\top$.
Then, the spectral density matrix of $\{\bxii_t\}_{t \in \mathbb{Z}}$ satisfies
\begin{align}
\boldsymbol{\Sigma}_{\bxii}(\omega) = \frac{1}{2\pi} \sum_{h \in \mathbb{Z}} \bGamma_{\boldsymbol{\xi}}(h) e^{\iota \omega h}, \quad \omega \in[-\pi, \pi],
\end{align}
where $\iota = \sqrt{-1}$.
Further, let us write the smallest and largest eigenvalues of $\boldsymbol{\Sigma}_{\bxii}(\omega)$ across all frequencies as
\begin{gather*}
\mathfrak{m}\left(\boldsymbol{\Sigma}_{{\bxii}}\right)=\underset{\omega \in[-\pi, \pi]}{\operatorname{essinf}} \Lambda_{\min }\left(\boldsymbol{\Sigma}_{\bxii}(\omega)\right) \text{ \ and \ }
\mathcal{M}\left(\boldsymbol{\Sigma}_{{\bxii}}\right)=\underset{\omega \in[-\pi, \pi]}{\operatorname{esssup}} \Lambda_{\max }\left(\boldsymbol{\Sigma}_{\bxii}(\omega)\right) \, .
\end{gather*}
Then, \Cref{assump:VAR} implies that there exists some constants $0 < m_{\bxii} \leq M_{\bxii}$ such that 
\begin{align}
\label{eq:spec_bounds}
  m_{\bxii} \leq \frac{1}{2 \pi} \frac{\Lambda_{\min }\left(\boldsymbol{\Sigma}_{\varepsilon}\right)}{\mu_{\max }(\mathcal{A})} \leq \mathfrak{m}\left(\boldsymbol{\Sigma}_{\bxii}\right)  \leq \mathcal{M}\left(\boldsymbol{\Sigma}_{\bxii}\right) & \leq \frac{1}{2 \pi} \frac{\Lambda_{\max }\left(\boldsymbol{\Sigma}_{\varepsilon}\right)}{\mu_{\min }(\mathcal{A})} \leq M_{\bxii} \, ,
\end{align}
see \citet[][Equation (2.6)]{Basu2015}.
In particular, this ensures that there exists a constant $C>0$ satisfying
\begin{align}
\label{eq:gamma:xi:bound}
 \|\bGamma_{\bxii}(h)\|_2 \leq C \text{ \ for all \ } h \in \mathbb{Z}. 
\end{align}
Altogether, jointly under \Cref{assump:fac_adj_fac_loading} and \Cref{assump:VAR}, the leading $r$ eigenvalues of $\bGamma_\bx = n^{-1}\sum_{t=1}^n \mathbb{E}[ \bX_t \bX_{t}^\top ]$,
are diverging linearly in $p$, while the remaining eigenvalues are bounded for all $p \ge 1$, by Weyl's inequality. 
Consequently, $\bchi_t$ and $\bxii_t$ are asymptotically identifiable as $p \to \infty$, through the presence of a gap in the eigenvalues of $\bGamma_\bx$. 
Then, imposing the sparsity on the VAR coefficients is justifiable under the proposed model~\eqref{eq:fac_adj_model_def}, since pervasive cross-sectional correlations are accounted for by the factors and only the remaining idiosyncratic dynamic dependence is captured by the non-zero coefficients of $\mathbf{A}_\ell, \, 1 \le \ell \le d$.

For the investigation of the tail-robust estimation method to be proposed in the next section, we make the following condition:
\begin{assumption}
\label{assum:tail}
There exist some constants $\epsilon \in (0, 1)$ and $\cred{M_{\epsilon}} > 0$ 
such that 
\begin{align*}
\max_{1 \le t \le n} \left\{ \max_{1 \le i \le p} \mathbb{E}(|\xi_{it}|^{2 + 2 \epsilon}), \, \max_{1 \le j \le r} \mathbb{E}(|F_{jt}|^{2 + 2 \epsilon}) \right\} \leq \cred{M_{\epsilon}}. 
\end{align*}
\end{assumption}
Together with 
Assumption~\ref{assump:fac_adj_fac_loading}~\ref{finite_loadings}, Assumption~\ref{assum:tail} implies that $\mathbb{E}[ \vert X_{it} \vert^{2 + 2\epsilon} ] \lesssim \cred{M_{\epsilon}}$ with $2 + 2\epsilon \in (2, 4)$. 
This permits the data to be considerably heavier-tailed compared to the commonly found conditions in the factor modelling literature, which require the existence of the fourth moment \citep{Bai2003} or even sub-Weibull-type tail behaviour \citep{POETFan2011}. In the high-dimensional VAR modelling literature, Gaussianity of the time series is often assumed \citep{Basu2015, Han2015} with the exception of \citet{Wang2023} and \citet{Liu2021} where conditions analogous to Assumption~\ref{assum:tail} are made. 

\begin{rem}
\label{rem:tail}
\cred{We highlight that our goal is to perform tail-robust estimation of the VAR parameters in the model~\eqref{eq:VAR_model} under the weak moment condition in Assumption~\ref{assum:tail}, which involves estimating
the (auto)covariance of $\{\mathbf{X}_t\}_{t \in [n]}$ and (unobserved) $\{\bxii_t\}_{t \in [n]}$. 
As the quantities of interest are governed by the second moments of the time series, the method we propose in Section~\ref{sec:method} employs truncating off the impact of extreme observations attributed to heavy tails, an approach successfully applied to related problems in high-dimensional time series analysis, such as large covariance matrix estimation \citep{Ke2019}, VAR estimation \citep{Wang2023, Liu2021} and time series tensor factor modelling \citep{barigozzi2024tail}.
Compared to \citet{Wang2023, Liu2021}, the factor-adjusted VAR model in~\eqref{eq:fac_adj_model_def} provides a more realistic approach for modelling strong co-movements of high-dimensional time series as argued in the Introduction, while posing additional challenges from the latency of $\{\bxii_t\}_{t = 1}^n$; also, unlike \citet{barigozzi2024tail}, we permit the factor time series to be stochastic rather than deterministic, and focus on learning the the dynamic dependence structure of the (unobservable) idiosyncratic component after removing latent factors.}

\cred{There are adjacent yet distinct problems; for example, when it is of 
interest to model the extreme outcomes or tail risks directly, one may perform quantile VAR modelling \citep{White2015,ando2022,Carriero2024,Huang2025}.
More generally, the literature on robust statistics is vast and various notions exist to characterise the robustness against perturbations and outliers \citep{huber_robust_1981, hampel_robust_2005}; we refer to \citet{loh2025theoretical} for a review of the recent developments in high-dimensional robust statistics. }
\end{rem}

\section{Methodology} \label{sec:method}

Under the factor-adjusted VAR model in~\eqref{eq:fac_adj_model_def}, to estimate the coefficient matrices $\mathbf{A}_\ell, \, 1 \le \ell \le d$, driving the dynamics in the VAR process $\{\bxii_t\}_{t \mathbb{Z}}$, we first estimate the factor-driven common component~$\bchi_t, \, 1 \le t \le n$. 
To handle heavy tails, we propose to truncate the data prior to carrying out the principal component-based estimation of the factor space. 
The estimated common component is then subtracted from the truncated data to yield the estimator of $\bxii_t, \, 1 \le t \le n$, from which the VAR coefficients are estimated via Lasso. 
Sections~\ref{sec:truncation_and_mod_est} and~\ref{sec:var_est} describe these steps while regarding the number of factors $r$ and the VAR order $d$ as known, and Section~\ref{sec:CV_tune_tau} discusses the selection of $r$, $d$ and other tuning parameters including the level of truncation.

\subsection{Data truncation and factor structure estimation} \label{sec:truncation_and_mod_est}

Given a truncation parameter $\tau > 0$, we perform the element-wise truncation to obtain
\begin{align*} 
\bX_t(\tau) = (X_{1t}(\tau), \dots, X_{pt}(\tau))^\top, \text{ \ where \ }
X_{it}(\tau) = \text{sign}(X_{it}) \cdot (\tau \wedge |X_{it}|),
\end{align*}
with which we estimate $\bGamma_\bx = n^{-1} \sum_{t = 1}^n \mathbb{E}( \bX_t \bX_t^\top )$ by
\begin{align} 
\widehat{\bGamma}_\bx(\tau) = \frac{1}{n} \sum_{t = 1}^n \bX_t(\tau) \bX_t(\tau)^\top \, .\nonumber
\end{align}
\cred{We present the choice of the truncation parameter that balances between the bias (increases as $\tau$ decreases) and the variance (increases with $\tau$) of $\widehat{\bGamma}_\bx(\tau)$, in Equation~\eqref{eq:tau_prop} below, which depends the tail behaviour through $\epsilon$ in \Cref{assum:tail}.}
In practice, we propose to select $\tau$ via cross validation, see \Cref{sec:CV_tune_tau}.
\cred{
Alternative data transformation strategies exist for robust estimation of covariance against outliers \citep{Raymaekers2021}, which has been employed by \citet{Trucos2021} for dynamic factor modelling; we compare our approach to such a method on simulated datasets in \Cref{sec:results}.}

Next, we estimate the loading matrix $\bLambda$ and the factor time series $\bF_t$ under the identifiability condition made in \Cref{assump:fac_adj_fac_loading}~\ref{loading_assum}, as
\begin{gather} \label{eq:factor_pca_solutions}
\widehat{\bLambda}(\tau) = \widehat{\mathbf{E}}_{\bx}(\tau) \widehat{\mathbf{M}}_{\bx}^{1/2} (\tau)\text{ \ and \ }
\widehat{\bF}_t(\tau) = \widehat{\mathbf{M}}_{\bx}^{-1/2}(\tau) (\widehat{\mathbf{E}}_{\bx}(\tau))^\top  \bX_t(\tau), 
\end{gather}
where 
$\widehat{\mathbf{M}}_{\bx}(\tau)$ denotes the diagonal matrix containing the leading $r$ eigenvalues of $\widehat{\bGamma}_\bx(\tau)$, and $\widehat{\mathbf{E}}_{\bx}(\tau)$ the matrix containing the corresponding eigenvectors.  
Then the factor-driven component is estimated by 
$
\widehat{\bchi}_t(\tau) = \widehat{\bLambda}(\tau) \widehat{\bF}_t(\tau) = \widehat{\mathbf{E}}_{\bx}(\tau) ( \widehat{\mathbf{E}}_{\bx}(\tau) )^\top \bX_t(\tau),
$
from which an estimator of the idiosyncratic component is obtained as $\widehat{\bxii}_t(\tau) = \bX_t(\tau) - \widehat{\bchi}_t(\tau)$.

\subsection{VAR parameter estimation}\label{sec:var_est}

Let us denote the matrix collecting all the VAR coefficients for the idiosyncratic VAR process, by $\mathbb{A} = [\mathbf{A}_1, \dots, \mathbf{A}_d] \in \mathbb{R}^{p \times pd}$.
We can re-write the VAR model in~\eqref{eq:VAR_model} as
\begin{align} \underbrace{\left[\begin{array}{c}
\bxii^{\top}_n \\ \vdots \\ \bxii^{\top}_{d+1}\end{array}\right]}_{\boldsymbol{\mathcal{Y}}} & =\underbrace{\left[\begin{array}{ccc}\bxii^{\top}_{n-1} & \cdots & \bxii^{\top}_{n-d} \\ \vdots & \ddots & \vdots \\ \bxii^{\top}_{d} & \cdots & \bxii^{\top}_1\end{array}\right]}_{\boldsymbol{\mathcal{X}}} \underbrace{\left[\begin{array}{c}\mathbf{A}_1^{\top} \\ \vdots \\ \mathbf{A}_d^{\top}\end{array}\right]}_{\mathbb{A}^\top}+\left[\begin{array}{c} \boldsymbol{\varepsilon}^{\top}_n \\ \vdots \\ \boldsymbol{\varepsilon}^{\top}_d\end{array}\right]. \label{eq:stacked_var_regression_form}
\end{align}
While $\boldsymbol{\mathcal{Y}}$ and $\boldsymbol{\mathcal{X}}$ are not observed, with the latent idiosyncratic component estimated by $\widehat{\bxii}_t(\tau)$ as described in Section~\ref{sec:truncation_and_mod_est}, we derive their estimated counterparts as
$\widehat{\boldsymbol{\mathcal{Y}}}(\tau)$ and $\widehat{\boldsymbol{\mathcal{X}}}(\tau)$, respectively.

Then, we propose to estimate $\mathbb{A}$ via $\ell_1$-penalised least squares estimation as
\begin{align}
(\widehat{\mathbb{A}}(\tau))^\top &=
\argmin_{\mathbf{M} \in \mathbb{R}^{pd \times p}} \frac{1}{N} \left\vert \widehat{\boldsymbol{\mathcal{Y}}}(\tau) - \widehat{\boldsymbol{\mathcal{X}}}(\tau) \mathbf{M} \right\vert_2^2 + \lambda \vert \mathbf{M} \vert_1
\nonumber \\
&= \argmin _{\mathbf{M} \in \mathbb{R}^{pd \times p}} \text{Tr}\left( \mathbf{M}^\top \widehat{\bGamma}(\tau) \mathbf{M} - 2\mathbf{M}^\top \widehat{\bgamma}(\tau) \right) + \lambda |\mathbf{M}|_1
\label{eq:idio_objective}
\end{align}
where $\lambda > 0$ is a penalty parameter, $N = n - d$, and the matrices $\widehat{\boldsymbol{\Gamma}}(\tau)$ and $\widehat{\boldsymbol{\gamma}}(\tau)$ are defined as
\begin{align}
\label{eq:gammas}
\widehat{\bGamma}(\tau) = \frac{1}{N} \widehat{\boldsymbol{\mathcal{X}}}(\tau)^\top \widehat{\boldsymbol{\mathcal{X}}}(\tau)  
\text{ \ and \ }
\widehat{\bgamma}(\tau) = \frac{1}{N} \widehat{\boldsymbol{\mathcal{X}}}(\tau)^\top \widehat{\boldsymbol{\mathcal{Y}}}(\tau)   \, .
\end{align}
The matrices $\widehat{\boldsymbol{\Gamma}}(\tau)$ and $\widehat{\boldsymbol{\gamma}}(\tau)$ can be regarded as block matrices composed of sample autocovariance matrices of the estimated idiosyncratic process. We denote their respective estimands by 
\begin{align} \label{eq:gammas_pop}
\bGamma = \begin{bmatrix} \bGamma_{\bxii}(0) & \bGamma_{\bxii}(1) & \dots & \bGamma_{{\bxii}}(d-1) \\
\bGamma_{{\bxii}}(1)^\top & \bGamma_{{\bxii}}(0) & \dots & \bGamma_{{\bxii}}(d-2) \\
\vdots & & \ddots & \vdots \\
\bGamma_{{\bxii}}(d - 1)^\top & \bGamma_{{\bxii}}(d - 2)^\top & \dots & \bGamma_{{\bxii}}(0) \end{bmatrix} \text{ \ and \ }
\bgamma = \begin{bmatrix}
\bGamma_{{\bxii}}(1) \\
\bGamma_{{\bxii}}(2) \\
\vdots \\
\bGamma_{{\bxii}}(d)
\end{bmatrix} \, .  
\end{align}

The computation of~\eqref{eq:idio_objective} can be facilitated by solving the $p$ separate sub-problems in parallel, as
\begin{gather}
\label{eq:robust_var_lasso}
\widehat{\mathbb{A}}(\tau) = \left[\widehat{\bbeta}_{1}(\tau), \dots, \widehat{\bbeta}_{p}(\tau)\right]^\top
\text{ \ with \ }
    \widehat{\bbeta}_{i}(\tau) = \argmin_{\bbeta \in \mathbb{R}^{pd}} \bbeta^\top \widehat{\bGamma}(\tau) \bbeta -2 \bbeta^\top \widehat{\bgamma}(\tau)_{(i)} + \lambda |\bbeta|_1, \, 1 \le i \le p  \nonumber \, .
\end{gather}

\subsection{Tuning parameter selection}\label{sec:CV_tune_tau}

\paragraph{Selection of $\tau$.}
We propose to perform a cross validation (CV) procedure to select the truncation parameter $\tau$.
We denote the set of equi-distanced candidate 
values by $\mathcal{G} = \{\tau_j, \, 1 \le j \le J: \, \tau_1 < \ldots < \tau_J \}$, where $\tau_1 = \text{median}_{i,t} |X_{it}/\widehat{\sigma}_i|$, and $\tau_J = \max_{i,t} |X_{it}/\widehat{\sigma}_i|$, and $\widehat{\sigma}_i$ denotes the median absolute deviation (MAD) of $\{X_{it}\}_{t = 1}^n$ adopted for variable standardisation. 
\cred{For data generated from heavy-tailed distributions, MAD provides a simple yet stable measure of scale that is less affected by extreme realisations \citep{Hampel01061974,Rousseeuw1993}.
While scaling is not a theoretical requirement (cf.\ $M_\epsilon$ in \Cref{assum:tail} is placed on the maximum $(2 + 2\epsilon)$-th moment of $\xi_{it}$ and $F_{jt}$ over $i$ and $j$), 
scaling accounts for differences in scale across the individual series.}
In producing all the numerical results reported in this paper, we set $J = 60$. 

We split the data into two folds, the first and second half, with their sets of indices denoted by $\mathcal{I}_1 = \{1, \dots , \lfloor n/2 \rfloor\}$, and $\mathcal{I}_2 = \{\lfloor n/2 \rfloor + 1, \dots , n\}$. Then, for each $\tau \in \mathcal{G}$, the data are truncated based on a scaled version of $\tau$, as
\begin{gather*}
\bX_t(\tau) = (X_{1t}(\tau_1), \dots, X_{pt}(\tau_p))^\top \text{ \ with \ } \tau_i = \widehat{\sigma}_i \cdot \tau \, ,
\end{gather*}
and the following cross validation metric is obtained:
\begin{gather} \label{eq:CV_measure}
\text{CV}(\tau) = \max_{0 \leq h \leq d} \left\{ \left|\widehat{\bGamma}_{\bx, \mathcal{I}_1}(\tau, h) - \widehat{\bGamma}_{\bx, \mathcal{I}_2}(\infty, h) \right|_{\infty}  \, + \, \left|\widehat{\bGamma}_{\bx, \mathcal{I}_2}(\tau, h) - \widehat{\bGamma}_{\bx, \mathcal{I}_1}(\infty, h)\right|_{\infty} \right\} \, ,
\end{gather}
where $\widehat{\bGamma}_{\bx, \mathcal{I}}(\tau, h)$ denotes the sample autocovariance matrix at lag $h$ obtained from $\{\mathbf{X}_t(\tau), \, t \in \mathcal{I} \}$, with $\tau = \infty$ indicating that no truncation is performed. 
As shown in \Cref{sec:theory}, the theoretical properties of the proposed estimator $\widehat{\mathbb{A}}(\tau)$ are inherited from those of $\widehat{\bGamma}_{\bx}(\tau, h)$, which motivates the CV measure in~\eqref{eq:CV_measure}.
The final choice of $\tau$ is then given by the minimiser of the cross validation measure score, as $\widehat{\tau} = \argmin_{\tau \in \mathcal{G}} \text{CV}(\tau)$. \cred{In Appendix~\ref{sec:additional_simul}, we explore the performance of the proposed CV procedure, see \Cref{fig:tau_vs_p_and_n_auto} which plots the values of $\tau$ selected by the CV procedure over varying $p$ and $n$, alongside the average proportion of data points that are less than the chosen $\tau$ (in absolute value). 
The figure demonstrates that the $\widehat{\tau}$ is generally in agreement with the theoretical choice given in Equation~\eqref{eq:tau_prop} below, and they are chosen as extremal quantiles of $\vert X_{it} / \widehat\sigma_i \vert$.} 

\cred{The computational cost of the CV procedure scales with the grid length $|\mathcal{G}|$, dimension $p$ and sample size $n$, as $O(d|\mathcal{G}|np^2)$. In \Cref{sec:comp_cost} we compare numerically the computational cost of our proposed method to that of a regularised Huber loss method for high-dimensional sparse linear regression \citep{Wang2021}, which complements the estimation performance comparison made in \Cref{sec:results} and demonstrates the competitiveness of our proposal.}

\paragraph{Selection of the VAR order $d$.}
When $p$ is finite, the Akaike information criterion \citep{Akaike1998} and Bayesian information criterion \citep{SchwarzBIC} are frequently adopted for the selection of the VAR order $d$, and there also exists likelihood ratio tests \citep{Tiao01121981}. 
The extended Bayesian information criterion of \citet{ChenEBIC} explores the applicability of an information criterion to high dimensions where $p \to \infty$.
In the context of factor-adjusted VAR modelling, \citet{Barigozzi2024} and \citet{Krampe2025} propose to adopt CV for joint selection of the number of factors, together with the VAR order.
While these approaches have achieved some success, their theoretical and numerical performance is still to be fully understood in high dimensions.
In our simulation studies we assume that the VAR order $d$ is known, focusing on the investigation into the proposed tail-robust estimation strategy. 
In our data application in \Cref{sec:real-data}, we consider a range of VAR order values when performing a forecasting exercise and investigate the impact of its choice.
\cred{In \Cref{sec:comp_cost}, we numerically investigate the performance of the proposed estimator when the VAR order is over-specified. In such a case, $\widehat{\mathbf{A}}_\ell(\tau), \, \ell \ge d + 1$, are close to the matrix of zeros as desired, and the data truncation continues to benefit the estimation performance when compared to an approach without data truncation.}

\paragraph{Selection of the factor number $r$.}
Ihe information criterion-based approach of \citet{Bai2002} is popularly adopted for the selection of the factor number.
We also refer to \citet{Onatski2010}, \citet{Alessi2010}, \citet{Ahn2013} and \citet{Fan2022_fac_number_est} for alternative factor number estimators. 
In Section~\ref{sec:simulations}, we treat $r$ as known, to separate the issues arising from the factor number selection from the investigation into the usefulness of the proposed truncation-based method. In a forecasting exercise performed on macroeconomic data in \Cref{sec:real-data}, we utilise the estimator of \citet{Bai2002} following the practice of \citet{McCracken2015} analysing the same dataset.

\paragraph{Selection of the penalty parameter $\lambda$.}
For the selection of $\lambda$ in \eqref{eq:idio_objective}, we use the CV-based approach which is implemented in the R package \texttt{glmnet} \citep{Friedman2010}.

\section{Theoretical results}\label{sec:theory}



In addition to the assumptions made in \Cref{sec:model_est} for model identifiability and characterising the tail behaviour, we make the following assumptions on the serial dependence in $\{\bX_t\}_{t \in \mathbb{Z}}$.
For any stochastic process $\left\{\mathbf{Y}_t\right\}_{t \in \mathbb{Z}}$, we denote the $\alpha$-mixing coefficient as $\alpha^{\mspace{-1mu} \mathbf{Y}}\mspace{-2mu}(\ell)
=\sup _{s \in \mathbb{Z}} \, \alpha\left(\left\{\mathbf{Y}_t\right\}_{t=-\infty}^s,\left\{\mathbf{Y}_t\right\}_{t=s+\ell}^{\infty}\right)$, where $\alpha\left(\left\{\mathbf{Y}_t\right\}_{t=-\infty}^s,\left\{\mathbf{Y}_t\right\}_{t=r}^{\infty}\right)=\sup_{A, B} |\mathbb{P}(A \cap B)-\mathbb{P}(A) \mathbb{P}(B)|$
with the supremum taken over all events $A \in \sigma\left(\left\{\mathbf{Y}_t\right\}_{t=-\infty}^s\right)$ and $B \in \sigma\left(\left\{\mathbf{Y}_t\right\}_{t=r}^{\infty}\right)$, and $\sigma(\cdot)$ is the sigma field generated by the process. 


\begin{assumption}
\label{assump:fac_adj_fac_idio}


There exists some constant $c \in (0, \infty)$ such that:
\begin{enumerate}[topsep=0pt, label = (\roman*)]
\item $\{\mathbf{F}_t\}_{t \in \mathbb{Z}}$ is $\alpha$-mixing with the coefficients decaying as $\alpha^{\mspace{-1mu} \mathbf{F}}\mspace{-2mu}(m) \leq \exp(-2cm)$. 
\item For any $1 \leq i,j \leq p$, the bivariate process $\{(\xi_{it}, \xi_{jt})\}_{t \in \mathbb{Z}}$ is $\alpha$-mixing with the coefficients decaying as $\max_{1 \le i, j \le p} \alpha^{\mspace{-1mu} \mathbf{\bxii}}_{ij}\mspace{-2mu}(m) \leq \exp(-2cm)$. \label{assump:mix_xii}
\end{enumerate}
\end{assumption}

Mixing-type short range dependence conditions such as \Cref{assump:fac_adj_fac_idio} are found both in the factor modelling \citep{POETFan2011} and the tail-robust VAR modelling \citep{Wang2023} literature. 
%
\Cref{assump:fac_adj_fac_idio}~\ref{assump:mix_xii} places the $\alpha$-mixing condition on the bivariate process $\{(\xi_{it}, \xi_{jt})\}_{t \in \mathbb{Z}}$, which follows if $\alpha$-mixing is assumed on the entire vector process $\{\bxii_t\}_{t \in \mathbb{Z}}$. 
We take the former approach as the latter requires care in high dimensions with $p \to \infty$ \citep{Han2023prob}. 
Finally, we assume:
\begin{assumption}
    \label{assump_fac_adj:8} The processes $\{\bF_t\}_{t \in \mathbb{Z}}$ and $\{\bxii_t\}_{t \in \mathbb{Z}}$ are independent.
\end{assumption}
Jointly, \Cref{assump:fac_adj_fac_idio} and \Cref{assump_fac_adj:8} imply that the bivariate processes formed by the truncated data, $\{(X_{it}(\tau), X_{\cred{j}t}(\tau))\}_{t \in \mathbb{Z}}$, are also strongly mixing. 



The following \Cref{prop:peligrad_inequality_trunc_autocov_error} and \Cref{prop:fac_adj_idio_rate} show that the estimators of the (auto)covariance matrix of $\mathbf{X}_t$ obtained with the truncated data and subsequently, those of the latent $\bxii_t$, are consistent even in ultra-high dimensions, under the weak conditions requiring the existence of the $(2+2\epsilon)$-th moment only (\Cref{assum:tail}). Let us write for some $h \ge 0$,
\begin{align*}
\bGamma_\bx(h) = \frac{1}{n-h} \sum_{t = h + 1}^n \mathbb{E}\left( \bX_t \bX_{t - h}^\top \right) 
\text{ \ and \ }
\widehat{\bGamma}_\bx(\tau, h) = \frac{1}{n-h} \sum_{t = h + 1}^n \bX_t(\tau) \bX_{t - h}^\top(\tau) \, .
\end{align*}
\begin{proposition} \label{prop:peligrad_inequality_trunc_autocov_error}
Suppose that \Cref{assump:fac_adj_fac_loading}, \ref{assum:tail}, \ref{assump:fac_adj_fac_idio} and \ref{assump_fac_adj:8}, hold, and let us set
\begin{align} \label{eq:tau_prop}
   \tau \asymp  \left[\frac{n \cred{\widetilde{M}_{\epsilon}}}{ \log(p) \log^2(n)}\right]^{\frac{1}{2 + 2\epsilon}}  \, 
\end{align}
with $\cred{\widetilde{M}_{\epsilon}} \in (0, \infty)$ that depends only on $r$, $M$ and $\cred{M_{\epsilon}}$, where $M$ and $\cred{M_{\epsilon}}$ are defined in \Cref{assump:fac_adj_fac_loading}~\ref{finite_loadings} and \Cref{assum:tail}, respectively.
Then for any fixed $h \geq 0$, we have
\begin{align} 
\left|\widehat{\bGamma}_\bx(\tau, h) - \bGamma_\bx(h)\right|_\infty &= O_P\left(\left(\frac{\log(p) \log^2(n) \cred{\widetilde{M}_{\epsilon}}^\frac{1}{\epsilon} }{n}\right)^{\frac{\epsilon}{1+\epsilon}}\right) . \nonumber
\end{align}
\end{proposition}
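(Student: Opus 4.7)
\textbf{Proof plan for Proposition~\ref{prop:peligrad_inequality_trunc_autocov_error}.}
I would first reduce to controlling a single entry $(i,j)$ of the $p\times p$ matrix and then apply a union bound over $p^2$ entries. Fix $i,j$ and decompose
\begin{align*}
[\widehat{\bGamma}_\bx(\tau,h) - \bGamma_\bx(h)]_{ij}
 &= \underbrace{\frac{1}{n-h}\sum_{t=h+1}^{n}\bigl(X_{it}(\tau)X_{j,t-h}(\tau) - \mathbb{E}[X_{it}(\tau)X_{j,t-h}(\tau)]\bigr)}_{\text{stochastic term}}\\
 &\quad + \underbrace{\frac{1}{n-h}\sum_{t=h+1}^{n}\bigl(\mathbb{E}[X_{it}(\tau)X_{j,t-h}(\tau)] - \mathbb{E}[X_{it}X_{j,t-h}]\bigr)}_{\text{truncation bias}}.
\end{align*}

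\textbf{Bias.} Using $|X(\tau)-X|\le |X|\mathbb{I}(|X|>\tau)$ and $|X(\tau)|\le \tau$, a telescoping bound gives
\begin{align*}
\bigl|\mathbb{E}[X_{it}(\tau)X_{j,t-h}(\tau)] - \mathbb{E}[X_{it}X_{j,t-h}]\bigr|
 \lesssim \mathbb{E}[|X_{it}|^{2}\mathbb{I}(|X_{it}|>\tau)] + \mathbb{E}[|X_{j,t-h}|^{2}\mathbb{I}(|X_{j,t-h}|>\tau)].
\end{align*}
Since Assumption~\ref{assum:tail} together with \Cref{assump:fac_adj_fac_loading}\ref{finite_loadings} gives $\mathbb{E}[|X_{it}|^{2+2\epsilon}]\lesssim \widetilde M_{2+2\epsilon}$, the standard Markov-type estimate $\mathbb{E}[|X|^{2}\mathbb{I}(|X|>\tau)]\le \tau^{-2\epsilon}\,\mathbb{E}[|X|^{2+2\epsilon}]$ yields bias $=O(\widetilde M_{2+2\epsilon}\tau^{-2\epsilon})$, uniformly in $i,j,t$.

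\textbf{Stochastic term.} The products $Z_t^{(i,j,h)}:=X_{it}(\tau)X_{j,t-h}(\tau)$ are bounded by $\tau^{2}$. Because truncation is measurable and $h$ is fixed, the bivariate process $\{Z_t\}$ inherits $\alpha$-mixing with the same exponentially decaying coefficients, via Assumptions~\ref{assump:fac_adj_fac_idio} and~\ref{assump_fac_adj:8} (the former applied to the two bivariate components $(\xi_{it},\xi_{j,t-h})$ and $(F_{kt},F_{l,t-h})$, recombined through $\bX_t=\bLambda\bF_t+\bxii_t$). A Bernstein/Peligrad-type inequality for bounded $\alpha$-mixing sequences then delivers, for any $x>0$,
\begin{align*}
\mathbb{P}\!\left(\Bigl|\sum_{t=h+1}^{n}(Z_t - \mathbb{E}Z_t)\Bigr| > x\right)
 \le C_1\exp\!\left(-\frac{c_1 x^{2}}{n\,\sigma^{2} + \tau^{2}\,x\log^{2}(n)}\right),
\end{align*}
where $\sigma^{2}\lesssim \widetilde M_{2+2\epsilon}$ uniformly (by Cauchy--Schwarz and the $(2+2\epsilon)$-moment bound on $X$; note $2+2\epsilon>2$, so truncated second moments are finite and bounded independently of $\tau$). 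Setting $x$ so that the exponent equals $-C\log p$ and dividing by $n$ yields, for one entry,
\begin{align*}
\Bigl|\tfrac{1}{n-h}\sum(Z_t-\mathbb{E}Z_t)\Bigr|
 = O_p\!\left(\sqrt{\tfrac{\widetilde M_{2+2\epsilon}\log p}{n}} + \tfrac{\tau^{2}\log p\log^{2}(n)}{n}\right).
\end{align*}
A union bound over the $p^{2}$ entries preserves this rate (at the cost of absorbing constants into $\log p$).

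\textbf{Balancing and choice of $\tau$.} Combining, the entrywise error is
\begin{align*}
O_p\!\left(\widetilde M_{2+2\epsilon}\tau^{-2\epsilon} + \tfrac{\tau^{2}\log p\log^{2}(n)}{n}\right),
\end{align*}
since the Bernstein-dominant part for the stated $\tau$ regime is the $\tau^{2}\log p\log^{2}(n)/n$ piece. Equating the two terms gives $\tau^{2+2\epsilon}\asymp n\widetilde M_{2+2\epsilon}/[\log p\log^{2}(n)]$, matching~\eqref{eq:tau_prop}; substituting back produces the claimed rate $\bigl(\widetilde M_{2+2\epsilon}^{1/\epsilon}\log p\log^{2}(n)/n\bigr)^{\epsilon/(1+\epsilon)}$.

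\textbf{Main obstacle.} The delicate step is the Bernstein/Peligrad-type concentration for bounded $\alpha$-mixing sequences with the sharp $\log^{2}(n)$ factor multiplying the boundedness term $\tau^{2}$; one has to verify (a) that mixing of $\{(\xi_{it},\xi_{jt})\}$ and $\{\bF_t\}$ together with independence implies mixing for the bivariate product $(X_{it}(\tau),X_{j,t-h}(\tau))$ with a preserved exponential rate, and (b) that the variance proxy $\sigma^{2}$ is genuinely $\tau$-free, which is what permits the leading term to be the $\sqrt{\log p/n}$ piece (absorbed into the final rate) rather than something growing with $\tau$. Once these two points are established, the balancing argument is mechanical.
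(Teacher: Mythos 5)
Your overall strategy coincides with the paper's: an entrywise decomposition into a truncation bias and a centred stochastic term, a H\"older/Markov bound of order $\widetilde{M}_{2+2\epsilon}\tau^{-2\epsilon}$ for the bias, a Peligrad-type Bernstein inequality for bounded, exponentially $\alpha$-mixing sequences applied to $Z_t = X_{it}(\tau)X_{j,t-h}(\tau)$ with boundedness constant $K\asymp\tau^2$, a union bound over the $p^2$ entries, and a balancing step that produces the choice~\eqref{eq:tau_prop}. Your point (a) on preservation of mixing under truncation and under the map $\bX_t=\bLambda\bF_t+\bxii_t$ is exactly how the paper proceeds (its Lemma on the properties of $\bX_t$, invoking independence of factors and idiosyncratic components and measurability of the truncation).

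There is, however, a genuine error in your point (b): the variance proxy is \emph{not} $\tau$-free. You assert $\sigma^2\lesssim\widetilde{M}_{2+2\epsilon}$ ``by Cauchy--Schwarz and the $(2+2\epsilon)$-moment bound'', but $\operatorname{Var}(Z_t)\le\mathbb{E}[X_{it}(\tau)^2X_{j,t-h}(\tau)^2]$ and Cauchy--Schwarz on this product requires fourth moments of the truncated variables, which are \emph{not} bounded independently of $\tau$ when only $2+2\epsilon<4$ moments exist: the correct estimate is $|Z_t|^2=|Z_t|^{1-\epsilon}|Z_t|^{1+\epsilon}\le\tau^{2(1-\epsilon)}|X_{it}X_{j,t-h}|^{1+\epsilon}$, whence $\mathbb{E}[Z_t^2]\le\tau^{2-2\epsilon}\widetilde{M}_{2+2\epsilon}$, and summing the mixing covariances gives $v^2\lesssim\widetilde{M}_{2+2\epsilon}\tau^{2-2\epsilon}\log(n)$. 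This invalidates your justification for treating the variance contribution as a negligible $\sqrt{\log p/n}$ term. The conclusion nevertheless survives, because with $\tau$ as in~\eqref{eq:tau_prop} all three contributions balance simultaneously: writing $a=\log(p)\log^2(n)/n$, the variance term $\sqrt{v^2\log(p)/n}\asymp\tau^{1-\epsilon}(\widetilde{M}_{2+2\epsilon}a)^{1/2}$, the boundedness term $\tau^2 a$, and the bias $\widetilde{M}_{2+2\epsilon}\tau^{-2\epsilon}$ are all of order $(\widetilde{M}_{2+2\epsilon}^{1/\epsilon}a)^{\epsilon/(1+\epsilon)}$. So the rate is correct, but your argument for why the variance piece is harmless must be replaced by this computation. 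A minor secondary point: in the bias step, $\mathbb{E}[|X_{it}X_{j,t-h}|\mathbb{I}(|X_{it}|>\tau)]$ is not directly dominated by $\mathbb{E}[|X_{it}|^2\mathbb{I}(|X_{it}|>\tau)]$; one should instead apply H\"older with exponents $1+\epsilon$ and $(1+\epsilon)/\epsilon$ followed by Markov's inequality, which yields the same $\widetilde{M}_{2+2\epsilon}\tau^{-2\epsilon}$ order.
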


\cred{The choice of $\tau$ fulfilling Equation~\eqref{eq:tau_prop}, is derived by balancing the bias $\vert \mathbb{E}(\widehat{\bGamma}_\bx(\tau, h)) - \bGamma_\bx(h) \vert_\infty$, which increases as $\tau$ decreases, and the deviation $\vert \widehat{\bGamma}_\bx(\tau, h) - \mathbb{E}(\widehat{\bGamma}_\bx(\tau, h)) \vert_\infty$, which grows with increasing $\tau$, and explicitly depends on $\epsilon \in (0, 1)$ from \Cref{assum:tail}.}
 
\begin{rem} \label{rem:alt_robust_method}
\cred{
The subsequent theoretical results rely on the error bound for the
(auto)covariance estimator established in \Cref{prop:peligrad_inequality_trunc_autocov_error}. If an alternative
tail-robust procedure is used to estimate the (auto)covariance matrix,
then analogous results would follow, provided that a comparable error bound
can be established. Element-wise data truncation allows us to directly control the errors of estimating (auto)covariance matrices in the $\ell_\infty$-norm, which is key to establishing the deviation bound in Equation~\eqref{eq:db_bound} used within the proof of \Cref{prop:FNETS_trunc_consistency}.}
\end{rem}

\begin{proposition} \label{prop:fac_adj_idio_rate}
Suppose that \Cref{assump:fac_adj_fac_loading,assump:VAR,assum:tail,assump:fac_adj_fac_idio,assump_fac_adj:8} hold, and
recall $\widehat{\bGamma}(\tau)$ and $\widehat{\bgamma}(\tau)$ in~\eqref{eq:gammas} and their population counterparts $\bGamma$ and $\bgamma$ in~\eqref{eq:gammas_pop}.
Then, $\mathbb{P}(\mathcal{E}_{n, p}) \rightarrow 1$ as $n,p \rightarrow \infty$, where
\begin{align*} 
\mathcal{E}_{n, p}=\left\{\max \left\{\left| \widehat{\bGamma}(\tau) - \bGamma\right|_\infty ,  \left|\widehat{\bgamma}(\tau) - \bgamma\right|_\infty \right\}\le C\left(\left(\frac{\log(p)\log^2(n) \cred{\widetilde{M}_{\epsilon}}^\frac{1}{\epsilon} }{n}\right)^{\frac{\epsilon}{1+\epsilon}} \, \vee \frac{1}{\sqrt{p}}\right) 
 \right\} \; ,
\end{align*}
for some large enough constant $C > 0$, and with $\tau$ chosen as in \eqref{eq:tau_prop}. 
\end{proposition}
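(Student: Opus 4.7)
The strategy is to reduce the claim to a uniform bound on the estimated lag-$h$ idiosyncratic autocovariance $|N^{-1}\sum_t \widehat{\bxii}_t(\tau)\widehat{\bxii}_{t-h}(\tau)^\top - \bGamma_\bxii(h)|_\infty$ for $h = 0, 1, \ldots, d$, since $\widehat{\bGamma}(\tau),\bGamma$ and $\widehat{\bgamma}(\tau),\bgamma$ are block-Toeplitz in these blocks and $d$ is fixed, so a union bound over the finitely many distinct blocks costs only a constant factor. Writing $\widehat{\mathbf{P}}(\tau) = \widehat{\mathbf{E}}_\bx(\tau)\widehat{\mathbf{E}}_\bx(\tau)^\top$ for the rank-$r$ PCA projector and $\mathbf{P} = \bLambda(\bLambda^\top\bLambda)^{-1}\bLambda^\top$ for the population projector onto $\mathrm{col}(\bLambda)$, one has $\widehat{\bxii}_t(\tau) = (\mathbf{I} - \widehat{\mathbf{P}}(\tau))\bX_t(\tau)$. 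The factor decomposition $\bGamma_\bx(h) = \bLambda\bGamma_\bF(h)\bLambda^\top + \bGamma_\bxii(h)$ (by \Cref{assump_fac_adj:8}) combined with $(\mathbf{I}-\mathbf{P})\bLambda = \mathbf{0}$ gives
\begin{align*}
(\mathbf{I}-\mathbf{P})\,\bGamma_\bx(h)\,(\mathbf{I}-\mathbf{P}) = \bGamma_\bxii(h) - \mathbf{P}\bGamma_\bxii(h) - \bGamma_\bxii(h)\mathbf{P} + \mathbf{P}\bGamma_\bxii(h)\mathbf{P}.
\end{align*}

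Adding and subtracting $(\mathbf{I}-\widehat{\mathbf{P}})\bGamma_\bx(h)(\mathbf{I}-\widehat{\mathbf{P}})$ around the target and applying the triangle inequality splits the error into three categories. (i) A \emph{sampling} term $(\mathbf{I}-\widehat{\mathbf{P}})[\widehat{\bGamma}_\bx(\tau,h) - \bGamma_\bx(h)](\mathbf{I}-\widehat{\mathbf{P}})$, bounded by \Cref{prop:peligrad_inequality_trunc_autocov_error} after showing the induced $\ell_\infty$ norm $\|\mathbf{I} - \widehat{\mathbf{P}}\|_\infty$ is $O(1)$; this holds because entries of the columns of $\widehat{\mathbf{E}}_\bx(\tau)$ are $O(1/\sqrt{p})$ since the leading $r$ eigenvalues of $\bGamma_\bx$ are $\asymp p$. (ii) \emph{PCA-perturbation} terms such as $(\widehat{\mathbf{P}}-\mathbf{P})\bGamma_\bx(h)(\mathbf{I}-\widehat{\mathbf{P}})$ and $\mathbf{P}\bGamma_\bx(h)(\widehat{\mathbf{P}}-\mathbf{P})$. (iii) \emph{Leakage} terms $\mathbf{P}\bGamma_\bxii(h)$, $\bGamma_\bxii(h)\mathbf{P}$ and $\mathbf{P}\bGamma_\bxii(h)\mathbf{P}$, of $|\cdot|_\infty$ magnitude $O(1/\sqrt{p})$, obtained from entries of $\mathbf{P}$ being $O(1/p)$ (by \Cref{assump:fac_adj_fac_loading}~\ref{loading_assum}), columns of $\bLambda$ having $\ell_2$ norm $O(\sqrt{p})$, and $\|\bGamma_\bxii(h)\|_2 = O(1)$ from~\eqref{eq:gamma:xi:bound}.

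The main technical content lies in (ii), which also contributes at the $1/\sqrt{p}$ rate. The eigengap of $\bGamma_\bx$ is of order $p$ (top $r$ eigenvalues $\asymp p$ versus the rest bounded by~\eqref{eq:gamma:xi:bound}), so a Davis--Kahan argument yields $\|\widehat{\mathbf{E}}_\bx(\tau) - \mathbf{E}\mathbf{H}\|_2 = O_p(p^{-1}\|\widehat{\bGamma}_\bx(\tau) - \bGamma_\bx\|_2)$ for a suitable orthogonal $\mathbf{H}$, where $\mathbf{E}$ collects the top-$r$ eigenvectors of $\bGamma_\bx$. The operator-norm error is in turn handled by the factor decomposition $\bGamma_\bx = \bLambda\bLambda^\top + \bGamma_\bxii(0)$, which converts the entrywise bound of \Cref{prop:peligrad_inequality_trunc_autocov_error} into a spectral bound using the row-sum control afforded by $|\bLambda|_\infty \le M$ (\Cref{assump:fac_adj_fac_loading}~\ref{finite_loadings}). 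A row-wise expansion of the eigenvector estimating equation then upgrades this into an entrywise bound $|\widehat{\bLambda}(\tau) - \bLambda\mathbf{H}|_\infty = O_p((\text{Prop 1 rate}) \vee 1/\sqrt{p})$, which is what is needed to control the max-norm of (ii). The hardest step is precisely this upgrade from spectral to entrywise PCA consistency under only the $(2+2\epsilon)$-th moment condition: the argument parallels the \citet{Bai2003}-type entrywise factor analysis but replaces each fourth-moment concentration inequality with its truncated-data counterpart supplied by \Cref{prop:peligrad_inequality_trunc_autocov_error}, with $|\bLambda|_\infty \le M$ absorbing the truncation bias. Aggregating the bounds on (i)--(iii) and maximising over the finite set of lags delivers the stated rate simultaneously for $|\widehat{\bGamma}(\tau) - \bGamma|_\infty$ and $|\widehat{\bgamma}(\tau) - \bgamma|_\infty$.
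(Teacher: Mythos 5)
Your proposal is correct and follows essentially the same route as the paper's proof: both write $\widehat{\bxii}_t(\tau)=(\mathbf{I}-\widehat{\mathbf{E}}_{\bx}\widehat{\mathbf{E}}_{\bx}^\top)\bX_t(\tau)$, split the lag-$h$ autocovariance error into a truncated-sampling term controlled by \Cref{prop:peligrad_inequality_trunc_autocov_error}, projector-perturbation terms controlled by a Davis--Kahan bound with eigengap $\asymp p$ plus a row-wise expansion of the eigen-equation, and $O(1/\sqrt{p})$ leakage terms from $\mathbf{P}\bGamma_{\bxii}(h)$, then take a union over the finitely many lags. The only cosmetic difference is that you anchor the population projector at $\mathrm{col}(\bLambda)$ and phrase the entrywise PCA bound via $\widehat{\bLambda}(\tau)$, whereas the paper works with the eigenvectors $\mathbf{E}_{\bchi}$ of $\bGamma_{\bchi}(0)$ and row-wise $\ell_2$ norms of $\widehat{\mathbf{E}}_{\bx}-\mathbf{E}_{\bchi}\bO$; these are equivalent up to the $\sqrt{p}$ rescaling between loadings and eigenvectors.
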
 

Before presenting the consistency of the VAR parameter estimation, let us introduce some measures of the sparsity of $\mathbb{A} = [\mathbf{A}_1, \dots, \mathbf{A}_d] \in \mathbb{R}^{p \times pd}$, namely $s_{0, j}=\left|\mathbb{A}_{j}\right|_0$, $s_{\text {\upshape max}}=\max _{1 \leq j \leq p} s_{0, j}$ and $s_0=\sum_{j=1}^p s_{0, j}$, representing the number of non-zero coefficients in the $j$-th row of $\mathbb{A}$, their maximum over $j$ and the total number of non-zero coefficients in $\mathbb{A}$, respectively.
Together with \Cref{prop:fac_adj_idio_rate}, the following \Cref{prop:FNETS_trunc_consistency} establishes the consistency of $\widehat{\mathbb{A}}(\tau)$ in~\eqref{eq:idio_objective}.

\begin{proposition} \label{prop:FNETS_trunc_consistency}
Suppose that \Cref{assump:VAR} holds. In addition, assume that
\begin{align}
\label{eq:sparsity}
C s_{\text {\upshape max}} \left\{ \left( \frac{\log(p)\log^2(n)\, \cred{\widetilde{M}_{\epsilon}}^\frac{1}{\epsilon}}{n} \right)^{\frac{\epsilon}{1+\epsilon}} \vee \frac{1}{\sqrt p} \right\} \leq \frac{\pi m_{\bxii}}{16} \, ,
\end{align}
where $C$ is the same constant as that in 
\Cref{prop:fac_adj_idio_rate}, and $m_{\bxii}$ is defined in~\eqref{eq:spec_bounds}.
Let us set
$$
\lambda \asymp (\|\mathbb{A}\|_{\infty} + 1)\left( \left( \frac{\log(p)\log^2(n)\, \cred{\widetilde{M}_{\epsilon}}^\frac{1}{\epsilon}}{ n} \right)^{\frac{\epsilon}{1+\epsilon}} \vee \frac{1}{\sqrt{p}} \right)
$$
with $\|\mathbb{A}\|_{\infty} = \max_{1 \le i \le p} \sum_{j = 1}^p \sum_{\ell = 1}^d \vert A_{\ell, ij} \vert$, and \cred{suppose that $\tau$ is chosen as in \eqref{eq:tau_prop}}.
Then on $\mathcal{E}_{n, p}$, the following statements hold: 
\begin{align}
|\widehat{\mathbb{A}}(\tau) - \mathbb{A}|_2 &\lesssim  \sqrt{s_{0}}(\|\mathbb{A}\|_{\infty} + 1)\left(\left(\frac{\log(p)\log^2(n) \cred{\widetilde{M}_{\epsilon}}^\frac{1}{\epsilon} }{n}\right)^{\frac{\epsilon}{1+\epsilon}} \, \vee \frac{1}{\sqrt{p}}\right) \, , \label{eq:l_2_error_est}\\   |\widehat{\mathbb{A}}(\tau) - \mathbb{A}|_1 &\lesssim s_0 (\|\mathbb{A}\|_{\infty} + 1)\left(\left(\frac{\log(p)\log^2(n) \cred{\widetilde{M}_{\epsilon}}^\frac{1}{\epsilon} }{n}\right)^{\frac{\epsilon}{1+\epsilon}} \, \vee \frac{1}{\sqrt{p}}\right)   \, , \nonumber \\
\|\widehat{\mathbb{A}}(\tau) - \mathbb{A}\|_{\infty} &\lesssim  s_{\text {\upshape max}} (\|\mathbb{A}\|_{\infty} + 1)\left(\left(\frac{\log(p)\log^2(n) \cred{\widetilde{M}_{\epsilon}}^\frac{1}{\epsilon} }{n}\right)^{\frac{\epsilon}{1+\epsilon}} \, \vee \frac{1}{\sqrt{p}}\right)  \, . \nonumber
\end{align}
\end{proposition}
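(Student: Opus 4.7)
The plan is to work deterministically on the event $\mathcal{E}_{n,p}$ of \Cref{prop:fac_adj_idio_rate}, on which both $|\widehat{\bGamma}(\tau) - \bGamma|_\infty$ and $|\widehat{\bgamma}(\tau) - \bgamma|_\infty$ are at most
$\phi_n := C\{(\log(p)\log^2(n)\widetilde{M}_{2+2\epsilon}^{1/\epsilon}/n)^{\epsilon/(1+\epsilon)} \vee p^{-1/2}\}$,
and then to run the standard two-ingredient Lasso oracle argument (a restricted eigenvalue condition plus a deviation bound on the gradient at the truth) row by row on the $p$ decoupled sub-problems in~\eqref{eq:robust_var_lasso}, before aggregating across rows for the three claimed norms.

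\emph{Restricted eigenvalue for $\widehat{\bGamma}(\tau)$.} The entrywise-to-quadratic-form bound $|\mathbf{v}^\top(\widehat{\bGamma}(\tau) - \bGamma)\mathbf{v}| \leq |\widehat{\bGamma}(\tau) - \bGamma|_\infty |\mathbf{v}|_1^2 \leq \phi_n |\mathbf{v}|_1^2$ holds on $\mathcal{E}_{n,p}$. Since $\bGamma$ is the block Toeplitz autocovariance matrix of the stable VAR($d$) process $\{\bxii_t\}$, \Cref{assump:VAR} together with~\eqref{eq:spec_bounds} gives $\Lambda_{\min}(\bGamma) \geq 2\pi m_{\bxii}$. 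For any $\mathbf{v}$ in the standard Lasso cone $\{|\mathbf{v}_{S^c}|_1 \leq 3|\mathbf{v}_S|_1\}$ with $|S| \leq s_{\max}$, one has $|\mathbf{v}|_1^2 \leq 16 s_{\max} |\mathbf{v}|_2^2$, and the sparsity assumption~\eqref{eq:sparsity} is calibrated exactly so that $16 s_{\max} \phi_n \leq \pi m_{\bxii}$, which yields the uniform cone-restricted lower bound $\mathbf{v}^\top \widehat{\bGamma}(\tau)\mathbf{v} \geq \pi m_{\bxii} |\mathbf{v}|_2^2$.

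\emph{Deviation at the truth.} The population Yule--Walker identity for the stationary idiosyncratic VAR gives $\bgamma = \bGamma \mathbb{A}^\top$, hence
\[
\widehat{\bgamma}(\tau) - \widehat{\bGamma}(\tau)\mathbb{A}^\top = \bigl(\widehat{\bgamma}(\tau) - \bgamma\bigr) - \bigl(\widehat{\bGamma}(\tau) - \bGamma\bigr)\mathbb{A}^\top.
\]
The elementary matrix inequality $|DB|_\infty \leq |D|_\infty \|B^\top\|_\infty$, applied with $D = \widehat{\bGamma}(\tau) - \bGamma$ and $B = \mathbb{A}^\top$, gives $|(\widehat{\bGamma}(\tau) - \bGamma)\mathbb{A}^\top|_\infty \leq \|\mathbb{A}\|_\infty \phi_n$, so that $|\widehat{\bgamma}(\tau) - \widehat{\bGamma}(\tau)\mathbb{A}^\top|_\infty \leq (1 + \|\mathbb{A}\|_\infty)\phi_n$. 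The prescribed $\lambda \asymp (1 + \|\mathbb{A}\|_\infty)\phi_n$ therefore dominates twice this, which is the standard KKT-driven condition required by the Lasso oracle argument. Because the objective~\eqref{eq:idio_objective} decouples into the $p$ problems~\eqref{eq:robust_var_lasso} sharing the same $\widehat{\bGamma}(\tau)$, the cone-restricted RE of Step 1 (uniform over supports of size $\leq s_{\max} \geq s_{0,i}$) applies to each sub-problem and yields $|\widehat{\bbeta}_i(\tau) - \mathbb{A}_i^\top|_2 \lesssim \sqrt{s_{0,i}}\,\lambda/(\pi m_{\bxii})$ and $|\widehat{\bbeta}_i(\tau) - \mathbb{A}_i^\top|_1 \lesssim s_{0,i}\,\lambda/(\pi m_{\bxii})$ by a textbook calculation. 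Summing squared $\ell_2$ bounds and taking a square root gives $|\widehat{\mathbb{A}}(\tau) - \mathbb{A}|_2 \lesssim \sqrt{s_0}\,\lambda$; summing $\ell_1$ bounds gives $|\widehat{\mathbb{A}}(\tau) - \mathbb{A}|_1 \lesssim s_0\,\lambda$; and taking the maximum over $i$ of the row-wise $\ell_1$ norms---which equals $\|\widehat{\mathbb{A}}(\tau) - \mathbb{A}\|_\infty$---gives the $s_{\max}\,\lambda$ bound. Substituting $\lambda \asymp (\|\mathbb{A}\|_\infty + 1)\phi_n$ recovers the three displayed rates.

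The main obstacle is the restricted eigenvalue step. The only input available from \Cref{prop:fac_adj_idio_rate} is entrywise control of $\widehat{\bGamma}(\tau) - \bGamma$, which by itself is far too weak to imply a global minimum-eigenvalue bound for a Gram matrix built from truncated and factor-adjusted estimates of latent quantities. Restricting to the Lasso cone is what rescues the argument, but it costs a multiplicative factor $s_{\max}$ in the quadratic-form error, which must be absorbed by the population lower bound $2\pi m_{\bxii}$; this is precisely what the constant $\pi m_{\bxii}/16$ in the sparsity condition~\eqref{eq:sparsity} is engineered to allow. Once the cone-restricted RE is in place, the remainder of the argument is deterministic Lasso algebra.
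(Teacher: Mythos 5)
Your proposal is correct and follows essentially the same route as the paper's proof: the cone-restricted eigenvalue bound obtained from $\Lambda_{\min}(\bGamma)\ge 2\pi m_{\bxii}$ (via the spectral density bounds of \citet{Basu2015}) together with the entrywise control of $\widehat{\bGamma}(\tau)-\bGamma$ on $\mathcal{E}_{n,p}$, the Yule--Walker deviation bound $|\widehat{\bgamma}(\tau)_{(j)}-\widehat{\bGamma}(\tau)\bbeta^*_j|_\infty\lesssim(\|\mathbb{A}\|_\infty+1)\phi_n$, and the standard row-wise Lasso basic-inequality argument with aggregation across the $p$ sub-problems. The only cosmetic discrepancy is that the basic inequality actually requires $\lambda$ to dominate four times (not twice) the deviation bound, which is harmless given the $\asymp$ specification of $\lambda$.
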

The proof of \Cref{prop:FNETS_trunc_consistency} proceeds by showing that on the event $\mathcal{E}_{n, p}$, the matrix $\widehat{\bGamma}(\tau)$ in \eqref{eq:gammas}
satisfies the so-called restricted eigenvalue condition under~\eqref{eq:sparsity}, along with a deterministic bound on the deviation  $|\widehat{\bgamma}(\tau)_{(j)} - \widehat{\bGamma}(\tau)\mathbb{A}_{j}|_{\infty}$ for all $1 \leq j \leq p$. 
The condition~\eqref{eq:sparsity} on the sparsity requires that for each row equation in~\eqref{eq:VAR_model}, the number of non-zero coefficients is bounded as $O\big((\tfrac{n}{\log(p)\log^2(n)})^{\frac{\epsilon}{1 + \epsilon}} \wedge \sqrt{p} \big)$.

\begin{rem} \label{rem:comparing_VAR_rates}
Under the model \eqref{eq:fac_adj_model_def}, if there is no factor in the model, i.e.\ $\bchi_t = \textbf{0}$, then the rate in  \Cref{prop:fac_adj_idio_rate} improves 
to $\big( \log(p) \log^2(n) \widetilde{M}_{2 + 2\epsilon}^{1/\epsilon} /n\big)^{\frac{\epsilon}{1 + \epsilon}}$. Consequently, 
the rate of estimation for $\widehat{\mathbb{A}}(\tau)$ reported in \Cref{prop:FNETS_trunc_consistency}, also improves analogously, e.g.\ the $\ell_2$-estimation error in~\eqref{eq:l_2_error_est} becomes 
$$
|\widehat{\mathbb{A}}(\tau) - \mathbb{A}|_2 \lesssim  \sqrt{s_{0}}(\|\mathbb{A}\|_{\infty} + 1) \left(\frac{\log(p)\log^2(n) \cred{\widetilde{M}_{\epsilon}}^\frac{1}{\epsilon} }{n}\right)^{\frac{\epsilon}{1+\epsilon}}
$$
which, as $\epsilon \to 1$, becomes comparable (up to log factors) to the rate available under Gaussianity, see e.g.\ \citet{Basu2015} and \citet{Han2015}. 
Combined with data truncation, the constrained $\ell_1$-minimisation estimator of \citet{Wang2023} attains the rate 
$\sqrt{s_0} \Vert \bGamma^{-1} \Vert_\infty \big( \log(p) \log^2(n) \cred{M_{\epsilon}}^{1/\epsilon} / n\big)^{\frac{\epsilon}{1 + \epsilon}}$, under a comparable moment condition while assuming the boundedness of $\|\mathbb{A}\|_{\infty}$ and the strong mixingness imposed on the $p$-variate process $\{\bxii_t\}_{t \in \mathbb{Z}}$.
We note that recalling the matrix $\bGamma \in \mathbb{R}^{pd \times pd}$ defined in~\eqref{eq:gammas_pop}, the term $ \Vert \bGamma^{-1} \Vert_\infty$ may grow with $p$.
Taking an alternative approach to characterising the decay of the serial dependence via a spectral decay index, 
\citet{Liu2021} derive the rate
$\sqrt{s_0} \cred{M_{\epsilon}}^{\frac{1}{2+2\epsilon}} (\Vert \mathbb{A} \Vert_\infty + 1) \big( \log(p)/ n \big)^{\frac{\epsilon}{1 + 2\epsilon}}$;
by comparison, our analysis returns an improved rate of estimation in its dependence on the index $\epsilon$.
\end{rem}

\section{Simulation studies} \label{sec:simulations}



\subsection{Set-up}

We consider the following factor-adjusted VAR process of order one:
\begin{gather*}
    \bX_t = \bLambda \bF_t + \bxii_t
    \text{ \ with \ } 
    \bxii_t = \mathbf{A}\bxii_{t-1} + \boldsymbol{\varepsilon}_t \, .
\end{gather*}
For each of the settings described below,
we generate 200 realisations and vary the dimensions of the data as $(n, p) \in \{(50,100),(50,200),(100,200), (200, 50),(200, 100)\}$. 

\paragraph{Generation of $\bxii_t$.} 
We consider the following settings for the VAR coefficients $\mathbf{A} = [A_{ij}]$.
\begin{enumerate}[label = (V\arabic*), wide, labelindent=0pt]
\item \label{eq:tri_diag_A_sparse_var} \textbf{Banded.} 
$A_{ij} = 0.5 \cdot \mathbb{I}_{\{i = j \}} + \text{\upshape sign}(i - j) \cdot 0.4 \cdot \mathbb{I}_{\{\vert i - j \vert = 1 \}}$. Taken from \citet{Wang2023}. 

\item \label{eq:renyi_A} \textbf{Erd\"{o}s-R\'{e}nyi random graph.} We generate $\mathcal{E}$ as the edge set of a randomly generated directed graph with the link probability $p^{-1}$, and set $\mathbf{A}_0 = [A_{0, ij}]$ with $A_{0, ij} = 0.275 \cdot \mathbb{I}_{\{(i, j) \in \mathcal{E}\}}$.
Then we obtain $\mathbf{A} = \mathbf{A}_0 / \|\mathbf{A}_0\|_2$.
Taken from \citet{Barigozzi2024}.
\end{enumerate}

For the innovation, we set $\boldsymbol{\varepsilon}_t = \bSigma_{\boldsymbol{\varepsilon}}^{1/2} \mathbf{v}_t$ where the entries of $\mathbf{v}_t = (v_{1t}, \ldots, v_{pt})^\top$ are i.i.d.\ random variables with zero mean and unit variance generated as below.
\begin{enumerate}[label = (I\arabic*), wide, labelindent=0pt]
    \item \label{innov_dist_gauss} $v_{it} \overset{\text{iid}}{\sim} \mathcal{N}(0, 1)$.

    \item \label{innov_dist_t} $\sqrt{\tfrac{\nu}{\nu - 2}} \cdot v_{it} \overset{\text{iid}}{\sim} t_\nu$ with $\nu \in \{2.1, 3, 4\}$.

    \item \label{innov_dist_log_norm} $\sqrt{\exp(2) - \exp(1)} \cdot (v_{it} + \exp(1/2)) \overset{\text{iid}}{\sim} \text{lognormal}(0,1)$
\end{enumerate}
 
With~\ref{innov_dist_t} and~\ref{innov_dist_log_norm}, we explore the non-Gaussian scenarios. 
Under~\ref{innov_dist_t}, the random variables $v_{it}$ are symmetric heavy-tailed random variables with finite $k$-th moments only for $k < \nu$, and hence do not have a finite fourth moment.
In the main text, we focus on the case of $\bSigma_{\boldsymbol{\varepsilon}} = \mathbf{I}$ while in Appendix~\ref{sec:additional_simul}, we consider alternative scenarios.

\paragraph{Generation of $\bchi_t$.} We consider the following settings:
\begin{enumerate}[label= (F\arabic*), wide, labelindent=0pt]
\item \label{item:fac_1} Setting $r = 3$, the loadings are generated as $\lambda_{ij} \overset{\text{iid}}{\sim} \mathcal{N}(0, 1)$, where $\lambda_{ij}$ denotes the $(i,j)$-th entry of the loading matrix $\bLambda$. 
The factor time series is generated as a VAR($1$) process $\bF_t = \mathbf{D} \bF_{t-1} + \mathbf{u}_t$, with $\mathbf{D} = 0.7 \cdot \mathbf{D}_0/\Lambda_{\max}(\mathbf{D}_0)$ where the diagonal entries of $\mathbf{D}_0$ are i.i.d.\ $\mathcal{U}[0.5, 0.8]$, and the off-diagonal ones are from $\mathcal{U}[0, 0.3]$. 
The distribution of $\mathbf{u}_t$ follows that of $\boldsymbol{\varepsilon}_t$ in~\ref{innov_dist_gauss}--\ref{innov_dist_log_norm}.
Once $\chi^0_{it} = \boldsymbol{\lambda}_i^\top \mathbf{F}_t$ is generated, we re-scale it such that empirical variance of $\chi_{it}$ matches that of $\xi_{it}$.
Motivated by \citet{POETFan2011} and \citet{Barigozzi2024}.
\item \label{item:fac_0} \textbf{No common component.} $\bchi_t = \mathbf{0}$. 
\end{enumerate}




\subsection{Results} \label{sec:results}

As noted in Section~\ref{sec:CV_tune_tau}, we assume that the true number of factors $r$ and the VAR order $d$ are known when estimating both the factor model and the VAR coefficient matrix. This allows us to separate issues of factor and order selection from the investigation into the performance of the proposed tail-robust estimation method in combination with the CV-based selection of $\tau$. \cred{We investigate the impact of VAR order mis-specification in \Cref{sec:order_missspec}}.
In the main text, we focus on the results on estimating the VAR coefficients, while in Appendices~\ref{sec:cov_sim}--\ref{sec:acv_sim}, we consider the performance of the intermediate estimators of the (auto)covariance of $\bX_t$, which is of independent interest. \cred{We also examine the computational cost of our method in \Cref{sec:comp_cost} in comparison with the $\ell_1$-regularised Huber regression estimator of \citet{adaHuber}.
Then in \Cref{sec:forc_simul}, we assess the forecasting performance on simulated data.}

We measure the error in estimating the VAR coefficient matrix $\mathbb{A}$ by the following maximum column Euclidean norm:
\begin{align}
\label{eq:err:measure}
|(\widehat{\mathbb{A}} - \mathbb{A})^\top|_{2,\infty} =  \max_{1 \le i \le p}  \left| \widehat{\bbeta}_{i} - \bbeta^{*}_{i}\right|_2,
\end{align}
where $\bbeta^{*}_{i}$ denotes the $i$-th row of the true parameter matrix $\mathbb{A} = [\mathbf{A}_1, \ldots, \mathbf{A}_d] = \left[\bbeta^{*}_{1}, \dots, \bbeta^{*}_{p}\right]^\top$.

Firstly, we consider the case where the factor-driven common component is present under~\ref{item:fac_1}. \cred{In \Cref{fig:fac_var_banded_boxplot_cellwise,fig:fac_var_renyi_boxplot}, we compare the proposed method with data truncation to two alternatives: the approach without truncation (`NotTruncated', i.e.\ $\tau = \infty$), and the approach based on applying the `wrapping function' of \citet{Raymaekers2021} to the data (`Wrapping'), while the remaining steps of the two-stage procedure described in Section~\ref{sec:method} are analogously applied.
The latter has been adopted by \citet{Trucos2021} for gaining robustness against outliers in generalised dynamic factor modelling and forecasting, and is implemented in the R package \texttt{cellWise} \citep{cellWise2026}.}
Next, we consider the scenarios where the factor-driven common component is absent under~\ref{item:fac_0}. 
Here, we include the performance from the $\ell_1$-regularised Huber regression estimator (`adaHuber') proposed by \citet{Wang2021} and implemented in the R package \texttt{adaHuber} \citep{adaHuber}. 
While this estimator provides robustness against heavy-tailedness or outliers in the errors, it may be susceptible to heavy-tailed regressors, and can be viewed as an intermediate one between our proposal and the standard, non-truncated counterpart.
We note that the estimators discussed in \Cref{rem:comparing_VAR_rates}, namely those proposed by \citet{Wang2023} and \citet{Liu2021}, are methodologically close to, or even identical to our proposal in the absence of the factor-driven component. 

\begin{figure}[h!t!]
    \centering
    \begin{minipage}[b]{0.32\textwidth}
        \includegraphics[width=\textwidth]{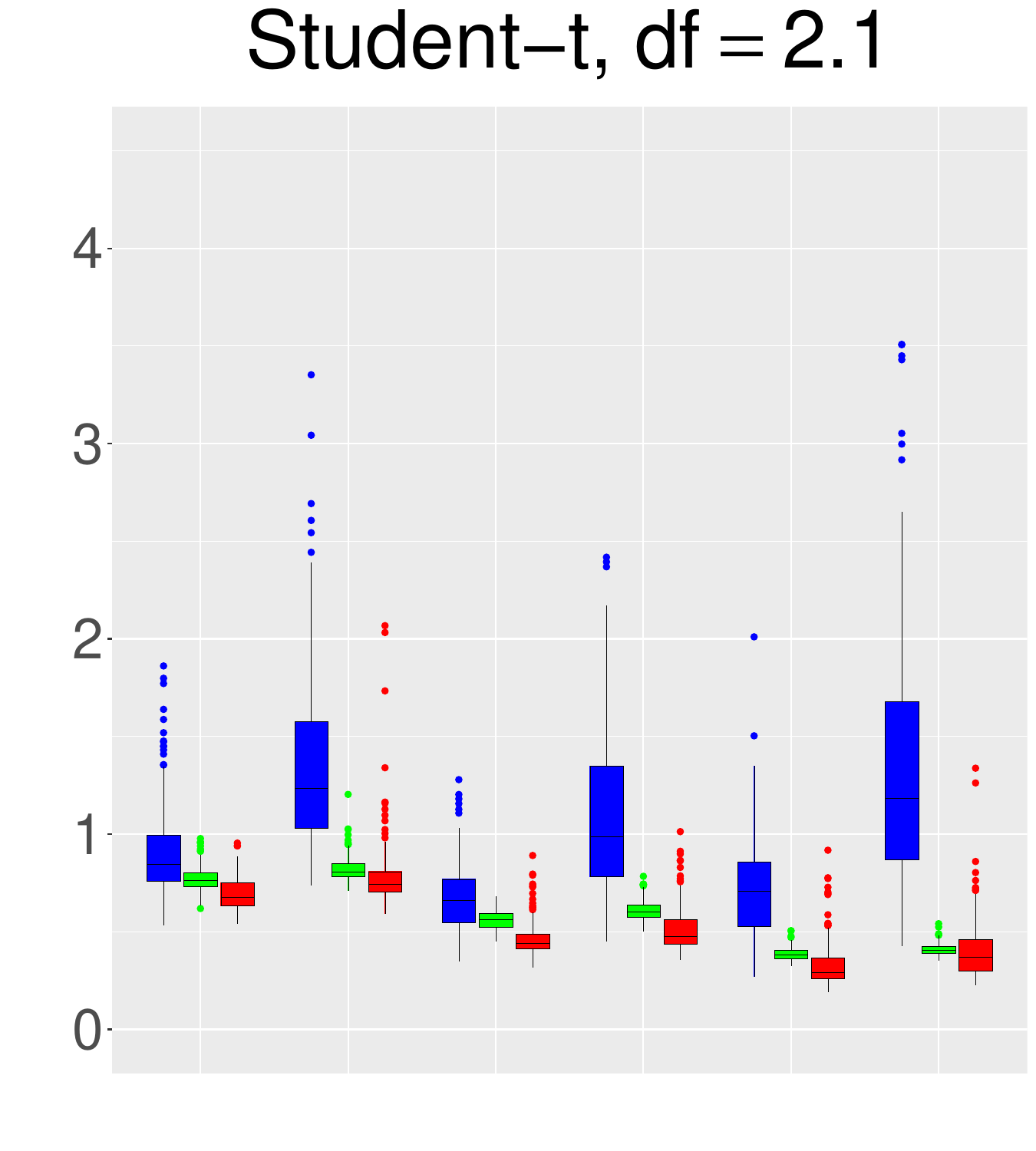}
    \end{minipage}
    \begin{minipage}[b]{0.32\textwidth}
        \includegraphics[width=\textwidth]{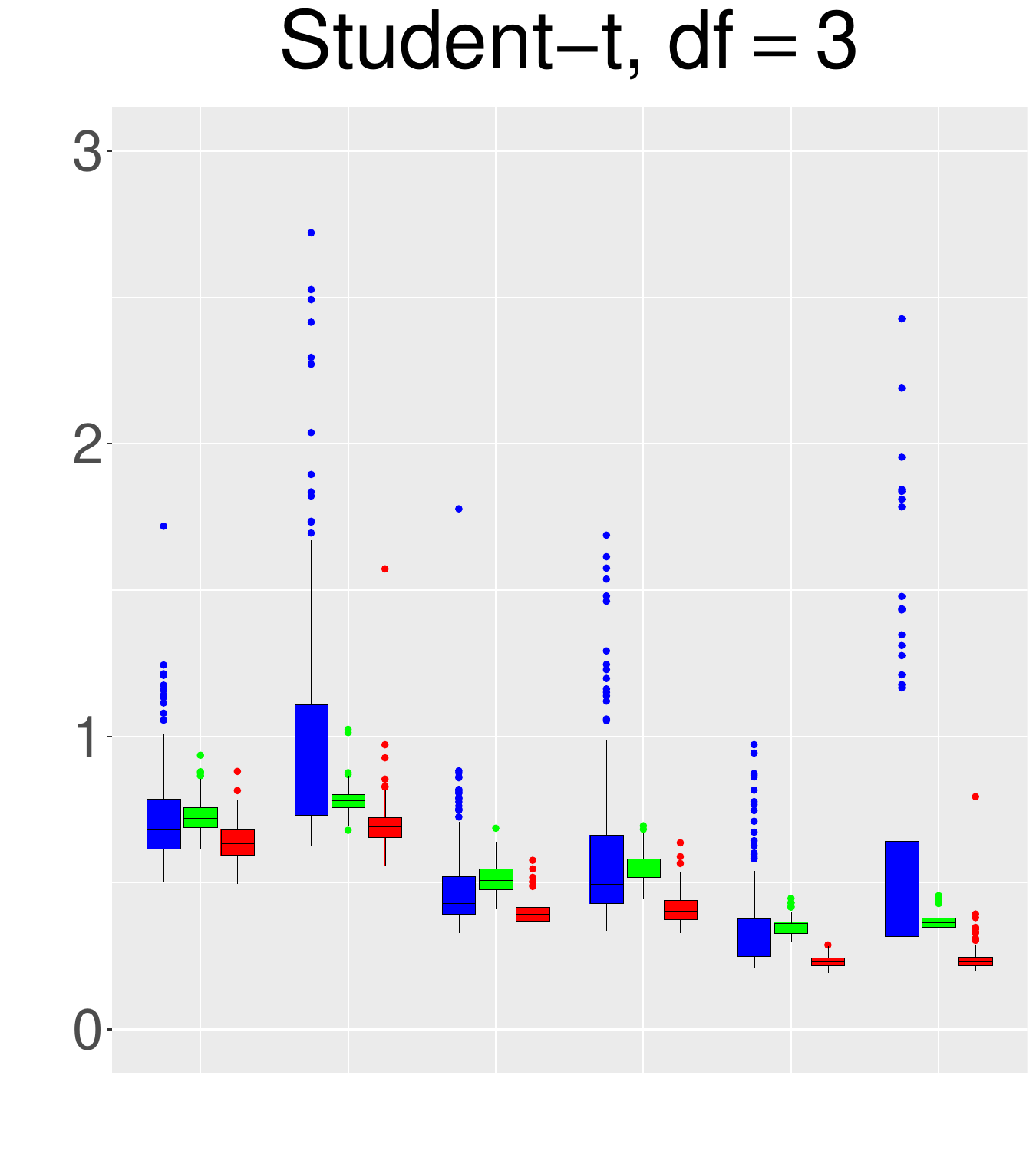}
    \end{minipage}
    \begin{minipage}[b]{0.32\textwidth}
        \includegraphics[width=\textwidth]{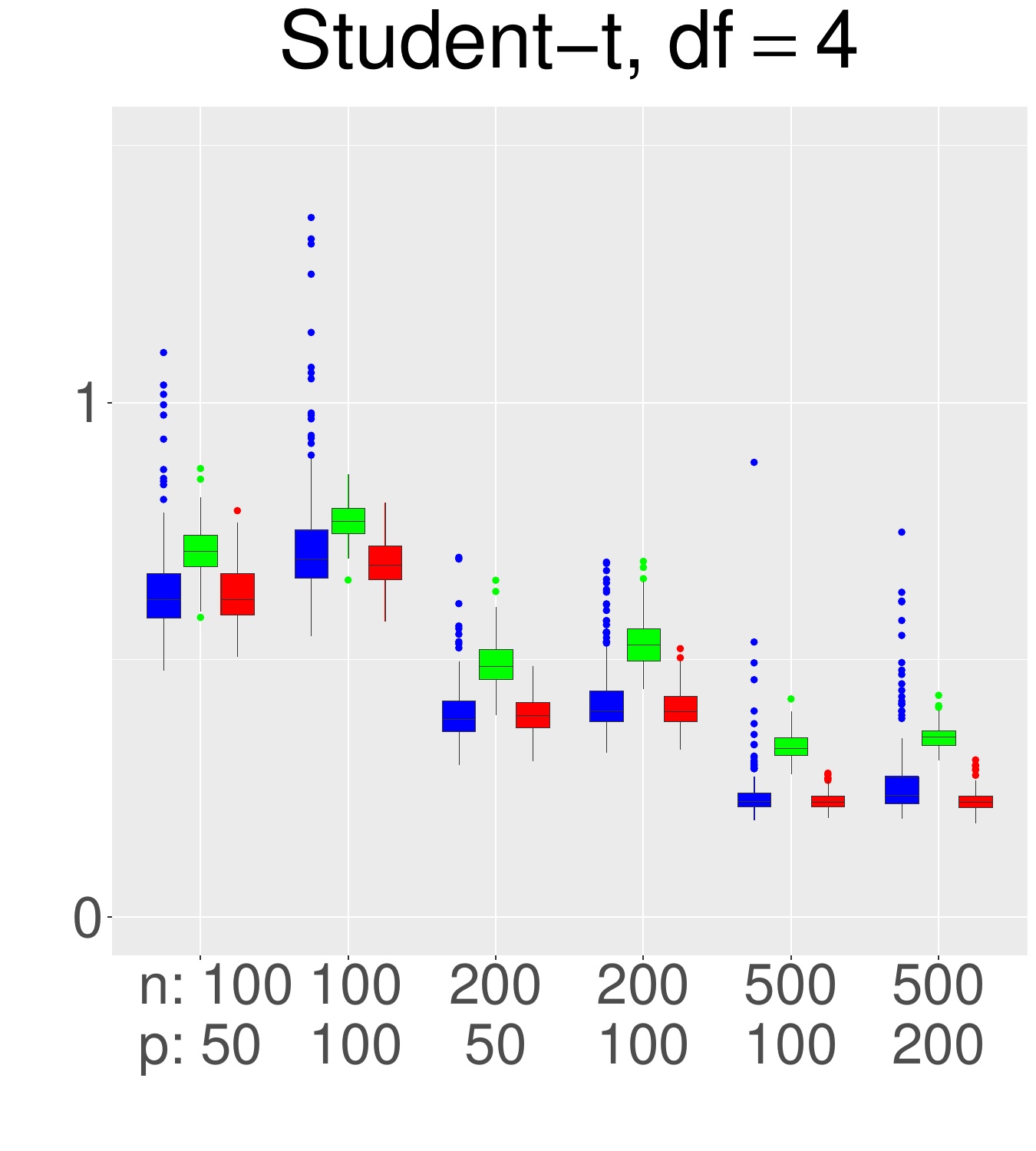}
    \end{minipage}  
    
    \begin{minipage}[b]{0.32\textwidth}
        \includegraphics[width=\textwidth]{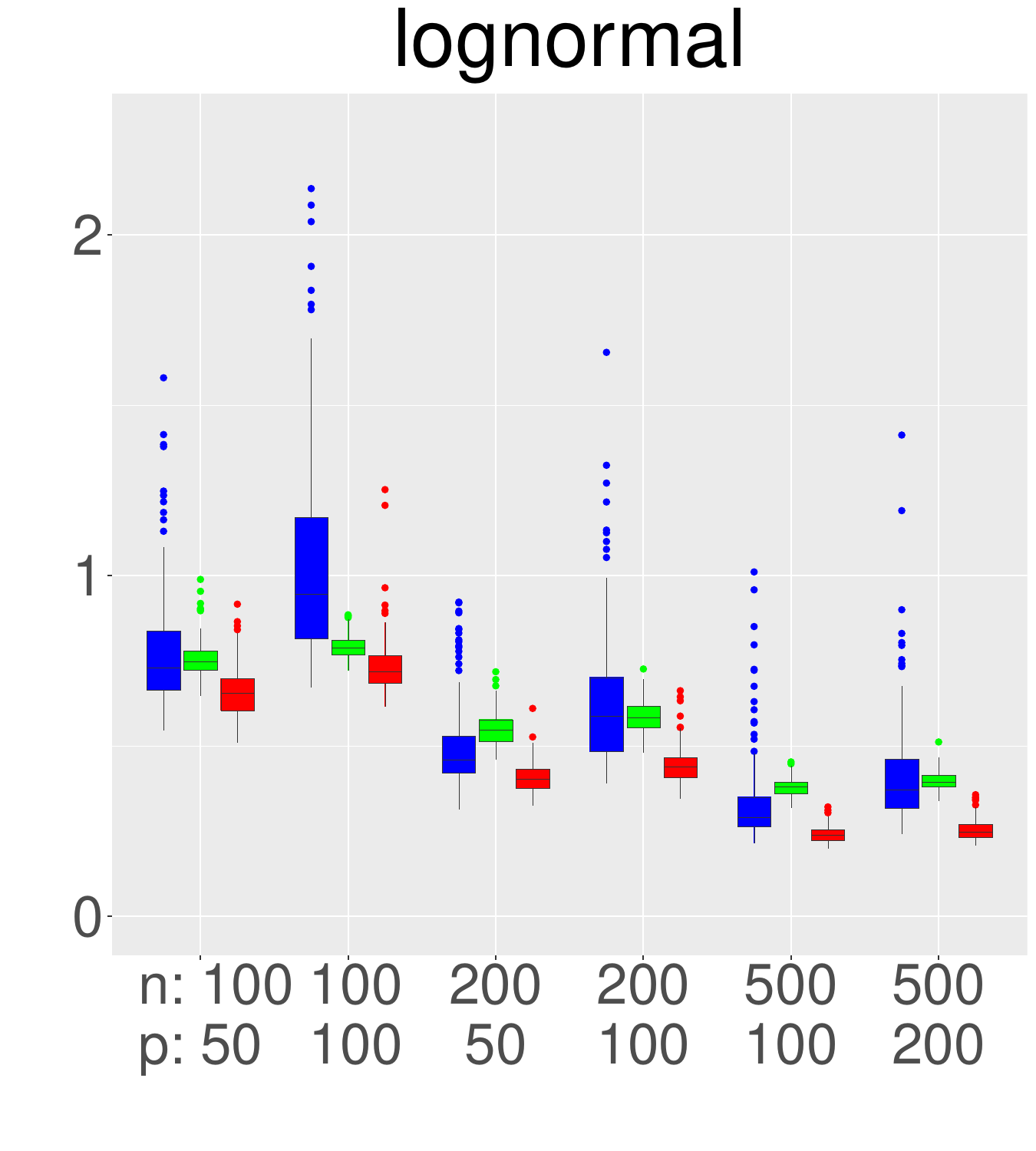}
    \end{minipage}
    \begin{minipage}[b]{0.32\textwidth}
        \includegraphics[width=\textwidth]{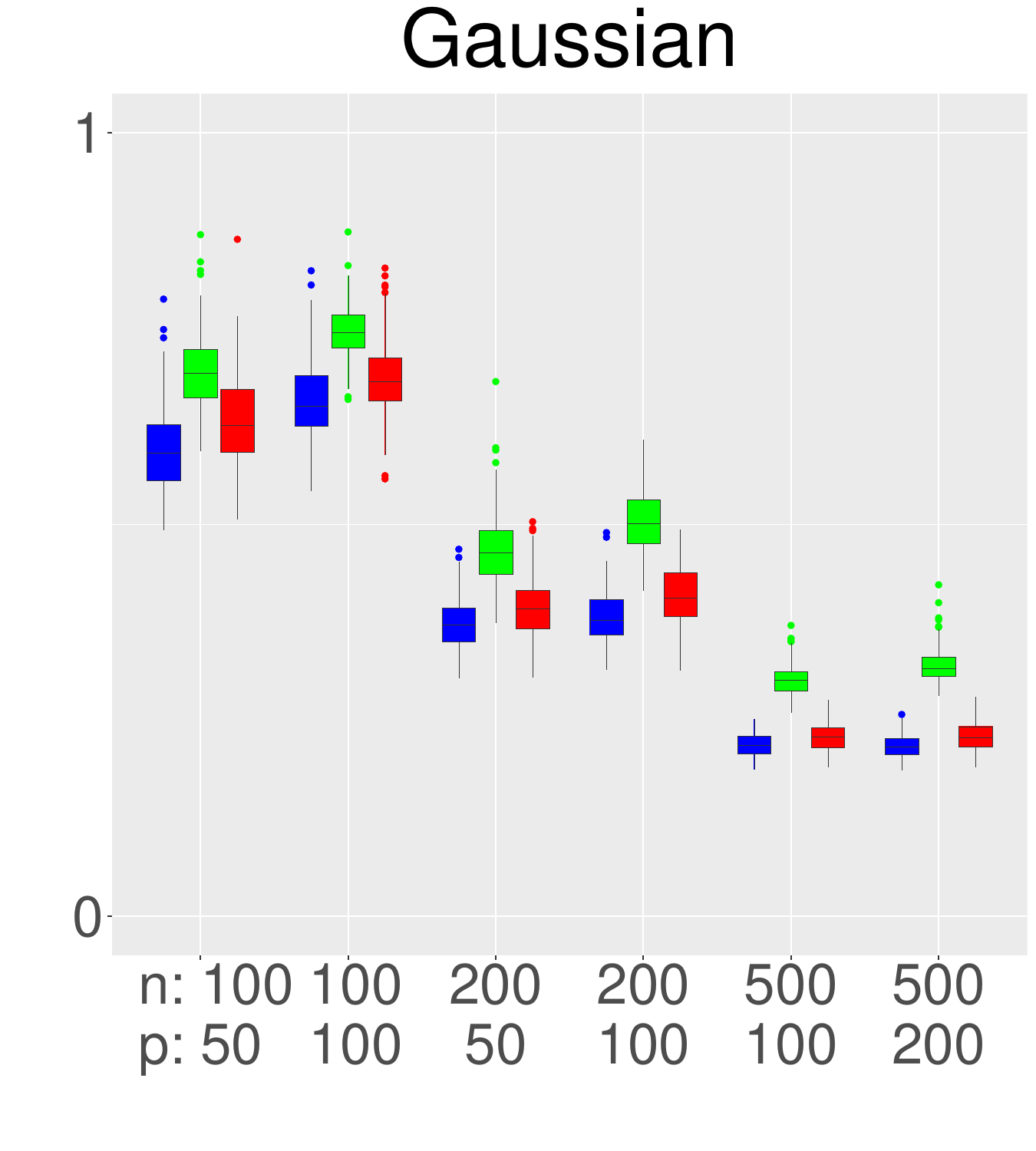}
    \end{minipage}
    \begin{minipage}[c]{0.32\textwidth}
    \vspace{-6cm}
        \includegraphics[width=\textwidth]{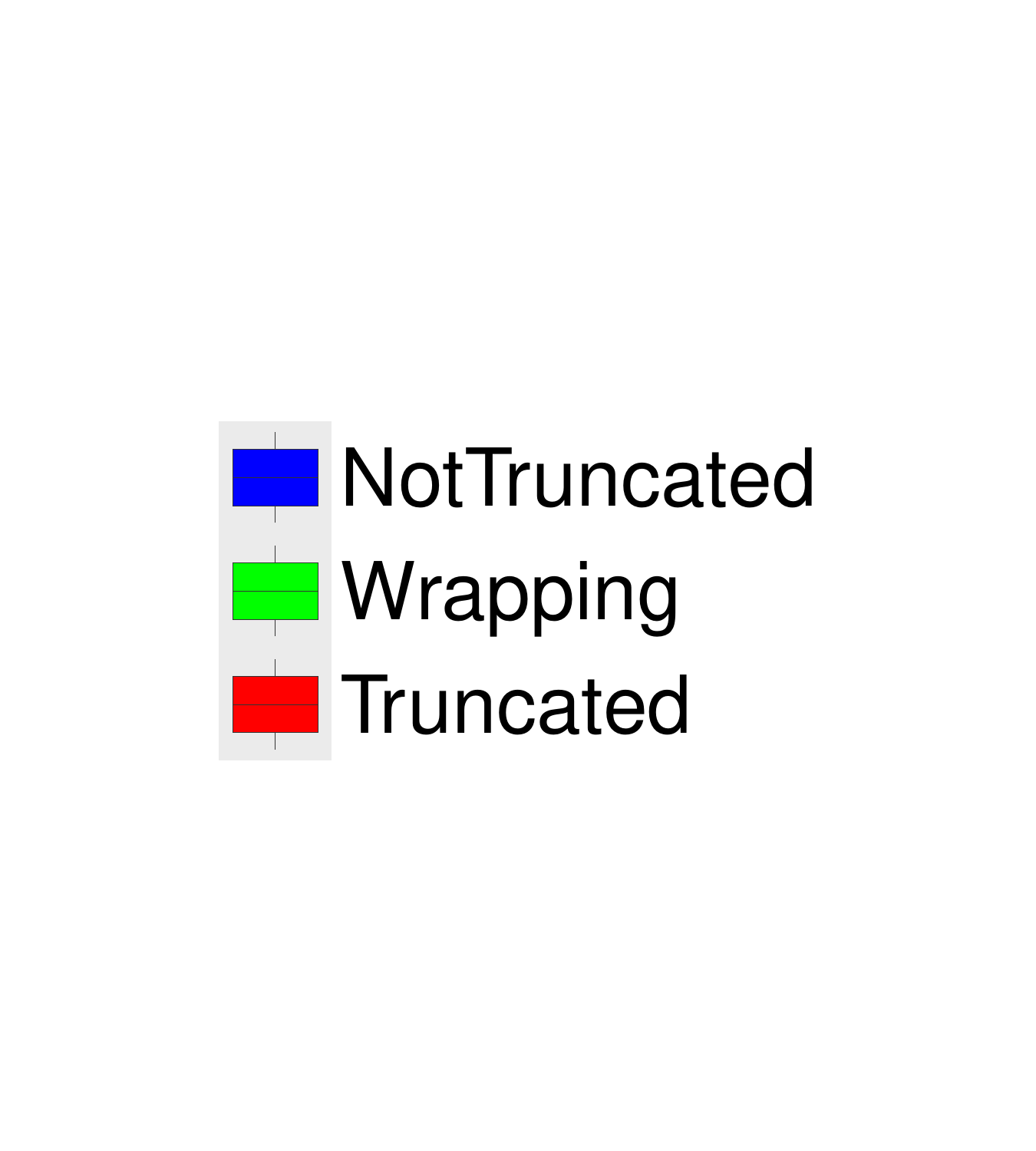}
    \end{minipage}  

    \caption{Boxplots of estimation errors measured as in~\eqref{eq:err:measure} over $200$ realisations as $(n, p)$ and the innovation distribution vary (see~\ref{innov_dist_gauss}--\ref{innov_dist_log_norm}), when the data are generated as in \ref{item:fac_1} and \ref{eq:tri_diag_A_sparse_var}.}
    \label{fig:fac_var_banded_boxplot_cellwise}
\end{figure}

\begin{figure}[h!t!]
    \centering
    \begin{minipage}[b]{0.32\textwidth}
        \includegraphics[width=\textwidth]{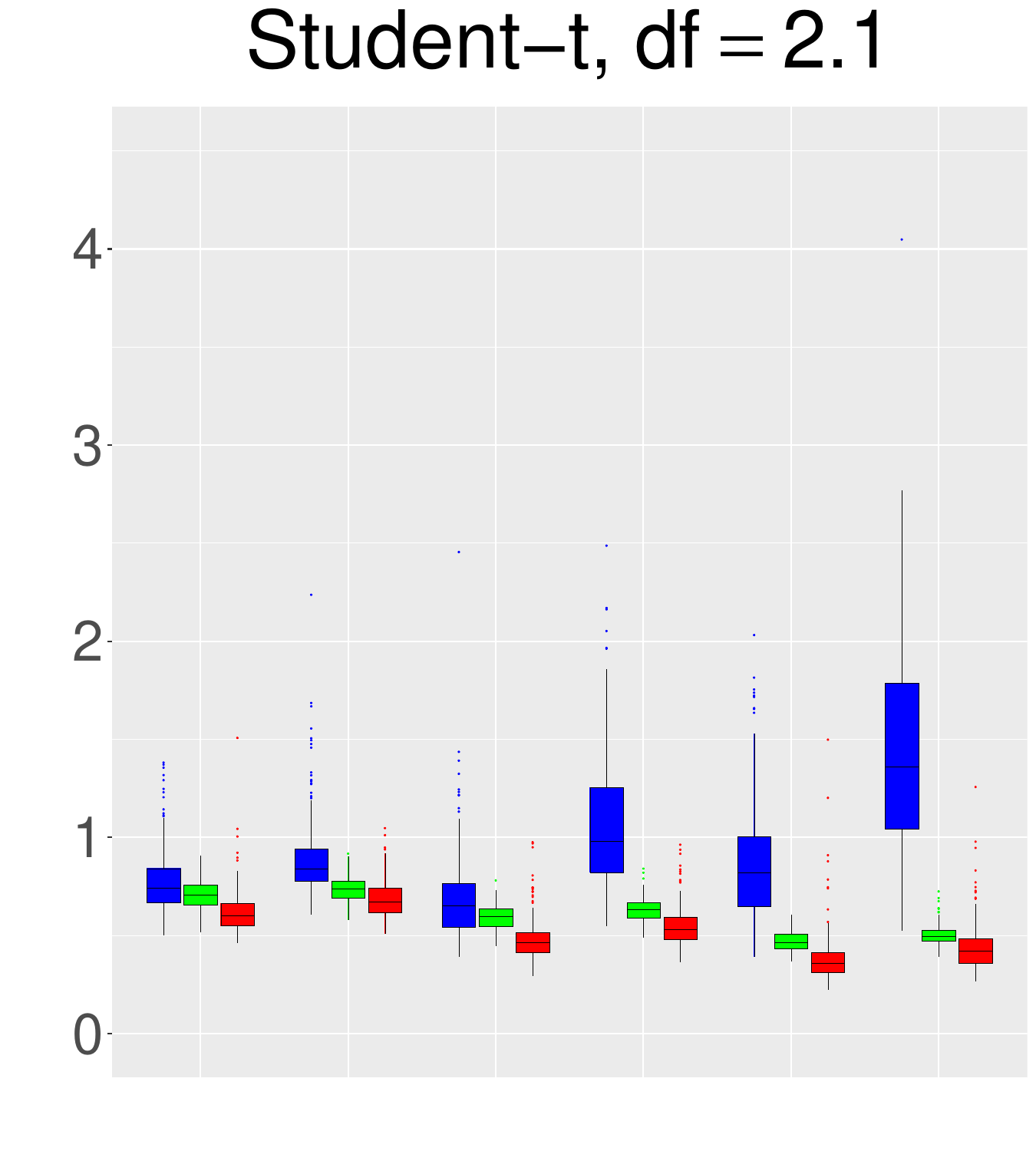}
    \end{minipage}
    \begin{minipage}[b]{0.32\textwidth}
        \includegraphics[width=\textwidth]{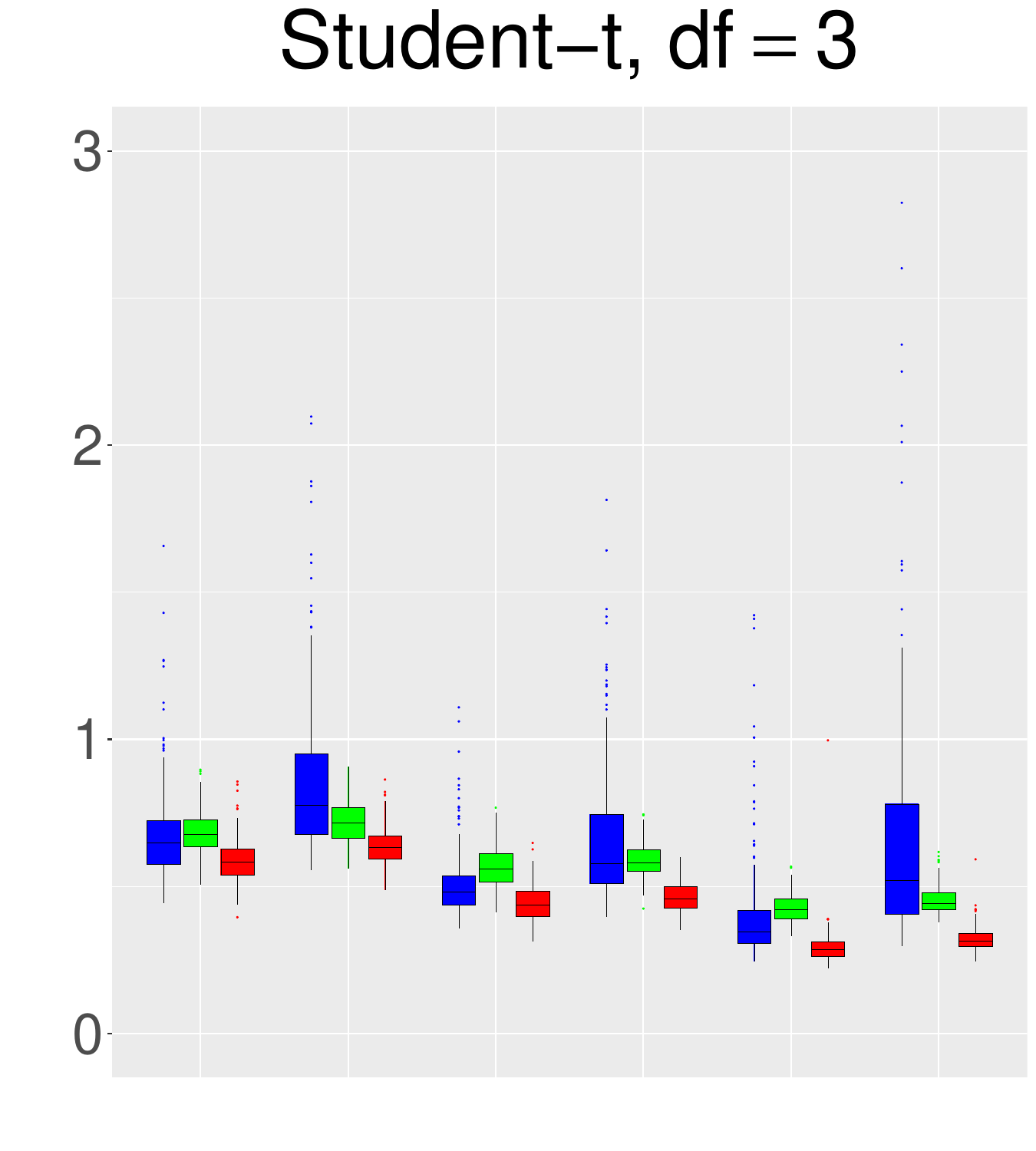}
    \end{minipage}
    \begin{minipage}[b]{0.32\textwidth}
        \includegraphics[width=\textwidth]{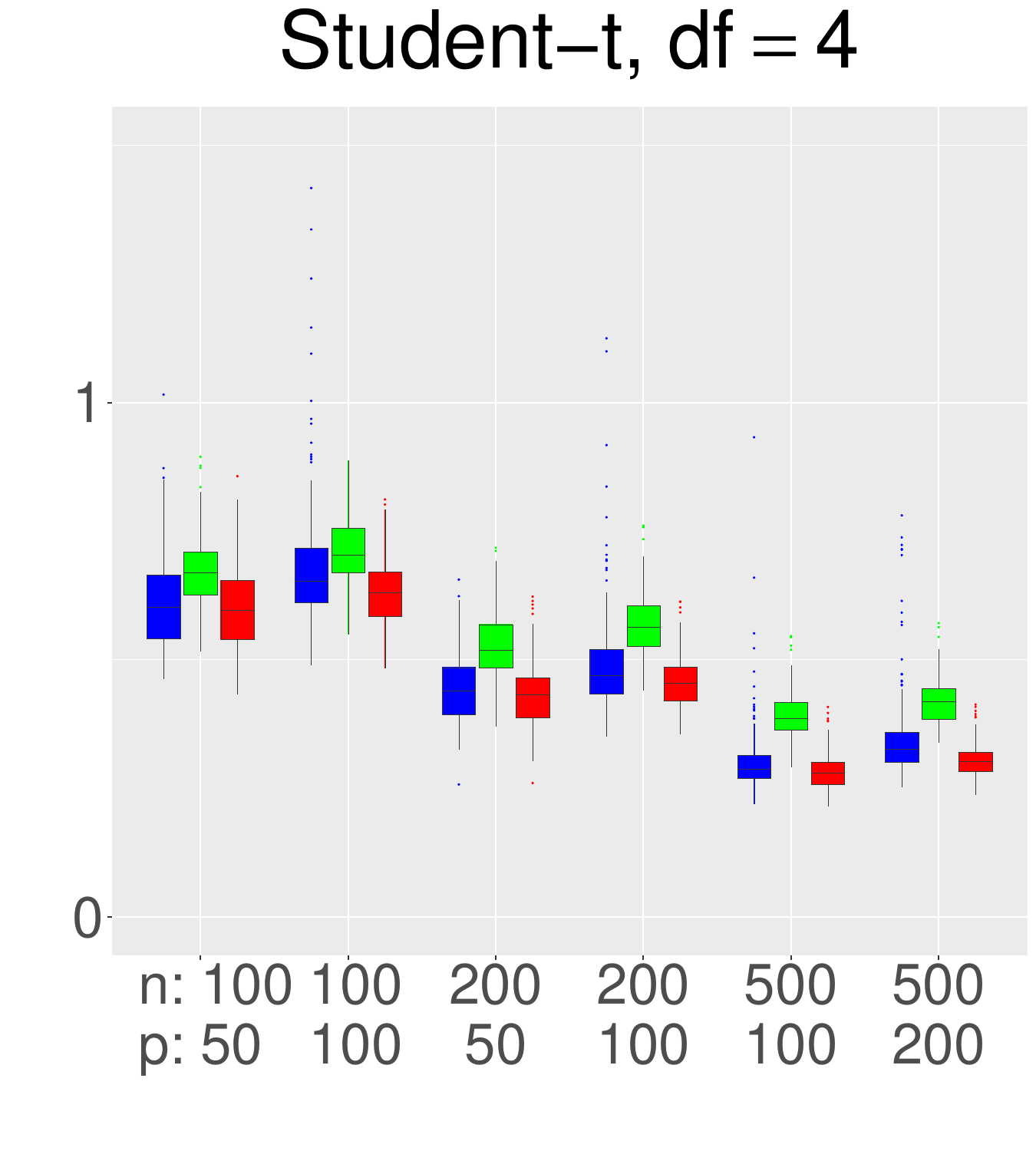}
    \end{minipage}  
    
    \begin{minipage}[b]{0.32\textwidth}
        \includegraphics[width=\textwidth]{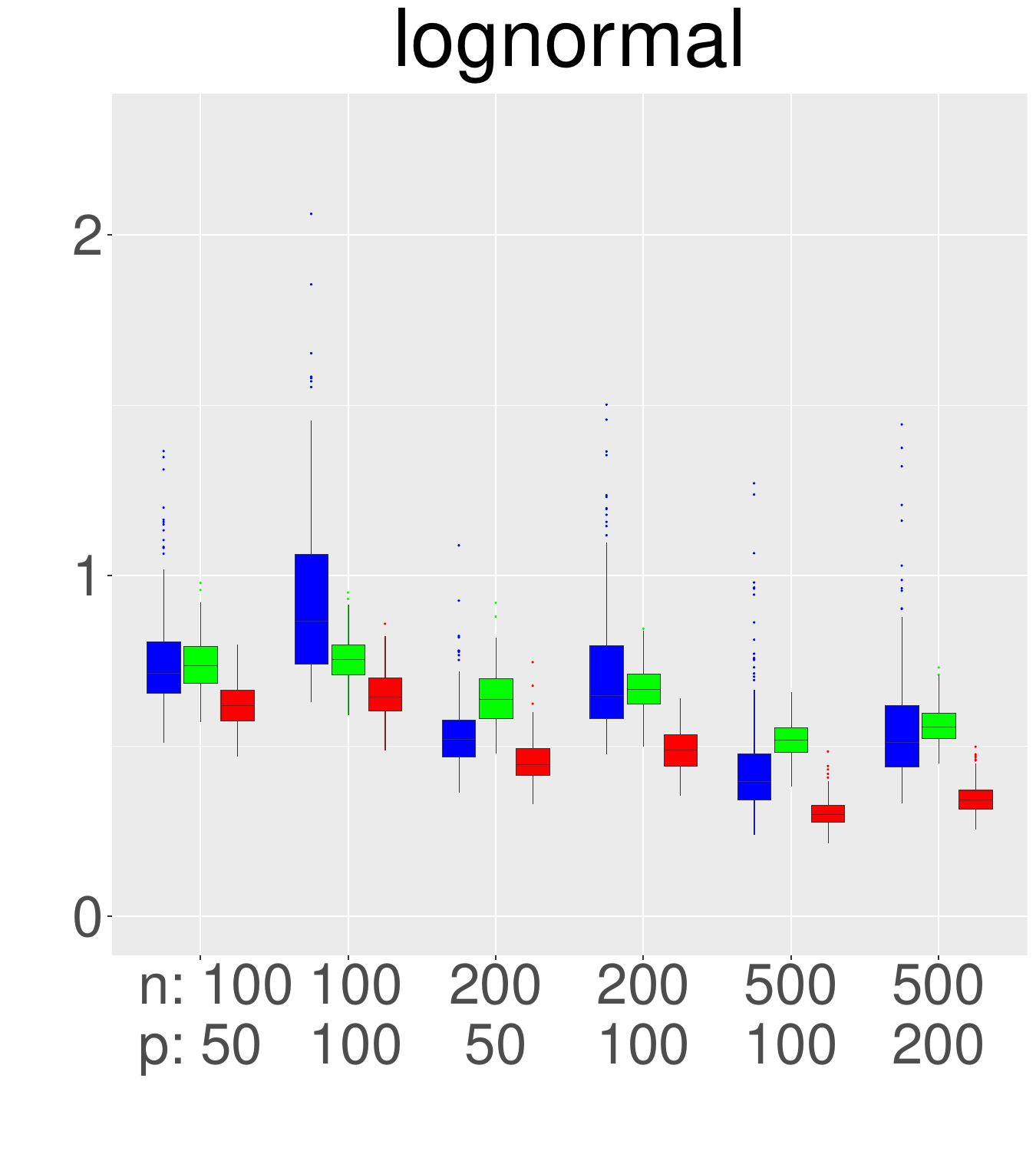}
    \end{minipage}
    \begin{minipage}[b]{0.32\textwidth}
        \includegraphics[width=\textwidth]{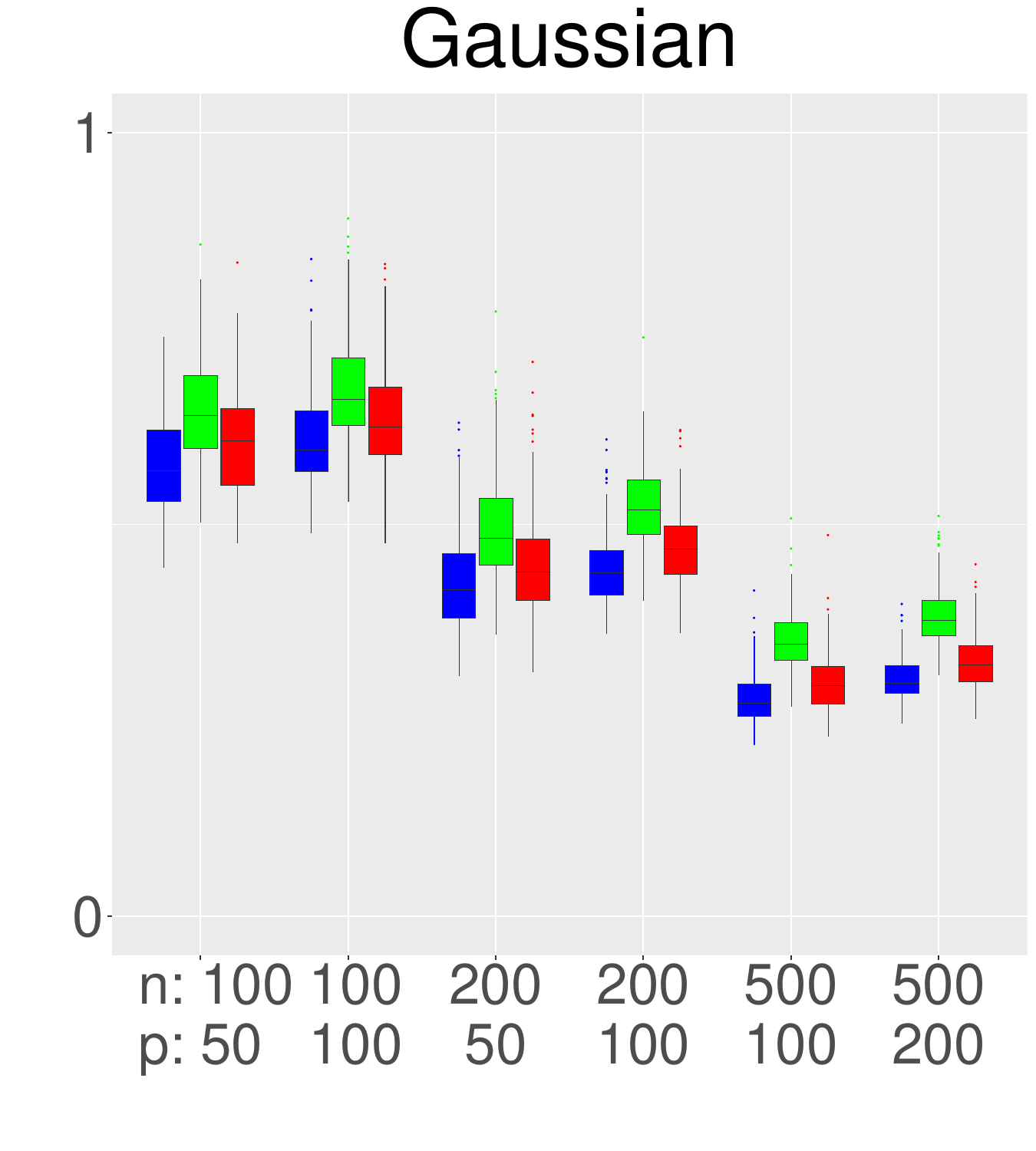}
    \end{minipage}
    \begin{minipage}[c]{0.32\textwidth}
    \vspace{-6cm}
        \includegraphics[width=\textwidth]{legend_wrapping.pdf}
    \end{minipage}

    \caption{Boxplots of estimation errors measured as in~\eqref{eq:err:measure} over $200$ realisations as $(n, p)$ and the innovation distribution vary (see~\ref{innov_dist_gauss}--\ref{innov_dist_log_norm}), when the data are generated as in \ref{item:fac_1} and \ref{eq:renyi_A}.}
    \label{fig:fac_var_renyi_boxplot}
\end{figure}

Overall, \Cref{fig:fac_var_banded_boxplot_cellwise},~\ref{fig:fac_var_renyi_boxplot} and~\ref{fig:comparing_adaHuber_lasso_vs_p_boxplot_outliers} show that as we depart
away from Gaussianity, the relative performance of the proposed data truncation-based method, is dramatically better than the standard or the adaHuber method; under Gaussianity, it is only marginally worse than the standard approach.
In fact, the truncation-based estimator attains the errors comparable to those
attained under Gaussianity, regardless of the innovation distribution. 
It is of interest to note that adaHuber performs substantially worse than our method or the standard one in Gaussian scenarios.
The results also show that the gap between our method and the competitors increases as the dimension to sample size ratio increases. 

\begin{figure}[h!t!]
    \centering
    \begin{minipage}[b]{0.32\textwidth}
        \includegraphics[width=\textwidth]{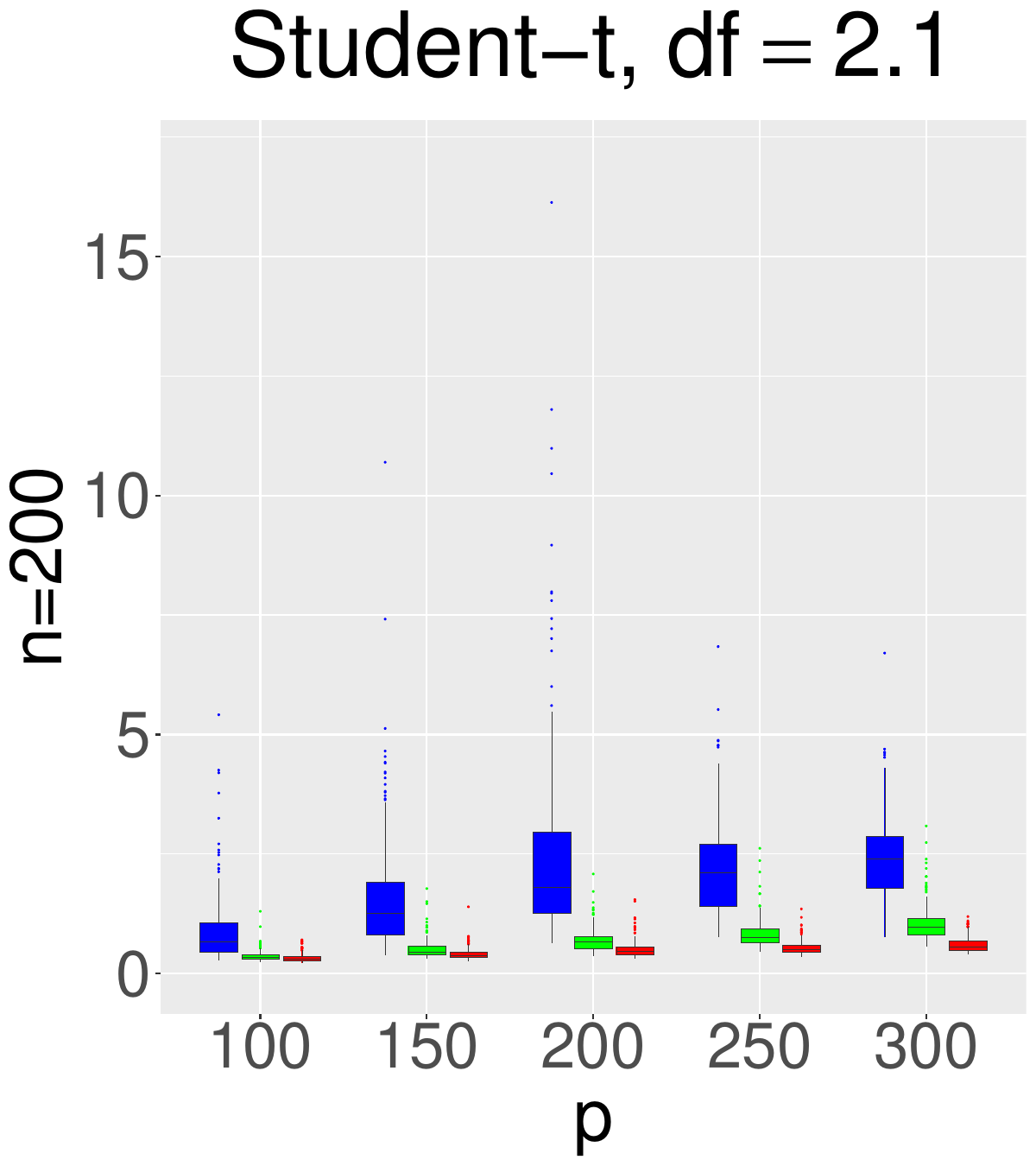}
    \end{minipage}
    \begin{minipage}[b]{0.32\textwidth}
        \includegraphics[width=\textwidth]{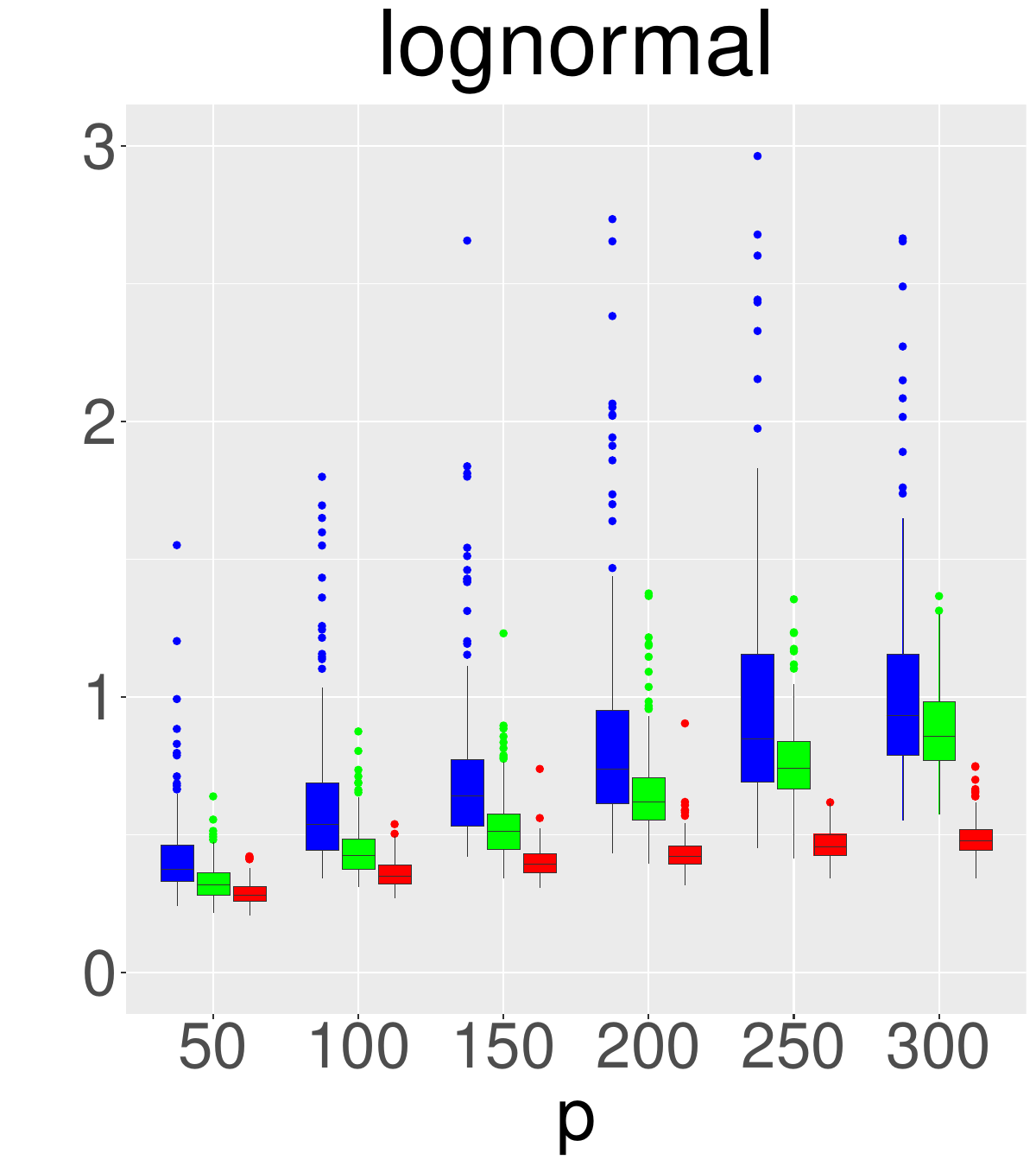}
    \end{minipage}
    \begin{minipage}[b]{0.32\textwidth}
        \includegraphics[width=\textwidth]{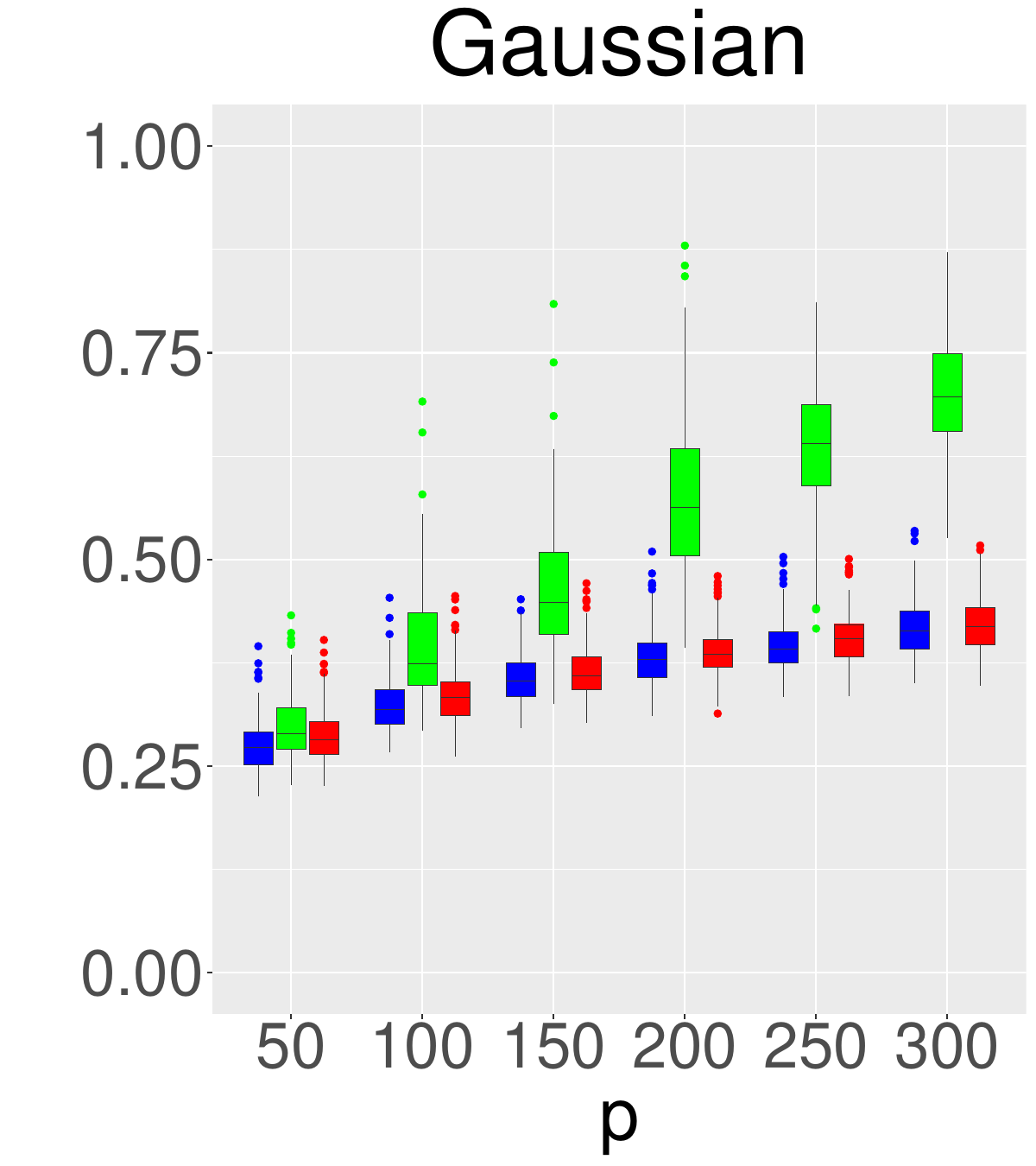}
    \end{minipage}  
    
    \begin{minipage}[t]{\textwidth}
        \centering
        \includegraphics[width=8cm]{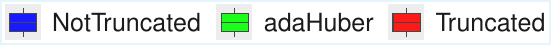} 
    \end{minipage}
    
    \begin{minipage}[b]{0.32\textwidth}
        \includegraphics[width=\textwidth]{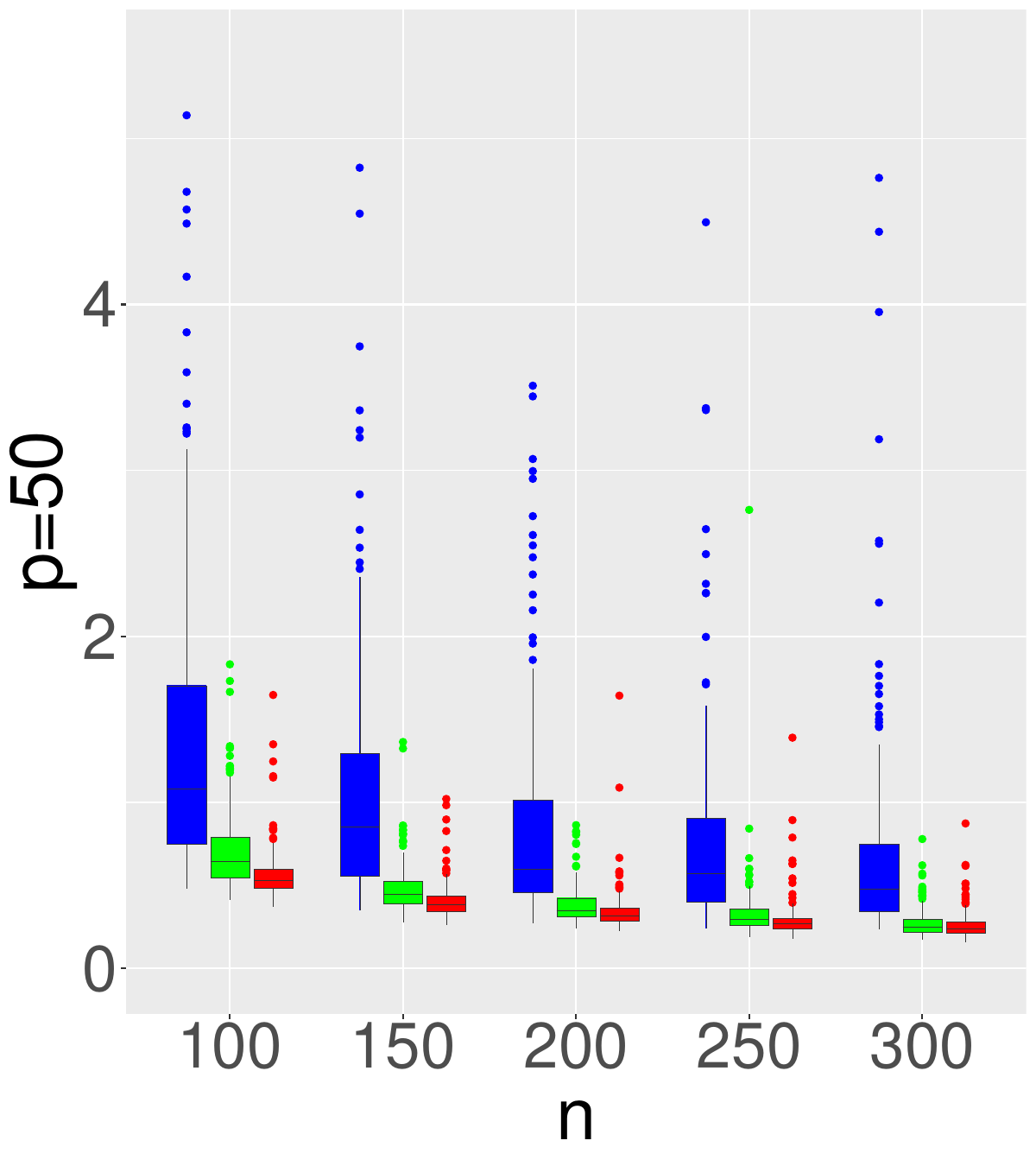}
    \end{minipage}
    \begin{minipage}[b]{0.32\textwidth}
        \includegraphics[width=\textwidth]{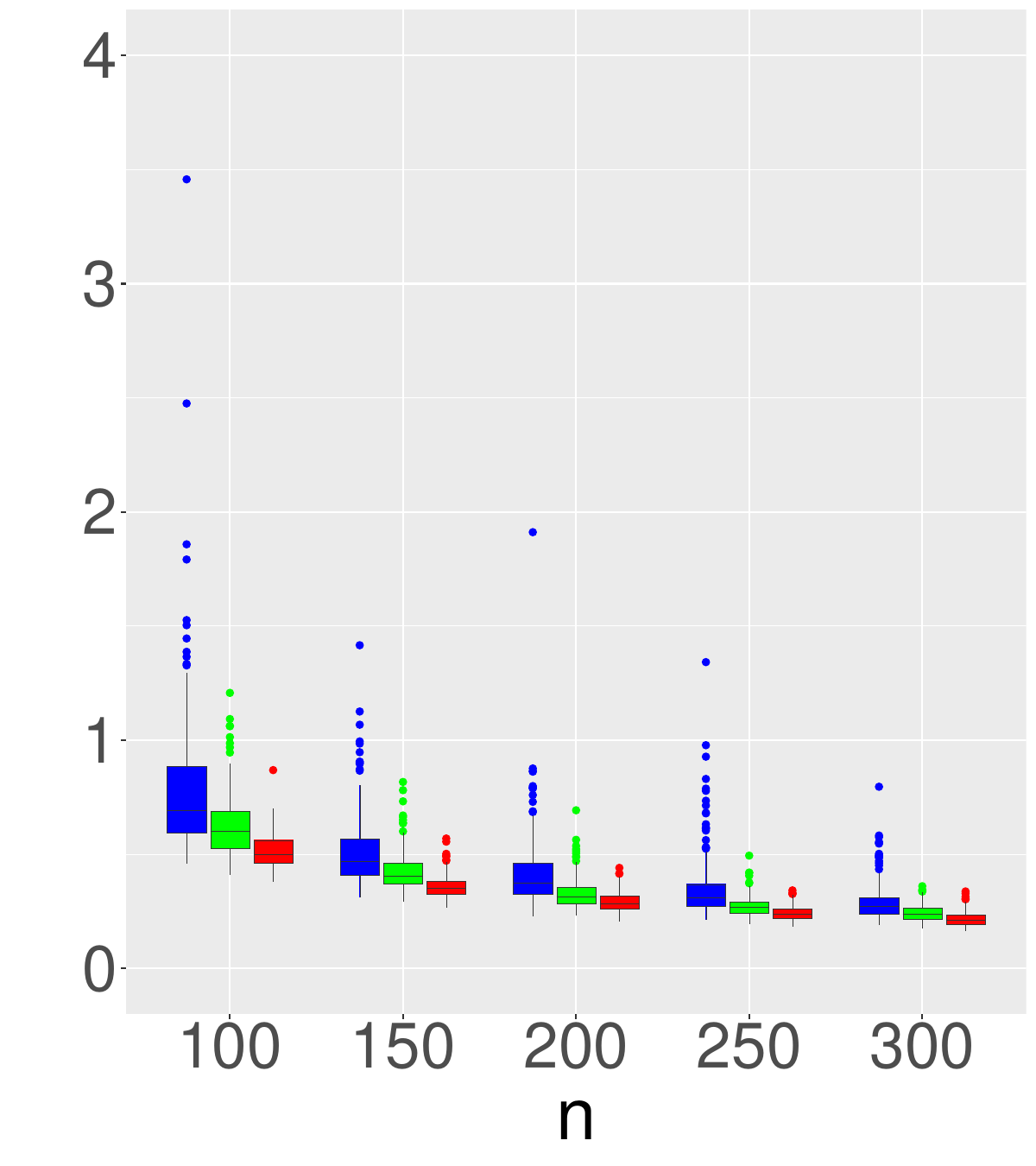}
    \end{minipage}
    \begin{minipage}[b]{0.32\textwidth}
        \includegraphics[width=\textwidth]{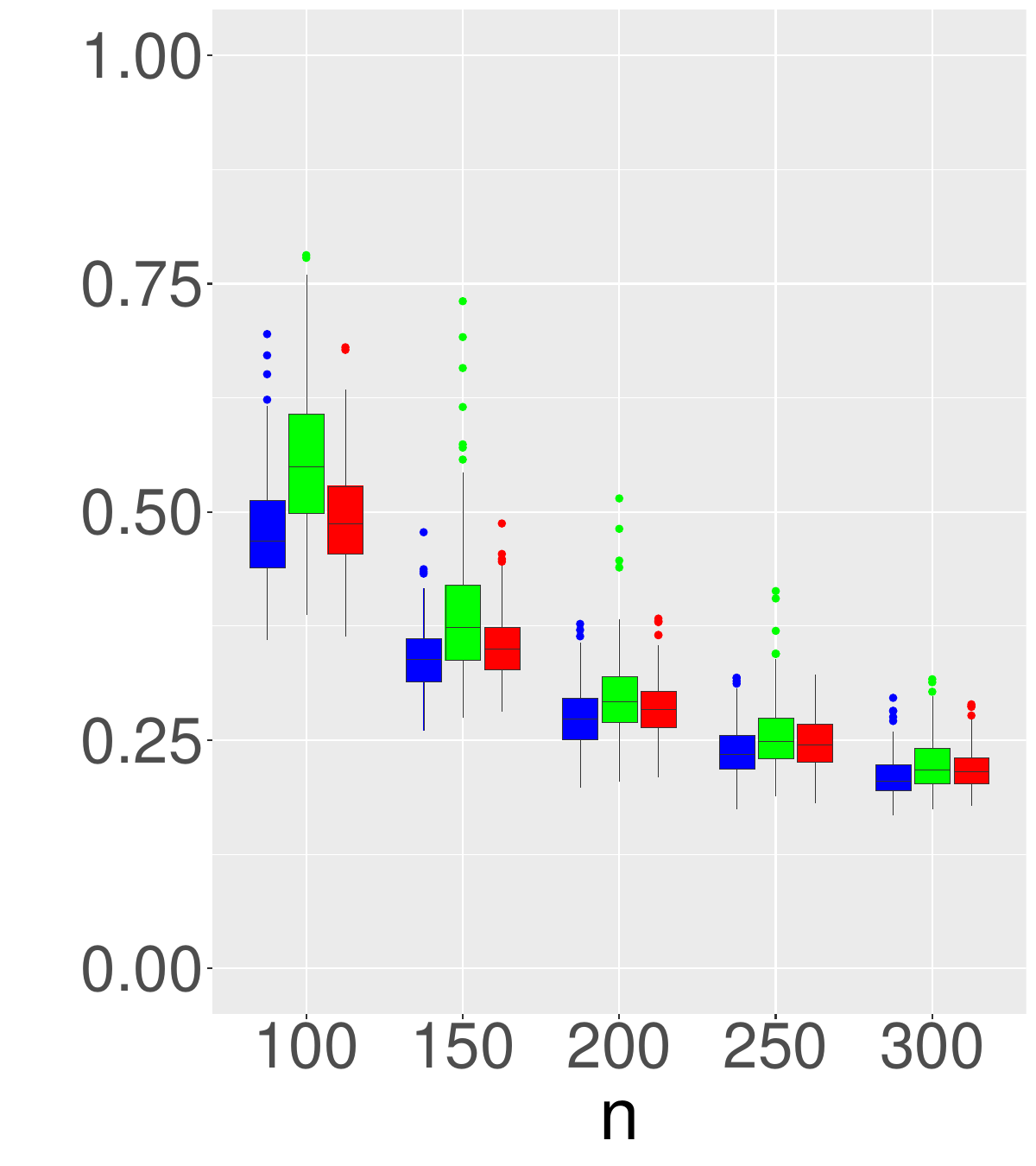}
    \end{minipage}  

    \caption{
    Box plots of estimation errors measured as in~\eqref{eq:err:measure} over $200$ realisations as $(n, p)$ and the innovation distribution vary (see~\ref{innov_dist_gauss}--\ref{innov_dist_log_norm}), when the data are generated as in \ref{item:fac_0} and \ref{eq:tri_diag_A_sparse_var}.
    }
    \label{fig:comparing_adaHuber_lasso_vs_p_boxplot_outliers}
\end{figure}

We additionally present a summary of the relative performance of the truncation estimator against that of the standard, no-truncation-based approach (with $\tau = \infty$), in \Cref{tab:auto_var_banded_lasso_diag,tab:no_stand_auto_var_renyi_lasso_diag}, measured by the relative mean errors (`RME') defined by:
\begin{equation} \label{eq:RME}
\mathrm{RME}_{\mathbb{A}}=\frac{\sum_{i=1}^{200}\left|\widehat{\mathbb{A}}^{(i)}(\tau) - \mathbb{A}\right|_{m}}{\sum_{i=1}^{200}\left|\widehat{\mathbb{A}}^{(i)} -\mathbb{A}\right|_{m}} \, ,
\end{equation}
where $\widehat{\mathbb{A}}^{(i)}(\tau)$ and $\widehat{\mathbb{A}}^{(i)}$ are the estimates from the $i$-th realisation for $1 \le i \le 200$.
The subscript $m$ denotes the matrix norm used to compute the $\mathrm{RME}_{\mathbb{A}}$ for which we consider the element-wise maximum norm (`max') and the maximum column Euclidean norm ($\ell_{2,\infty}$). 
Smaller $\mathrm{RME}_{\mathbb{A}}$ indicates a larger improvement of the proposed tail-robust method, whilst a value larger than one signifies the standard no-truncation estimator is better. 
As expected, we observe that the largest improvement is in the heaviest tailed case with the innovations generated from the \cred{Student-$t$ distribution with $2.1$ degrees of freedom.}

\begin{table}[h!t!] 
    \centering
    \caption{VAR coefficient matrix estimation errors measured by $\mathrm{RME}_{\mathbb{A}}$ in~\eqref{eq:RME} with different matrix norms from 200 realisations, when the data are generated as in \ref{item:fac_0} and \ref{eq:tri_diag_A_sparse_var}, as $(n, p)$ and the innovation distribution vary (see \ref{innov_dist_gauss}--\ref{innov_dist_log_norm}).
    }
    \begin{tabular}{c cc cc cc cc cc}
        \toprule & 
                 \multicolumn{2}{c}{log-normal} &
                 \multicolumn{2}{c}{$t_{2.1}$} & \multicolumn{2}{c}{$t_{3}$} & \multicolumn{2}{c}{$t_{4}$} & 
                 \multicolumn{2}{c}{Normal}\\
        \cmidrule(l{2pt}r{2pt}){2-3}
        \cmidrule(l{2pt}r{2pt}){4-5}
        \cmidrule(l{2pt}r{2pt}){6-7}
        \cmidrule(l{2pt}r{2pt}){8-9}
        \cmidrule(l{2pt}r{2pt}){10-11} $(n,p)$
        &  max   & $\ell_{2,\infty}$ & max & $\ell_{2,\infty}$ & max & $\ell_{2,\infty}$ & max & $\ell_{2,\infty}$ & max & $\ell_{2,\infty}$\\
        \midrule 
(100,50) & 0.738 & 0.653 & 0.465 & 0.408 & 0.809 & 0.691 & 0.944 & 0.881 & 1.017 & 1.032 \\ 
  (100,100) & 0.669 & 0.580 & 0.348 & 0.323 & 0.701 & 0.548 & 0.940 & 0.839 & 1.002 & 1.014 \\ 
  (200,50) & 0.777 & 0.711 & 0.462 & 0.388 & 0.783 & 0.665 & 0.961 & 0.921 & 1.028 & 1.040 \\ 
  (200,100) & 0.695 & 0.579 & 0.304 & 0.259 & 0.693 & 0.531 & 0.948 & 0.866 & 1.019 & 1.024 \\ 
  (500,100) & 0.774 & 0.657 & 0.292 & 0.236 & 0.776 & 0.616 & 0.951 & 0.880 & 1.026 & 1.029 \\ 
  (500,200) & 0.670 & 0.540 & 0.200 & 0.160 & 0.573 & 0.368 & 0.935 & 0.811 & 1.026 & 1.020 \\
        \bottomrule
    \end{tabular}
    \label{tab:auto_var_banded_lasso_diag}
\end{table}

\begin{table}[h!t!] 
    \centering
    \caption{VAR coefficient matrix estimation errors measured by $\mathrm{RME}_{\mathbb{A}}$ in~\eqref{eq:RME} with different matrix norms from 200 realisations, when the data are generated as in \ref{item:fac_0} and \ref{eq:renyi_A}, as $(n, p)$ and the innovation distribution vary (see~\ref{innov_dist_gauss}--\ref{innov_dist_log_norm}).}
    \begin{tabular}{c cc cc cc cc cc}
        \toprule & 
                 \multicolumn{2}{c}{log-normal} &
                 \multicolumn{2}{c}{$t_{2.1}$} & \multicolumn{2}{c}{$t_{3}$} & \multicolumn{2}{c}{$t_{4}$} & 
                 \multicolumn{2}{c}{Normal}\\
        \cmidrule(l{2pt}r{2pt}){2-3}
        \cmidrule(l{2pt}r{2pt}){4-5}
        \cmidrule(l{2pt}r{2pt}){6-7}
        \cmidrule(l{2pt}r{2pt}){8-9}
        \cmidrule(l{2pt}r{2pt}){10-11} $(n,p)$
        &  max   & $\ell_{2,\infty}$ & max & $\ell_{2,\infty}$ & max & $\ell_{2,\infty}$ & max & $\ell_{2,\infty}$ & max & $\ell_{2,\infty}$\\
        \midrule 
(100,50) & 0.650 & 0.648 & 0.563 & 0.539 & 0.765 & 0.702 & 0.942 & 0.884 & 1.007 & 1.035 \\ 
  (100,100) & 0.554 & 0.624 & 0.689 & 0.647 & 0.655 & 0.628 & 0.921 & 0.872 & 1.000 & 1.035 \\ 
  (200,50) & 0.716 & 0.685 & 0.418 & 0.386 & 0.759 & 0.691 & 0.928 & 0.890 & 1.026 & 1.049 \\ 
  (200,100) & 0.617 & 0.582 & 0.405 & 0.375 & 0.704 & 0.619 & 0.924 & 0.864 & 1.008 & 1.042 \\ 
  (500,100) & 0.595 & 0.576 & 0.326 & 0.281 & 0.665 & 0.597 & 0.936 & 0.904 & 1.017 & 1.058 \\ 
  (500,200) & 0.602 & 0.566 & 0.304 & 0.255 & 0.495 & 0.395 & 0.925 & 0.846 & 1.010 & 1.049 \\ 
        \bottomrule
    \end{tabular}
\label{tab:no_stand_auto_var_renyi_lasso_diag}
\end{table}

\section{Real data application} \label{sec:real-data}


We use the proposed tail-robust estimation method for the factor-adjusted VAR model, 
to forecast a large set of major US macroeconomic variables taken from the Federal Reserve Economic Data monthly database (FRED-MD).  
FRED-MD is maintained by the St.\ Louis Federal Reserve Bank, with monthly updates of the data released.
\cred{Following \citet{McCracken2015} we perform variable-specific transformations for stationarity, which involve differencing or logarithmic transforms and combinations therefore; we refer to the paper for their detail.} 
We consider the variables with no missing values which results in the dataset containing the observations from 108 variables ($p = 108$) between 1960-02 and 2023-11 ($n = 767$).
FRED-MD has popularly been analysed in the factor modelling literature as a benchmark dataset, and the presence of a handful of factors driving the pervasive cross-sectional correlations is evident from \Cref{fig:eigen_fredmd}.
\cred{In some prior analyses, extreme observations were removed manually \citep{Kock2024}, which motivates the use of our proposed tail-robust method which handles the extreme observations due to heavy tails in a data-driven manner, without any manual removal.} 
We select the number of factors $r$ using the three information criteria of \citet{Bai2002}, all of which return $7$ as the factor number. 

We perform a rolling window-based forecasting exercise with the window size $T = 120$ (corresponding to $10$ years of data).
At each $t \ge T$, using the preceding $T$ observations, we standardise each of the series with the median absolute deviation and then estimate the model in~\eqref{eq:fac_adj_model_def} as described in Section~\ref{sec:method} where, for the VAR order, we consider $d \in \{1, \ldots, 5\}$; we denote all such estimators with the subscripts $t$ and $T$.
Then, we generate the $h$-step ahead forecast of $\bX_{t + h}$ following \citet{stock2002forecasting}, see also \citet{Barigozzi2024}. 
Specifically, for the common component, we estimate the best linear predictor of $\bchi_{t + h}$ given the preceding~$T$ observations, as 
\begin{align*}
    \widehat{\boldsymbol{\bchi}}_{t+h \mid T}(\tau) = \widehat{\bGamma}_{\bchi,t\mid T}(\tau,h) \widehat{\mathbf{E}}_{t\mid T}(\tau) (\widehat{\mathbf{M}}_{t\mid T}(\tau))^{-1} (\widehat{\mathbf{E}}_{t\mid T}(\tau))^{\top} \bX_t \, ,
\end{align*}
where $\widehat{\bGamma}_{\bchi,t\mid T}(\tau,h) = T^{-1} \sum_{u = t - T + h + 1}^t \widehat{\bchi}_{u, t\mid T}(\tau)   \widehat{\bchi}_{u-h, t\mid T}(\tau)^\top$ denotes an estimator of the autocovariance of $\bchi_t$ at lag $h$, the matrix $\widehat{\mathbf{M}}_{t\mid T}(\tau)$ contains the leading $r$ eigenvalues of $\widehat{\bGamma}_{\bx,t\mid T}(\tau, 0)$ on its diagonal, $\widehat{\mathbf{E}}_{t\mid T}(\tau)$ contains the corresponding eigenvectors, and
$\widehat{\bchi}_{u, t\mid T}(\tau) = \widehat{\mathbf{E}}_{t\mid T}(\tau) ( \widehat{\mathbf{E}}_{t\mid T}(\tau) )^\top \bX_u(\tau)$ for $u \in \{t - T + 1, \ldots, t\}$. 
The forecast of the idiosyncratic component is computed as \begin{align} 
\widehat{\boldsymbol{\xi}}_{t+h \mid T}(\tau)=\sum_{\ell=1}^{d} \widehat{\mathbf{A}}_{\ell, t\mid T}(\tau) \widehat{\boldsymbol{\xi}}_{t+h-\ell, t \mid T}(\tau), 
\nonumber
\end{align}
where $\widehat{\boldsymbol{\xi}}_{t+h-\ell, t \mid T}(\tau) = \bX_{t + h - \ell} - \widehat{\bchi}_{t + h - \ell, t\mid T}(\tau)$ for $\ell \ge h$, and $\widehat{\boldsymbol{\xi}}_{t+h-\ell, t \mid T}(\tau) = \widehat{\boldsymbol{\xi}}_{t+h - \ell \mid T}(\tau)$ otherwise.
Then, the combined forecast is given by
\begin{gather*}
 \widehat{\bX}_{t+h \mid T}(\tau) = \widehat{\boldsymbol{\bchi}}_{t+h \mid T}(\tau) + \widehat{\boldsymbol{\xi}}_{t+h \mid T}(\tau) \, .
\end{gather*}
We carry out the exercise with the forecasting horizon $h = 1$, and compute the absolute error of the forecasts for each individual series. 
For comparison, we consider the standard approach without any truncation (i.e.\ $\tau = \infty$). Denoting the forecast errors for the $i$-th series from the two approaches, by
\begin{gather*}
    \text{FE}_{it}(\tau) = \left|\widehat{X}_{i,t+1 \mid T}(\tau) - X_{i,t+1}\right| \text{ \ and \ } \text{FE}_{it}(\infty) = \left|\widehat{X}_{i,t+1 \mid T}(\infty) - X_{i,t+1}\right| \, ,
\end{gather*}
respectively,
we investigate whether they perform significantly differently at any point by means of the fluctuation test of \citet{Giacomini2010} (implemented in the R package \texttt{murphydiagram}, \citeauthor{murphydiagram}, \citeyear{murphydiagram}). This test addresses the fact that one forecasting method may outperform another only on certain intervals, which global performance measures can miss. Formally, it is a two-sided test with the null hypothesis that the expected losses of the competing forecasts are equal on all of the windows. A series of test statistics is computed as a scaled rolling average of the process $\text{FE}_{it}(\tau) - \text{FE}_{it}(\infty)$ over the window of length $\lfloor \mu (n - T) \rfloor$ for some $\mu \in (0, 1)$. 
Following the authors' recommendation of using $\mu \ge 0.2$, we set $\mu = 0.3$. 

We observe that irrespective of the VAR order number $d$, significant improvements are observed for a large number of variables when the data are suitably truncated; 
for simplicity, we present the results for $d = 1$. 
Then, for the 54 out of the 108 variables, the fluctuation test favours the forecasting performance from the proposed tail-robust approach, while for three variables, the opposite is the case; for the remaining variables, no significant difference is observed.

\begin{figure}[h!t!]
    \centering
    \begin{minipage}[b]{0.49\textwidth}
    \includegraphics[width=\textwidth]{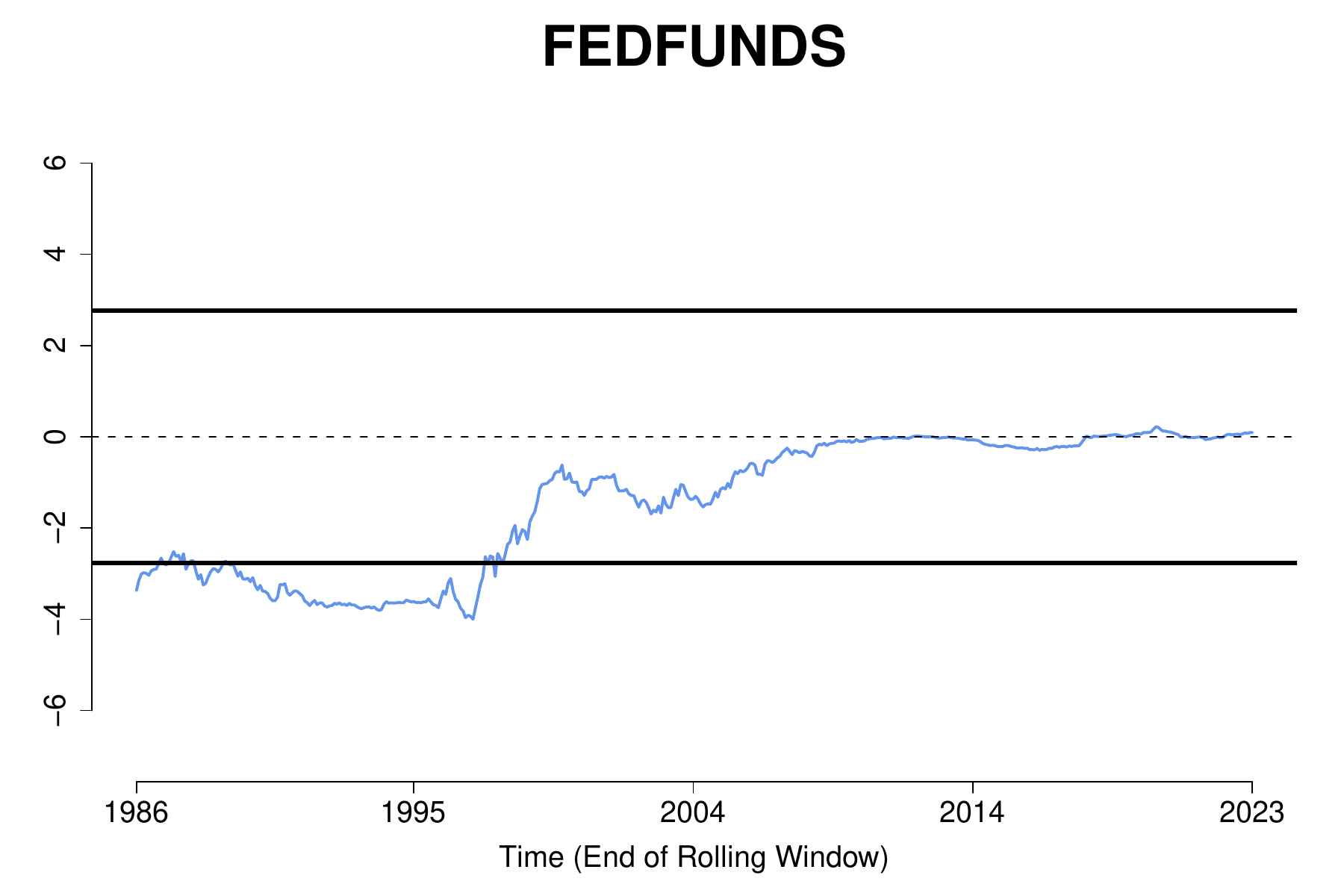}
    \end{minipage}
  \begin{minipage}[b]{0.49\textwidth}
  
    \includegraphics[width=\textwidth]{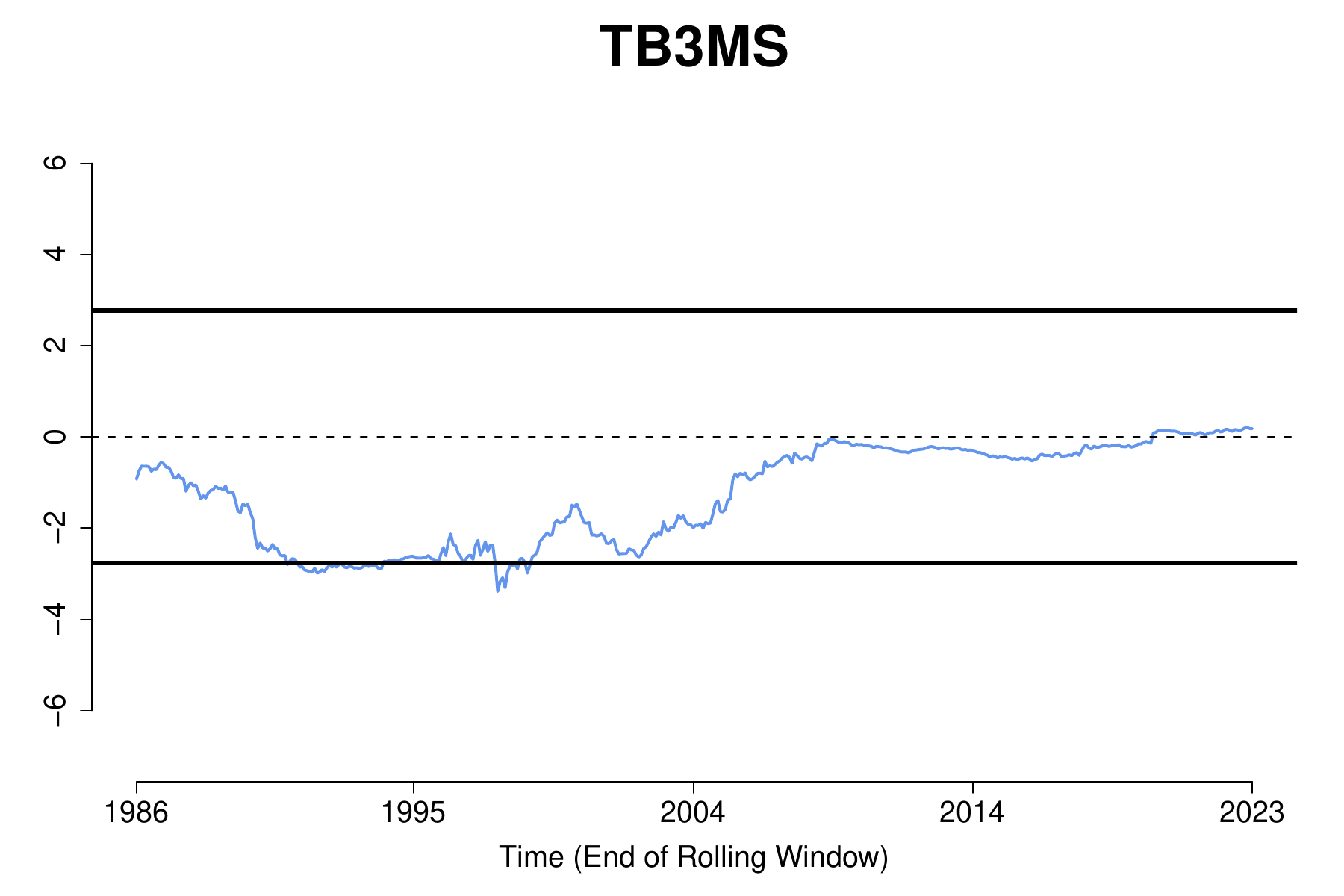}
    \end{minipage}

    \centering
    \begin{minipage}[b]{0.49\textwidth}
    \includegraphics[width=\textwidth]{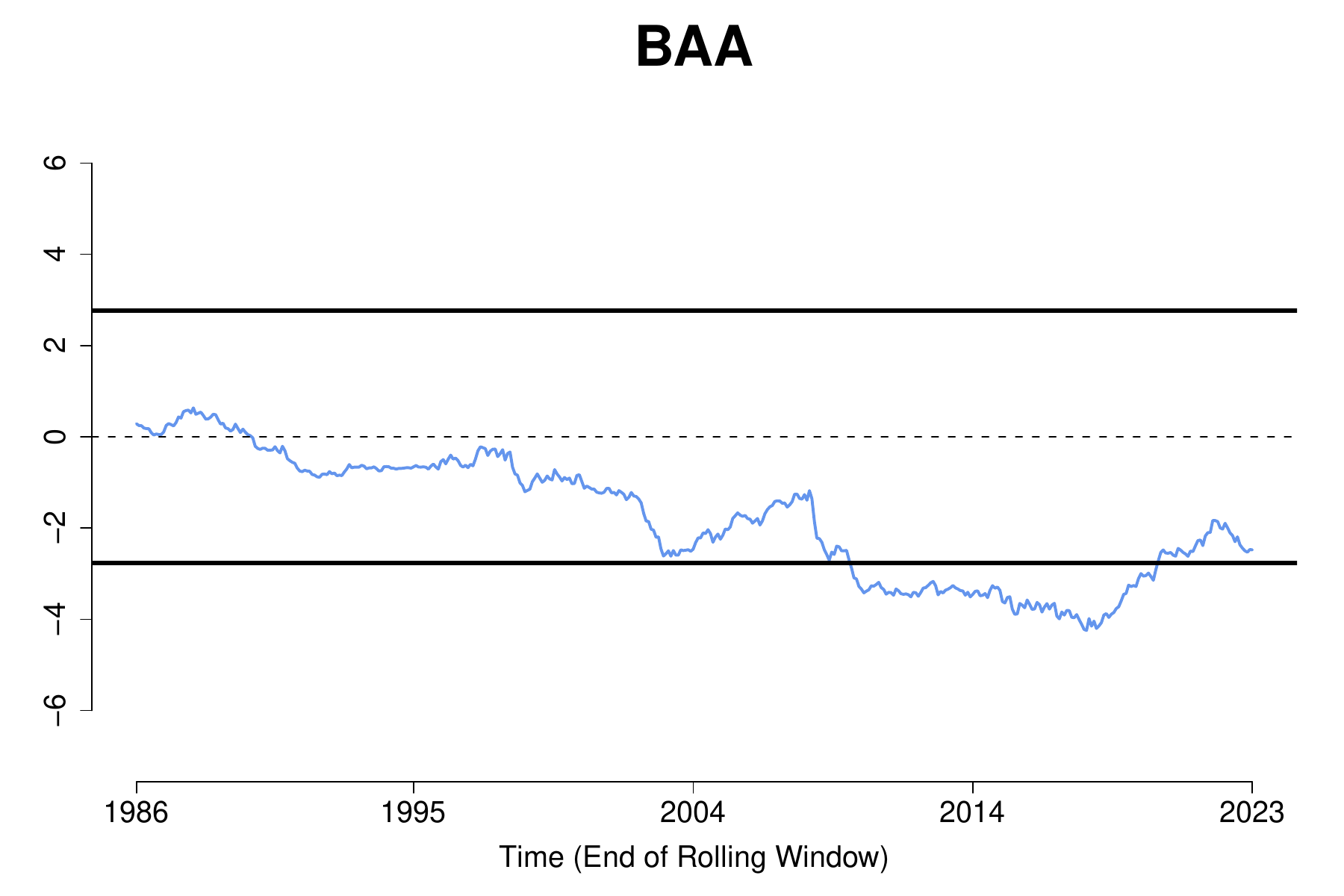}
    \end{minipage}
  \begin{minipage}[b]{0.49\textwidth}
  
    \includegraphics[width=\textwidth]{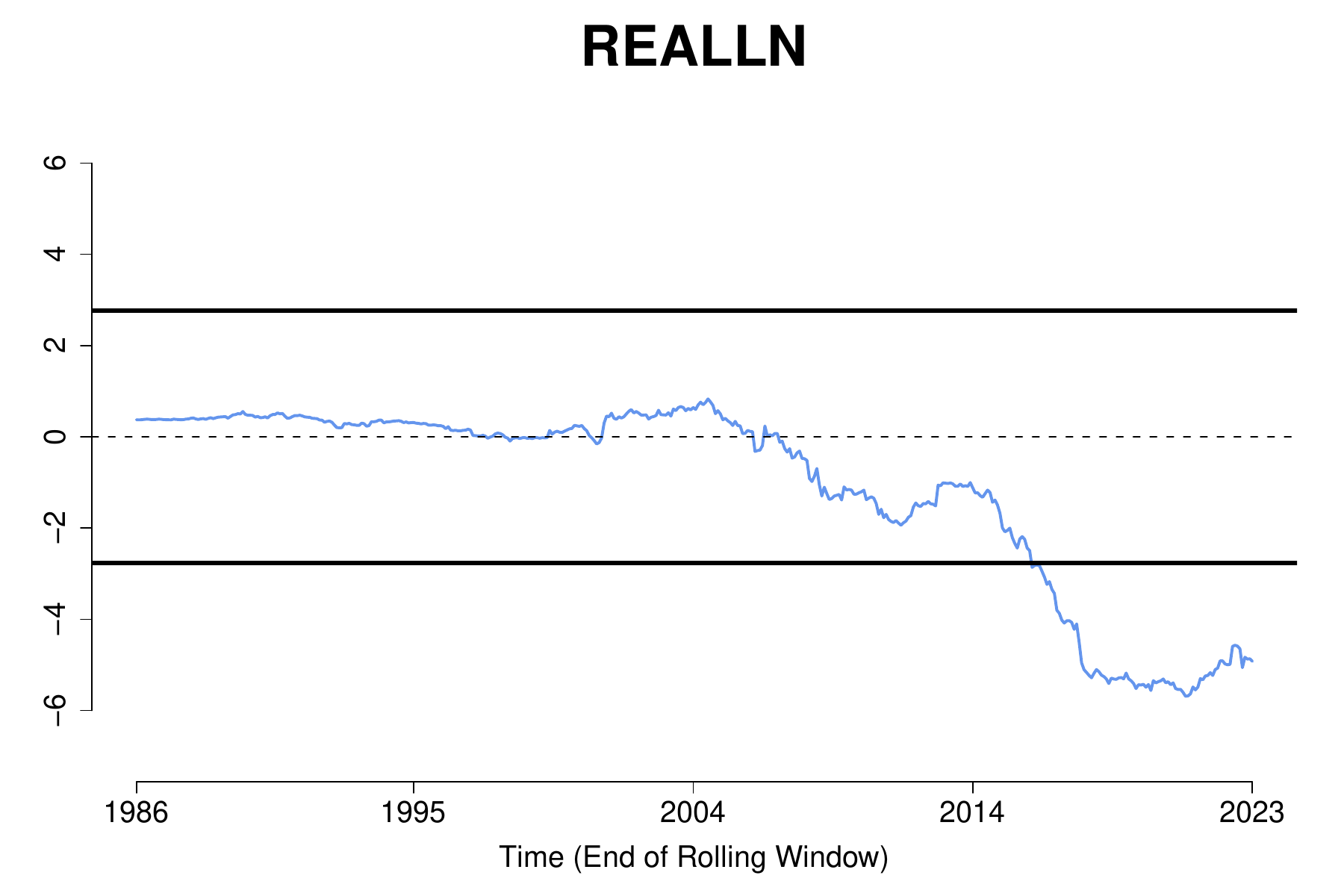}
    \end{minipage}

    \centering
    \begin{minipage}[b]{0.49\textwidth}
    
    \includegraphics[width=\textwidth]{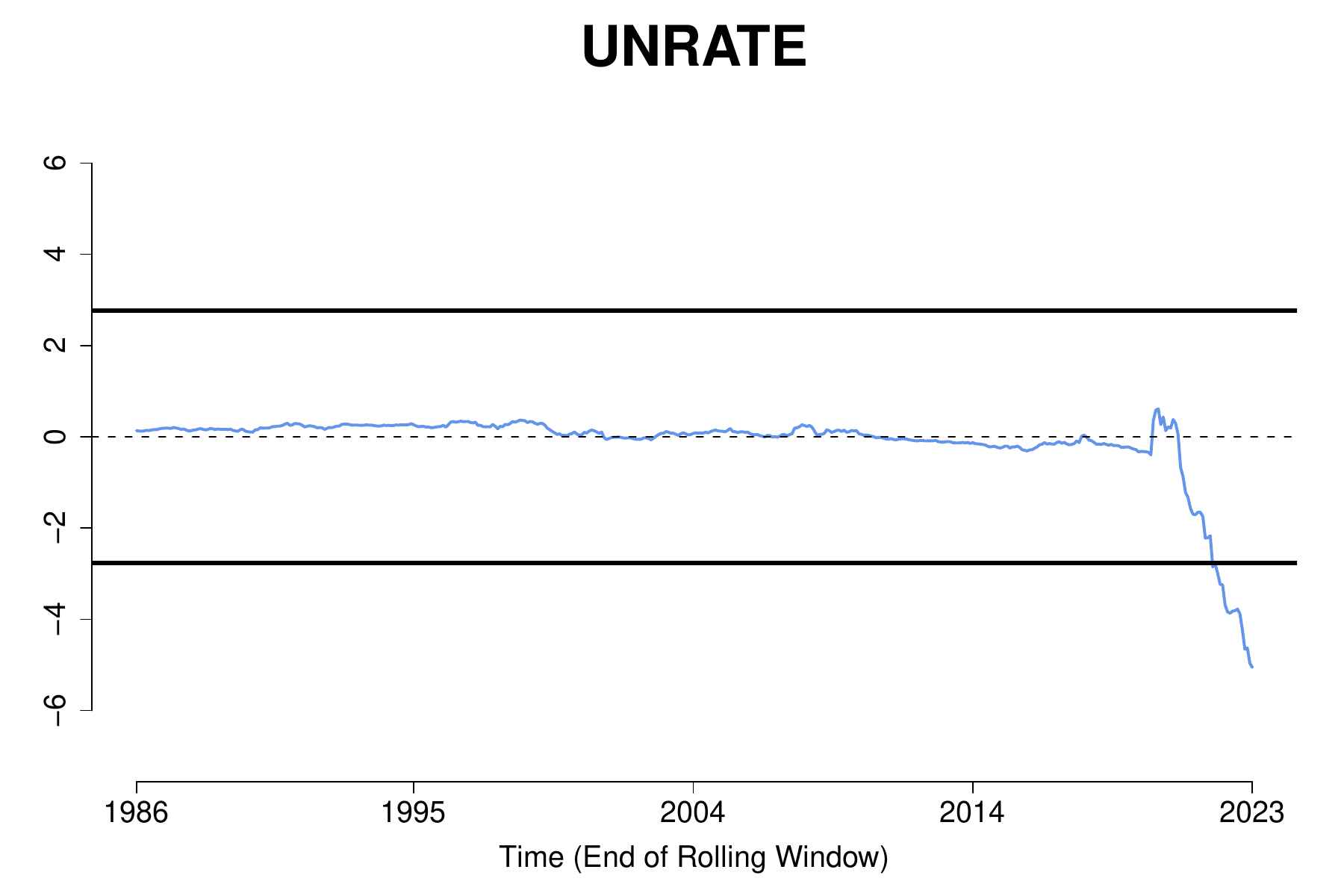}
    \end{minipage}
  \begin{minipage}[b]{0.49\textwidth}
  
    \includegraphics[width=\textwidth]{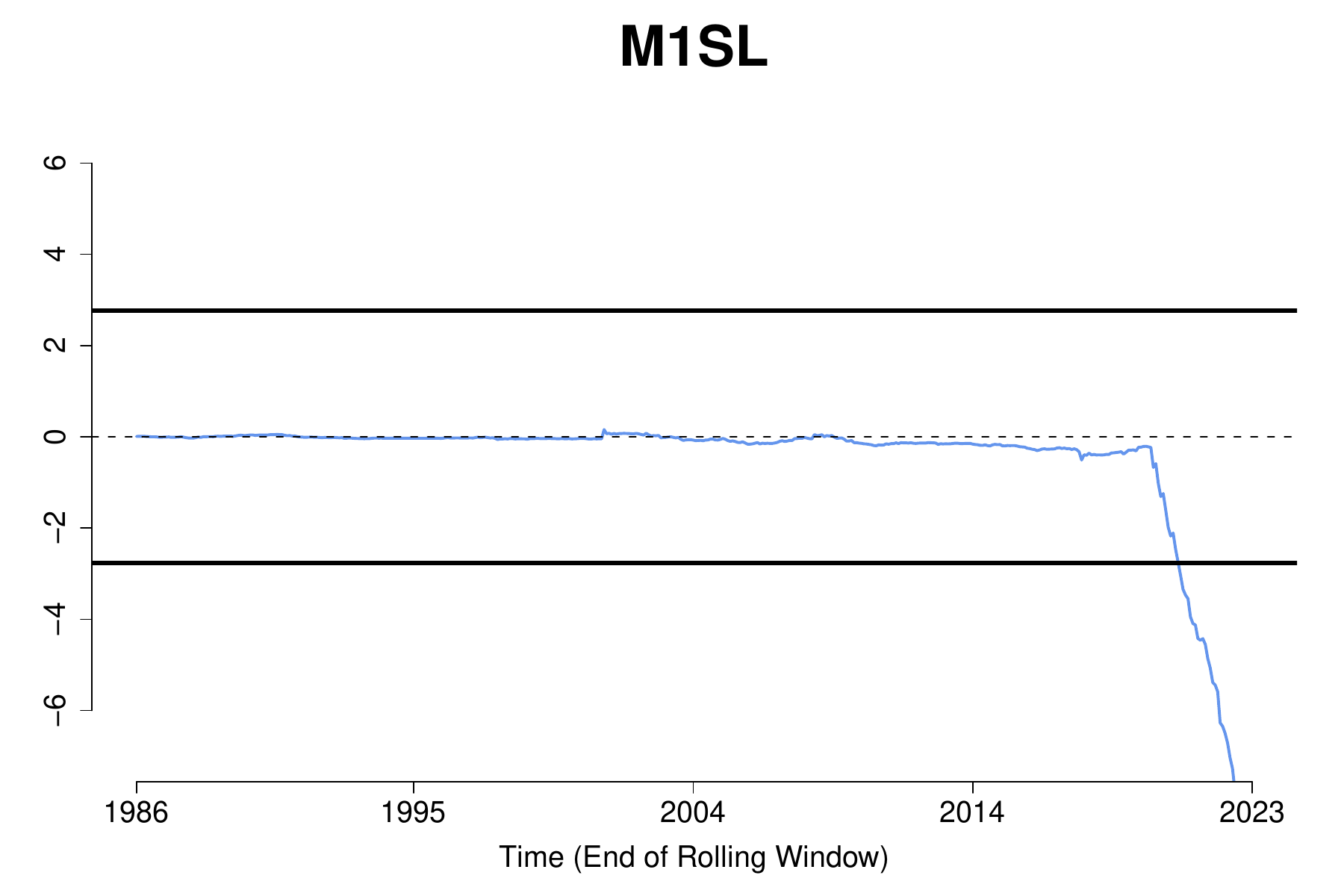}
    \end{minipage}

        \caption{Time path of the fluctuation test statistics for six variables, which are scaled rolling averages of $\text{FE}_{it}(\tau) - \text{FE}_{it}(\infty)$; we set the parameter $\mu = 0.3$ which determines the window. The large negative values correspond to when the proposed truncation-based method outperforms the standard method without any truncation. The horizontal lines are the critical values. } 
    \label{fig:FRED_fluc_test}
\end{figure}

In \Cref{fig:FRED_fluc_test}, we visualise the outputs from the fluctuation test for a selection of six variables. These six variables are chosen as they display the differing relative forecasting performance between the two approaches over three distinct periods. 
The first row of \Cref{fig:FRED_fluc_test} shows the results from the federal funds rate (FEDFUNDS) and the yield for a 3-month treasury bill (TB3MS), both of which are the measures of short-term interest rates. The two plots show that the tail-robust estimator is to be favoured in the earlier period of the dataset, where the Fedfunds reached a historic high in the 1980s to combat inflation \citep{FRH2013}. The second row presents the results for Moody's Seasoned Baa Corporate Bond Yield (BAA), which is the average yield of lower investment-grade bonds, and the REALLN variable which measures the amount of loans that commercial banks are lending for real estate purposes. These variables changed drastically during the 2007--09 financial crisis, coinciding with the periods over which the truncation-based forecasts are favoured against the non-robust counterparts. The final row of plots considers the unemployment rate (UNRATE) and the M1 money supply (M1SL), both of which had sharp increases in response to the COVID-19 pandemic.
\cred{We notice that the gains from data truncation are most pronounced when the periods of economic turmoil are included in the estimation window, rather than at the exact time of the crisis. 
This is as expected since our method lessens the impact of extreme observations included in the training data for estimating the forecasting model, rather than aiming at forecasting the extreme behaviour of the data during the extreme events.}

\section{Conclusions}

We propose a tail-robust estimation method for factor-adjusted VAR modelling of high-dimensional time series, which simultaneously accounts for pervasive correlations in the data as well as idiosyncratic interconnectivity between the variables. 
We utilise data truncation for achieving tail-robustness, and show its efficacy theoretically with the rates of estimation explicitly dependent on the moment condition. 
\cred{
Under the factor-adjusted VAR model, the quantities of interest including the VAR parameters, are characterised by the second moments of the data, whose tail-robust estimation forms the backbone of our proposed method including forecasting.
It is envisioned that beyond this linear modelling framework, understanding the informative role played by extreme observations e.g.\ in predicting extreme events, would be highly relevant and interesting, which we leave for future research.}

\printbibliography

@article{loh2025theoretical,
  title={A theoretical review of modern robust statistics},
  author={Loh, Po-Ling},
  journal={Annu. Rev. Stat. Appl.},
  volume={12},
  number={1},
  pages={477--496},
  year={2025},
  publisher={Annual Reviews}
}

@article{barigozzi2022asymptotic,
  title={Principal Component Analysis for High-Dimensional Approximate Factor Models in Time Series: Assumptions, Asymptotic Theory, and Identification},
  author={Barigozzi, Matteo},
  journal={arXiv preprint arXiv:2211.01921},
  year={2026}
}

@article{barigozzi2024tail,
  title={Tail-robust factor modelling of vector and tensor time series in high dimensions},
  author={Barigozzi, Matteo and Cho, Haeran and Maeng, Hyeyoung},
journal={Biometrika},
note = {In press},
  pages={asaf093},
  year={2026},
}

@article{stock2002forecasting,
  title={Forecasting using principal components from a large number of predictors},
  author={Stock, James H and Watson, Mark W},
  journal={J. Am. Stat. Assoc.},
  volume={97},
  number={460},
  pages={1167--1179},
  year={2002},
  publisher={Taylor \& Francis}
}

@Inbook{Han2023prob,
author="Han, Fang
and Wu, Wei Biao",
editor="Pham, Hoang",
title="Probability inequalities for high-dimensional time series under a triangular array framework",
bookTitle="Springer Handbook of Engineering Statistics",
year="2023",
publisher="Springer London",
address="London",
pages="849--863",
doi="10.1007/978-1-4471-7503-2_41",
url="https://doi.org/10.1007/978-1-4471-7503-2_41"
}

@article{Yu2015,
   author = {Y. Yu and T. Wang and R. J. Samworth},
   doi = {10.1093/biomet/asv008},
   issn = {14643510},
   issue = {2},
   journal = {Biometrika},
   keywords = {Davis-Kahan theorem,Eigendecomposition,Matrix perturbation,Singular value decomposition},
   month = {6},
   pages = {315-323},
   publisher = {Oxford University Press},
   title = {A useful variant of the Davis-Kahan theorem for statisticians},
   volume = {102},
   year = {2015},
}

@article{POETFan2011,
   author = {Jianqing Fan and Yuan Liao and Martina Mincheva},
   issue = {4},
   journal = {J. R. Statist. Soc. B},
   keywords = {Approximate factor model,Cross-sectional correlation,Diverging eigenvalues,High dimensionality,Low rank matrix,Principal components,Sparse matrix,Thresholding,Unknown factors},
   pages = {603-680},
   title = {Large covariance estimation by thresholding principal orthogonal complements},
   volume = {75},
   url = {https://academic.oup.com/jrsssb/article/75/4/603/7075932},
   year = {2013}
}

@article{Basu2015,
   author = {Sumanta Basu and George Michailidis},
   issue = {4},
   journal = {Ann. Stat.},
   pages = {1535-1567},
   publisher = {Inst. Math. Stat.},
   title = {Regularized estimation in sparse high-dimensional time series models},
   volume = {43},
   url = {https://arxiv.org/pdf/1311.4175.pdf},
   year = {2015},
}

@article{Wang2023,
   author = {Di Wang and Ruey S. Tsay},
   doi = {10.1214/23-AOS2278},
   issn = {21688966},
   issue = {2},
   journal = {Ann. Stat.},
   keywords = {Autocovariance,high-dimensional time series,minimax optimal,robust statistics,truncation},
   month = {4},
   pages = {846-877},
   publisher = {Inst. Math. Stat.},
   title = {Rate-optimal robust estimation of high-dimensional vector autoregressive models},
   volume = {51},
   year = {2023},
}

@article{Ke2019,
   abstract = {We provide a survey of recent results on covariance estimation for heavy-tailed distributions. By unifying ideas scattered in the literature, we propose user-friendly methods that facilitate practical implementation. Specifically, we introduce elementwise and spectrumwise truncation operators, as well as their M-estimator counterparts, to robustify the sample covariance matrix. Different from the classical notion of robustness that is characterized by the breakdown property, we focus on the tail robustness which is evidenced by the connection between nonasymptotic deviation and confidence level. The key insight is that estimators should adapt to the sample size, dimensionality and noise level to achieve optimal tradeoff between bias and robustness. Furthermore, to facilitate practical implementation, we propose data-driven procedures that automatically calibrate the tuning parameters. We demonstrate their applications to a series of structured models in high dimensions, including the bandable and low-rank covariance matrices and sparse precision matrices. Numerical studies lend strong support to the proposed methods.},
   author = {Yuan Ke and Stanislav Minsker and Zhao Ren and Qiang Sun and Wen Xin Zhou},
   doi = {10.1214/19-STS711},
   issn = {21688745},
   issue = {3},
   journal = {Stat. Sci.},
   keywords = {Covariance estimation,Heavy-tailed data,Mestimation,Nonasymptotics,Tail robustness,Truncation},
   month = {8},
   pages = {454-471},
   publisher = {Inst. Math. Stat.},
   title = {User-friendly covariance estimation for heavy-tailed distributions},
   volume = {34},
   year = {2019},
}

@article{Fan2021,
   author = {Jianqing Fan and Weichen Wang and Ziwei Zhu},
   doi = {10.1214/20-AOS1980},
   issn = {21688966},
   issue = {3},
   journal = {Ann. Stat.},
   keywords = {Heavy-tailed data,High-dimensional statistics,Low-rank matrix recovery,Robust statistics,Shrinkage,Trace regression},
   month = {6},
   pages = {1239-1266},
   publisher = {Inst. Math. Stat.},
   title = {A shrinkage principle for heavy-tailed data: High-dimensional robust low-rank matrix recovery},
   volume = {49},
   year = {2021},
}

@article{Wang2021,
   author = {Lili Wang and Chao Zheng and Wen Zhou and Wen-Xin Zhou},
   doi = {https://doi.org/10.5705/ss.202019.0045},
   journal = {Stat. Sin.},
   keywords = {Data adaptive,Heavy tails,Huber loss,M-estimator,Tuning pa-rameters},
   pages = {2153-2177},
   title = {A new principle for tuning-free huber regression},
   year = {2021},
   volume = {31},
   issue = {4}
}

@book{Yao_fan_book,
   author = {Jianqing Fan and Qiwei Yao},
   address = {New York},
   doi = {10.1007/978-0-387-69395-8},
   keywords = {Time series,econometrics,linear optimization,mathematical statistics,modeling,nonparametric methods,quantitative finance,statistical method,statistics},
   publisher = {Springer},
   title = {Nonlinear Time Series},
   url = {http://link.springer.com/10.1007/978-0-387-69395-8},
   year = {2003},
}

@book{Lutkephol2005,
   author = {Helmut Lütkepohl},
   note = {Large book in onedrive},
   publisher = {Springer Science and Business Media},
   title = {New Introduction to Multiple Time Series Analysis},
   url = {https://link.springer.com/content/pdf/10.1007/978-3-540-27752-1.pdf},
   year = {2005},
   address = {Berlin}
}

@article{Barigozzi2024,
   author = {Matteo Barigozzi and Haeran Cho and Dom Owens},
   doi = {10.1080/07350015.2023.2257270},
   issn = {15372707},
   issue = {3},
   journal = {J. Bus. Econ. Stat.},
   keywords = {Dynamic factor model,Forecasting,Network estimation,Vector autoregression},
   pages = {890-902},
   publisher = {Taylor and Francis Ltd.},
   title = {FNETS: Factor-adjusted network estimation and forecasting for high-dimensional time series},
   volume = {42},
   year = {2024},
}

@inbook{Peligrad2009,
   author = {Florence Merlevède and Magda Peligrad and Emmanuel Rio},
   doi = {10.1214/09-imscoll518},
   booktitle = {High Dimensional Probability V: The Luminy Volume},
   pages = {273-292},
   publisher = {Inst. Math. Stat.},
   title = {Bernstein inequality and moderate deviations under strong mixing conditions},
   volume = {5},
   year = {2009}
}

@book{bradley2007introduction,
  title={Introduction to Strong Mixing Conditions},
  author={Bradley, R.C.},
  volume={1},
  url={https://books.google.co.uk/books?id=aEFhxAEACAAJ},
  year={2007},
  publisher={Kendrick Press}
}

@article{Li2011,
   author = {Gaorong Li and Heng Peng and Lixing Zhu},
   journal = {Stat. Sin.},
   keywords = {SIS,and phrases: Linear model,oracle property,rank correlation,robust estimation,variable selection},
   pages = {391-419},
   title = {Nonconcave penalized M-estimation with a diverging number of parameters},
   issue = {1},
   volume = {21},
   year = {2011},
}

@article{Fan2017,
   author = {Jianqing Fan and Quefeng Li and Yuyan Wang},
issue = {1},
   journal = {J. R. Statist. Soc. B},
   keywords = {High dimension,Huber loss,M-estimator,Optimal rate,Robust regularization},
   pages = {247-265},
   title = {Estimation of high dimensional mean regression in the absence of symmetry and light tail assumptions},
   volume = {79},
   url = {http://wileyonlinelibrary.com/journal/rss-datasets},
   year = {2017},
}

@article{Han2015,
   issue = {97},
   author = {Fang Han and Huanran Lu and Han Liu},
   journal = {J. Mach. Learn. Res.},
   keywords = {double asymptotic framework,linear program,multivariate time series,transition matrix,vector autoregressive model},
   pages = {3115-3150},
   title = {A direct estimation of high dimensional stationary vector autoregressions},
   volume = {16},
   year = {2015},
}

@inproceedings{Qiu2015,
   author = {Huitong Qiu and Sheng Xu and Fang Han and Han Liu and Brian Caffo},
   booktitle = {Proceedings of the 32nd International Conference on Machine Learning},
   pages = {1843-1851},
   publisher = {PMLR},
   title = {Robust estimation of transition matrices in high dimensional heavy-tailed vector autoregressive processes},
   year = {2015}
}

@article{Han2023,
   author = {Yuefeng Han and Ruey S. Tsay and Wei Biao Wu},
   doi = {10.3150/21-BEJ1451},
   issn = {13507265},
   issue = {1},
   journal = {Bernoulli},
   keywords = {High dimensional analysis,generalized linear model,robust estimation,support recovery,time series analysis},
   pages = {105-131},
   publisher = {Bernoulli Society for Mathematical Statistics and Probability},
   title = {High dimensional generalized linear models for temporal dependent data},
   volume = {29},
   year = {2023},
}

@article{Raymaekers2024,
   author = {Jakob Raymaekers and Peter J. Rousseeuw},
   doi = {10.1016/j.ecosta.2024.02.002},
   issn = {24523062},
   journal = {Econom. Stat.},
   volume = {38},
   pages = {6-25},
   keywords = {Anomaly detection,Breakdown value,Correspondence Analysis,Multivariate statistics,Robustness},
   publisher = {Elsevier B.V.},
   title = {Challenges of cellwise outliers},
   year = {2024},
}

@article{Wu2016,
   author = {Wei Biao Wu and Ying Nian Wu},
   doi = {10.1214/16-EJS1108},
   issn = {19357524},
   issue = {1},
   journal = {Electron. J. Stat.},
   keywords = {Consistency,Dependence-adjusted norm,Exponential inequality,Functional and predictive dependence measures,Highdimensional time series,Impulse response function,Nagaev inequality,Predictive persistence,Support recovery},
   pages = {352-379},
   publisher = {Inst. Math. Stat.},
   title = {Performance bounds for parameter estimates of high-dimensional linear models with correlated errors},
   volume = {10},
   year = {2016},
}

@article{Fan2020,
   author = {Jianqing Fan and Yuan Ke and Kaizheng Wang},
   doi = {10.1016/j.jeconom.2020.01.006},
   issn = {18726895},
   issue = {1},
   journal = {J. Econom.},
   keywords = {Correlated covariates,Factor model,Model selection consistency,Regularized M-estimator,Time series},
   month = {5},
   pages = {71-85},
   publisher = {Elsevier Ltd},
   title = {Factor-adjusted regularized model selection},
   volume = {216},
   year = {2020},
}

@article{Fan2024,
   author = {Jianqing Fan and Zhipeng Lou and Mengxin Yu},
   doi = {10.1080/01621459.2023.2169700},
   issn = {1537274X},
   issue = {546},
   journal = {J. Am. Stat. Assoc.},
   keywords = {Factor model,High-dimensional inference,Hypothesis,Robustness,Sparse linear regression},
   pages = {1076-1088},
   publisher = {Taylor and Francis Ltd.},
   title = {Are latent factor regression and sparse regression adequate?},
   volume = {119},
   year = {2024},
}

@article{Bai2003,
   author = {Jushan Bai},
   doi = {10.1111/1468-0262.00392},
   issn = {00129682},
   issue = {1},
   journal = {Econometrica},
   keywords = {Approximate factor models,Common components,Data-rich environment,Large data sets,Large model analysis,Principal components},
   pages = {135-171},
   publisher = {Blackwell Publishing Ltd},
   title = {Inferential theory for factor models of large dimensions},
   volume = {71},
   year = {2003}
}

@article{McCracken2015,
   author = {Michael W. McCracken and Serena Ng},
   doi = {10.1080/07350015.2015.1086655},
   issn = {15372707},
   issue = {4},
   journal = {J. Bus. Econ. Stat.},
   keywords = {Big data,Diffusion index,Factors,Forecasting},
   month = {10},
   pages = {574-589},
   publisher = {American Statistical Association},
   title = {FRED-MD: A monthly database for macroeconomic research},
   volume = {34},
   year = {2016}
}

@article{Bai2002,
   author = {Jushan Bai and Serena Ng},
   doi = {10.1111/1468-0262.00273},
   issn = {00129682},
   issue = {1},
   journal = {Econometrica},
   keywords = {Asset pricing,Factor analysis,Model selection,Principal components},
   pages = {191-221},
   title = {Determining the number of factors in approximate factor models},
   volume = {70},
   year = {2002},
}

@article{Kock2024,
author = {Anders B. Kock and Rasmus S. Pedersen and Jesper R.-V. Sørensen},
title = {Data-driven tuning parameter selection for high-dimensional vector autoregressions},
journal = {J. Am. Stat. Assoc.},
year = {2025},
pages = {1--14},
publisher = {ASA Website},
doi = {10.1080/01621459.2025.2516190},
URL = {https://doi.org/10.1080/01621459.2025.2516190}
}

@article{Giacomini2010,
   author = {Raffaella Giacomini and Barbara Rossi},
   doi = {10.1002/jae.1177},
   issn = {08837252},
   issue = {4},
   journal = {J. Appl. Econom.},
   month = {6},
   pages = {595-620},
   title = {Forecast comparisons in unstable environments},
   volume = {25},
   year = {2010},
}

@article{Liu2021,
   author = {Linbo Liu and Danna Zhang},
   doi = {10.5705/ss.202022.0249},
   issn = {10170405},
   issue = {1},
   journal = {Stat. Sin.},
   pages = {151-170},
   publisher = {Statistica Sinica (Institute of Statistical Science)},
   title = {A bernstein-type inequality for high dimensional linear processes with applications to robust estimation of time series regressions},
   volume = {35},
   year = {2025}
}

@article{Krampe2025,
   author = {Jonas Krampe and Luca Margaritella},
   doi = {10.1111/obes.12664},
   issn = {14680084},
   journal = {Oxf. Bull. Econ. Stat.},
   month = {1},
   title = {Factor models with sparse vector autoregressive idiosyncratic components},
   year = {2025},
   volume = {87},
   issue = {4},
   pages = {837-849}
}

@article{Fan2023,
   author = {Jianqing Fan and Ricardo P. Masini and Marcelo C. Medeiros},
   doi = {10.1214/23-AOS2304},
   issn = {21688966},
   issue = {4},
   journal = {Ann. Stat.},
   month = {8},
   pages = {1692-1717},
   publisher = {Inst. Math. Stat.},
   title = {Bridging factor and sparse models},
   volume = {51},
   year = {2023}
}

@article{Kristensen2014,
   author = {Johannes Tang Kristensen},
   doi = {10.1515/snde-2012-0049},
   issn = {15583708},
   issue = {3},
   journal = {Stud. Nonlinear Dyn. Econom.},
   keywords = {Factors models,Forecasting,Least absolute deviations,Principal components analysis,Robust estimation},
   month = {5},
   pages = {309-338},
   publisher = {Walter de Gruyter GmbH},
   title = {Factor-based forecasting in the presence of outliers: Are factors better selected and estimated by the median than by the mean?},
   volume = {18},
   year = {2014}
}

@article{Barbaglia2020,
   author = {Luca Barbaglia and Christophe Croux and Ines Wilms},
   doi = {10.1016/j.eneco.2019.104555},
   issn = {01409883},
   journal = {Energy Econ.},
   keywords = {Commodities,Forecasting,Lasso,Multivariate t-distribution,Vector autoregressive model,Volatility spillover},
   month = {1},
   pages ={104555},
   publisher = {Elsevier B.V.},
   title = {Volatility spillovers in commodity markets: A large t-vector autoregressive approach},
   volume = {85},
   year = {2020}
}

@article{He2022,
   author = {Yong He and Xin-Bing Kong and Long Yu and Xinsheng Zhang},
   doi = {10.1080/07350015.2020.1811101},
   issn = {15372707},
   issue = {1},
   journal = {J. Bus. Econ. Stat.},
   keywords = {Elliptical factor model,Multivariate Kendall’s tau matrix,Ordinary least square regression},
   pages = {302-312},
   publisher = {American Statistical Association},
   title = {Large-dimensional factor analysis without moment constraints},
   volume = {40},
   year = {2022}
}

@article{Forni2015,
   author = {Mario Forni and Marc Hallin and Marco Lippi and Paolo Zaffaroni},
   doi = {10.1016/j.jeconom.2013.10.017},
   issn = {18726895},
   issue = {2},
   journal = {J. Econom.},
   keywords = {Generalized dynamic factor models,One-sided representations for dynamic factor models,Vector processes with singular spectral density},
   month = {4},
   pages = {359-371},
   publisher = {Elsevier Ltd},
   title = {Dynamic factor models with infinite-dimensional factor spaces: One-sided representations},
   volume = {185},
   year = {2015}
}

@article{Forni2000,
   author = {Mario Forni and Marc Hallin and Marco Lippi and Lucrezia Reichlin},
   issue = {4},
   journal = {Rev. Econ. Stat.},
   pages = {540-554},
   title = {The Generalized Dynamic-Factor Model: Identification and estimation},
   volume = {82},
   url = {https://www.jstor.org/stable/2646650},
   year = {2000}
}

@Article{Friedman2010,
  title   = {Regularization paths for generalized linear models via coordinate descent},
  author  = {Jerome Friedman and Trevor Hastie and Robert Tibshirani},
  journal = {J. Stat. Softw.},
  year    = {2010},
  volume  = {33},
  number  = {1},
  pages   = {1--22},
  url     = {https://www.jstatsoft.org/v33/i01/}
}

@Misc{adaHuber,
    title = {adaHuber: Adaptive Huber Estimation and Regression},
    author = {Xiaoou Pan and Wen-Xin Zhou},
    year = {2022},
    note = {R package version 1.1, available at CRAN},
    url = {https://CRAN.R-project.org/package=adaHuber},
    doi = {10.32614/CRAN.package.adaHuber},
    howpublished = {R package version 1.1, available at CRAN}
}

@Inbook{Akaike1998,
author="Akaike, Hirotogu",
editor="Parzen, Emanuel
and Tanabe, Kunio
and Kitagawa, Genshiro",
title="Information theory and an extension of the maximum likelihood principle",
bookTitle="Selected Papers of Hirotugu Akaike",
year="1998",
publisher="Springer",
address="New York",
pages="199--213",
abstract="In this paper it is shown that the classical maximum likelihood principle can be considered to be a method of asymptotic realization of an optimum estimate with respect to a very general information theoretic criterion. This observation shows an extension of the principle to provide answers to many practical problems of statistical model fitting.",
isbn="978-1-4612-1694-0",
doi="10.1007/978-1-4612-1694-0_15",
url="https://doi.org/10.1007/978-1-4612-1694-0_15"
}

@article{SchwarzBIC,
author = {Gideon Schwarz},
title = {{Estimating the dimension of a model}},
volume = {6},
journal = {Ann. Stat.},
number = {2},
publisher = {Institute of Mathematical Statistics},
pages = {461 -- 464},
keywords = {Akaike information criterion, asymptotics, dimension},
year = {1978},
doi = {10.1214/aos/1176344136},
URL = {https://doi.org/10.1214/aos/1176344136}
}

@article{Tiao01121981,
author = {G. C. Tiao and G. E. P. Box},
title = {Modeling multiple time series with applications},
journal = {J. Am. Stat. Assoc.},
volume = {76},
number = {376},
pages = {802--816},
year = {1981},
publisher = {ASA Website},
doi = {10.1080/01621459.1981.10477728},
}

@article{Owens2023,
   author = {Owens, Dom and Cho, Haeran and Barigozzi, Matteo},
   month = {7},
   title = {fnets: An R Package for Network Estimation and Forecasting via Factor-Adjusted VAR Modelling},
   url = {http://arxiv.org/abs/2301.11675},
   year = {2023},
journal={arXiv preprint arXiv:2301.11675}
}

@article{ChenEBIC,
    author = {Chen, Jiahua and Chen, Zehua},
    title = {Extended Bayesian information criteria for model selection with large model spaces},
    journal = {Biometrika},
    volume = {95},
    number = {3},
    pages = {759-771},
    year = {2008},
    month = {09},
    issn = {0006-3444},
    doi = {10.1093/biomet/asn034},
    url = {https://doi.org/10.1093/biomet/asn034}
}

@Misc{murphydiagram,
    title = {murphydiagram: Murphy Diagrams for Forecast Comparisons},
    author = {Alexander Jordan and Fabian Krueger},
    year = {2019},
    note = {R package version 0.12.2, available at CRAN},
    url = {https://CRAN.R-project.org/package=murphydiagram},
    doi = {10.32614/CRAN.package.murphydiagram},
    howpublished = {R package version 0.12.2, available at CRAN}
}

@article{Onatski2010,
   author = {Alexei Onatski},
   doi = {10.1162/REST_a_00043},
   issn = {00346535},
   issue = {4},
   journal = {Rev. Econ. Stat.},
   month = {11},
   pages = {1004-1016},
   title = {Determining the number of factors from empirical distribution of eigenvalues},
   volume = {92},
   year = {2010}
}

@article{Ahn2013,
   author = {Seung C. Ahn and Alex R. Horenstein},
   doi = {10.3982/ecta8968},
   issn = {0012-9682},
   issue = {3},
   journal = {Econometrica},
   pages = {1203-1227},
   publisher = {The Econometric Society},
   title = {Eigenvalue ratio test for the number of factors},
   volume = {81},
   year = {2013}
}

@article{Alessi2010,
   author = {Lucia Alessi and Matteo Barigozzi and Marco Capasso},
   doi = {10.1016/j.spl.2010.08.005},
   issn = {01677152},
   issue = {23-24},
   journal = {Statist. Probab. Lett},
   keywords = {Approximate factor models,Information criterion,Model selection,Number of factors},
   month = {12},
   pages = {1806-1813},
   title = {Improved penalization for determining the number of factors in approximate factor models},
   volume = {80},
   year = {2010}
}

@article{Fan2022_fac_number_est,
   author = {Jianqing Fan and Jianhua Guo and Shurong Zheng},
   doi = {10.1080/01621459.2020.1825448},
   issn = {1537274X},
   issue = {538},
   journal = {J. Am. Stat. Assoc.},
   keywords = {Adjusted eigenvalues,Bias corrections,Factor models,Number of factors,Random matrices},
   pages = {852-861},
   publisher = {American Statistical Association},
   title = {Estimating number of factors by adjusted eigenvalues thresholding},
   volume = {117},
   year = {2022}
}

@article{yilmaz2009,
    author = {Diebold, Francis X. and Yilmaz, Kamil},
    title = {Measuring financial asset return and volatility spillovers, with application to global equity markets},
    journal = {Econ. J.},
    volume = {119},
    number = {534},
    pages = {158-171},
    year = {2008},
    month = {12},
    issn = {0013-0133},
    doi = {10.1111/j.1468-0297.2008.02208.x},
    url = {https://doi.org/10.1111/j.1468-0297.2008.02208.x}
}

@article{Gorrostieta2012,
   author = {Cristina Gorrostieta and Hernando Ombao and Patrick Bédard and Jerome N. Sanes},
   doi = {10.1016/j.neuroimage.2011.08.115},
   issn = {10538119},
   issue = {4},
   journal = {NeuroImage},
   keywords = {Brain effective connectivity,FMRI time series,Linear mixed effects model,Multi-subject,Vector auto-regressive (VAR) model},
   month = {2},
   pages = {3347-3355},
   pmid = {22001164},
   title = {Investigating brain connectivity using mixed effects vector autoregressive models},
   volume = {59},
   year = {2012}
}

@article{Bay2004,
author = {Bay, S.D. and Chrisman, L. and Pohorille, A. and Shrager, J.},
title = {Temporal aggregation bias and inference of causal regulatory networks},
journal = {J. Comput. Biol.},
volume = {11},
number = {5},
pages = {971-985},
year = {2004},
doi = {10.1089/cmb.2004.11.971},
    note ={PMID: 15700412},
URL = {https://doi.org/10.1089/cmb.2004.11.971}
}

@article{He2023,
   author = {Yong He and Xin-Bing Kong and Long Yu and Peng Zhao},
   month = {3},
   title = {Quantile factor analysis for large-dimensional time series with statistical guarantee},
   journal = {arXiv preprint arXiv:2006.08214},
   year = {2023}
}

@article{He2025,
   author = {Yong He and Lingxiao Li and Dong Liu and Wen Xin Zhou},
   doi = {10.1016/j.jeconom.2025.105993},
   issn = {18726895},
   journal = {J. Econom.},
   keywords = {Factor model,Heavy-tailed data,Huber loss,Principal component analysis,Rank minimization},
   month = {5},
   publisher = {Elsevier Ltd},
   title = {Huber principal component analysis for large-dimensional factor models},
   volume = {249},
   pages = {105993},
   year = {2025}
}

@misc{FRH2013,
  author       = {{Federal Reserve History}},
  title        = {Volcker's announcement of anti-inflation measures},
  year         = {2013},
  howpublished = {\url{https://www.federalreservehistory.org/essays/anti-inflation-measures}},
  note         = {Board of Governors of the Federal Reserve System},
}

@article{Raymaekers2021,
   author = {Jakob Raymaekers and Peter J. Rousseeuw},
   doi = {10.1080/00401706.2019.1677270},
   issn = {15372723},
   issue = {2},
   journal = {Technometrics},
   keywords = {Anomaly detection,Cellwise outliers,Covariance matrix,Data transformation,Distance correlation},
   pages = {184-198},
   publisher = {American Statistical Association},
   title = {Fast Robust Correlation for High-Dimensional Data},
   volume = {63},
   year = {2021}
}

@article{Trucos2021,
   author = {Carlos Trucíos and João H.G. Mazzeu and Luiz K. Hotta and Pedro L. Valls Pereira and Marc Hallin},
   doi = {10.1016/j.ijforecast.2020.09.013},
   issn = {01692070},
   issue = {4},
   journal = {Int. J. Forecast.},
   keywords = {Dimension reduction,Forecast,General dynamic factor model,Jumps,Large panels,Robustness},
   month = {10},
   pages = {1520-1534},
   publisher = {Elsevier B.V.},
   title = {Robustness and the general dynamic factor model with infinite-dimensional space: Identification, estimation, and forecasting},
   volume = {37},
   year = {2021}
}

@book{hampel_robust_2005,
	series = {Wiley {Series} in {Probability} and {Statistics}},
	title = {Robust {Statistics}: {The} {Approach} {Based} on {Influence} {Functions}},
	copyright = {http://doi.wiley.com/10.1002/tdm\_license\_1.1},
	isbn = {9780471735779 9781118186435},
	shorttitle = {Robust {Statistics}},
	url = {https://onlinelibrary.wiley.com/doi/book/10.1002/9781118186435},
	language = {en},
	publisher = {Wiley},
	author = {Hampel, Frank R. and Ronchetti, Elvezio M. and Rousseeuw, Peter J. and Stahel, Werner A.},
	month = mar,
	year = {1986},
	doi = {10.1002/9781118186435},
   address = {New York}
}

@book{huber_robust_1981,
	series = {Wiley {Series} in {Probability} and {Statistics}},
	title = {Robust {Statistics}},
	copyright = {http://doi.wiley.com/10.1002/tdm\_license\_1.1},
	isbn = {9780471418054 9780471725251},
	url = {https://onlinelibrary.wiley.com/doi/book/10.1002/0471725250},
	language = {en},
	publisher = {Wiley},
	author = {Huber, Peter J.},
	month = feb,
	year = {1981},
	doi = {10.1002/0471725250},
        address = {New York}
}

@article{Huang2025,
   author = {Shiwei Huang and Kunqi Ma and Yu Chen},
   doi = {10.1080/07350015.2025.2551249},
   issn = {15372707},
   journal = {J. Bus. Econ. Stat.},
   publisher = {Taylor and Francis Ltd.},
   title = {High-Dimensional Quantile Vector Autoregression with Influencers and Communities},
   year = {2025},
   pages = {1--16},
}

@article{Carriero2024,
   author = {Andrea Carriero and Todd E. Clark and Massimiliano Marcellino},
   doi = {10.1111/jmcb.13121},
   issn = {15384616},
   issue = {5},
   journal = {J. Money Credit Bank.},
   month = {8},
   pages = {1099-1127},
   publisher = {John Wiley and Sons Inc},
   title = {Capturing Macro-Economic Tail Risks with Bayesian Vector Autoregressions},
   volume = {56},
   year = {2024}
}

@article{White2015,
   author = {Halbert White and Tae Hwan Kim and Simone Manganelli},
   doi = {10.1016/j.jeconom.2015.02.004},
   issn = {18726895},
   issue = {1},
   journal = {J. Econom.},
   month = {7},
   pages = {169-188},
   publisher = {Elsevier Ltd},
   title = {VAR for VaR: Measuring tail dependence using multivariate regression quantiles},
   volume = {187},
   year = {2015}
}

@article{ando2022,
journal={Manag. Sci.},
author={Tomohiro Ando and Matthew Greenwood-Nimmo and Yongcheol Shin},
title={Quantile Connectedness: Modeling Tail Behavior in the Topology of Financial Networks},
year={2022},
month={4},
pages={2401-2431},
volume={68},
number={4},
doi={10.1287/mnsc.2021.3984},
url={https://ideas.repec.org/a/inm/ormnsc/v68y2022i4p2401-2431.html},
}

@article{Rousseeuw1993,
   author = {Peter J Rousseeuw and Christophe Croux},
   issue = {424},
   journal = {J. Am. Stat. Assoc.},
   pages = {1273-1283},
   title = {Alternatives to the Median Absolute Deviation},
   volume = {88},
   year = {1993}
}

@article{Hampel01061974,
author = {Frank R. Hampel},
title = {The Influence Curve and its Role in Robust Estimation},
journal = {J. Am. Stat. Assoc.},
volume = {69},
number = {346},
pages = {383--393},
year = {1974},
doi = {10.1080/01621459.1974.10482962}
}

@Misc{cellWise2026,
    title = {cellWise: Analyzing Data with Cellwise Outliers},
    author = {Jakob Raymaekers and Peter Rousseeuw},
    year = {2026},
    howpublished = {R package version 2.5.5, available at CRAN},
    url = {https://CRAN.R-project.org/package=cellWise},
    doi = {10.32614/CRAN.package.cellWise}
}

\newpage

\appendix

\numberwithin{equation}{section}
\numberwithin{figure}{section} 
\numberwithin{table}{section} 
\numberwithin{theorem}{section}
\numberwithin{proposition}{section}
\numberwithin{lemma}{section}

\section{Additional simulation results} \label{sec:additional_simul}

\subsection{\cred{Robustness to VAR order mis-specification}} \label{sec:order_missspec}

In this subsection, we investigate the robustness of the proposed estimator to mis-specification of the VAR order. Estimating the VAR order in high-dimensional settings is a challenging problem, and even in light-tailed scenarios there is limited theoretical guidance \citep{Owens2023}. Recall that we denote the true VAR order by $d$, and in this subsection we now additionally denote the order selected for model fitting by $d_f$.

We envision that under-specification of the true VAR order $d$ at $d_f < d$, where $d_f$ is the order selected for model fitting, will encourage the estimation method to find the best order-$d_f$ approximation of the VAR process, while its over-specification at $d_f > d$, will result in the elements of $\widehat{\mathbf{A}}_\ell, \, d + 1 \le \ell \le d_f$, being close to zero. In what follows, we focus on the over-specification case, which is more amenable to systematic evaluation in high dimensions.

To assess this, we generate data from the VAR process of order $d$:
\begin{gather*}
    \bX_t = \mathbf{A}_1\bX_{t-1} + \ldots + \mathbf{A}_d\bX_{t-d} + \boldsymbol{\varepsilon}_t \, .
\end{gather*}

with $\mathbf{A}_\ell, \, 1 \le \ell \le d$, generated as follows.

\begin{enumerate}[label = (V\arabic*)]
\item[(V3)] \label{eq:lagged_banded_A}
For $\ell = 1,\ldots, d$, we generate $\widetilde{\mathbf{A}}_\ell  = [\widetilde{A}_{\ell, ij}]_{i, j = 1}^p$, as
\[
\widetilde{A}_{\ell,ij} =
0.9^{\ell-1}\left(
0.5 \cdot \mathbb{I}_{\{i = j\}}
+ \text{\upshape sign}(i-j) \cdot 0.4 \cdot \mathbb{I}_{\{|i-j|=1\}}
\right),
\]
and set $\mathbf{A}_\ell = 0.99\, c^\star  \widetilde{\mathbf{A}}_\ell$,
where $c^\star > 0$ is the unique constant satisfying
\[
\Lambda_{\max}\!\left(
\widetilde{\mathcal A}(c^\star \widetilde{\mathbf A}_1,\ldots,c^\star \widetilde{\mathbf A}_d)
\right) = 1 \, ,
\]
with
\[
\widetilde{\mathcal A}(\widetilde{\mathbf A}_1,\ldots,\widetilde{\mathbf A}_d)
=
\begin{pmatrix}
\widetilde{\mathbf A}_1 & \widetilde{\mathbf A}_2 & \cdots & \widetilde{\mathbf A}_{d-1} & \widetilde{\mathbf A}_d \\
\mathbf I_p & \mathbf 0 & \cdots & \mathbf 0 & \mathbf 0 \\
\mathbf 0 & \mathbf I_p &  & \mathbf 0 & \mathbf 0 \\
\vdots & & \ddots & \vdots & \vdots\\
\mathbf 0 & \mathbf 0 & \cdots & \mathbf I_p & \mathbf 0
\end{pmatrix}
\in \mathbb{R}^{pd \times pd}.
\]
The scaling used in $\mathbf{A}_\ell = 0.99\, c^\star  \widetilde{\mathbf{A}}_\ell$  ensures that the resultant VAR process is stable. 
\end{enumerate}

We evaluate the relative performance of our estimator (with truncation) against that of the `standard' approach without any data truncation (i.e.\ $\tau = \infty$), using the relative mean errors (RME), consistent with the definition in \eqref{eq:RME}.

\begin{equation} \label{eq:App_RME}
\mathrm{RME}_{\mathbb{A}}=\frac{\sum_{i=1}^{200}\left|\widehat{\mathbb{A}}^{(i)}(\tau) - \mathbb{A}\right|_{m}}{\sum_{i=1}^{200}\left|\widehat{\mathbb{A}}^{(i)} -\mathbb{A}\right|_{m}} \, ,
\end{equation}

where $m$ denotes the matrix norm used to compute $\mathrm{RME}_{\mathbb{A}}$; we employ the element-wise maximum norm (`max') and the maximum column Euclidean norm ($\ell_{2, \infty}$).
We denote by $\widehat{\mathbb{A}}(\tau) = [\widehat{\mathbf{A}}_1(\tau), \ldots, \widehat{\mathbf{A}}_{d_f}(\tau)] \in \mathbb{R}^{p \times pd_f}$ the matrix of estimated VAR coefficients, and accounting for that $d_f > d$, we set $\mathbb{A} = [\mathbf{A}_1, \ldots, \mathbf{A}_d, \textbf{O}_{p \times p(d_f - d)}] \in \mathbb{R}^{p \times pd_f}$. 
The results obtained with $(d, d_f) = (1, 2)$ and $(d, d_f) = (2, 3)$ are reported in \Cref{tab:auto_var_banded_lasso_diag_d0_1_d_2} and \Cref{tab:auto_var_banded_lasso_diag_d0_2_d_3}, respectively.

They show that when the VAR order is over-specified, our method still outperforms the standard method, providing evidence that the method maintains tail-robustness even when $d$ is over-specified. In addition, \Cref{fig:heatmap_estimate} visualises the average estimated coefficient matrix for our method when $(d, d_f) = (2, 3)$, 
which shows that the coefficients corresponding to the third lag are estimated close to zero on average, with only minor deviations along the diagonal and off-diagonal entries corresponding to where $\mathbf{A}_1$ and $\mathbf{A}_2$ are not zero.

\begin{table}[H] 
    \centering
    \caption{VAR coefficient matrix estimation errors measured by $\mathrm{RME}_{\mathbb{A}}$ in~\eqref{eq:App_RME} with different matrix norms from $200$ realisations, when the data are generated as in \ref{item:fac_0} and \hyperref[eq:lagged_banded_A]{(V3)}, as $n$, $p$ and the innovation distribution vary (see \ref{innov_dist_gauss}--\ref{innov_dist_log_norm}). The VAR coefficient matrices are estimated with VAR order set to $d_f = 2$ while the true VAR order is $d = 1$.}
    \begin{tabular}{c cc cc cc cc cc}
        \toprule & 
                 \multicolumn{2}{c}{log-normal} &
                 \multicolumn{2}{c}{$t_{2.1}$} & \multicolumn{2}{c}{$t_{3}$} & \multicolumn{2}{c}{$t_{4}$} & 
                 \multicolumn{2}{c}{Normal}\\
        \cmidrule(l{2pt}r{2pt}){2-3}
        \cmidrule(l{2pt}r{2pt}){4-5}
        \cmidrule(l{2pt}r{2pt}){6-7}
        \cmidrule(l{2pt}r{2pt}){8-9}
        \cmidrule(l{2pt}r{2pt}){10-11} $(n,p)$
        &  max   & $\ell_{2,\infty}$ & max & $\ell_{2,\infty}$ & max & $\ell_{2,\infty}$ & max & $\ell_{2,\infty}$ & max & $\ell_{2,\infty}$\\
        \midrule 
(100,50) & 0.771 & 0.675 & 0.416 & 0.404 & 0.842 & 0.740 & 1.002 & 0.940 & 1.064 & 1.081 \\ 
  (100,100) & 0.662 & 0.578 & 0.344 & 0.341 & 0.764 & 0.610 & 0.959 & 0.842 & 1.016 & 1.057 \\ 
  (200,50) & 0.782 & 0.742 & 0.408 & 0.400 & 0.825 & 0.779 & 0.987 & 0.962 & 1.127 & 1.143 \\ 
  (200,100) & 0.704 & 0.623 & 0.332 & 0.333 & 0.716 & 0.595 & 0.971 & 0.909 & 1.104 & 1.115 \\ 
  (500,100) & 0.680 & 0.644 & 0.315 & 0.307 & 0.685 & 0.593 & 0.971 & 0.939 & 1.119 & 1.123 \\ 
  (500,200) & 0.619 & 0.556 & 0.225 & 0.213 & 0.580 & 0.435 & 0.965 & 0.901 & 1.123 & 1.137 \\ 
        \bottomrule
    \end{tabular}
    \label{tab:auto_var_banded_lasso_diag_d0_1_d_2}
\end{table}

\begin{table}[H] 
    \centering
    \caption{VAR coefficient matrix estimation errors measured by $\mathrm{RME}_{\mathbb{A}}$ in~\eqref{eq:App_RME} with different matrix norms from $200$ realisations, when the data are generated as in \ref{item:fac_0} and \hyperref[eq:lagged_banded_A]{(V3)}, as $n$, $p$ and the innovation distribution vary (see \ref{innov_dist_gauss}--\ref{innov_dist_log_norm}). The VAR coefficient matrices are estimated with VAR order set to $d_f = 3$ while the true VAR order is $d = 2$.
    }
    \begin{tabular}{c cc cc cc cc cc}
        \toprule & 
                 \multicolumn{2}{c}{log-normal} &
                 \multicolumn{2}{c}{$t_{2.1}$} & \multicolumn{2}{c}{$t_{3}$} & \multicolumn{2}{c}{$t_{4}$} & 
                 \multicolumn{2}{c}{Normal}\\
        \cmidrule(l{2pt}r{2pt}){2-3}
        \cmidrule(l{2pt}r{2pt}){4-5}
        \cmidrule(l{2pt}r{2pt}){6-7}
        \cmidrule(l{2pt}r{2pt}){8-9}
        \cmidrule(l{2pt}r{2pt}){10-11} $(n,p)$
        &  max   & $\ell_{2,\infty}$ & max & $\ell_{2,\infty}$ & max & $\ell_{2,\infty}$ & max & $\ell_{2,\infty}$ & max & $\ell_{2,\infty}$\\
        \midrule 
(100,50) & 0.591 & 0.668 & 0.332 & 0.407 & 0.699 & 0.720 & 0.939 & 0.898 & 1.052 & 1.022 \\ 
  (100,100) & 0.467 & 0.582 & 0.279 & 0.351 & 0.534 & 0.577 & 0.822 & 0.785 & 1.026 & 0.995 \\ 
  (200,50) & 0.733 & 0.789 & 0.343 & 0.430 & 0.823 & 0.833 & 0.990 & 1.032 & 1.030 & 1.155 \\ 
  (200,100) & 0.586 & 0.657 & 0.277 & 0.340 & 0.653 & 0.631 & 0.970 & 0.935 & 1.030 & 1.095 \\ 
  (500,100) & 0.819 & 0.854 & 0.280 & 0.317 & 0.779 & 0.739 & 1.077 & 1.218 & 1.114 & 1.386 \\ 
  (500,200) & 0.707 & 0.693 & 0.252 & 0.310 & 0.668 & 0.612 & 1.009 & 1.120 & 1.037 & 1.302 \\ 
        \bottomrule
    \end{tabular}
    \label{tab:auto_var_banded_lasso_diag_d0_2_d_3}
\end{table}

\begin{figure}[H] 
    \centering
    \begin{tabular}{c}
    \includegraphics[width=.9\textwidth]{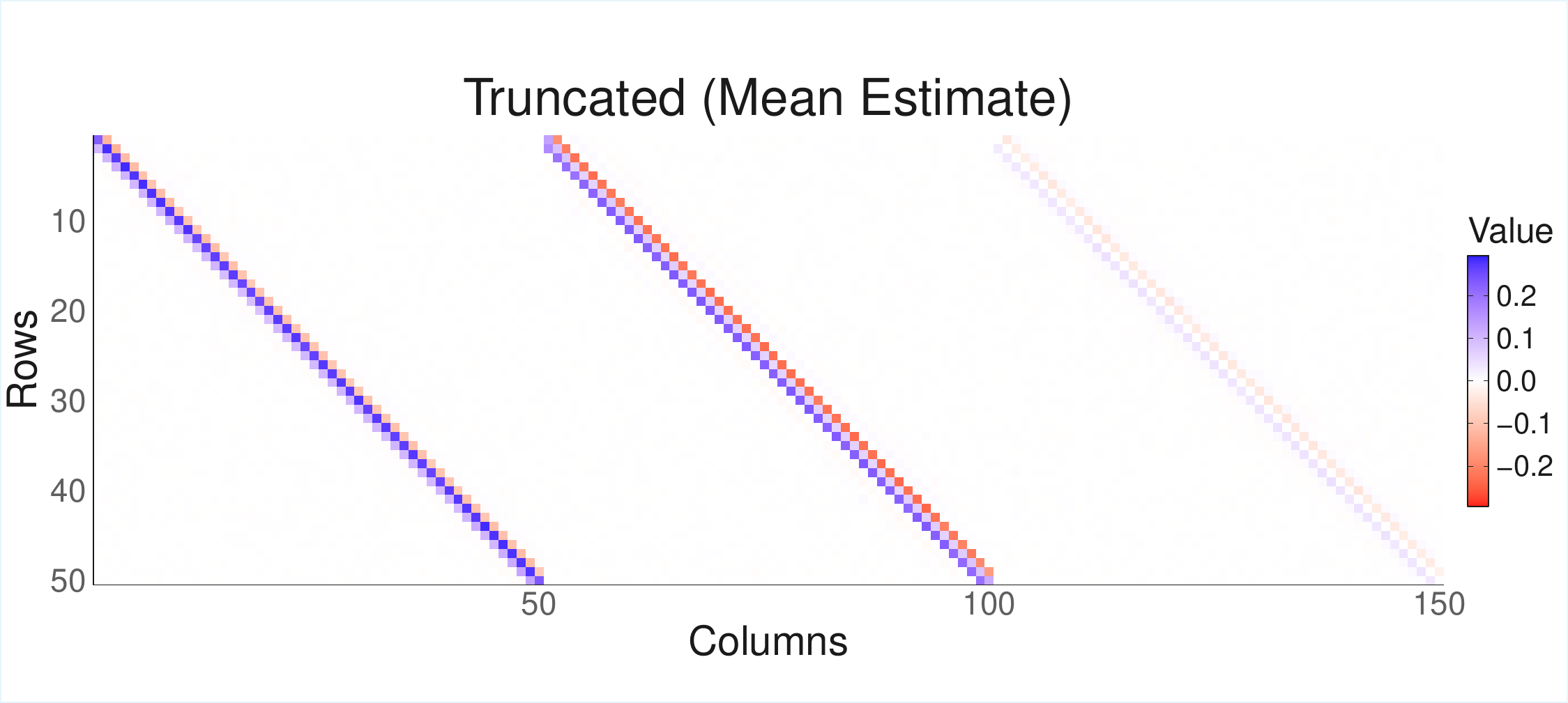} \\
    \includegraphics[width=.9\textwidth]{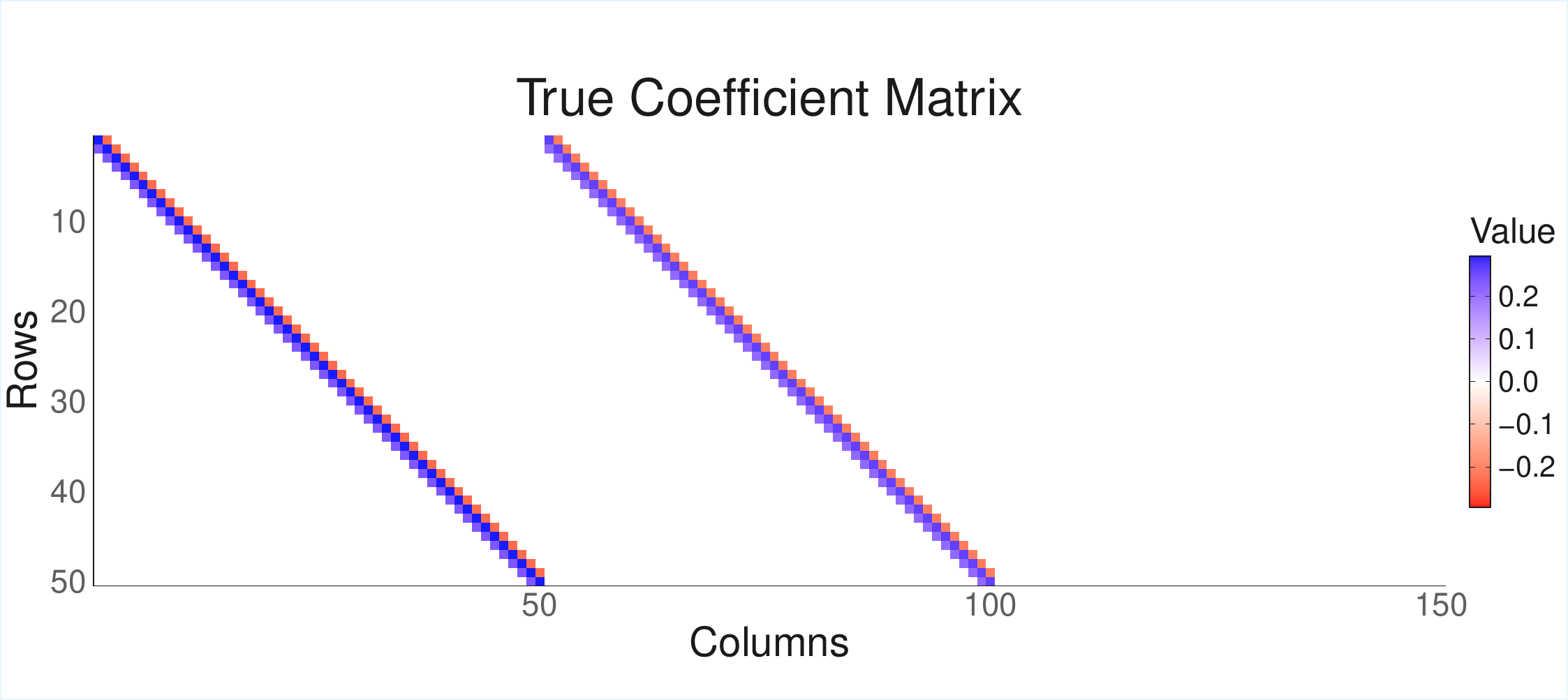}
    \end{tabular}
    \caption{Heatmap of the coefficient matrix estimates $\widehat{\mathbb{A}}(\tau)$ returned by the proposed method over $200$ repetitions (top), and the true coefficient matrix $\mathbb{A}$ (bottom). The data are generated according to \ref{item:fac_0} and \hyperref[eq:lagged_banded_A]{(V3)}, with $(n = 200, p = 50)$ and Student-$t$ innovations with $2.1$ degrees of freedom. The VAR coefficients are estimated with order $d_f = 3$, while the true VAR order is $d = 2$.}
    \label{fig:heatmap_estimate}
\end{figure}

\clearpage

\subsection{Covariance estimation for independent data} \label{sec:cov_sim}

In this subsection, we present covariance estimation errors of the sample covariance with truncation. The motivation for providing such results, is to illustrate the performance of our tuning procedure for the goal of covariance estimation, which may be of independent interest. We investigate the performance of the truncation estimator of the covariance for i.i.d data, as is done in other papers studying tail-robust covariance estimation such as \citet{Ke2019}, under the following scenario:
\begin{enumerate}[label = (V\arabic*), wide, labelindent=0pt, start = 0]
\item \label{item:var_0} {\bf Independent data.} $\mathbf{A} = \textbf{O}$.
\end{enumerate}

For this subsection, we only include the sample covariance in the CV measure, that is, $d$ is set to zero in \eqref{eq:CV_measure}. \Cref{tab:iid_cov_diag} and \Cref{tab:iid_cov_pwrdec} present
a summary of the relative performance of the sample covariance estimator with truncated data against that of the sample covariance with no truncation (with $\tau = \infty$), either when (S1) $\bSigma_{\bxii} = \mathbf{I}$ and (S2) $\bSigma_{\bxii} = [0.9^{\vert i - j \vert}]_{i, j = 1}^p$, respectively. The errors are measured by
\begin{equation} \label{eq:RME_cov}
\mathrm{RME}_{\bGamma}=\frac{\sum_{i=1}^{200}\left|\widehat{\bGamma}_\bx^{(i)}(\tau) - \bGamma_\bx\right|_{m}}{\sum_{i=1}^{200}\left|\widehat{\bGamma}_\bx^{(i)}(\infty) -\bGamma_\bx\right|_{m}} \, ,
\end{equation}
here, $m$ denotes the matrix
norm used to compute the relative mean error (`RME'), which we will vary in the tables below, max denotes the max
element-wise norm, and F the Frobenius norm.

\begin{table}[H] 
    \centering
    \caption{
    Covariance estimation errors measured by $\mathrm{RME}_{\bGamma}$ \eqref{eq:RME_cov} with different matrix norms from 200 realisations. With i.i.d data (\ref{item:fac_0}, \ref{item:var_0}) generated with a diagonal covariance structure (S1), with varying $(n, p)$ and innovation distributions \ref{innov_dist_gauss}--\ref{innov_dist_log_norm}.
    }
    \begin{tabular}{c cc cc cc cc cc}
        \toprule & 
                 \multicolumn{2}{c}{log-normal} &
                 \multicolumn{2}{c}{$t_{2.1}$} & \multicolumn{2}{c}{$t_{3}$} & \multicolumn{2}{c}{$t_{4}$} & 
                 \multicolumn{2}{c}{Normal}\\
        \cmidrule(l{2pt}r{2pt}){2-3}
        \cmidrule(l{2pt}r{2pt}){4-5}
        \cmidrule(l{2pt}r{2pt}){6-7}
        \cmidrule(l{2pt}r{2pt}){8-9}
        \cmidrule(l{2pt}r{2pt}){10-11}
        $(n,p)$ &  max   & F & max & F & max & F & max & F & max & F\\
        \midrule 
(50,100) & 0.152 & 0.543 & 0.052 & 0.450 & 0.089 & 0.657 & 0.283 & 0.796 & 1.007 & 0.842 \\ 
  (50,200) & 0.133 & 0.579 & 0.028 & 0.078 & 0.083 & 0.631 & 0.193 & 0.808 & 0.954 & 0.824 \\ 
  (100,200) & 0.151 & 0.626 & 0.027 & 0.197 & 0.075 & 0.534 & 0.215 & 0.871 & 1.022 & 0.898 \\ 
  (200,50) & 0.270 & 0.662 & 0.120 & 0.546 & 0.158 & 0.637 & 0.360 & 0.816 & 1.094 & 0.952 \\ 
  (200,100) & 0.191 & 0.699 & 0.072 & 0.437 & 0.116 & 0.665 & 0.307 & 0.884 & 1.060 & 0.940 \\ 
        \bottomrule
    \end{tabular}
    \label{tab:iid_cov_diag}
\end{table}

\begin{table}[H] 
    \centering
    \caption{Covariance estimation errors measured by $\mathrm{RME}_{\bGamma}$ \eqref{eq:RME_cov} with different matrix norms from 200 realisations. With i.i.d data (\ref{item:fac_0}, \ref{item:var_0}) generated with a power-decay covariance structure (S2), with varying $(n, p)$ and innovation distributions \ref{innov_dist_gauss}--\ref{innov_dist_log_norm}.}
    \begin{tabular}{c cc cc cc cc cc}
        \toprule & 
                 \multicolumn{2}{c}{log-normal} &
                 \multicolumn{2}{c}{$t_{2.1}$} & \multicolumn{2}{c}{$t_{3}$} & \multicolumn{2}{c}{$t_{4}$} & 
                 \multicolumn{2}{c}{Normal}\\
        \cmidrule(l{2pt}r{2pt}){2-3}
        \cmidrule(l{2pt}r{2pt}){4-5}
        \cmidrule(l{2pt}r{2pt}){6-7}
        \cmidrule(l{2pt}r{2pt}){8-9}
        \cmidrule(l{2pt}r{2pt}){10-11}
       $(n,p)$ &  max   & F & max & F & max & F & max & F & max & F\\
        \midrule 
(50,100) & 0.241 & 0.731 & 0.152 & 0.189 & 0.252 & 0.764 & 0.478 & 0.897 & 1.026 & 0.953 \\ 
  (50,200) & 0.195 & 0.731 & 0.068 & 0.589 & 0.127 & 0.746 & 0.331 & 0.894 & 1.001 & 0.899 \\ 
  (100,200) & 0.203 & 0.793 & 0.048 & 0.575 & 0.156 & 0.682 & 0.361 & 0.909 & 1.072 & 0.948 \\ 
  (200,50) & 0.324 & 0.770 & 0.305 & 0.848 & 0.295 & 0.807 & 0.581 & 0.945 & 1.092 & 1.063 \\ 
  (200,100) & 0.272 & 0.806 & 0.026 & 0.727 & 0.197 & 0.741 & 0.435 & 0.926 & 1.090 & 1.005 \\ 
        \bottomrule
    \end{tabular}
    \label{tab:iid_cov_pwrdec}
\end{table}

\clearpage

\subsection{(Auto)covariance estimation for dependent data}
\label{sec:acv_sim}

In this subsection we present (auto)covariance estimation errors for dependent data,  we consider the DGPs defined in \Cref{sec:simulations}. In addition, we provide figures presenting how the choice of $\tau$ from our CV tuning procedure, described in \Cref{sec:CV_tune_tau}, varies with $n$ and $p$.

\subsubsection{Covariance estimation}

\Cref{tab:var_banded_cov_diag} and \Cref{tab:var_renyi_cov_diag}, present a summary of the relative performance of the sample covariance estimator with truncated data against that of the sample covariance with no truncation (with $\tau = \infty$), measured by \ref{eq:RME_cov}, for data generated from a VAR$(1)$ model under \ref{eq:tri_diag_A_sparse_var} and \ref{eq:renyi_A}, respectively. From the results we observe that the largest ratio of improvement is in the heaviest tailed case, as expected, which lessens until we approach efficiency in the Gaussian case. 

\Cref{fig:tau_vs_p_and_n} plots the choice of $\tau$ against increasing sample size and dimension for various innovation distribution settings when the data are generated from a VAR$(1)$ model \ref{eq:tri_diag_A_sparse_var}. \Cref{fig:tau_vs_p_and_n} shows clearly that for heavier-tailed distributions a larger $\tau$ is chosen, increasing with sample size agreeing with the theoretical choice of $\tau$ in \eqref{eq:tau_prop}.

\begin{table}[h!t!] 
    \centering
    \caption{Covariance estimation errors measured by $\mathrm{RME}_{\bGamma}$ \eqref{eq:RME_cov} with different matrix norms from 200 realisations. With data generated from the process \ref{item:fac_0}, \ref{eq:tri_diag_A_sparse_var}, diagonal structure (S1), with varying $(n, p)$ and innovation distributions \ref{innov_dist_gauss}--\ref{innov_dist_log_norm}.}
    \begin{tabular}{c cc cc cc cc cc}
        \toprule & 
                 \multicolumn{2}{c}{log-normal} &
                 \multicolumn{2}{c}{$t_{2.1}$} & \multicolumn{2}{c}{$t_{3}$} & \multicolumn{2}{c}{$t_{4}$} & 
                 \multicolumn{2}{c}{Normal}\\
        \cmidrule(l{2pt}r{2pt}){2-3}
        \cmidrule(l{2pt}r{2pt}){4-5}
        \cmidrule(l{2pt}r{2pt}){6-7}
        \cmidrule(l{2pt}r{2pt}){8-9}
        \cmidrule(l{2pt}r{2pt}){10-11}
        $(n,p)$ &  max   & F & max & F & max & F & max & F & max & F\\
        \midrule 
(50,100) & 0.353 & 0.785 & 0.222 & 0.746 & 0.312 & 0.766 & 0.557 & 0.761 & 0.803 & 0.698 \\ 
  (50,200) & 0.331 & 0.803 & 0.061 & 0.326 & 0.241 & 0.790 & 0.501 & 0.780 & 0.733 & 0.621 \\ 
  (100,200) & 0.316 & 0.862 & 0.105 & 0.636 & 0.213 & 0.834 & 0.549 & 0.864 & 0.855 & 0.720 \\ 
  (200,50) & 0.468 & 0.871 & 0.133 & 0.433 & 0.285 & 0.779 & 0.792 & 0.907 & 1.045 & 0.871 \\ 
  (200,100) & 0.412 & 0.899 & 0.192 & 0.709 & 0.351 & 0.883 & 0.644 & 0.898 & 0.974 & 0.844 \\ 
        \bottomrule
    \end{tabular}
    \label{tab:var_banded_cov_diag}
\end{table}

\begin{figure}[H]
    \centering
    \begin{minipage}[b]{0.48\textwidth}
    \includegraphics[width=\textwidth]{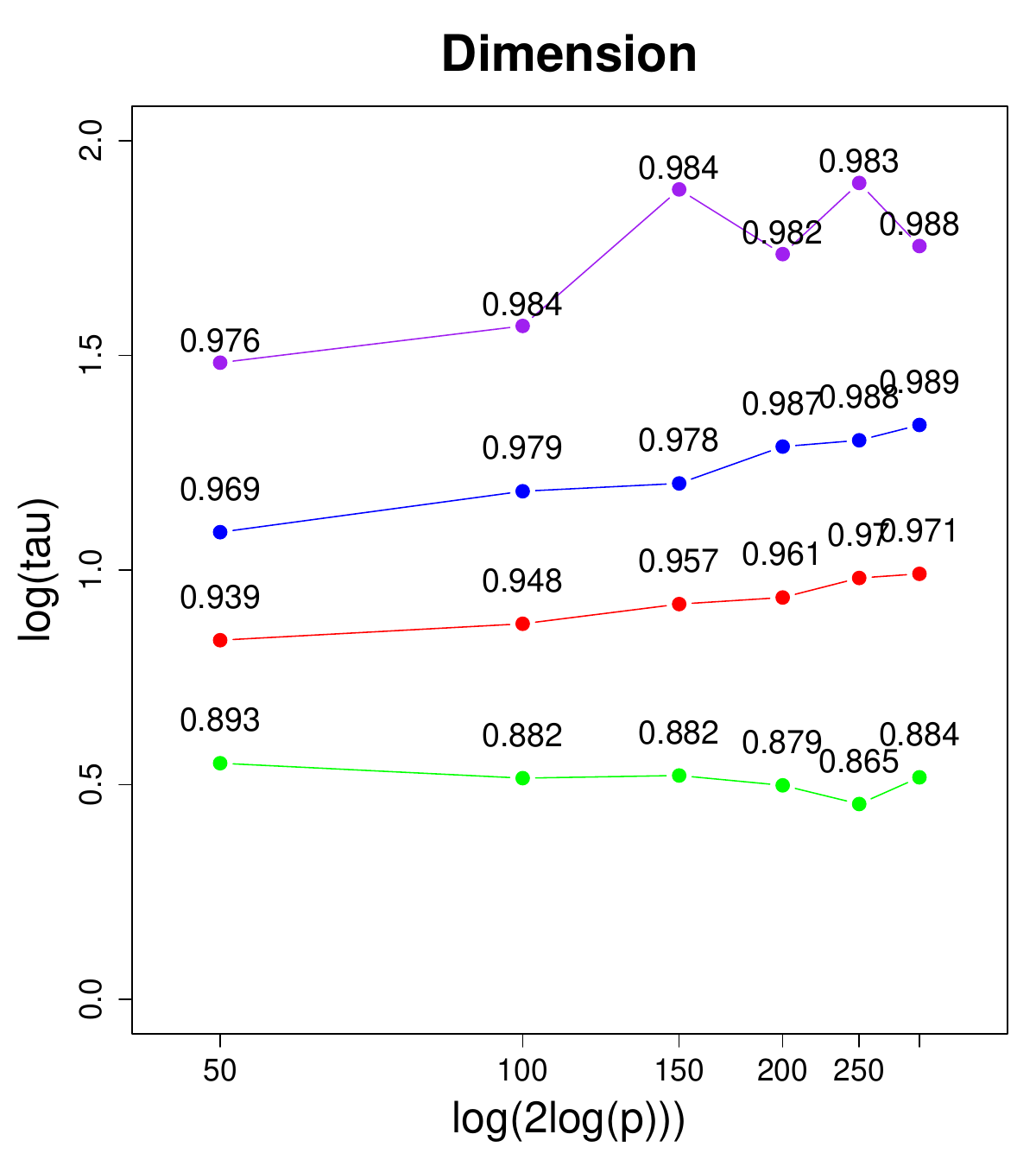}
    \end{minipage}
  \begin{minipage}[b]{0.48\textwidth}
  
    \includegraphics[width=\textwidth]{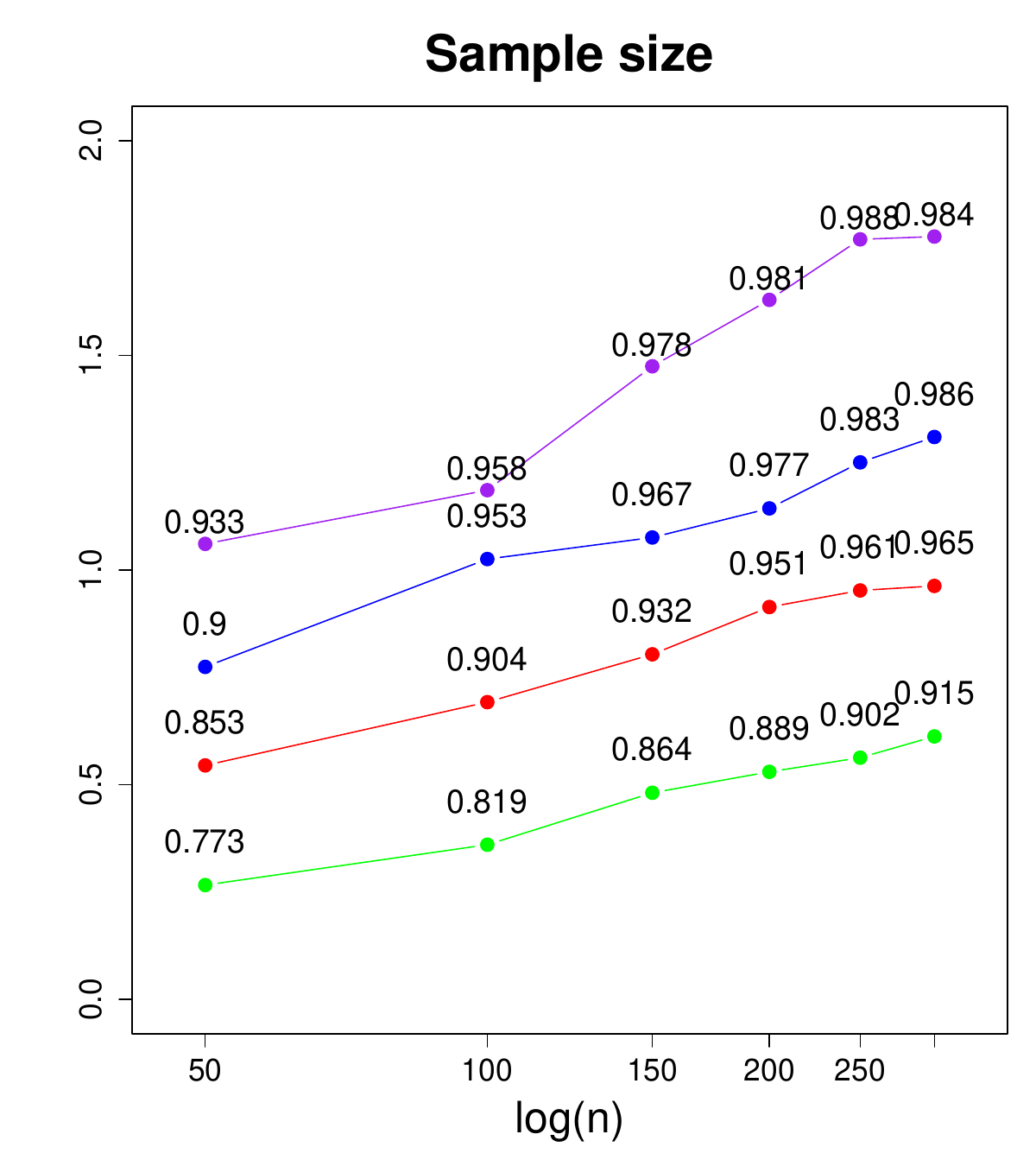}
    \end{minipage}
        \begin{minipage}[t]{\textwidth}
        \centering
        \includegraphics[width=8cm]{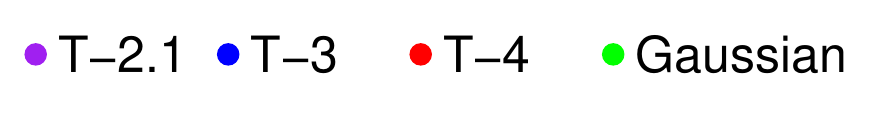} 
    \end{minipage}
    \centering
    \begin{minipage}[b]{0.48\textwidth}

    \includegraphics[width=\textwidth]{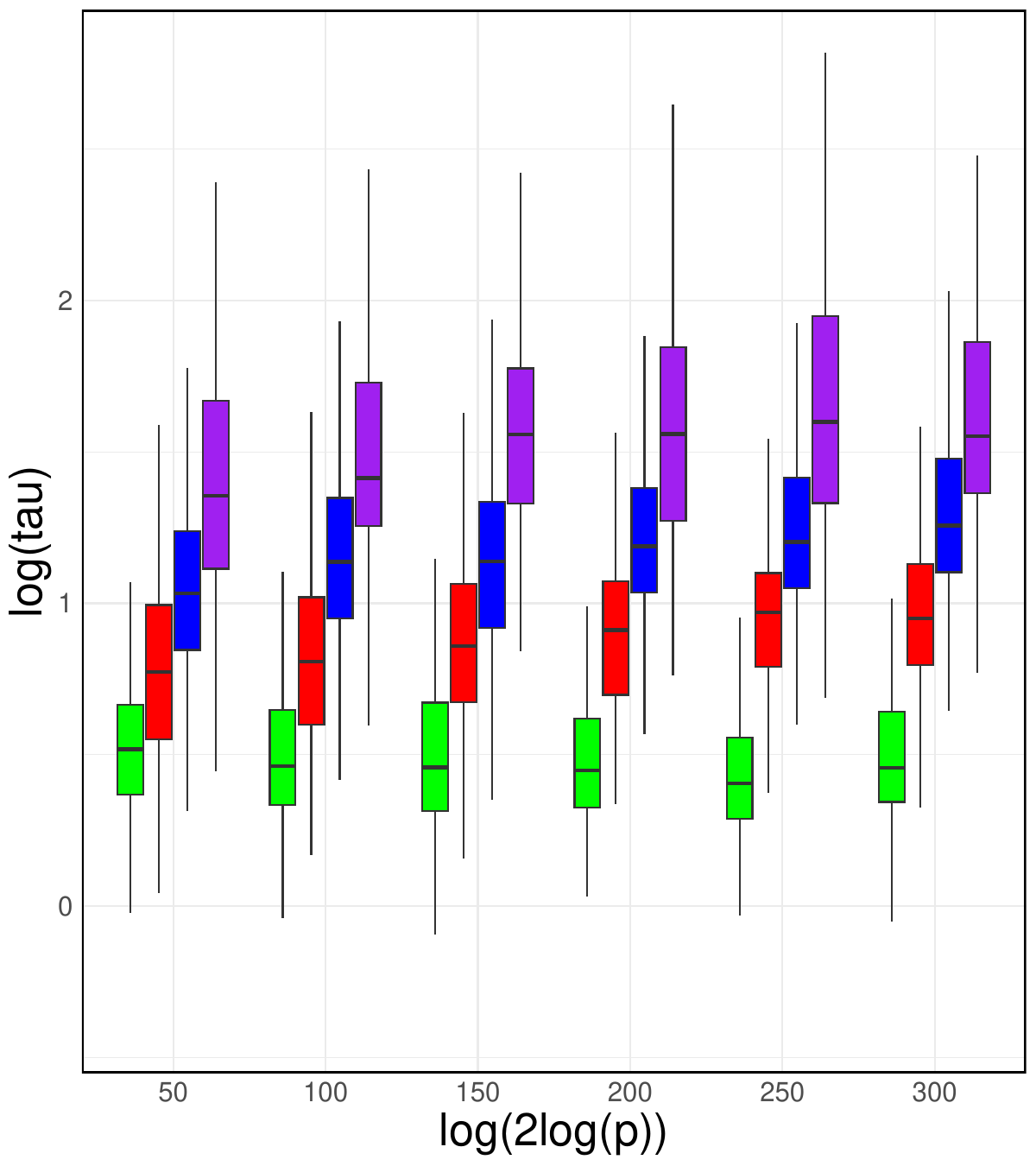}
    \end{minipage}
  \begin{minipage}[b]{0.48\textwidth}
  
    \includegraphics[width=\textwidth]{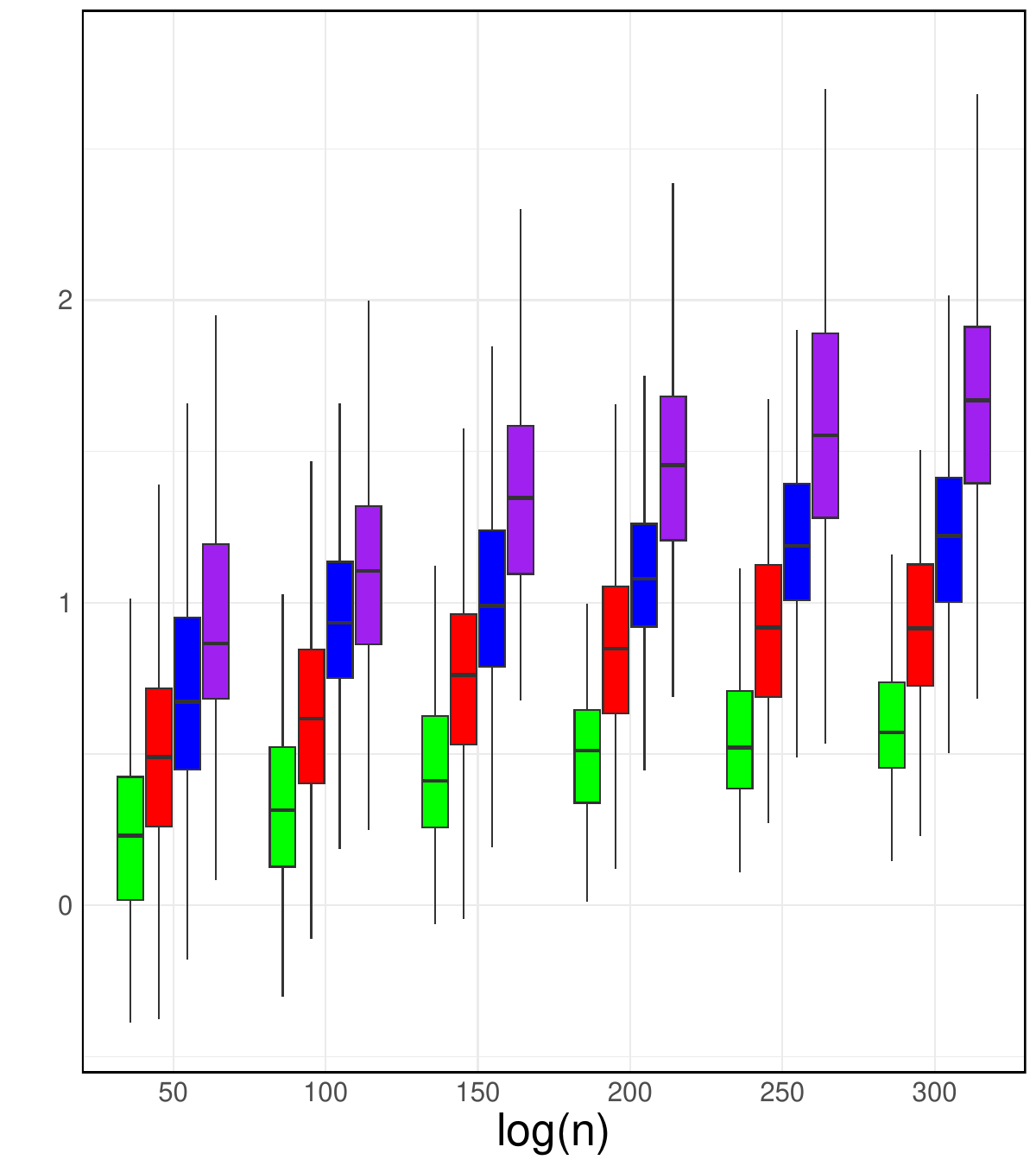}
    \end{minipage}

        \caption{Plots of $\log(\tau)$ chosen from the CV procedure described in \Cref{sec:CV_tune_tau}, with $d = 0$ in the CV measure \eqref{eq:CV_measure}, over 200 simulations against dimension $p$ ($n$ fixed at 200), and sample size $n$ ($p$ fixed at 100), and as the innovation distribution varies (see~\ref{innov_dist_gauss}--\ref{innov_dist_log_norm}). The data are generated as in \ref{item:fac_0}, \ref{eq:tri_diag_A_sparse_var}, (S1). The first row shows the average $\tau$ chosen across simulations, the numbers on each point is the average percentage of data points in absolute values that are less than $\tau$. The second row gives the box plots of the chosen $\tau$ across the simulations.}
    \label{fig:tau_vs_p_and_n}
\end{figure}

\begin{table}[h!t!] 
    \centering
    \caption{Covariance estimation errors measured by $\mathrm{RME}_{\bGamma}$ \eqref{eq:RME_cov} with different matrix norms from 200 realisations. With data generated from the process \ref{item:fac_0}, \ref{eq:renyi_A}, diagonal structure (S1), with varying $(n, p)$ and innovation distributions \ref{innov_dist_gauss}--\ref{innov_dist_log_norm}.}
    \begin{tabular}{c cc cc cc cc cc}
        \toprule & 
                 \multicolumn{2}{c}{log-normal} &
                 \multicolumn{2}{c}{$t_{2.1}$} & \multicolumn{2}{c}{$t_{3}$} & \multicolumn{2}{c}{$t_{4}$} & 
                 \multicolumn{2}{c}{Normal}\\
        \cmidrule(l{2pt}r{2pt}){2-3}
        \cmidrule(l{2pt}r{2pt}){4-5}
        \cmidrule(l{2pt}r{2pt}){6-7}
        \cmidrule(l{2pt}r{2pt}){8-9}
        \cmidrule(l{2pt}r{2pt}){10-11}
        $(n,p)$ &  max   & F & max & F & max & F & max & F & max & F\\
        \midrule 
(50,100) & 0.162 & 0.590 & 0.073 & 0.372 & 0.126 & 0.638 & 0.274 & 0.820 & 1.061 & 0.851 \\ 
  (50,200) & 0.154 & 0.629 & 0.065 & 0.434 & 0.064 & 0.596 & 0.240 & 0.842 & 1.039 & 0.825 \\ 
  (100,200) & 0.154 & 0.677 & 0.066 & 0.426 & 0.108 & 0.722 & 0.246 & 0.883 & 1.091 & 0.893 \\ 
  (200,50) & 0.282 & 0.689 & 0.101 & 0.374 & 0.229 & 0.694 & 0.363 & 0.847 & 1.132 & 0.954 \\ 
  (200,100) & 0.249 & 0.733 & 0.036 & 0.206 & 0.148 & 0.711 & 0.326 & 0.895 & 1.113 & 0.938 \\ 
        \bottomrule
    \end{tabular}
    \label{tab:var_renyi_cov_diag}
\end{table}

\subsubsection{Autocovariance estimation}

In this subsection, motivated by the fact that \Cref{prop:fac_adj_idio_rate} requires the rate on the term
\begin{gather*}
    \max \left\{\left| \widehat{\bGamma}(\tau) - \bGamma\right|_\infty ,  \left|\widehat{\bgamma}(\tau) - \bgamma\right|_\infty \right\} \, ,
\end{gather*}
which includes lag-$h$ autocovariance estimates for $0\leq h \leq d$, we now include the autocovariance matrix of lag one when computing the errors of our estimates. In addition, we include the lag one autocovariance in our CV measure, that is $d$ is now set to one in \eqref{eq:CV_measure}.
Therefore, for the tables in this subsection we compute the following modified RME:
\begin{equation} \label{eq:RME_lag}
\mathrm{RME}_{\ell}=\frac{\sum_{i=1}^{200}\underset{0 \leq h \leq d}{\max}\left|\widehat{\boldsymbol{\Gamma}}_{\bx}(\tau,h)-\boldsymbol{\Gamma}_{\bx}(h)\right|_{m}}{\sum_{i=1}^{200}\underset{0 \leq h \leq d}{\max}\left|\widehat{\boldsymbol{\Gamma}}_{\bx}(\infty, h)-\boldsymbol{\Gamma}_{\bx}(h)\right|_{m}} \, .
\end{equation} 
As was written in Section~\ref{sec:cov_sim}, $m$ denotes the matrix norm used to compute the $\mathrm{RME}_{\ell}$, which we will vary in the tables below.
\Cref{tab:auto_var_banded_cov_diag,tab:auto_var_renyi_cov_diag,tab:auto_fac_var_banded_diag,tab:auto_fac_var_renyi_diag}, display these $\mathrm{RME}_{\ell}$ values for the varying DGPs, and the results are very similar to the results for covariance estimation in Section~\ref{sec:cov_sim}.
In addition, in \Cref{fig:tau_vs_p_and_n_auto} we plot the $\tau$ values chosen by CV, analogous to \Cref{fig:tau_vs_p_and_n}. Comparing to the plots of \Cref{fig:tau_vs_p_and_n}, it shows that there is a minor difference in the $\tau$ that is chosen when $d = 1$ is used compared to $d = 0$ in the CV measure \eqref{eq:CV_measure}.

\begin{table}[h!t!] 
    \centering
    \caption{Maximum of covariance and autocovariance estimation errors, as measured by $\mathrm{RME}_\ell$~\eqref{eq:RME_lag} with $d = 1$, with different matrix norms from 200 realisations. With data generated from the process \ref{item:fac_0} and \ref{eq:tri_diag_A_sparse_var}, with varying $(n, p)$ and innovation distributions \ref{innov_dist_gauss}--\ref{innov_dist_log_norm}.}
    \begin{tabular}{c cc cc cc cc cc}
        \toprule & 
                 \multicolumn{2}{c}{log-normal} &
                 \multicolumn{2}{c}{$t_{2.1}$} & \multicolumn{2}{c}{$t_{3}$} & \multicolumn{2}{c}{$t_{4}$} & 
                 \multicolumn{2}{c}{Normal}\\
        \cmidrule(l{2pt}r{2pt}){2-3}
        \cmidrule(l{2pt}r{2pt}){4-5}
        \cmidrule(l{2pt}r{2pt}){6-7}
        \cmidrule(l{2pt}r{2pt}){8-9}
        \cmidrule(l{2pt}r{2pt}){10-11}
        $(n,p)$ &  max   & F & max & F & max & F & max & F & max & F\\
        \midrule 
(50,100) & 0.357 & 0.793 & 0.222 & 0.752 & 0.319 & 0.771 & 0.566 & 0.750 & 0.778 & 0.650 \\ 
  (50,200) & 0.337 & 0.810 & 0.061 & 0.340 & 0.245 & 0.795 & 0.512 & 0.771 & 0.708 & 0.565 \\ 
  (100,200) & 0.319 & 0.868 & 0.105 & 0.645 & 0.216 & 0.840 & 0.557 & 0.855 & 0.846 & 0.678 \\ 
  (200,50) & 0.469 & 0.875 & 0.133 & 0.445 & 0.287 & 0.788 & 0.805 & 0.898 & 1.054 & 0.855 \\ 
  (200,100) & 0.412 & 0.904 & 0.192 & 0.716 & 0.355 & 0.888 & 0.650 & 0.891 & 0.977 & 0.815 \\ 
        \bottomrule
    \end{tabular}
    \label{tab:auto_var_banded_cov_diag}
\end{table}

\begin{table}[h!t!] 
    \centering
    \caption{Maximum of covariance and autocovariance estimation errors, as measured by $\mathrm{RME}_\ell$~\eqref{eq:RME_lag} with $d = 1$, with different matrix norms from 200 realisations. With data generated from the process \ref{item:fac_0} and \ref{eq:renyi_A}, with varying $(n, p)$ and innovation distributions \ref{innov_dist_gauss}--\ref{innov_dist_log_norm}.}
    \begin{tabular}{c cc cc cc cc cc}
        \toprule & 
                 \multicolumn{2}{c}{log-normal} &
                 \multicolumn{2}{c}{$t_{2.1}$} & \multicolumn{2}{c}{$t_{3}$} & \multicolumn{2}{c}{$t_{4}$} & 
                 \multicolumn{2}{c}{Normal}\\
        \cmidrule(l{2pt}r{2pt}){2-3}
        \cmidrule(l{2pt}r{2pt}){4-5}
        \cmidrule(l{2pt}r{2pt}){6-7}
        \cmidrule(l{2pt}r{2pt}){8-9}
        \cmidrule(l{2pt}r{2pt}){10-11}
        $(n,p)$ &  max   & F & max & F & max & F & max & F & max & F\\
        \midrule 
(50,100) & 0.164 & 0.614 & 0.072 & 0.371 & 0.126 & 0.671 & 0.279 & 0.827 & 1.087 & 0.782 \\ 
  (50,200) & 0.155 & 0.644 & 0.065 & 0.438 & 0.065 & 0.641 & 0.244 & 0.843 & 1.056 & 0.752 \\ 
  (100,200) & 0.154 & 0.701 & 0.066 & 0.434 & 0.108 & 0.753 & 0.247 & 0.888 & 1.121 & 0.845 \\ 
  (200,50) & 0.283 & 0.718 & 0.101 & 0.376 & 0.229 & 0.732 & 0.364 & 0.869 & 1.196 & 0.926 \\ 
  (200,100) & 0.249 & 0.759 & 0.036 & 0.215 & 0.148 & 0.754 & 0.328 & 0.906 & 1.168 & 0.906 \\ 
        \bottomrule
    \end{tabular}
    \label{tab:auto_var_renyi_cov_diag}
\end{table}

\begin{figure}[h!t!]
    \centering
    \begin{minipage}[b]{0.48\textwidth}
    \includegraphics[width=\textwidth]{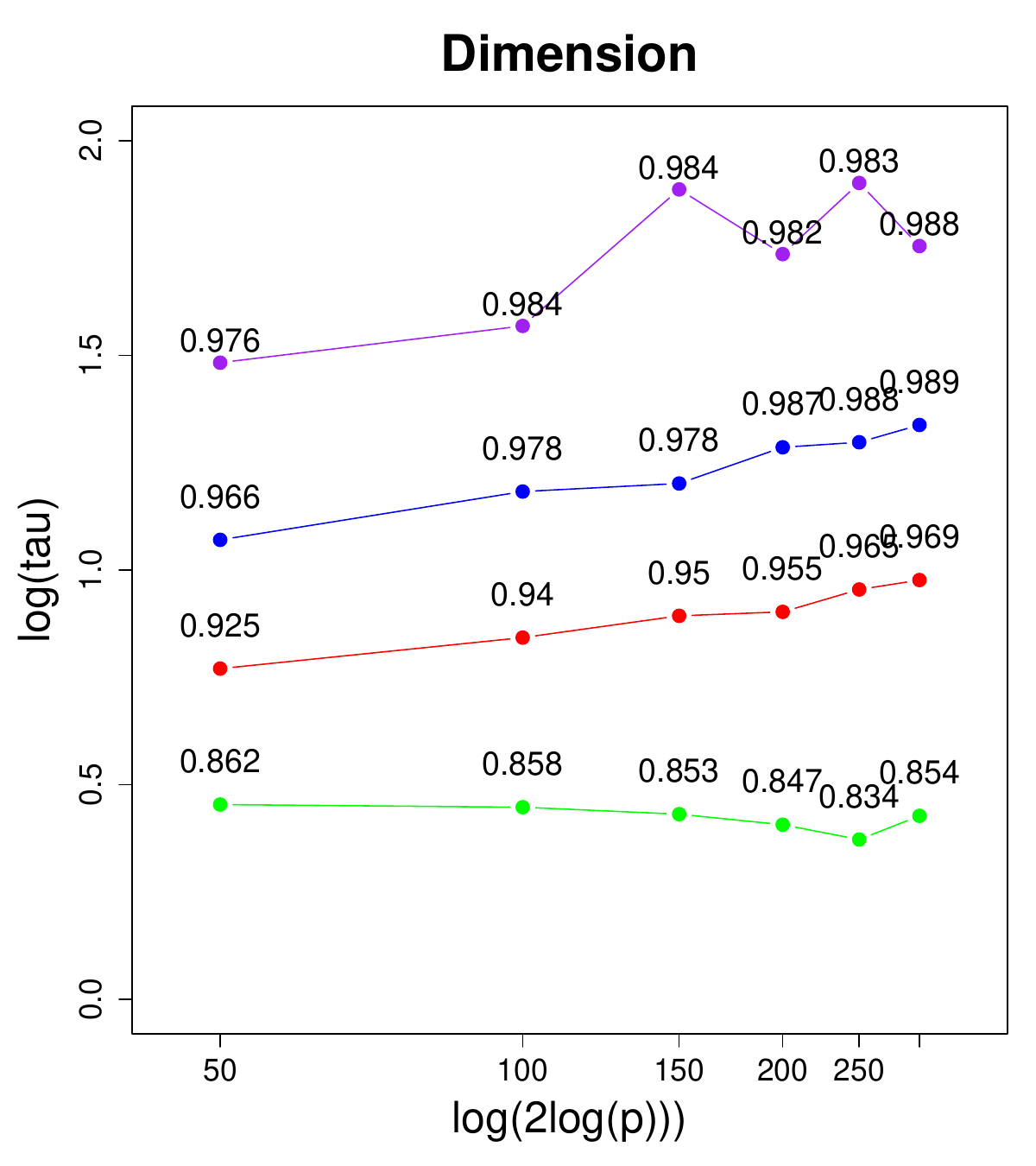}
    \end{minipage}
  \begin{minipage}[b]{0.48\textwidth}
  
    \includegraphics[width=\textwidth]{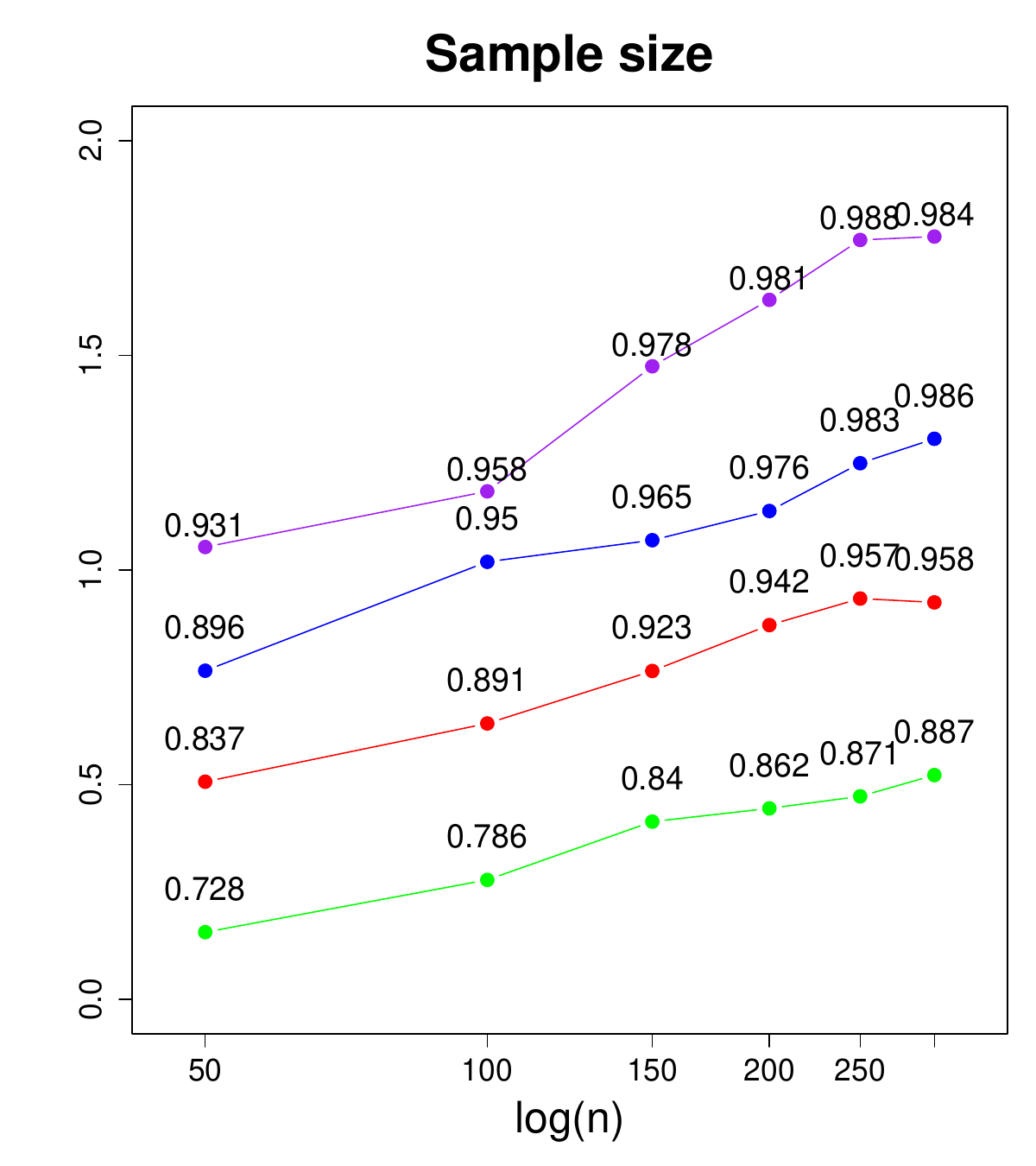}
    \end{minipage}
        \begin{minipage}[t]{\textwidth}
        \centering
        \includegraphics[width=8cm]{legend-distr_purp.pdf} 
    \end{minipage}
    \centering
    \begin{minipage}[b]{0.48\textwidth}
    
    \includegraphics[width=\textwidth]{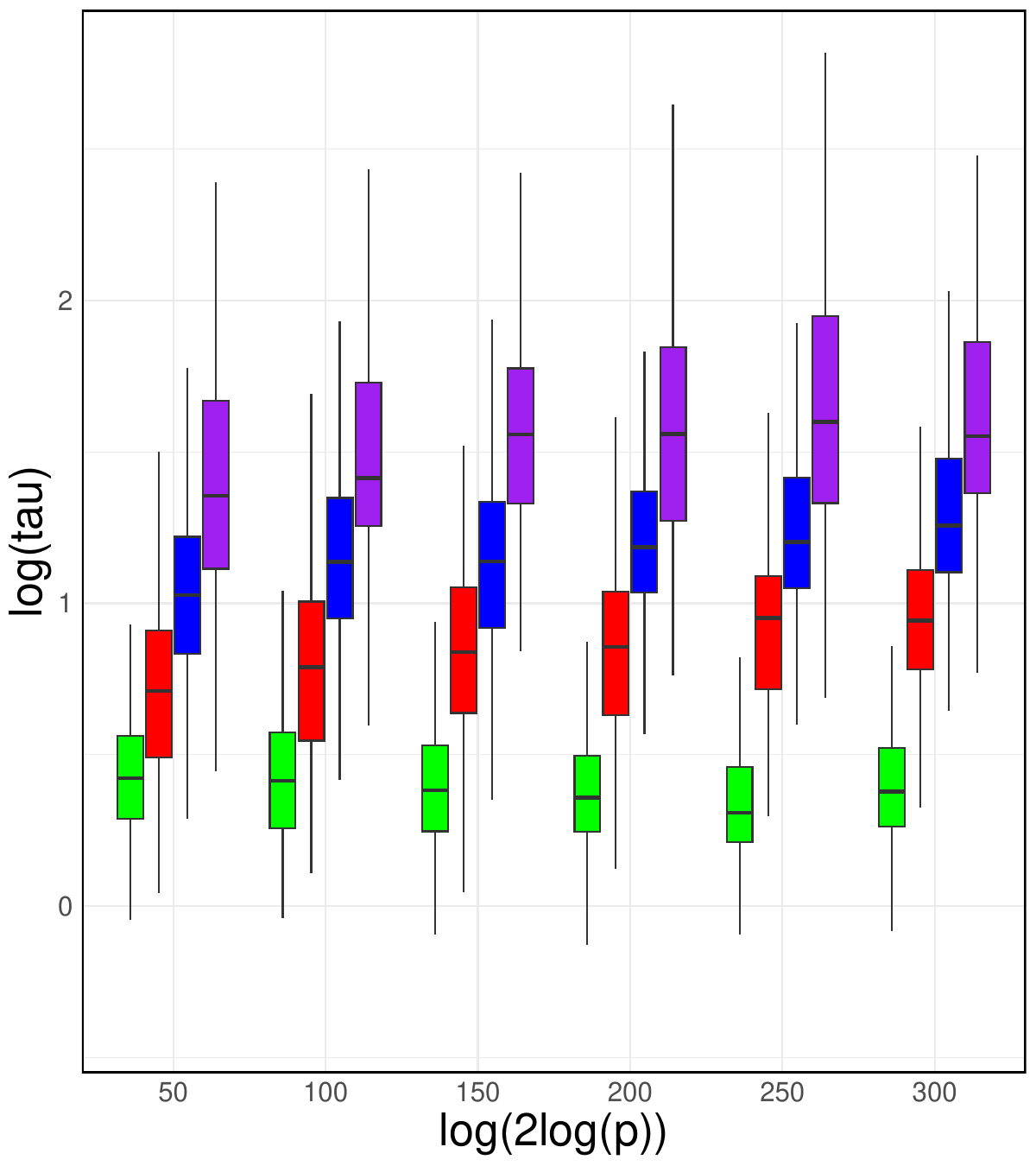}
    \end{minipage}
  \begin{minipage}[b]{0.48\textwidth}
  
    \includegraphics[width=\textwidth]{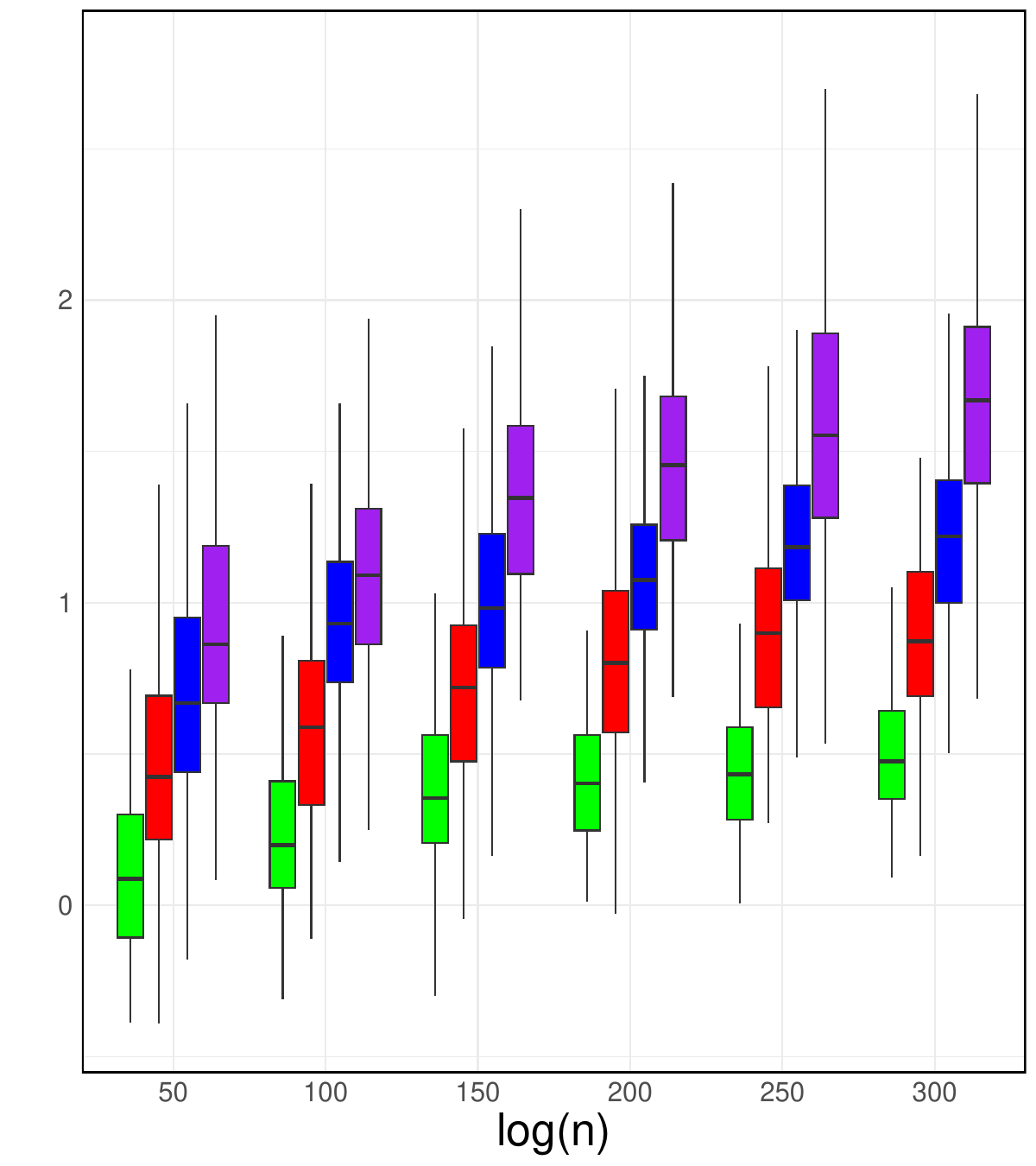}
    \end{minipage}

        \caption{Plots of $\log(\tau)$ chosen from the CV procedure described in \Cref{sec:CV_tune_tau}, with $d = 1$ in the CV measure \eqref{eq:CV_measure}, over 200 simulations against dimension $p$ ($n$ fixed at 200) and sample size $n$ ($p$ fixed at 100), and as the innovation distribution varies (see~\ref{innov_dist_gauss}--\ref{innov_dist_log_norm}). The data are generated as in \ref{item:fac_0}, \ref{eq:tri_diag_A_sparse_var}, (S1). The first row shows the average $\tau$ chosen across simulations, the numbers of each point is the average percentage of data points in absolute values that is less than $\tau$. The second row gives the box plots of the chosen $\tau$ across the simulations.}
    \label{fig:tau_vs_p_and_n_auto}
\end{figure}

\begin{table}[h!t!] 
    \centering
    \caption{Maximum of covariance and autocovariance estimation errors, as measured by $\mathrm{RME}_\ell$~\eqref{eq:RME_lag} with $d = 1$, with different matrix norms from 200 realisations. With data generated from the process \ref{item:fac_1} and \ref{eq:tri_diag_A_sparse_var}, with varying $(n, p)$ and innovation distributions \ref{innov_dist_gauss}--\ref{innov_dist_log_norm}.}
    \begin{tabular}{c cc cc cc cc cc}
        \toprule & 
                 \multicolumn{2}{c}{log-normal} &
                 \multicolumn{2}{c}{$t_{2.1}$} & \multicolumn{2}{c}{$t_{3}$} & \multicolumn{2}{c}{$t_{4}$} & 
                 \multicolumn{2}{c}{Normal}\\
        \cmidrule(l{2pt}r{2pt}){2-3}
        \cmidrule(l{2pt}r{2pt}){4-5}
        \cmidrule(l{2pt}r{2pt}){6-7}
        \cmidrule(l{2pt}r{2pt}){8-9}
        \cmidrule(l{2pt}r{2pt}){10-11}
        $(n,p)$ &  max   & F & max & F & max & F & max & F & max & F\\
        \midrule 
(100,50) & 0.550 & 0.757 & 0.194 & 0.668 & 0.397 & 0.633 & 0.700 & 0.830 & 0.889 & 0.915 \\ 
  (100,100) & 0.486 & 0.807 & 0.105 & 0.594 & 0.337 & 0.713 & 0.723 & 0.892 & 0.853 & 0.905 \\ 
  (200,50) & 0.587 & 0.829 & 0.466 & 0.869 & 0.403 & 0.686 & 0.750 & 0.855 & 0.963 & 0.961 \\ 
  (200,100) & 0.488 & 0.757 & 0.109 & 0.692 & 0.369 & 0.820 & 0.763 & 0.893 & 0.915 & 0.956 \\ 
  (500,100) & 0.570 & 0.818 & 0.210 & 0.824 & 0.404 & 0.796 & 0.751 & 0.954 & 0.983 & 1.000 \\ 
  (500,200) & 0.513 & 0.851 & 0.157 & 0.853 & 0.197 & 0.851 & 0.684 & 0.940 & 0.978 & 0.999 \\ 
        \bottomrule
    \end{tabular}
    \label{tab:auto_fac_var_banded_diag}
\end{table}

\begin{table}[t] 
    \centering
    \caption{Maximum of covariance and autocovariance estimation errors, as measured by $\mathrm{RME}_\ell$~\eqref{eq:RME_lag} with $d = 1$, with different matrix norms from 200 realisations. With data generated from the process \ref{item:fac_1} and \ref{eq:renyi_A}, with varying $(n, p)$ and innovation distributions \ref{innov_dist_gauss}--\ref{innov_dist_log_norm}.}
    \begin{tabular}{c cc cc cc cc cc}
        \toprule & 
                 \multicolumn{2}{c}{log-normal} &
                 \multicolumn{2}{c}{$t_{2.1}$} & \multicolumn{2}{c}{$t_{3}$} & \multicolumn{2}{c}{$t_{4}$} & 
                 \multicolumn{2}{c}{Normal}\\
        \cmidrule(l{2pt}r{2pt}){2-3}
        \cmidrule(l{2pt}r{2pt}){4-5}
        \cmidrule(l{2pt}r{2pt}){6-7}
        \cmidrule(l{2pt}r{2pt}){8-9}
        \cmidrule(l{2pt}r{2pt}){10-11}
        $(n,p)$ &  max   & F & max & F & max & F & max & F & max & F\\
        \midrule 
(100,50) & 0.356 & 0.744 & 0.218 & 0.715 & 0.272 & 0.615 & 0.603 & 0.820 & 1.051 & 0.945 \\ 
  (100,100) & 0.302 & 0.818 & 0.076 & 0.573 & 0.177 & 0.678 & 0.501 & 0.898 & 0.998 & 0.931 \\ 
  (200,50) & 0.468 & 0.817 & 0.265 & 0.796 & 0.233 & 0.629 & 0.606 & 0.842 & 1.095 & 0.968 \\ 
  (200,100) & 0.312 & 0.733 & 0.082 & 0.632 & 0.230 & 0.814 & 0.540 & 0.893 & 1.083 & 0.967 \\ 
  (500,100) & 0.363 & 0.819 & 0.122 & 0.737 & 0.256 & 0.759 & 0.524 & 0.936 & 1.112 & 0.993 \\ 
  (500,200) & 0.333 & 0.863 & 0.122 & 0.855 & 0.103 & 0.831 & 0.429 & 0.949 & 1.100 & 0.990 \\ 
        \bottomrule
    \end{tabular}
    \label{tab:auto_fac_var_renyi_diag}
\end{table}

\clearpage

\subsection{\cred{Computational cost}} \label{sec:comp_cost}

In this subsection, we analyse the computational complexity of the proposed method and compare its empirical running time with that of the competing estimator considered in \Cref{sec:results}.

 The cross-validation procedure we implement, described in Section~\ref{sec:CV_tune_tau}, requires the computation of $2(d + 1)$ sample (auto)covariance matrices, where $d$ denotes the order of the VAR process, over a grid of candidate truncation parameter values of length $J$. Hence, the computational cost scales, with regards to the grid length $J$, dimension $p$ and sample size $n$, as $O(dJnp^2)$. For comparison, the $\ell_1$-regularised Huber regression estimator of \citet{Wang2021}, implemented in the R package \texttt{adaHuber} \citep{adaHuber} and adopted for comparative simulations in \Cref{sec:results}, has per-iteration tuning cost of order $O(np)$.
 
We numerically investigate the run time of the two methods in \Cref{fig:comp_time} , which demonstrates that our method (with the CV-based truncation parameter selection) is faster. 
In the top left panel of \Cref{fig:comp_time}, reporting the results from applying the two methods to the data from the banded VAR setting (\ref{item:fac_0} and \ref{eq:tri_diag_A_sparse_var} in Section~\ref{sec:simulations}), we observe a drop in the run time as the dimension increases from $p = 150$ to $p = 200$. Upon close investigation, this turns out to be an artefact due to the implementation of the function \texttt{cv.glmnet} in the R package \texttt{glmnet} \citep{Friedman2010} which we use to perform the VAR estimation step in Equation~(8). In this function, different approaches are employed to generate the penalty parameter path according to the dimension of the data. The plot in the bottom row of \Cref{fig:comp_time} illustrates that when the penalty parameter path is fixed for all $p$ with that chosen by \texttt{cv.glmnet} for $p =200$, this drop in time for increasing dimension of dataset is no longer visible.

\begin{figure}[h!t!]
    \centering
    \begin{minipage}[b]{0.49\textwidth}
        \includegraphics[width=\textwidth]{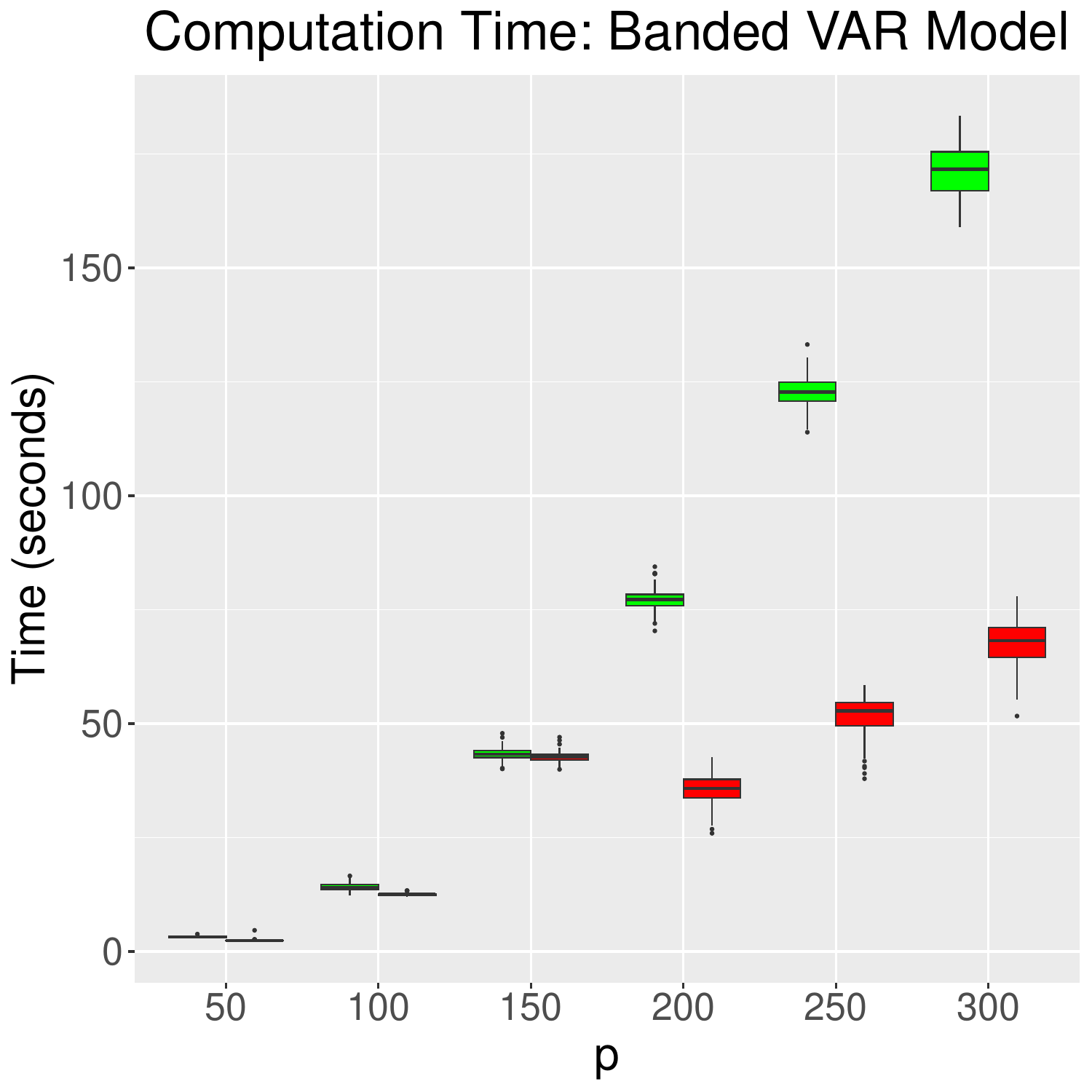}
    \end{minipage}
    \begin{minipage}[b]{0.49\textwidth}
        \includegraphics[width=\textwidth]{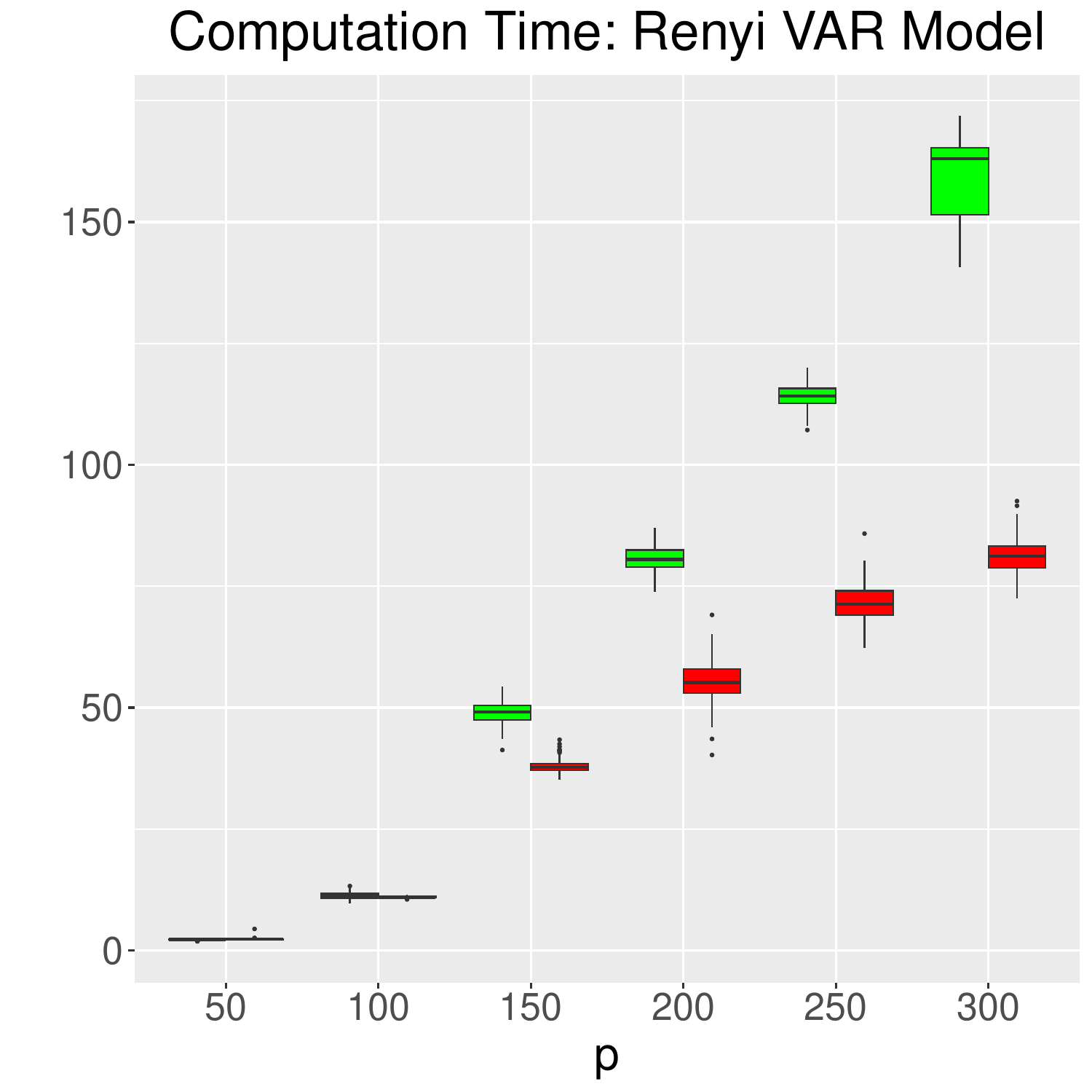}
    \end{minipage}
    
    \begin{minipage}[t]{\textwidth}
        \centering
        \includegraphics[width=8cm]{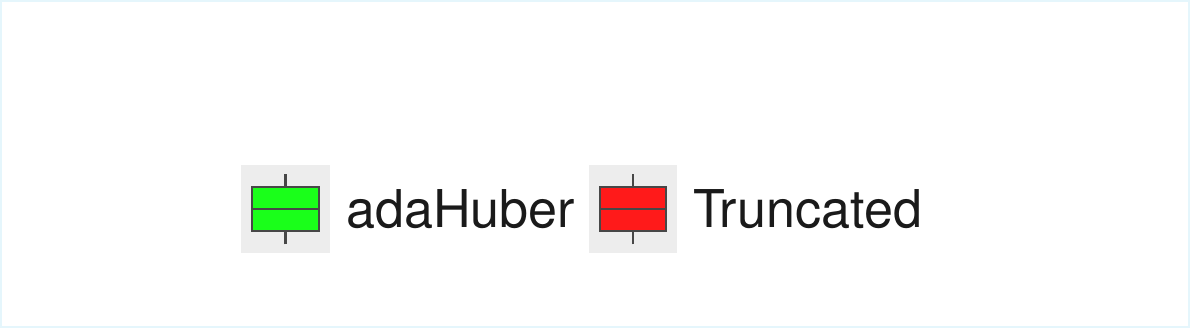} 
    \end{minipage}
    \begin{minipage}[b]{0.5\textwidth}
    \centering
        \includegraphics[width=\textwidth]{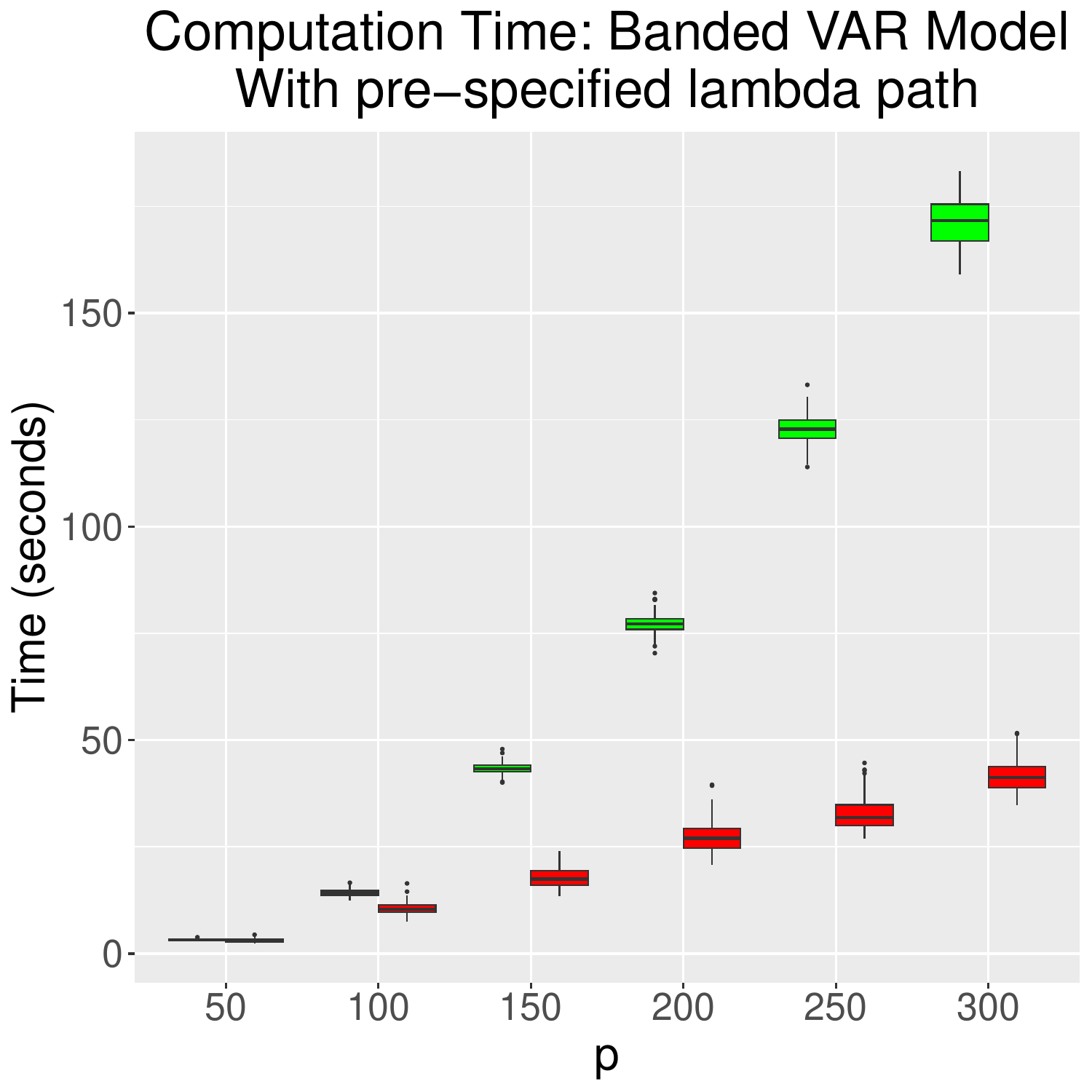}
    \end{minipage}
    \caption{Box plots of run time in seconds from $200$ realisations generated with $n = 200$ and varying $p \in \{50, 100, 150, 200, 250, 300\}$, returned by our proposed method (truncated) and the Huber loss-based method (adaHuber). In the top left and bottom panels, the data are generated as in \ref{item:fac_0} and \ref{eq:tri_diag_A_sparse_var} and in the top right panel, as in \ref{item:fac_0} and \ref{eq:renyi_A}.
    In the bottom panel, for both methods, we supply the pre-specified penalty parameter paths to remove any artefact from the implementation of the function \texttt{cv.glmnet}.}
    \label{fig:comp_time}
\end{figure}

\clearpage

\subsection{\cred{Forecasting performance on simulated data}} \label{sec:forc_simul}

In this subsection, we investigate the forecasting performance of our method with simulated data. Previously, in \Cref{sec:real-data} , we had analysed the forecasting performance with a real dataset, we now generate data with the same factor-adjusted VAR process of order $1$, as considered in \Cref{sec:simulations}, namely:
\begin{gather*}
    \bX_t = \bLambda \bF_t + \bxii_t
    \text{ \ with \ }  
    \bxii_t = \mathbf{A}\bxii_{t-1} + \boldsymbol{\varepsilon}_t \, .
\end{gather*}
The idiosyncratic process is generated according to the banded VAR model, described in \ref{eq:tri_diag_A_sparse_var}, and the loading matrix $\bLambda$ and factor process $\bF_t$ are generated as described in \ref{item:fac_1}. We simulate data with varying $(n, p)$ and innovation distributions \ref{innov_dist_gauss}--\ref{innov_dist_log_norm}, and for each scenario we generate 200 realisations.

As described in \Cref{sec:real-data}, we generate forecasts by estimating the best linear predictor of the process. Specifically, the best linear predictor of the common (factor-driven) and idiosyncratic (VAR) components at time stamp $t + h$ given the preceding $T$ observations, 
are given by
\begin{gather}
    \bchi_{t + h \vert T} = \boldsymbol{\Gamma}_{\bchi}(h) \mathbf{E}_{\bchi} \mathbf{M}_{\bchi}^{-1} \mathbf{E}_{\bchi}^{\top} \bchi_t \label{eq:com_forc} \\
    \text{and} \quad \boldsymbol{\xi}_{t + h \vert T} = \sum_{\ell=1}^{\min(d,h)} \mathbf{A}_{\ell} \boldsymbol{\xi}_{t+h-\ell} \, . \label{eq:idio_forc}
\end{gather}

For this simulation exercise, we focus on the one-step ahead forecast error of $\bX_{n + 1}$,  given data generated from the preceding $n$ observations. We compute the estimates of the best linear predictors \eqref{eq:com_forc} and \eqref{eq:idio_forc}, in the same manner as in \Cref{sec:real-data}, with the estimate of Equation \eqref{eq:com_forc}, given by 
\begin{align} \label{eq:com_pred_no_trunc}
    \widehat{\bchi}_{n+1 \mid n}(\tau) = \widehat{\bGamma}_{\bchi}(\tau,1) \widehat{\mathbf{E}}_{\bx}(\tau) (\widehat{\mathbf{M}}_{\bx}(\tau))^{-1} (\widehat{\mathbf{E}}_{\bx}(\tau))^{\top} \bX_n \, ,
\end{align}
where $\widehat{\bGamma}_{\bchi}(\tau,h) = n^{-1} \sum_{t = h+1}^n \widehat{\bchi}_{t}(\tau)   \widehat{\bchi}_{t-h}(\tau)^\top$ denotes an estimator of the autocovariance of $\bchi_t$ at lag $h$, the matrix $\widehat{\mathbf{M}}_{\bx}(\tau)$ contains the leading $r$ eigenvalues of $\widehat{\bGamma}_{\bx}(\tau, 0)$ on its diagonal, $\widehat{\mathbf{E}}_{\bx}(\tau)$ contains the corresponding eigenvectors, and
$\widehat{\bchi}_{t}(\tau) = \widehat{\mathbf{E}}_{\bx}(\tau) ( \widehat{\mathbf{E}}_{\bx}(\tau) )^\top \bX_t(\tau)$. For the idiosyncratic component, we estimate $\bxii_{n + 1 \vert n} = \mathbf{A}\bxii_{n}$ by
\begin{align}
\label{eq:var_forecast}
\widehat{\boldsymbol{\xi}}_{n+1 \mid n}(\tau)= \widehat{\mathbb{A}}(\tau) \widehat{\boldsymbol{\xi}}_{n}(\tau), \; \text{ \ where \ } \widehat{\boldsymbol{\xi}}_{n}(\tau) = \bX_{n} - \widehat{\bchi}_{n}(\tau) \, .
\end{align}
Then, the combined forecast is given by
\begin{gather*}
 \widehat{\bX}_{n+1 \mid n}(\tau) = \widehat{\bchi}_{n+1 \mid n}(\tau) + \widehat{\boldsymbol{\xi}}_{n+1 \mid n}(\tau) \, .
\end{gather*}
To assess the forecasting performance in the simulated setting, we use the following metric:
\begin{gather}
\frac{\left| \widehat{\bX}_{n+1 \mid n}(\tau) - \bX_{n+1 \mid n} \right|_{\infty}}
{\left| \bX_{n+1 \mid n} \right|_{\infty}} \, , \label{eq:forcast_error}
\end{gather}
where $\bX_{n+1 \mid n} = \bchi_{n + 1 \vert n} + \bxii_{n + 1 \vert n}$.
The results are reported in \Cref{tab:forecast_errors_trunc}, where we present the average and the standard deviations of the errors from the $200$ simulations. The findings indicate that for heavy-tailed innovation distributions, the truncation-based method outperforms the standard approach without any truncation (i.e.\ when $\tau = \infty$) in the majority of cases. For the Gaussian case, where the standard approach is expected to perform well, there is only a small loss in efficiency introduced by data truncation.

\begin{table}[H]
\centering
\setlength{\tabcolsep}{3pt}
\caption{Forecasting errors measured as in Equation \eqref{eq:forcast_error} over 200 realisations. We report the average and the standard deviations. The data are generated as in \ref{item:fac_1} and \ref{eq:tri_diag_A_sparse_var}, with varying $n$, $p$ and the innovation distributions (see \ref{innov_dist_gauss}--\ref{innov_dist_log_norm}).}
{
\begin{tabular}{cccccccccc}
\toprule
 &  &
\multicolumn{2}{c}{log-normal} &
\multicolumn{2}{c}{$t_{2.1}$} &
\multicolumn{2}{c}{$t_{3}$} &
\multicolumn{2}{c}{Normal} \\
\cmidrule(lr){3-4}
\cmidrule(lr){5-6}
\cmidrule(lr){7-8}
\cmidrule(lr){9-10}
$(n,p)$ & Method
& Mean & SD
& Mean & SD
& Mean & SD
& Mean & SD \\
\midrule

\multirow{2}{*}{(100,50)}
& Truncated   & \textbf{0.51}  & (0.13) & \textbf{0.54} & (0.19) & 0.49 & (0.13) & 0.50 & (0.12) \\  
& NotTruncated & 0.52 & (0.15) & 0.56 & (0.20) & \textbf{0.49} & (0.18) & \textbf{0.47}  & (0.12) \\[4pt]

\multirow{2}{*}{(100,100)}
& Truncated   & \textbf{0.48} & (0.14) & \textbf{0.50} & (0.16) & \textbf{0.49} & (0.12) & 0.49 & (0.11) \\
& NotTruncated & 0.50 & (0.14) & 0.57 & (0.21) & 0.51 & (0.17) & \textbf{0.43} & (0.09) \\[4pt]

\multirow{2}{*}{(200,50)}
& Truncated   & \textbf{0.35} & (0.11) & \textbf{0.38}  & (0.14) & \textbf{0.35} & (0.10) & 0.34 & (0.08) \\
& NotTruncated & 0.36 & (0.11) & 0.43 & (0.19) & 0.36 & (0.13) & \textbf{0.33} & (0.08)\\[4pt]

\multirow{2}{*}{(200,100)}
& Truncated   & \textbf{0.31} & (0.09) & \textbf{0.35} & (0.16) & \textbf{0.31}  & (0.09) & 0.33 & (0.08) \\
& NotTruncated & 0.33 & (0.11) & 0.45  & (0.21) & 0.33 & (0.11) & \textbf{0.30} & (0.07)  \\[4pt]

\multirow{2}{*}{(500,100)}
& Truncated   & \textbf{0.20} & (0.05) & \textbf{0.23} & (0.10) & \textbf{0.22}  & (0.06) & 0.22 & (0.04) \\
& NotTruncated & 0.20 & (0.06) & 0.31  & (0.19) & 0.22 & (0.08) & \textbf{0.21}  & (0.04) \\[4pt]

\multirow{2}{*}{(500,200)}
& Truncated   & \textbf{0.18} & (0.04) & \textbf{0.20}  & (0.10) & \textbf{0.18}  & (0.05) & 0.20  & (0.04) \\
& NotTruncated & 0.20  & (0.06) & 0.37  & (0.23) & 0.22 & (0.10) & \textbf{0.18} & (0.04) \\[4pt]

\bottomrule
\end{tabular}}
\label{tab:forecast_errors_trunc}
\end{table}

\clearpage

\section{Proofs}

We define here notation that is used within the proofs.
Throughout, we denote the autocovariance matrix of lag~$h$, $h \ge 0$, of the common component $\bchi_t$, and truncated data $\bX_t(\tau)$ as
\begin{align*}
\bGamma_{\bchi}(h) = \frac{1}{n-h} \sum_{t = h + 1}^n \mathbb{E}\left( \bchi_t \bchi_{t - h}^\top \right) \quad \text{and} \quad
\bGamma_{\bx}(\tau,h) = \frac{1}{n-h} \sum_{t = h + 1}^n \mathbb{E}\left( \bX_t(\tau) \bX_{t - h}^\top(\tau) \right) \, .
\end{align*}
Furthermore, for the estimated common component $\widehat{\bchi}_t$ and estimated idiosyncratic component $\widehat{\bxii}_t$ we denote their sample autocovariance matrices by
\begin{align*}
\widehat{\bGamma}_{\bchi}(h) = \frac{1}{n-h} \sum_{t = h + 1}^n  \widehat{\bchi}_t \widehat{\bchi}_{t - h}^\top \quad \text{and} \quad
\widehat{\bGamma}_{\bxii}(h) = \frac{1}{n-h} \sum_{t = h + 1}^n  \widehat{\bxii}_t \widehat{\bxii}_{t - h}^\top \, .
\end{align*}

We also make use of big $O$ and little $o$ notation, which we define here. Given two sequences $\{a_n\}$ and $\{b_n\}$, we write $a_n = O(b_n)$ if there exists a finite constant $C > 0$ and $N \in \mathbb{N}_0 = \mathbb{N} \cup \{0\}$ such that $|a_n|\,|b_n|^{-1} \leq C$ for all $n \geq N$.  
We write \(a_n = o(b_n)\) if \(|a_n|\,|b_n|^{-1} \to 0\) as \(n \to \infty\). The $i$-th row of the loading matrix $\bLambda$ is defined by $\blambda_i$, that is we can write: $\bLambda = \left[\blambda_1, \dots, \blambda_p\right]^\top$, and the $(i,j)$-th entry is given by $\lambda_{i,j}$.

\subsection{Proof of \Cref{prop:peligrad_inequality_trunc_autocov_error}}
\label{A:proof_trunc_cov}

\label{A:peligrad_rate}

In this proof, we use the concentration inequality given by \citet[][Theorem 2]{Peligrad2009} to compute the estimation convergence rate of the truncated autocovariance estimator. We select our choice of truncation parameter $\tau$, by matching the rate of the variance component and the upper bound of the bias component.

\begin{theorem}[Theorem 2 from \citet{Peligrad2009}] \label{thm:peligrad_bernstein}
Let $(X_j)_{j\geq1}$ be a sequence of centred real-valued random variables.
Suppose that the sequence is $\alpha$-mixing with the coefficients decaying exponentially, $\alpha(\ell) \lesssim \exp(-2c\ell)$, for some $c \in \left(0, \infty\right)$ and that there exists a positive constant $K$ such
that $\sup_{i\geq1} \operatorname{esssup} |X_i| \leq K$. For the sum $S_n = \sum_{k=1}^{n}X_k$, there is a constant $C$ depending only on $c$ such that for all $n \geq 2$,
$$
\mathbb{P}\left(\left|S_n\right| \geq x\right) \leq \exp \left(-\frac{C x^2}{v^2 n+K^2+x K\log^2(n)}\right) \, ,
$$
where $v^2$ is defined as
\begin{align} \label{eq:peligrad_v}
    v^2=\sup _{i>0}\left(\operatorname{Var}\left(X_i\right)+2 \sum_{j>i}\left|\operatorname{Cov}\left(X_i, X_j\right)\right|\right) \, .
\end{align}
\end{theorem}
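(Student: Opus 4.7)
The plan is to prove this Bernstein-type inequality via the classical big-block/small-block decomposition combined with a coupling argument that reduces the problem to the independent case. The exponential decay of $\alpha(\ell)$ is the key input that allows the coupling cost to be absorbed into a sub-polynomial probability correction.

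First I would partition $\{1,\dots,n\}$ into alternating \emph{big} blocks of length $p=p_n$ and \emph{small} blocks of length $q=q_n$, obtaining $k \asymp n/(p+q)$ blocks of each type, and decompose $S_n = U + V$ where $U=\sum_{i=1}^{k} U_i$ collects the big-block sums and $V$ collects the small-block sums. The small blocks are there solely to create the gap needed for coupling, so I would bound $|V|$ deterministically by $kqK$ and choose parameters so this is negligible relative to $x$.

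Next, I would couple the big blocks with an independent sequence $\tilde U_1,\dots,\tilde U_k$ satisfying $\tilde U_i \stackrel{d}{=} U_i$, using a coupling theorem for $\alpha$-mixing sequences (Bradley's coupling or an analogous truncation-plus-Berbee argument applied to bounded variables). Since the big blocks are separated by a gap of length $q$, the total coupling cost is at most $k\,\alpha(q)\lesssim n\exp(-2cq)$; taking $q\asymp \log n$ with a large enough implicit constant makes this cost $o(n^{-1})$. Then I would apply the classical Bernstein inequality to $\sum_i \tilde U_i$: each $\tilde U_i$ is bounded by $pK$ in modulus, and the sum of variances is controlled by $k$ times $\sup_i \mathrm{Var}(U_i)$, which the definition of $v^2$ in~\eqref{eq:peligrad_v} bounds by $p\,v^2$, giving total variance $\lesssim n v^2$. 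Bernstein then yields a bound of the form $\exp(-Cx^2/(v^2 n + xpK))$.

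The final step is to optimize $p$ and $q$ against the two sources of loss: the coupling cost forces $q\gtrsim \log n$, while the Bernstein denominator worsens with $p$; balancing these (and accounting for the fact that tail bounds for the maximal partial sums inside a block cost another logarithmic factor) produces the characteristic $\log^2(n)$ in the cross-term $xK\log^2(n)$. The $K^2$ term absorbs contributions from the small-block remainder $V$ and from edge effects in the coupling.

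The main obstacle I anticipate is the coupling step: unlike the $\beta$-mixing setting, where Berbee's lemma gives a clean equality-in-distribution coupling directly in terms of the mixing coefficient, $\alpha$-mixing only gives a weaker coupling, and one has to work with truncated/quantile-transformed versions of the block sums (or route through a maximal inequality for partial sums within a block) to obtain a usable coupling bound. Getting the constants and the exponents of $\log n$ right so that the final bound has exactly the stated denominator $v^2 n + K^2 + xK\log^2(n)$, rather than something with an extra logarithmic factor, will require a careful choice of $p_n$ and $q_n$ and a sharp bookkeeping of the contributions from $V$, the coupling error, and the Bernstein tail.
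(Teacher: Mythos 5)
This statement is not proved in the paper at all: it is quoted verbatim as Theorem~2 of the cited reference (Merlev\`{e}de, Peligrad and Rio, 2009) and used as an external tool in the proof of Proposition~\ref{prop:peligrad_inequality_trunc_autocov_error}. So there is no in-paper argument to compare against; what you have written is an attempt to reprove a known result, and as such it should be judged on its own merits.

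On those merits, your single-scale big-block/small-block sketch has a genuine gap at the coupling step, and it is not merely a bookkeeping issue. For $\alpha$-mixing variables the coupling (Bradley-type) is inexact: replacing the big blocks by independent copies costs an \emph{additive} failure probability of order $k\,\alpha(q)\lesssim n\exp(-2cq)$. With $q\asymp\log n$ (or even $q\asymp\log^2 n$) this additive term is only polynomially or quasi-polynomially small in $n$, whereas the claimed bound $\exp\bigl(-Cx^2/(v^2n+K^2+xK\log^2 n)\bigr)$ must hold for \emph{all} $x$ and is far smaller than any power of $n$ once $x\gg K\log^2 n\cdot\log n$. In that regime the additive coupling error dominates and cannot be absorbed into the stated exponential, so the argument as outlined proves a strictly weaker statement of the form ``$\exp(-\cdots)+n^{-A}$''. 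This is precisely why the actual proof does not use a one-level blocking plus coupling: it bounds the Laplace transform of $S_n$ directly through a recursive, Cantor-like multi-scale decomposition of the index set (removing geometrically shrinking gaps at $O(\log n)$ scales), and the $\log^2(n)$ in the denominator is the product of the number of scales and the per-scale cost, not an artifact of a maximal inequality inside a single block. If you want a correct proof along coupling lines you would need either $\beta$-mixing (Berbee) or to restrict the range of $x$; to get the theorem as stated you essentially have to follow the Laplace-transform route.
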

\vspace{7pt}
To first compute the error rate of the variance component, we apply \Cref{thm:peligrad_bernstein} with the sequence of centred random variables $(Z_t)_{t\geq1}$, defined as $Z_t = X_{it}(\tau) X_{jt-h}(\tau) - \mathbb{E}[X_{it}(\tau) X_{jt-h}(\tau)]$, which are upper bounded by
\begin{align}
    |Z_t| \leq |X_{it}(\tau) X_{jt-h}(\tau)| + |\mathbb{E}\left[X_{it}(\tau) X_{jt-h}(\tau)\right]| \leq 2\tau^2 \, . \nonumber
\end{align}
We now look to obtain an upper bound on the covariances in the definition of $v^2$ in \eqref{eq:peligrad_v}. To do so, we use the following proposition, which provides a bound on the covariance of two random variables in terms of their $\alpha$-mixing dependency and moments.

\begin{proposition}[Proposition 2.5 (i) from \citet{Yao_fan_book}]\label{prop:alpha_mixing_cov_bound}

If $\mathbb{E}\left\{|X|^p+|Y|^q\right\}<\infty$ for some $p, q \geq 1$ and $1 / p+ 1 / q<1$, it holds that
\begin{align} \label{eq:yan_fao_cov_bound}
|\operatorname{Cov}(X, Y)| \leq 8 \alpha^{1 / r}\left\{\mathbb{E}|X|^p\right\}^{1 / p}\left\{\mathbb{E}|Y|^q\right\}^{1 / q},
\end{align}
where $\alpha$ denotes the $\alpha$-measure of dependence between random variables $X$ and $Y$ and $r=(1-1 / p-1 / q)^{-1}$.
\end{proposition}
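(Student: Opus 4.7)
The plan is a two-level reduction: first prove the bounded case, then truncate and optimise. The bounded case I would establish as follows. For any $U$ measurable with respect to $\mathcal{F}$, and $V$ measurable with respect to $\mathcal{G}$ with $\alpha$-measure of dependence $\alpha(\mathcal{F}, \mathcal{G}) = \alpha$, both essentially bounded, the target is $|\operatorname{Cov}(U, V)| \leq 4 \alpha \|U\|_\infty \|V\|_\infty$. For indicator functions $U = \mathbf{1}_A$, $V = \mathbf{1}_B$ with $A \in \mathcal{F}$, $B \in \mathcal{G}$, the definition of the $\alpha$-coefficient gives $|\operatorname{Cov}(\mathbf{1}_A, \mathbf{1}_B)| = |\mathbb{P}(A \cap B) - \mathbb{P}(A)\mathbb{P}(B)| \leq \alpha$ immediately. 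The general bounded case then follows by splitting $U = U^+ - U^-$, $V = V^+ - V^-$, representing $U^\pm = \int_0^{\|U\|_\infty} \mathbf{1}_{\{U^\pm > t\}}\, dt$ (and analogously for $V^\pm$), and applying Fubini together with the indicator bound; the four sign combinations produce the factor $4$.

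For the general unbounded case, truncate $X$ and $Y$ at separate levels $M$ and $N$. Writing $X_M = X \mathbf{1}_{\{|X| \leq M\}}$ and $Y_N = Y \mathbf{1}_{\{|Y| \leq N\}}$, decompose
\[
\operatorname{Cov}(X, Y) = \operatorname{Cov}(X_M, Y_N) + \operatorname{Cov}(X - X_M, Y) + \operatorname{Cov}(X_M, Y - Y_N).
\]
Applying the auxiliary bound to the first term yields $|\operatorname{Cov}(X_M, Y_N)| \leq 4 \alpha M N$. For the two tail terms, it suffices to control expectations of the form $\mathbb{E}|XY| \mathbf{1}_{\{|X| > M\}}$ (and its $Y$-analogue). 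These I would bound using the three-exponent Hölder inequality with exponents $(p, q, r)$ satisfying $1/p + 1/q + 1/r = 1$,
\[
\mathbb{E}|XY| \mathbf{1}_{\{|X| > M\}} \leq \|X\|_p \|Y\|_q\, \mathbb{P}(|X| > M)^{1/r},
\]
followed by Markov's inequality $\mathbb{P}(|X| > M) \leq (\|X\|_p/M)^p$, producing an estimate of order $\|X\|_p^{1 + p/r} \|Y\|_q / M^{p/r}$.

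The final step is to balance the three contributions by choosing $M \asymp \|X\|_p \alpha^{-1/p}$ and $N \asymp \|Y\|_q \alpha^{-1/q}$. Direct substitution shows that the bounded term and each tail term all collapse to the common order $\alpha^{1 - 1/p - 1/q} \|X\|_p \|Y\|_q = \alpha^{1/r} \|X\|_p \|Y\|_q$, and the constants add to at most $8$. The main obstacle is the bounded auxiliary step: keeping the numerical constant at $4$ (rather than something larger) requires executing the signed layer-cake representation cleanly rather than a cruder Cauchy--Schwarz-style argument. Subordinate care is also needed to check that the mean-correction contributions in $\operatorname{Cov}(X - X_M, Y)$, which scale as $\|X\|_p^p \|Y\|_1 / M^{p-1}$, are dominated by the principal Hölder piece of order $M^{-p/r}$; this holds in the regime of interest because $p/r < p - 1$ and the optimal $M$ exceeds $\|X\|_p$.
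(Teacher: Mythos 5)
The paper does not actually prove this statement: it is quoted verbatim as Proposition~2.5(i) of \citet{Yao_fan_book} and used purely as an imported auxiliary result, so there is no in-paper argument to compare against. Judged on its own, your proposal is the standard proof of this covariance inequality (Davydov's inequality) and it is sound: the bounded case with constant $4$ via the indicator bound $|\mathbb{P}(A\cap B)-\mathbb{P}(A)\mathbb{P}(B)|\le\alpha$ and the signed layer-cake representation; the three-term truncation decomposition; tails controlled by three-exponent H\"older plus Markov; and the optimisation $M=\|X\|_p\,\alpha^{-1/p}$, $N=\|Y\|_q\,\alpha^{-1/q}$. The bookkeeping checks out: with this choice $\mathbb{P}(|X|>M)\le\alpha$ and $\mathbb{P}(|Y|>N)\le\alpha$, so each of the four tail expectations is at most $\|X\|_p\|Y\|_q\,\alpha^{1/r}$ while the bounded piece contributes $4\alpha MN=4\|X\|_p\|Y\|_q\,\alpha^{1/r}$, yielding the constant $8$. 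One simplification: the final comparison of the mean-correction terms against the principal H\"older piece via the exponent count $p/r<p-1$ is unnecessary; it suffices to note $\mathbb{E}|X-X_M|\cdot\mathbb{E}|Y|\le\|X\|_p\|Y\|_q\,\mathbb{P}(|X|>M)^{1-1/p}$ together with $1-1/p\ge 1/r$ and $\mathbb{P}(|X|>M)\le\alpha\le 1$, which bounds this term by $\|X\|_p\|Y\|_q\,\alpha^{1/r}$ directly.
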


Firstly, note that
\begin{align}
    \mathbb{E}[|Z_t|^2] &= \mathbb{E}[(X_{it}(\tau) X_{jt-h}(\tau))^2] - \mathbb{E}\left[X_{it}(\tau) X_{jt-h}(\tau)\right]^2 \nonumber\\
    &\leq \mathbb{E}[(X_{it}(\tau) X_{jt-h}(\tau))^2] = \mathbb{E}[\left|X_{it}(\tau) X_{jt-h}(\tau)\right|^{1-\epsilon} \left|X_{it}(\tau) X_{jt-h}(\tau)\right|^{1+\epsilon}] \nonumber\\
    &\leq \tau^{2-2\epsilon}\cred{\widetilde{M}_{\epsilon}} \, , \nonumber
\end{align}
which follows from that $\left|X_{it}(\tau) X_{jt-h}(\tau)\right| \leq \tau^2$, and the Cauchy-Schwarz inequality with \Cref{lem:peligrad_holds}. 
Thus, we obtain
\begin{align}
    \mathbb{E}[|Z_t|^2]^{\frac{1}{2}} &\leq \tau^{1-\epsilon}\sqrt{\cred{\widetilde{M}_{\epsilon}}}.  \label{eq:2bound}
\end{align}
Now, for some $\delta > 0$, we have
\begin{align*}
     &\mathbb{E}[|Z_t|^{2+2\delta}] = \mathbb{E}[|Z_t|^{2\delta}|Z_t|^{2}] \leq 2^{2\delta}\tau^{4\delta} \tau^{2-2\epsilon} \cred{\widetilde{M}_{\epsilon}}.
\end{align*}
Setting $\delta = 1/\log(n)$, and using the choice of $\tau$ in \eqref{eq:tau_prop} for which we have $\log(\tau) \lesssim \log(n)$, we obtain
\begin{align}
     \mathbb{E}[|Z_t|^{2+2\delta}]^{\frac{1}{2+2\delta}} &\lesssim \tau^{\frac{4\delta}{2 + 2\delta}} \tau^{\frac{2-2\epsilon}{2+2\delta}} \sqrt{\cred{\widetilde{M}_{\epsilon}}} \nonumber \\
     &\lesssim \tau^{1-\epsilon}\sqrt{\cred{\widetilde{M}_{\epsilon}}} \, . \label{eq:2plus2deltabound}
\end{align}
Then, combining the two upper bounds \eqref{eq:2plus2deltabound} and \eqref{eq:2bound}, we have
\begin{align}
    \mathbb{E}[|Z_t|^2]^{\frac{1}{2}}\mathbb{E}[|Z_t|^{2+2\delta}]^{\frac{1}{2+2\delta}} \lesssim \tau^{2-2\epsilon} \cred{\widetilde{M}_{\epsilon}} \, . \label{eq:yan_fao_cov_bound_moments}
\end{align}
Now, plugging \eqref{eq:yan_fao_cov_bound_moments} into \eqref{eq:yan_fao_cov_bound} in \Cref{prop:alpha_mixing_cov_bound} with
$p=2$ and $q=2+2\delta$, and by  \Cref{lem:peligrad_holds}~(b), we have the following bound on $v^2$ that appears in \eqref{eq:peligrad_v}:
\begin{align}
    \left|\operatorname{Cov}\left(Z_t, Z_{t+n}\right)\right| &\lesssim {\cred{\widetilde{M}_{\epsilon}}}\tau^{2-2\epsilon} \exp{\left( -\frac{2cn}{r}\right)}, \nonumber\\
    \operatorname{Var}(Z_t) + 2\sum_{k>t}\left|\operatorname{Cov}\left(Z_t, Z_k\right)\right| &\lesssim {\cred{\widetilde{M}_{\epsilon}}} \tau^{2-2\epsilon}
 \sum_{l = 1}^{n} \exp{\left( -\frac{2cl}{r}\right)} \lesssim { {\cred{\widetilde{M}_{\epsilon}}} \tau^{2-2\epsilon} \log(n)} \, , 
 \nonumber
\end{align}
where $r = 2\log(n) + 2$. Hence we obtain the bound $v^2 \lesssim {\cred{\widetilde{M}_{\epsilon}}} \tau^{2-2\epsilon} \log(n)$.

Returning to the Bernstein type inequality of \Cref{thm:peligrad_bernstein}, by the union bound we have
\begin{align}
\mathbb{P}\left(|\widehat{\bGamma}_\bx(\tau,h) - \bGamma_\bx(\tau,h)|_\infty \geq x \right) &\leq 
p^2 \max_{1 \le i, j \le p} \mathbb{P}\left(\frac{1}{n}\left|\sum_{t=1}^{n}X_{it}(\tau) X_{jt-h}(\tau) - \mathbb{E}[X_{it}(\tau) X_{jt-h}(\tau)]\right| \geq x \right) \nonumber \\
&\leq \exp \left(- \frac{Cn^2x^2}{v^2n + K^2 + xnK\log^2(n)} \, + \, 2\log(p)\right) \, , \nonumber
\end{align}
with $K = 2\tau^2$. 
In identifying $x$ to fulfil
\begin{gather}
    n^2x^2 \gtrsim \max\{n \log(n) \tau^{2-2\epsilon}, \, \tau^4, \, xn\tau^2\log^2(n)\}\log(p) \nonumber\\ 
    \iff  x \gtrsim \max\left\{\left( \sqrt{\frac{\cred{\widetilde{M}_{\epsilon}}\log(n)}{n}} \tau^{1-\epsilon}, \, \frac{\tau^2}{n}\right)\sqrt{\log(p)}, \, \frac{\tau^2\log^2(n)}{n}\log(p)\right\} \, , \nonumber
\end{gather}
we note that the choice of 
\begin{gather}
    \tau \asymp \left(\frac{n \cred{\widetilde{M}_{\epsilon}}}{\log^2(n)\log(p)}\right)^{\frac{1}{2+2\epsilon}} \, , 
    \nonumber
\end{gather}
leads to
\begin{align} 
    |\widehat{\bGamma}_\bx(\tau,h) - \bGamma_\bx(\tau,h)|_\infty = O_P\left(\frac{\log(p) \log^2(n) }{n} \tau^{2}\right) = O_P\left(\left(\frac{\log(p) \log^2(n) \cred{\widetilde{M}_{\epsilon}}^\frac{1}{\epsilon} }{n}\right)^{\frac{\epsilon}{1+\epsilon}}\right). \nonumber
\end{align}
Also, with such a choice of $\tau$, \Cref{lem:bias_bound} gives us 
\begin{align}
|\bGamma_\bx(\tau,h) - \bGamma_\bx(h)|_\infty \lesssim \frac{\cred{\widetilde{M}_{\epsilon}}}{\tau^{2\epsilon}} = O\left(\left(\frac{\log(p) \log^2(n) \cred{\widetilde{M}_{\epsilon}}^\frac{1}{\epsilon} }{n}\right)^{\frac{\epsilon}{1+\epsilon}}\right). \nonumber
\end{align}
Altogether,  by the triangle inequality, we derive that
\begin{gather} 
\label{eq:trunc_rate_denote}
|\widehat{\bGamma}_\bx(\tau,h) - \bGamma_\bx(h)|_\infty = O_P\left(\left(\frac{\log(p) \log^2(n) \cred{\widetilde{M}_{\epsilon}}^\frac{1}{\epsilon} }{n}\right)^{\frac{\epsilon}{1+\epsilon}}\right) \, . 
\end{gather}

\subsubsection{Preliminary result for \Cref{prop:peligrad_inequality_trunc_autocov_error}}

The following results hold under the conditions made in \Cref{prop:peligrad_inequality_trunc_autocov_error}.

\begin{lemma} \label{lem:peligrad_holds}
Under the model \eqref{eq:fac_adj_model_def} and \Cref{assump:fac_adj_fac_loading,assum:tail,assump:fac_adj_fac_idio,assump_fac_adj:8}, the process $\bX_t$ has the following properties. 

\begin{enumerate}[topsep=0pt, label = (\alph*)]
    \item  There exists some constant $\cred{\widetilde{M}_{\epsilon}} > 0$, such that $\max_{1 \le i \le p} \mathbb{E}[|X_{it}|^{2 + 2 \epsilon}] \leq \cred{\widetilde{M}_{\epsilon}}$ for the $\epsilon \in (0,1]$ chosen in \Cref{assum:tail}.
    \item  The bivariate process $\{(X_{it}, X_{jt})\}$ is $\alpha$-mixing, with the mixing coefficients decaying exponentially such that $\alpha^{\mspace{-1mu} \mathbf{\bX}}\mspace{-2mu}(m) \leq \exp(-2cm)$, $c > 0 $, for all $1 \leq i,j \leq p$. 
\end{enumerate}
\end{lemma}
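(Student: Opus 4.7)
My plan is to handle the two claims separately, with part (a) being a direct moment calculation from the factor decomposition and part (b) being a two-step mixing inheritance argument.

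For part (a), I would start from the model decomposition $X_{it} = \blambda_i^\top \bF_t + \xi_{it} = \sum_{j=1}^r \lambda_{ij} F_{jt} + \xi_{it}$ and apply Minkowski's inequality to write
\begin{align*}
\|X_{it}\|_{2+2\epsilon} \le \sum_{j=1}^r |\lambda_{ij}|\,\|F_{jt}\|_{2+2\epsilon} + \|\xi_{it}\|_{2+2\epsilon}.
\end{align*}
Using Assumption~\ref{assump:fac_adj_fac_loading}\ref{finite_loadings} to bound $|\lambda_{ij}| \le M$, Assumption~\ref{assum:tail} to bound both $\|F_{jt}\|_{2+2\epsilon}$ and $\|\xi_{it}\|_{2+2\epsilon}$ by $M_{2+2\epsilon}^{1/(2+2\epsilon)}$, and the fact that $r$ is fixed, I obtain $\|X_{it}\|_{2+2\epsilon} \le (rM + 1) M_{2+2\epsilon}^{1/(2+2\epsilon)}$, uniformly in $i$ and $t$. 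Raising to the $(2+2\epsilon)$-th power defines $\widetilde{M}_{2+2\epsilon}$ as a constant depending only on $r$, $M$, and $M_{2+2\epsilon}$, as required.

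For part (b), the key observation is that $(X_{it}, X_{jt})$ is a measurable (in fact, linear) function of the triple $(\bF_t, \xi_{it}, \xi_{jt})$ at the same time index $t$. Thus $\sigma(\{(X_{it}, X_{jt})\}_{t \in \mathcal{S}}) \subseteq \sigma(\{(\bF_t, \xi_{it}, \xi_{jt})\}_{t \in \mathcal{S}})$ for any index set $\mathcal{S}$, so the $\alpha$-mixing coefficient of $\{(X_{it}, X_{jt})\}$ is dominated by that of $\{(\bF_t, \xi_{it}, \xi_{jt})\}$. It therefore suffices to bound the mixing coefficient of this triple process. Here I would invoke the standard fact that if two stochastic processes $\{\mathbf{U}_t\}$ and $\{\mathbf{V}_t\}$ are mutually independent, then the joint process $\{(\mathbf{U}_t, \mathbf{V}_t)\}$ satisfies $\alpha^{(\mathbf{U},\mathbf{V})}(m) \le \alpha^{\mathbf{U}}(m) + \alpha^{\mathbf{V}}(m)$; the proof rests on the observation that, under independence, the joint past $\sigma$-field is the product of the two marginal past $\sigma$-fields, and a straightforward decomposition of $\mathbb{P}(A \cap B) - \mathbb{P}(A)\mathbb{P}(B)$ splits the deviation into contributions from each process. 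Applying this to $\mathbf{U}_t = \bF_t$ and $\mathbf{V}_t = (\xi_{it}, \xi_{jt})$ (which are independent by \Cref{assump_fac_adj:8}) together with Assumption~\ref{assump:fac_adj_fac_idio}(i),(ii) yields
\begin{align*}
\alpha^{(\bF,\xi_i,\xi_j)}(m) \le \alpha^{\bF}(m) + \alpha^{(\xi_i,\xi_j)}(m) \le 2\exp(-2cm),
\end{align*}
uniformly in $i,j$. The factor of $2$ is then absorbed into the exponent by replacing $c$ with a smaller positive constant (using that mixing coefficients are trivially bounded by $1/4$ so the bound is non-trivial only for $m$ large enough that $\log 2$ can be absorbed), giving the claimed exponential decay.

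I do not anticipate a substantive obstacle: part (a) is a $C_r$-inequality calculation and part (b) is an application of two textbook mixing facts (monotonicity under measurable transformations, and the subadditivity bound for joint mixing of independent processes). The one point requiring a little care is the uniformity of the bound across pairs $(i,j)$, but this is immediate from the uniform version of Assumption~\ref{assump:fac_adj_fac_idio}\ref{assump:mix_xii}.
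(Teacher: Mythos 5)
Your proposal is correct and follows essentially the same route as the paper: part (a) is the identical Minkowski-inequality calculation, and part (b) invokes the same two facts the paper uses (monotonicity of $\alpha$-mixing under measurable transformations and the subadditivity of the joint mixing coefficient for independent processes, which is the content of the cited Theorem 6.2 of Bradley). Your explicit handling of the resulting factor of $2$ by shrinking the constant $c$ is a small point of care that the paper glosses over, but it does not change the argument.
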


\begin{proof}
By recursively using Minkowski's inequality we have that for all $1 \leq i \leq p$: 
\begin{gather*}
\|X_{it}\|_{2+2\epsilon} \leq \lambda_{i1}\|F_{1t}\|_{2+2\epsilon} + \dots + \lambda_{ir}\|F_{rt}\|_{2+2\epsilon} + \|\xi_{it}\|_{2+2\epsilon} \, . 
\end{gather*} 
Then, by \Cref{assump:fac_adj_fac_loading}.\ref{finite_loadings}, \Cref{assum:tail}, and that $r$ is fixed, we have
\begin{gather} \label{eq:X_bounded}
 \max_{1 \le i \le p}\mathbb{E}\left[|X_{it}|^{2+2\epsilon}\right] \leq \cred{\widetilde{M}_{\epsilon}} \; .
\end{gather}
Now, by \Cref{assump:fac_adj_fac_idio}, \Cref{assump_fac_adj:8}, and \citet[][Theorem 6.2]{bradley2007introduction}, and that the process generated by a measurable function applied to an $\alpha$-mixing process is still $\alpha$-mixing in the same sense, we have that the bivariate process $\{(X_{it}, X_{jt})\}$ is $\alpha$-mixing with the same rate $\alpha^{\mspace{-1mu} \mathbf{\bX}}\mspace{-2mu}(m) \leq \exp(-2cn)$, $c > 0 $.
\end{proof}

\begin{lemma} \label{lem:bias_bound}
Under \Cref{assump:fac_adj_fac_loading}.\ref{finite_loadings}, and \Cref{assum:tail}, we have 
\begin{gather*} 
|\bGamma_\bx(\tau,h) - \bGamma_\bx(h)|_\infty \leq \frac{2 \cred{\widetilde{M}_{\epsilon}}}{\tau^{2\epsilon}}  \, ,
\label{eq:trunc_bias}
\end{gather*}
where $\cred{\widetilde{M}_{\epsilon}}$ is given in \Cref{lem:peligrad_holds}~(a).
\end{lemma}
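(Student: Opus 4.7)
The plan is to bound the bias entry-wise and then take the supremum over entries and the average over $t$. Writing
\[
[\bGamma_\bx(h) - \bGamma_\bx(\tau,h)]_{ij} = \frac{1}{n-h}\sum_{t=h+1}^n \mathbb{E}\!\left[X_{it}X_{j,t-h} - X_{it}(\tau)X_{j,t-h}(\tau)\right],
\]
it suffices to bound each summand uniformly in $(i,j,t,h)$ by $2\widetilde{M}_{2+2\epsilon}/\tau^{2\epsilon}$, since averaging preserves the bound.

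First I would use the telescoping identity
\[
X_{it}X_{j,t-h} - X_{it}(\tau)X_{j,t-h}(\tau) = \bigl(X_{it}-X_{it}(\tau)\bigr) X_{j,t-h} \;+\; X_{it}(\tau)\bigl(X_{j,t-h}-X_{j,t-h}(\tau)\bigr),
\]
and note the pointwise inequalities $|X_{it}(\tau)|\le |X_{it}|$ and $|X_{it}-X_{it}(\tau)|\le |X_{it}|\,\mathbb{I}\{|X_{it}|>\tau\}$, which follow immediately from the definition of $X_{it}(\tau)=\operatorname{sign}(X_{it})(\tau\wedge|X_{it}|)$. The key tail trick is $\mathbb{I}\{|X_{it}|>\tau\}\le |X_{it}|^{2\epsilon}/\tau^{2\epsilon}$, so $|X_{it}-X_{it}(\tau)|\le |X_{it}|^{1+2\epsilon}/\tau^{2\epsilon}$.

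Applying this to the first cross term and using H\"older's inequality with exponents $(2+2\epsilon)/(1+2\epsilon)$ and $2+2\epsilon$ gives
\[
\bigl|\mathbb{E}[(X_{it}-X_{it}(\tau))X_{j,t-h}]\bigr| \le \frac{1}{\tau^{2\epsilon}}\,\mathbb{E}\!\bigl[|X_{it}|^{1+2\epsilon}|X_{j,t-h}|\bigr] \le \frac{1}{\tau^{2\epsilon}}\,\|X_{it}\|_{2+2\epsilon}^{1+2\epsilon}\,\|X_{j,t-h}\|_{2+2\epsilon} \le \frac{\widetilde{M}_{2+2\epsilon}}{\tau^{2\epsilon}},
\]
where the last step invokes Lemma~A.1(a). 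The second cross term is treated by the symmetric argument, using $|X_{it}(\tau)|\le|X_{it}|$ and the analogous truncation bound on $X_{j,t-h}$. Adding the two contributions yields the claimed $2\widetilde{M}_{2+2\epsilon}/\tau^{2\epsilon}$ bound uniformly, and taking the maximum over $(i,j)$ and averaging over $t$ gives the result.

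There is essentially no obstacle here beyond picking the right H\"older split; the only subtle point is ensuring that the exponent $2\epsilon$ (and not, say, a full $2+2\epsilon$) appears in the denominator, which is exactly what matches the variance rate in the proof of Proposition~1 and thus dictates the optimal choice of $\tau$ in \eqref{eq:tau_prop}. Note that \Cref{assump:fac_adj_fac_loading}\ref{finite_loadings} together with \Cref{assum:tail} is needed only through Lemma~A.1(a), which provides the uniform $(2+2\epsilon)$-th moment bound on $X_{it}$ for the H\"older step.
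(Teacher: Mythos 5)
Your proof is correct and yields exactly the paper's constant $2\widetilde{M}_{2+2\epsilon}/\tau^{2\epsilon}$, but the mechanics differ from the paper's argument in a genuine way. The paper does not telescope: it observes that the product difference vanishes unless one of $|X_{it}|>\tau$ or $|X_{j,t-h}|>\tau$ occurs and that the truncated product is dominated by the untruncated one in absolute value with the same sign, giving the bound $\mathbb{E}[|X_{it}X_{j,t-h}|(\mathbb{I}_{\{|X_{it}|>\tau\}}+\mathbb{I}_{\{|X_{j,t-h}|>\tau\}})]$; it then applies H\"older with exponents $1+\epsilon$ and $(1+\epsilon)/\epsilon$, Cauchy--Schwarz on $\mathbb{E}[|X_{it}X_{j,t-h}|^{1+\epsilon}]$, and Markov's inequality on $\mathbb{P}(|X_{it}|\ge\tau)$. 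You instead split via the telescoping identity, absorb the indicator directly into a power of the variable through $\mathbb{I}_{\{|X|>\tau\}}\le |X|^{2\epsilon}/\tau^{2\epsilon}$, and finish with a single asymmetric H\"older with exponents $(2+2\epsilon)/(1+2\epsilon)$ and $2+2\epsilon$; the exponent bookkeeping checks out, since $\widetilde{M}_{2+2\epsilon}^{(1+2\epsilon)/(2+2\epsilon)}\cdot\widetilde{M}_{2+2\epsilon}^{1/(2+2\epsilon)}=\widetilde{M}_{2+2\epsilon}$. Your route is slightly more economical (one inequality application per cross term instead of three), while the paper's keeps the tail event explicit as a probability, which makes the role of Markov's inequality and the origin of the $\tau^{-2\epsilon}$ factor more transparent. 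Both correctly use \Cref{assump:fac_adj_fac_loading}\ref{finite_loadings} and \Cref{assum:tail} only through the uniform moment bound of \Cref{lem:peligrad_holds}~(a).
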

\begin{proof} 
For any $1 \leq i,j \leq p$, the $(i,j)$-th element of $|\bGamma_\bx(\tau,h) - \bGamma_\bx(h)|$ satisfies
    \begin{align*}
        &\phantom{\leq} \left|\mathbb{E}\left[X_{it} X_{jt-h}\right] - \mathbb{E}\left[X_{it}(\tau) X_{jt-h}(\tau)\right]\right| \\
        &\leq \mathbb{E}\left[|X_{it} X_{jt-h}| \left(\mathbb{I}_{\{|X_{it}| > \tau\}} + \mathbb{I}_{\{|X_{jt-h}| > \tau\}}\right)\right] \\
        &= \mathbb{E}\left[|X_{it} X_{jt-h}| \mathbb{I}_{\{|X_{it}| > \tau\}} \right] + \mathbb{E}\left[|X_{it} X_{jt-h}| \mathbb{I}_{\{|X_{jt-h}| > \tau\}} \right] \\
        &\overset{(1)}{\leq} \mathbb{E}\left[|X_{it} X_{jt-h}|^{1+\epsilon}\right]^{\frac{1}{1+\epsilon}}\mathbb{E}\left[{\mathbb{I}_{\{|X_{it}| > \tau\}}}^{\frac{1+\epsilon}{\epsilon}}\right]^{\frac{\epsilon}{1+\epsilon}} + \mathbb{E}\left[|X_{it} X_{jt-h}|^{1+\epsilon}\right]^{\frac{1}{1+\epsilon}}\mathbb{E}\left[{\mathbb{I}_{\{|X_{jt-h}| > \tau\}}}^{\frac{1+\epsilon}{\epsilon}}\right]^{\frac{\epsilon}{1+\epsilon}} \\
        &\leq \mathbb{E}\left[|X_{it} X_{jt-h}|^{1+\epsilon}\right]^{\frac{1}{1+\epsilon}} \left[\mathbb{P}\left(|X_{it}| \geq \tau \right)^{\frac{\epsilon}{1+\epsilon}} + \mathbb{P}\left(|X_{jt-h}| \geq \tau \right)^{\frac{\epsilon}{1+\epsilon}}\right] \\
        &\leq  \left( \mathbb{E}\left[|X_{it}|^{2+2\epsilon}\right] \mathbb{E}\left[|X_{jt-h}|^{2+2\epsilon}\right]\right)^{\frac{1}{2+2\epsilon}} \left[\mathbb{P}\left(|X_{it}| \geq \tau \right)^{\frac{\epsilon}{1+\epsilon}} + \mathbb{P}\left(|X_{jt-h}| \geq \tau \right)^{\frac{\epsilon}{1+\epsilon}}\right] \\
        &\overset{(2)}{\leq} \cred{\widetilde{M}_{\epsilon}}^{\frac{1}{1+\epsilon}} \left[\left(\frac{\mathbb{E}\left[|X_{it}|^{2+2\epsilon}\right]}{\tau^{2+2\epsilon}}\right)^{\frac{\epsilon}{1+\epsilon}} + \left(\frac{\mathbb{E}\left[|X_{jt-h}|^{2+2\epsilon}\right]}{\tau^{2+2\epsilon}}\right)^{\frac{\epsilon}{1+\epsilon}} \right]\\
        &\leq \frac{2 \cred{\widetilde{M}_{\epsilon}}}{\tau^{2\epsilon}} \, ,
    \end{align*}
    where step $(1)$ is by Holder's inequality, and $(2)$ is by \Cref{lem:peligrad_holds} and Markov's inequality.
\end{proof}

\subsection{Proof of \Cref{prop:fac_adj_idio_rate}} \label{sec:implementation_theory}

The proof of \Cref{prop:fac_adj_idio_rate}  follow the proofs of Remark 3.2 from \citet[][]{Barigozzi2024}. We assume the model definition \eqref{eq:fac_adj_model_def}, and \Cref{assump:VAR,assump:fac_adj_fac_idio,assump_fac_adj:8,assump:fac_adj_fac_loading}. For the rest of the proofs, for ease of notation, we define the rate with
\begin{align*}
    \psi_{n,p} = \left(\frac{\log(p) \log^2(n) \cred{\widetilde{M}_{\epsilon}}^\frac{1}{\epsilon} }{n}\right)^{\frac{\epsilon}{1+\epsilon}} \, .
\end{align*}
We denote the eigenvalues and eigenvectors of $\widehat{\bGamma}_\bx(\tau)$ by
$(\hat{\mu}_{\bx,j}, \widehat{\mathbf{e}}_{\bx ,j}), \, j \ge 1$, and $\widehat{e}_{\bx ,ij}$ the $i$-th element of $\widehat{\mathbf{e}}_{\bx ,j}$. 
In addition, for the following proofs we drop the $(\tau)$ notation from the matrices $\widehat{\mathbf{E}}_{\bx}(\tau)$, and  $\widehat{\mathbf{M}}_{\bx}(\tau)$ to simplify notation, recall these matrices are defined as $\widehat{\mathbf{E}}_{\bx} = [\widehat{\mathbf{e}}_{\bx ,j}, \, 1 \le j \le r]$
and $\widehat{\mathbf{M}}_{\bx} = \text{diag}(\hat{\mu}_{\bx,j}, \, 1 \le j \le r)$. We also define $\mathbf{E}_{\bchi}$, $\mathbf{M}_{\bchi}$,
with $(\mu_{{\bchi},j}, \mathbf{e}_{\bchi ,j})$, the eigenvalues and eigenvectors of $\bGamma_{\bchi}(0)$.
Throughout, we let $\bphi_i$ denote the vector of length $p$ containing zeroes in all entries except a one in its $i$-th entry.

Let us write
\begin{gather*}
    \widehat{\bGamma}_{{\bxii}}(h) = \widehat{\bGamma}_{\bx}(\tau,h)  - \widehat{\mathbf{E}}_{\bx}  \widehat{\mathbf{E}}_{\bx}^\top \widehat{\bGamma}_{\bx}(\tau,h) - \widehat{\bGamma}_{\bx}(\tau,h)^\top \widehat{\mathbf{E}}_{\bx}  \widehat{\mathbf{E}}_{\bx}^\top + \widehat{\bGamma}_{\bchi}(\tau,h).
\end{gather*}
Denoting by $\widetilde{\bGamma}_{\bxii}(h) = \widehat{\bGamma}_\bx(\tau,h) - \widehat{\bGamma}_{\bchi}(h)$, it follows that
\begin{align*}
    \left|\widehat{\bGamma}_{{\bxii}}(h) - \bGamma_{\bxii}(h) \right|_\infty \leq  \left|\widehat{\bGamma}_{{\bxii}}(h) - \widetilde{\bGamma}_{\bxii}(h) \right|_\infty +  \left|\widetilde{\bGamma}_{\bxii}(h) - \bGamma_{\bxii}(h) \right|_\infty.
\end{align*}
By \Cref{lem:cov_xiidiff_rate}, we have the second term bounded as
\begin{align*}
\left|\widetilde{\bGamma}_{\bxii}(h) - \bGamma_{\bxii}(h) \right|_\infty = O_P\left( \psi_{n, p} \vee \frac{1}{\sqrt p} \right).  
\end{align*}
For the first term we have:
\begin{align*}
    \left|\widehat{\bGamma}_{{\bxii}}(h) - \widetilde{\bGamma}_{\bxii}(h) \right|_\infty &=  \left| 2\widehat{\bGamma}_{\bchi}(\tau,h) - \widehat{\mathbf{E}}_{\bx}  \widehat{\mathbf{E}}_{\bx}^\top \widehat{\bGamma}_\bx(\tau,h) - \widehat{\bGamma}_\bx(\tau,h)^\top \widehat{\mathbf{E}}_{\bx}  \widehat{\mathbf{E}}_{\bx}^\top \right|_\infty \\
    &\leq 2\left| \widehat{\mathbf{E}}_{\bx}  \widehat{\mathbf{E}}_{\bx}^\top \widehat{\bGamma}_\bx(\tau,h) (\textbf{I} - \widehat{\mathbf{E}}_{\bx}  \widehat{\mathbf{E}}_{\bx}^\top)  \right|_\infty \, .
\end{align*}
This final term is an estimate of the cross autocovariance between the estimated common and idiosyncratic component, $\widehat{\bchi}_t$, $\widehat{\bxii}_t$, hence the estimand is the $\textbf{O}$ matrix. We now bound this as
\begin{align*}
    \left| \widehat{\mathbf{E}}_{\bx}  \widehat{\mathbf{E}}_{\bx}^\top \widehat{\bGamma}_\bx(\tau,h) (\textbf{I} - \widehat{\mathbf{E}}_{\bx}  \widehat{\mathbf{E}}_{\bx}^\top)  \right|_\infty \leq &
    \left| \widehat{\mathbf{E}}_{\bx}  \widehat{\mathbf{E}}_{\bx}^\top \left( \widehat{\bGamma}_\bx(\tau,h) - \bGamma_\bx(h) \right)(\textbf{I} - \widehat{\mathbf{E}}_{\bx}  \widehat{\mathbf{E}}_{\bx}^\top)  \right|_\infty
    + \left| \widehat{\mathbf{E}}_{\bx}  \widehat{\mathbf{E}}_{\bx}^\top \bGamma_\bx(h) (\textbf{I} - \widehat{\mathbf{E}}_{\bx}  \widehat{\mathbf{E}}_{\bx}^\top)  \right|_\infty
\\
        \leq  &\left| \widehat{\mathbf{E}}_{\bx}  \widehat{\mathbf{E}}_{\bx}^\top \left( \widehat{\bGamma}_\bx(\tau,h) - \bGamma_\bx(h) \right) \right|_\infty \\
     + &\left| \widehat{\mathbf{E}}_{\bx}  \widehat{\mathbf{E}}_{\bx}^\top \left( \widehat{\bGamma}_\bx(\tau,h) - \bGamma_\bx(h) \right) \widehat{\mathbf{E}}_{\bx}  \widehat{\mathbf{E}}_{\bx}^\top  \right|_\infty \\
     + &\left| \left(\widehat{\mathbf{E}}_{\bx}  \widehat{\mathbf{E}}_{\bx}^\top - \mathbf{E}_{\bchi}\mathbf{E}_{\bchi}^\top \right) \bGamma_\bx(h) \left( \textbf{I}  -  \widehat{\mathbf{E}}_{\bx}  \widehat{\mathbf{E}}_{\bx}^\top
     \right)
     \right|_\infty \\
     + &\left| \left( \mathbf{E}_{\bchi}\mathbf{E}_{\bchi}^\top \right) \bGamma_\bx(h) \left(\widehat{\mathbf{E}}_{\bx}  \widehat{\mathbf{E}}_{\bx}^\top - \mathbf{E}_{\bchi}\mathbf{E}_{\bchi}^\top \right)
     \right|_\infty \\
     + &\left| \left(\mathbf{E}_{\bchi}  \mathbf{E}_{\bchi}^\top \right) \bGamma_{\bchi}(h) \left( \textbf{I}  -  \mathbf{E}_{\bchi}  \mathbf{E}_{\bchi}^\top
     \right)
     \right|_\infty \\
     + &\left| \left(\mathbf{E}_{\bchi}  \mathbf{E}_{\bchi}^\top \right) \bGamma_{\bxii}(h) \left( \textbf{I}  -  \mathbf{E}_{\bchi}  \mathbf{E}_{\bchi}^\top
     \right)
     \right|_\infty  = I + II + III + IV + V + VI.
\end{align*}

From \Cref{lem:fnets_E.20}.\ref{E.20.2}, and \Cref{prop:peligrad_inequality_trunc_autocov_error} we have that
\begin{align*}
    I &\leq \max_{1 \leq i \leq p} \left| \bphi_i^\top \widehat{\mathbf{E}}_{\bx}  \right|_2 \left\|\widehat{\mathbf{E}}_{\bx}\right\|_2  \max_{1 \leq j \leq p} \left| \left( \widehat{\bGamma}_\bx(\tau,h) - \bGamma_\bx(h) \right) \bphi_j  \right|_2 = O_P(\psi_{n,p}) \, , \, \\
     II &\leq \max_{1 \leq i \leq p} \left| \bphi_i^\top \widehat{\mathbf{E}}_{\bx}  \right|_2 \left\|\widehat{\mathbf{E}}_{\bx}\right\|_2 \left\| \left( \widehat{\bGamma}_\bx(\tau,h) - \bGamma_\bx(h) \right) \right\|_2 \left\|\widehat{\mathbf{E}}_{\bx}\right\|_2 \max_{1 \leq j \leq p} \left|  \widehat{\mathbf{E}}_{\bx} \bphi_j \right|_2 = O_P(\psi_{n,p}) \, .
\end{align*}
Next, from \Cref{lem:fnets_E.20}.\ref{E.20.1}-\ref{E.20.2}, and \eqref{eq:eigen_product_row_bound} we have that
\begin{align*}
     III &\leq  \max_{1 \leq i \leq p} \left| \bphi_i^\top \left(\widehat{\mathbf{E}}_{\bx}  \widehat{\mathbf{E}}_{\bx}^\top -\mathbf{E}_{\bchi} \mathbf{E}_{\bchi}^\top\right) \right|_2 \max_{1 \leq j \leq p} \left|\bGamma_\bx(h) \bphi_j\right|_2 +  \max_{1 \leq i \leq p} \left| \bphi_i^\top \left(\widehat{\mathbf{E}}_{\bx}  \widehat{\mathbf{E}}_{\bx}^\top -\mathbf{E}_{\bchi} \mathbf{E}_{\bchi}^\top\right) \right|_2 \left\|\bGamma_\bx(h)\right\|_2  \|  \widehat{\mathbf{E}}_{\bx}\|_2 \left|\widehat{\mathbf{E}}_{\bx}^\top\bphi_j\right|_2  \\
     &= O_P\left(\psi_{n,p} \vee \frac{1}{\sqrt{p}}\right) \\
     IV &\leq \max_{1 \leq i \leq p} \left| \bphi_i^\top \mathbf{E}_{\bchi}  \right|_2 \left\|\mathbf{E}_{\bchi}\right\|_2\left\|\bGamma_\bx(h)\right\|_2\max_{1 \leq j \leq p} \left| \left(\widehat{\mathbf{E}}_{\bx}  \widehat{\mathbf{E}}_{\bx}^\top -\mathbf{E}_{\bchi} \mathbf{E}_{\bchi}^\top\right) \bphi_j \right|_2 = O_P\left(\psi_{n,p} \vee \frac{1}{\sqrt{p}}\right)
\end{align*}
Next by using that the column span of the loadings $\bLambda$ is equal to the column span of $\mathbf{E}_{\bchi}$ we have
\begin{align*}
    V = &\left| \left(\mathbf{E}_{\bchi}  \mathbf{E}_{\bchi}^\top \right) \bGamma_{\bchi}(h) \left( \mathbf{E}_{\bchi}  \mathbf{E}_{\bchi}^\top  -  \mathbf{E}_{\bchi}  \mathbf{E}_{\bchi}^\top
     \right)
     \right|_\infty = 0 \, .
\end{align*}
Then, finally by \Cref{lem:fnets_E.20}.\ref{E.20.3} and  \eqref{eq:gamma:xi:bound} we have that
\begin{align*}
    VI \leq \max_{1 \leq i \leq p} &\left|\bphi_i^\top \mathbf{E}_{\bchi} \right|_2 \left\|\mathbf{E}_{\bchi}^\top \right\|_2 \left\|\bGamma_{\bxii}(h)\right\|_2 \left\|\left( \textbf{I}  -  \mathbf{E}_{\bchi}  \mathbf{E}_{\bchi}^\top
     \right)
     \right\|_2 = O_P\left(\frac{1}{\sqrt{p}}\right) \, .
\end{align*}
Therefore, we obtain that $ \left|\widehat{\bGamma}_{{\bxii}}(h) - \widetilde{\bGamma}_{\bxii}(h) \right|_\infty = O_P(\psi_{n,p})$, and hence
\begin{gather*}
\left|\widehat{\bGamma}_{{\bxii}}(h) - \bGamma_{\bxii}(h) \right|_\infty = O_P\left(\psi_{n,p} \vee \frac{1}{\sqrt{p}}\right) \, .
\end{gather*}

\subsubsection{Preliminary results for \Cref{prop:fac_adj_idio_rate}}

The following results hold under the conditions made in \Cref{prop:fac_adj_idio_rate}.

\begin{lemma} \label{lem:eigen_space_gap_bound} There exists a diagonal matrix $\bO$ with $\pm 1$ on its diagonal, such that
\begin{enumerate}[topsep=5pt,label=(\roman*)] 
\item $\| \widehat{\mathbf{E}}_{\bx} - 
\mathbf{E}_{\bchi} \bO\|_F = O_P\left(\psi_{n,p} \vee \frac{1}{p} \right)$ \label{eigen_space_gap_bound_2}.
   \item $ \frac{1}{p}\|\widehat{\bGamma}_{\bx}(\tau) - \bGamma_{\bchi}\|_2 = O_P\left(\psi_{n,p} \vee \frac{1}{p} \right)$.
\end{enumerate}
\end{lemma}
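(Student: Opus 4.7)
The plan is to establish (ii) first, since (i) will follow from (ii) together with a Davis--Kahan $\sin\Theta$ argument.

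For part (ii), I would decompose
\[
\widehat{\bGamma}_{\bx}(\tau) - \bGamma_{\bchi} \;=\; \bigl[\widehat{\bGamma}_{\bx}(\tau) - \bGamma_{\bx}(0)\bigr] + \bigl[\bGamma_{\bx}(0) - \bGamma_{\bchi}\bigr]
\]
and control each term separately. The first is handled by applying \Cref{prop:peligrad_inequality_trunc_autocov_error} with $h=0$, which yields $|\widehat{\bGamma}_{\bx}(\tau) - \bGamma_{\bx}(0)|_\infty = O_p(\psi_{n,p})$, from which the crude conversion $\|M\|_2 \le p\,|M|_\infty$ (valid for any $p \times p$ matrix via $\|M\|_2 \le \|M\|_F$) gives $\|\widehat{\bGamma}_{\bx}(\tau) - \bGamma_{\bx}(0)\|_2 = O_p(p\,\psi_{n,p})$. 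For the second term, the independence of $\{\bF_t\}$ and $\{\bxii_t\}$ (\Cref{assump_fac_adj:8}) gives the orthogonal decomposition $\bGamma_{\bx}(0) = \bGamma_{\bchi} + \bGamma_{\bxii}(0)$, and \eqref{eq:gamma:xi:bound} bounds $\|\bGamma_{\bxii}(0)\|_2$ by an absolute constant. Combining these two bounds and dividing by $p$ yields the rate in (ii).

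For part (i), I would apply the Yu--Wang--Samworth formulation of the Davis--Kahan $\sin\Theta$ theorem to each of the top $r$ eigenvectors individually. The crucial input is the eigen-gap of $\bGamma_{\bchi}$: under \Cref{assump:fac_adj_fac_loading}~\ref{loading_assum} and since $\mathrm{Cov}(\bF_t) = \mathbf{I}$, we have $\bGamma_{\bchi} = \bLambda\bLambda^\top$, so its non-zero eigenvalues coincide with those of $\bLambda^\top\bLambda$, which are distinct and of exact order $p$, while the $(r{+}1)$-th eigenvalue vanishes. Hence both $\mu_{\bchi,j-1} - \mu_{\bchi,j}$ and $\mu_{\bchi,j} - \mu_{\bchi,j+1}$ are of order $p$ for every $1 \le j \le r$ (with the endpoint cases $j=1$ and $j=r$ handled by the usual convention $\mu_0 = +\infty$ and via the spectral gap $\mu_{\bchi,r} - 0 \asymp p$ respectively). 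Davis--Kahan then delivers, for some signs $s_j \in \{-1,+1\}$,
\[
\bigl\|\widehat{\mathbf{e}}_{\bx,j} - s_j\,\mathbf{e}_{\bchi,j}\bigr\|_2 \;\lesssim\; \frac{\|\widehat{\bGamma}_{\bx}(\tau) - \bGamma_{\bchi}\|_2}{p} \;=\; O_p\bigl(\psi_{n,p} \vee p^{-1}\bigr),
\]
using part (ii) in the last step. Setting $\bO = \mathrm{diag}(s_1,\ldots,s_r)$ and squaring and summing these bounds over $j=1,\ldots,r$ (with $r$ fixed) gives the claimed Frobenius bound.

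The main subtlety is the $p$-factor loss in converting $|\cdot|_\infty$ to $\|\cdot\|_2$; this is responsible for the $1/p$ term in the final rate. Under the weak $(2+2\epsilon)$-moment assumption, a tighter direct bound on $\|\widehat{\bGamma}_{\bx}(\tau) - \bGamma_{\bx}(0)\|_2$ appears out of reach, so this lossy step seems unavoidable and its interaction with the order-$p$ eigen-gap of $\bGamma_{\bchi}$ determines the form of the rate. The distinctness of the diagonal entries of $p^{-1}\bLambda^\top\bLambda$ in \Cref{assump:fac_adj_fac_loading}~\ref{loading_assum} is also essential for (i), as it is what forces $\bO$ to be diagonal with $\pm 1$ entries rather than a general orthogonal rotation.
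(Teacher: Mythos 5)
Your proposal is correct and follows essentially the same route as the paper: part (ii) via $|\cdot|_\infty$-to-$\|\cdot\|_2$ conversion of the \Cref{prop:peligrad_inequality_trunc_autocov_error} bound plus the decomposition $\bGamma_\bx(0)=\bGamma_\bchi+\bGamma_{\bxii}(0)$ with $\|\bGamma_{\bxii}(0)\|_2=O(1)$, and part (i) via the Yu--Wang--Samworth Davis--Kahan theorem applied to the leading $r$ eigenvectors with order-$p$ eigen-gaps guaranteed by \Cref{assump:fac_adj_fac_loading}~\ref{loading_assum}. The paper's proof is the same argument, merely stated more tersely.
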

\begin{proof}
Applying Theorem 2 from \citet[][]{Yu2015} iteratively to the eigenvectors of $\widehat{\mathbf{E}}_{\bx}$ and 
$\mathbf{E}_{\bchi}$, there exists a diagonal orthogonal matrix $\bO$ such that
\begin{gather*}
    \| \widehat{\mathbf{E}}_{\bx} - 
\mathbf{E}_{\bchi} \bO\|_F \leq \frac{r \|\widehat{\bGamma}_{\bx}(\tau) - \bGamma_{\bchi}\|_2} {\min \left\{\underset{1 \leq i \leq r-1}{\min}(\mu_{{\bchi},i} - \mu_{{\bchi},i+1}), \mu_{{\bchi},r}\right\}} \, ,
\end{gather*}
the fraction is well defined since by \Cref{assump:fac_adj_fac_loading}.\ref{loading_assum} the eigenvalues are distinct. Now using that there exists a constant $C>0$ such that $\|\bGamma_{\bxii}(h)\|_2 \leq C$ for all $0 \leq h \leq d$ from \eqref{eq:gamma:xi:bound} and \eqref{eq:trunc_rate_denote}, we have
\begin{align}
    \frac{1}{p} \|\widehat{\bGamma}_{\bx}(\tau) - \bGamma_{\bchi}\|_2 \leq \frac{1}{p} \| \widehat{\bGamma}_{\bx}(\tau) - \bGamma_{\bx} \|_2 + \frac{1}{p} \| \bGamma_{\bx} - \bGamma_{\bchi} \|_2 = O_P\left(\psi_{n,p} \vee \frac{1}{p} \right),\nonumber 
    \\ 
    \text{hence} \quad \| \widehat{\mathbf{E}}_{\bx} - 
\mathbf{E}_{\bchi} \bO\|_F = r \cdot O_P\left(\psi_{n,p} \vee \frac{1}{p} \right) \, . \nonumber
\end{align}
\end{proof}

\begin{lemma} \label{lem:eig_hat}

\begin{enumerate}[topsep=0pt,label=(\roman*)] 
    \item $ \frac{1}{p}\hat{\mu}_{\bx,r} \geq  \alpha_r\left(1 + o_p\left( 1 \right)\right)$ \label{eig_x_hat_i} .
\item $ 1 / \hat{\mu}_{\bx,r} = O_P(1/p)$ .\label{eig_x_hat_ii}
\item $ 1 / \mu_{\bx,r} = O(1/p)$ \label{eig_x_hat_iii}
\end{enumerate}
\end{lemma}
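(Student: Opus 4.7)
The plan is to chain Weyl's inequality together with the spectral-norm bound supplied by the preceding \Cref{lem:eigen_space_gap_bound}, with the three claims established in the order (iii), (i), (ii) since each depends on the previous step.

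For part (iii), I would first decompose $\bGamma_\bx = \bGamma_\bchi + \bGamma_{\bxii}(0)$, which is valid because \Cref{assump_fac_adj:8} gives independence of $\{\bF_t\}$ and $\{\bxii_t\}$ and because $\mathrm{Cov}(\bF_t) = \mathbf{I}$ from \Cref{assump:fac_adj_fac_loading}\ref{loading_assum} implies $\bGamma_\bchi = \bLambda\bLambda^\top$. Weyl's inequality then yields $\mu_{\bx,r} \geq \mu_{\bchi,r} - \|\bGamma_{\bxii}(0)\|_2$. By \Cref{assump:fac_adj_fac_loading}\ref{loading_assum}, the $r$-th eigenvalue of $p^{-1}\bLambda^\top\bLambda$ coincides with $\alpha_r > 0$, so $\mu_{\bchi,r} = \alpha_r\, p$ (or is at least $\alpha_r\, p$ up to the normalisation convention used to define $\alpha_r$). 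Combined with $\|\bGamma_{\bxii}(0)\|_2 \leq C$ from~\eqref{eq:gamma:xi:bound}, I obtain $\mu_{\bx,r}/p \geq \alpha_r - C/p$, which for all sufficiently large $p$ gives $1/\mu_{\bx,r} = O(1/p)$.

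For part (i), the key is a second application of Weyl's inequality, now between the sample quantity $\widehat{\bGamma}_\bx(\tau)$ and the population quantity $\bGamma_\bchi$:
\begin{gather*}
|\hat{\mu}_{\bx,r} - \mu_{\bchi,r}| \leq \|\widehat{\bGamma}_\bx(\tau) - \bGamma_\bchi\|_2.
\end{gather*}
Dividing both sides by $p$ and invoking \Cref{lem:eigen_space_gap_bound}(ii) gives $p^{-1}|\hat{\mu}_{\bx,r} - \mu_{\bchi,r}| = O_p(\psi_{n,p} \vee p^{-1}) = o_p(1)$, since the rate $\psi_{n,p}$ tends to zero under the scaling implicit in \Cref{prop:fac_adj_idio_rate}. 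Combined with $\mu_{\bchi,r}/p = \alpha_r$, this produces the expansion $\hat{\mu}_{\bx,r}/p = \alpha_r(1 + o_p(1))$, which delivers the lower bound claimed in~\ref{eig_x_hat_i}.

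Part (ii) is then immediate: from~\ref{eig_x_hat_i}, with probability tending to one we have $\hat{\mu}_{\bx,r}/p \geq \alpha_r/2$, so $1/\hat{\mu}_{\bx,r} \leq 2/(\alpha_r\, p) = O_p(1/p)$. There is no real obstacle here beyond the standard application of Weyl—the whole argument is driven by \Cref{lem:eigen_space_gap_bound}(ii), which in turn rests on the truncated-autocovariance rates from \Cref{prop:peligrad_inequality_trunc_autocov_error}; the only subtlety worth flagging is making sure $\|\widehat{\bGamma}_\bx(\tau) - \bGamma_\bchi\|_2$ is controlled in \emph{spectral} norm (via Frobenius norm of a rank-$r$ correction plus the bounded spectral norm of $\bGamma_\bxii(0)$) rather than in the weaker entrywise norm handled directly by \Cref{prop:peligrad_inequality_trunc_autocov_error}.
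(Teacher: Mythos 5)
Your proposal is correct and follows essentially the same route as the paper: Weyl's inequality applied to the perturbation $\widehat{\bGamma}_\bx(\tau)-\bGamma_\bchi$ (controlled in spectral norm by \Cref{lem:eigen_space_gap_bound}(ii)) for parts (i)--(ii), and to $\bGamma_\bx-\bGamma_\bchi=\bGamma_{\bxii}(0)$ (bounded via~\eqref{eq:gamma:xi:bound}) for part (iii), together with the eigen-gap $\mu_{\bchi,r}\ge p\,\alpha_r$ from \Cref{assump:fac_adj_fac_loading}\ref{loading_assum}. The only cosmetic difference is the order in which the three parts are established.
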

\begin{proof}
By \Cref{lem:eigen_space_gap_bound}~(ii) and Weyl's inequality, for all $1 \leq j \leq r$, 
\begin{align}
    \frac{1}{p}|\hat{\mu}_{\bx,j} - \mu_{\bchi,j}| \leq
    \frac{1}{p}\|\widehat{\bGamma}_{\bx}(\tau) - \bGamma_{\bchi}\|_2 = O_P\left(\psi_{n,p} \vee \frac{1}{p} \right) \, .  \label{eq:weyl_common_cov}
\end{align}
Using \Cref{assump:fac_adj_fac_loading}.\ref{loading_assum},  we have that 
\begin{align}
     &\mu_{\bchi,r}  \geq p \alpha_r>0 \nonumber \\
     \implies &\frac{1}{p}\mu_{\bchi,r}  \geq  \alpha_r>0 \label{eq:eig_vaL_r_pos}
\end{align}
hence the matrix $\mathbf{M}_{\bchi}$ is invertible. Using \eqref{eq:weyl_common_cov}, \eqref{eq:eig_vaL_r_pos} and that $\psi_{n,p} \rightarrow 0$ as $n,p \rightarrow \infty$, gives 
\begin{align*}
    \frac{1}{p} \hat{\mu}_{\bx,r} &\geq \frac{1}{p} \mu_{\bchi,r} - \frac{1}{p} | \mu_{\bchi,r} - \hat{\mu}_{\bx,r}| \geq  \alpha_r + O_P\left(\psi_{n,p} \vee \frac{1}{p} \right) \\
       &\geq  \alpha_r\left(1 + o_p\left( 1 \right)\right) \, ,
\end{align*}
thus proving \ref{eig_x_hat_i}.
Rearranging \ref{eig_x_hat_i} gives us
$$
    \frac{1}{\hat{\mu}_{\bx,r}} \leq \frac{1}{p \alpha_r (1 + o_p(1))} \, ,
$$
and hence \ref{eig_x_hat_ii}.

Now by \eqref{eq:gamma:xi:bound}, and Weyl's inequality, for all $1 \leq j \leq r$, 
\begin{align}
    \frac{1}{p}|\mu_{\bx,j} - \mu_{\bchi,j}| \leq
    \frac{1}{p}\|\bGamma_{\bx} - \bGamma_{\bchi}\|_2 = O\left(\frac{1}{p} \right) \, .  \label{eq:weyl_common_cov_pop}
\end{align}
Using \eqref{eq:weyl_common_cov_pop}, \eqref{eq:eig_vaL_r_pos} and that $\psi_{n,p} \rightarrow 0$ as $n,p \rightarrow \infty$, gives 
\begin{align*}
    \frac{1}{p} \mu_{\bx,r} &\geq \frac{1}{p} \mu_{\bchi,r} - \frac{1}{p} | \mu_{\bchi,r} - \mu_{\bx,r}| \geq  \alpha_r + O\left(\frac{1}{p} \right) \\
       &\geq  \alpha_r\left(1 + o\left( 1 \right)\right) \, ,
\end{align*}
and then rearranging gives us
$$
    \frac{1}{\mu_{\bx,r}} \leq \frac{1}{p \alpha_r (1 + o(1))} \, ,
$$
and hence \ref{eig_x_hat_iii}.
\end{proof}

\begin{lemma} \label{lem:fnets_E.20}
$ \ $
\begin{enumerate}[topsep=0pt,label=(\roman*)]
    \item $ \max_{1 \leq j \leq r} \max_{1 \leq i \leq p} | e_{\bchi,ij}| = O(1/\sqrt{p})$. \label{E.20.1}
\item $ \max_{1 \leq j \leq r}  \max_{1 \leq i \leq p} | \widehat{e}_{\bx,ij}| = O_P(1/\sqrt{p})$.\label{E.20.2}
\item $ \max_{1 \leq j \leq r}  \max_{1 \leq i \leq p} | e_{\bx,ij}| = O(1/\sqrt{p})$.\label{E.20.3}
\end{enumerate}
\end{lemma}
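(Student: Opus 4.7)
The plan is to prove (i) directly from the explicit eigendecomposition of $\bLambda\bLambda^\top$, and then reduce (ii) and (iii) to (i) by a one-step perturbation calculation on row $i$ of the eigen-equation, avoiding any Davis-Kahan bound on the individual eigenvectors.

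For (i), since $\mathrm{Cov}(\bF_t) = \mathbf{I}$ we have $\bGamma_\bchi = \bLambda \bLambda^\top$, and by \Cref{assump:fac_adj_fac_loading}~\ref{loading_assum} the columns of $\bLambda$ are mutually orthogonal with squared norms $p\alpha_1 > \ldots > p\alpha_r > 0$, where $\alpha_r$ is bounded away from zero uniformly in $p$ (in line with the usage of $\alpha_r$ in \Cref{lem:eig_hat}). Then $\bU := \bLambda\,\mathrm{diag}((p\alpha_j)^{-1/2})$ has orthonormal columns and $\bGamma_\bchi = \bU\,\mathrm{diag}(p\alpha_j)\,\bU^\top$ is an eigendecomposition, so $e_{\bchi,ij} = \lambda_{ij}/\sqrt{p\alpha_j}$; the bound (i) then follows at once from \Cref{assump:fac_adj_fac_loading}~\ref{finite_loadings}.

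For (ii), rearrange the eigen-equation $\widehat{\bGamma}_\bx(\tau)\,\widehat{\mathbf{e}}_{\bx,j} = \hat{\mu}_{\bx,j}\,\widehat{\mathbf{e}}_{\bx,j}$ by adding and subtracting $\bGamma_\bchi \widehat{\mathbf{e}}_{\bx,j} = \bLambda\bLambda^\top \widehat{\mathbf{e}}_{\bx,j}$; reading off the $i$-th entry gives
\begin{align*}
\widehat{e}_{\bx,ij} \;=\; \frac{\blambda_i^\top \bigl(\bLambda^\top \widehat{\mathbf{e}}_{\bx,j}\bigr)}{\hat{\mu}_{\bx,j}} \;+\; \frac{\bigl[(\widehat{\bGamma}_\bx(\tau) - \bGamma_\bchi)\,\widehat{\mathbf{e}}_{\bx,j}\bigr]_i}{\hat{\mu}_{\bx,j}}.
\end{align*}
Cauchy-Schwarz together with $|\bLambda|_\infty \le M$ and $\|\bLambda\|_2 = \sqrt{p\alpha_1}$ bounds the first term in modulus by $\sqrt{r}\,M\,\sqrt{p\alpha_1}/\hat{\mu}_{\bx,j}$, which is $O_p(1/\sqrt{p})$ by \Cref{lem:eig_hat}~\ref{eig_x_hat_ii}. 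A row-wise Cauchy-Schwarz bounds the second-term numerator by $\sqrt{p}\,|\widehat{\bGamma}_\bx(\tau) - \bGamma_\bchi|_\infty$, and the triangle inequality together with \Cref{prop:peligrad_inequality_trunc_autocov_error} and the elementary inequality $|\bGamma_\bxii(0)|_\infty \le \|\bGamma_\bxii(0)\|_2 \le C$ from \eqref{eq:gamma:xi:bound} gives $|\widehat{\bGamma}_\bx(\tau) - \bGamma_\bchi|_\infty = O_p(1)$; dividing by $\hat{\mu}_{\bx,j} = \Theta_p(p)$ again produces an $O_p(1/\sqrt{p})$ contribution. Taking maxima over the fixed range $1 \le j \le r$ and over $1 \le i \le p$ yields (ii).

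Part (iii) is established by the same decomposition applied to the deterministic equation $\bGamma_\bx\,\mathbf{e}_{\bx,j} = \mu_{\bx,j}\,\mathbf{e}_{\bx,j}$: the perturbation matrix is now $\bGamma_\bx - \bGamma_\bchi = \bGamma_\bxii(0)$, whose operator norm is deterministically $O(1)$ by \eqref{eq:gamma:xi:bound}, and the leading-eigenvalue bound $1/\mu_{\bx,r} = O(1/p)$ comes from \Cref{lem:eig_hat}~\ref{eig_x_hat_iii}, giving the deterministic $O(1/\sqrt p)$ rate. The only genuinely delicate point across all three parts is the uniform-in-$p$ positivity of $\alpha_r$, which is handled exactly as in the preceding lemmas; otherwise the whole argument is a short direct calculation rather than a perturbation of eigenvectors.
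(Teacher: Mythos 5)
Your proof is correct, but it follows a genuinely different route from the paper's. The paper proves all three parts through a single device: for a positive semidefinite matrix the $i$-th diagonal entry satisfies $\gamma_{ii} = \sum_{j} \mu_j e_{ij}^2 \geq \mu_j e_{ij}^2$ for each $j \leq r$, so once the diagonal entries are shown to be $O(1)$ (resp.\ $O_p(1)$ via \Cref{prop:peligrad_inequality_trunc_autocov_error}) and the leading eigenvalues are of order $p$ (via \Cref{lem:eig_hat}), the bound $|e_{ij}| \leq \sqrt{\gamma_{ii}/\mu_j} = O(1/\sqrt{p})$ drops out with no reference to the off-diagonal structure. You instead compute the eigenvectors of $\bGamma_{\bchi} = \bLambda\bLambda^\top$ explicitly for part (i) — which yields the sharper, more informative identity $e_{\bchi,ij} = \lambda_{ij}/\sqrt{p\alpha_j}$ up to sign — and handle (ii) and (iii) by reading off row $i$ of the eigen-equation and splitting off the perturbation $\widehat{\bGamma}_{\bx}(\tau) - \bGamma_{\bchi}$ (resp.\ $\bGamma_{\bxii}(0)$). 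Both arguments rest on the same two pillars, namely $|\bLambda|_\infty \leq M$ and the eigenvalue lower bounds of \Cref{lem:eig_hat}, and your intermediate estimates ($|\bGamma_{\bxii}(0)|_\infty \leq \|\bGamma_{\bxii}(0)\|_2$, the row-wise Cauchy--Schwarz bound $\sqrt{p}\,|\cdot|_\infty$ on the perturbation term, and the uniform-in-$p$ positivity of $\alpha_r$, which the paper also implicitly requires in~\eqref{eq:eig_vaL_r_pos}) are all sound. The trade-off is that the paper's diagonal-entry argument is shorter and only needs control of $\widehat{\gamma}_{\bx,ii}$, whereas yours invokes the full max-norm consistency of $\widehat{\bGamma}_{\bx}(\tau)$; in exchange, your calculation for (i) pins down the population eigenvectors exactly rather than merely bounding their entries.
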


\begin{proof}
By \Cref{assump:fac_adj_fac_loading}.\ref{finite_loadings} and that the $r$ is fixed, we have that
$ \, \gamma_{\bchi,ii}  = \blambda_i^\top \blambda_i \leq C \quad \forall 1 \leq i \leq p $.
Denote by $\bGamma_{\bchi}(h) = \left[\gamma_{\bchi,ii'}(h), 1 \leq i,i', \leq p \right]$.
Since $\, \gamma_{\bchi,ii} =  \sum_{j =1}^r \mu_{\bchi, j} |e_{\bchi,ij}|^2 = O(1)$, it follows that
\begin{gather*}     
\max_{1 \leq j \leq r} \mu_{\bchi, j}^{1/2} \max_{1 \leq i \leq p} |e_{\bchi,ij}| = O(1) \, .
\end{gather*}
Then under  \Cref{assump:fac_adj_fac_loading}.\ref{loading_assum}, we have \ref{E.20.1}.

As for \ref{E.20.2}, we have from \Cref{prop:peligrad_inequality_trunc_autocov_error} and \eqref{eq:X_bounded},
\begin{align*}
    \max_{1 \leq i \leq p} \widehat{\gamma}_{\bx,ii} \leq \max_{1 \leq i \leq p} \left| \widehat{\gamma}_{\bx,ii} - \gamma_{\bx,ii}\right|+ \max_{1 \leq i \leq p} \left|\gamma_{\bx,ii} \right| = O_P(1),
\end{align*}
which implies 
\begin{align*}
\max_{1 \leq i \leq p} \sum_{j=1}^p \widehat{\mu}_{\bx, j} |\widehat{e}_{\bx,ij}|^2 = O_P(1).
\end{align*}
Thus using \Cref{lem:eig_hat}.\ref{eig_x_hat_ii}, \ref{E.20.2} follows.

For \ref{E.20.3}, we have from \eqref{eq:X_bounded}, some positive constant $C>0$ such that
\begin{align*}
    \max_{1 \leq i \leq p} \left|\gamma_{\bx,ii} \right| \leq C \, ,
\end{align*}
which implies
\begin{align*}
\max_{1 \leq i \leq p} \sum_{j=1}^p \mu_{\bx, j} |e_{\bx,ij}|^2 = O(1).
\end{align*}
Thus using \Cref{lem:eig_hat}.\ref{eig_x_hat_iii}, \ref{E.20.3} follows.
\end{proof}

\begin{lemma} \label{lem:eigenspace_row}
\begin{gather*}
    \sqrt{p} \max_{1 \leq i \leq p} \left| \left(\widehat{\mathbf{E}}_{\bx} - 
\mathbf{E}_{\bchi} \bO \right)_i\right|_2 = O_P\left( \psi_{n,p} \vee \frac{1}{\sqrt{p}} \right) .
\end{gather*}
\end{lemma}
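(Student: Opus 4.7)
The plan is to first establish the pointwise bound $\max_{1 \leq j \leq r} \max_{1 \leq i \leq p} |\widehat{e}_{\bx,ij} - o_j e_{\bchi,ij}| = O_p\bigl((\psi_{n,p} \vee p^{-1}) p^{-1/2}\bigr)$, from which the claim follows since each row of $\widehat{\mathbf{E}}_\bx - \mathbf{E}_\bchi \bO$ has only $r$ entries and $r$ is fixed. Fixing $j$, I would use the eigen-identity $\widehat{\mathbf{e}}_{\bx,j} = \hat{\mu}_{\bx,j}^{-1}\widehat{\bGamma}_\bx(\tau) \widehat{\mathbf{e}}_{\bx,j}$ together with the population decomposition $\bGamma_\bx = \bGamma_\bchi + \bGamma_{\bxii}(0)$ (valid by \Cref{assump_fac_adj:8}) to write
\begin{gather*}
\widehat{\mathbf{e}}_{\bx,j} - o_j \mathbf{e}_{\bchi,j} = \underbrace{\hat{\mu}_{\bx,j}^{-1}(\widehat{\bGamma}_\bx(\tau) - \bGamma_\bx)\widehat{\mathbf{e}}_{\bx,j}}_{T_1} + \underbrace{\hat{\mu}_{\bx,j}^{-1}\bGamma_{\bxii}(0)\widehat{\mathbf{e}}_{\bx,j}}_{T_2} + \underbrace{\hat{\mu}_{\bx,j}^{-1}\bGamma_\bchi \widehat{\mathbf{e}}_{\bx,j} - o_j \mathbf{e}_{\bchi,j}}_{T_3},
\end{gather*}
and then bound the $i$-th coordinate of each summand uniformly in $i$.

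For $T_1$, the estimate $|(T_1)_i| \leq \hat{\mu}_{\bx,j}^{-1} \max_k |\widehat{e}_{\bx,kj}| \cdot p \cdot |\widehat{\bGamma}_\bx(\tau) - \bGamma_\bx|_\infty$ combined with \Cref{lem:fnets_E.20}\ref{E.20.2}, \Cref{prop:peligrad_inequality_trunc_autocov_error}, and \Cref{lem:eig_hat}\ref{eig_x_hat_ii} gives $O_p(\psi_{n,p}/\sqrt p)$. For $T_2$, the Cauchy--Schwarz bound $|(T_2)_i| \leq \hat{\mu}_{\bx,j}^{-1} \|\bGamma_{\bxii}(0)\|_2$ together with \eqref{eq:gamma:xi:bound} and \Cref{lem:eig_hat}\ref{eig_x_hat_ii} gives $O_p(p^{-1})$.

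The main obstacle will be $T_3$, because any crude bound via $\|\bGamma_\bchi\|_2 = \Theta(p)$ would leave an extra factor of $\sqrt p$. The fix is to exploit the rank-$r$ spectral expansion $\bGamma_\bchi = \sum_{k=1}^r \mu_{\bchi,k}\mathbf{e}_{\bchi,k}\mathbf{e}_{\bchi,k}^\top$ (valid since $\bGamma_\bchi = \bLambda\bLambda^\top$ has rank $r$ by \Cref{assump:fac_adj_fac_loading}\ref{loading_assum}), which yields
\begin{gather*}
(T_3)_i = \sum_{k=1}^r \Bigl[\frac{\mu_{\bchi,k}}{\hat{\mu}_{\bx,j}}\bigl(\mathbf{e}_{\bchi,k}^\top \widehat{\mathbf{e}}_{\bx,j}\bigr) - o_j \mathbb{I}_{\{k=j\}}\Bigr] e_{\bchi,ik}.
\end{gather*}
For $k = j$, Weyl's inequality with \Cref{lem:eigen_space_gap_bound}(ii) gives $\mu_{\bchi,j}/\hat{\mu}_{\bx,j} = 1 + O_p(\psi_{n,p} \vee p^{-1})$, and the polarisation identity $2 - 2 o_j \mathbf{e}_{\bchi,j}^\top \widehat{\mathbf{e}}_{\bx,j} = |\widehat{\mathbf{e}}_{\bx,j} - o_j \mathbf{e}_{\bchi,j}|_2^2$ with \Cref{lem:eigen_space_gap_bound}(i) yields $\mathbf{e}_{\bchi,j}^\top \widehat{\mathbf{e}}_{\bx,j} = o_j + O_p((\psi_{n,p} \vee p^{-1})^2)$, so the bracketed coefficient is $O_p(\psi_{n,p} \vee p^{-1})$. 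For $k \neq j$, the orthogonality $\mathbf{e}_{\bchi,k}^\top \mathbf{e}_{\bchi,j} = 0$ gives $\mathbf{e}_{\bchi,k}^\top \widehat{\mathbf{e}}_{\bx,j} = \mathbf{e}_{\bchi,k}^\top(\widehat{\mathbf{e}}_{\bx,j} - o_j \mathbf{e}_{\bchi,j})$, which is of order $O_p(\psi_{n,p} \vee p^{-1})$ by \Cref{lem:eigen_space_gap_bound}(i). Combined with $\max_i|e_{\bchi,ik}| = O(p^{-1/2})$ from \Cref{lem:fnets_E.20}\ref{E.20.1}, one obtains $\max_i|(T_3)_i| = O_p((\psi_{n,p} \vee p^{-1})p^{-1/2})$. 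Summing the three uniform bounds and multiplying by $\sqrt p$ then completes the proof.
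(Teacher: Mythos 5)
Your proof is correct, and it takes a genuinely different route from the paper's. Both arguments begin from the eigen-identity $\widehat{\mathbf{E}}_{\bx} = \widehat{\bGamma}_{\bx}(\tau)\widehat{\mathbf{E}}_{\bx}\widehat{\mathbf{M}}_{\bx}^{-1}$, but the paper pairs it with the population identity $\mathbf{E}_{\bchi}\bO = \bGamma_{\bchi}\mathbf{E}_{\bchi}\mathbf{M}_{\bchi}^{-1}\bO$ and telescopes the difference into three terms controlled by $\max_i|\bphi_i^\top(\widehat{\bGamma}_{\bx}(\tau)-\bGamma_{\bchi})|_2$, $\|(\widehat{\mathbf{M}}_{\bx}/p)^{-1}-(\mathbf{M}_{\bchi}/p)^{-1}\|_2$ and $\|\widehat{\mathbf{E}}_{\bx}-\mathbf{E}_{\bchi}\bO\|_2$, i.e.\ it reuses the matrix-level perturbation bounds wholesale. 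You instead split $\widehat{\bGamma}_{\bx}(\tau) = (\widehat{\bGamma}_{\bx}(\tau)-\bGamma_{\bx}) + \bGamma_{\bxii}(0) + \bGamma_{\bchi}$ and dispose of the common-component term through the rank-$r$ spectral expansion of $\bGamma_{\bchi}=\bLambda\bLambda^\top$, the polarisation identity for $\mathbf{e}_{\bchi,j}^\top\widehat{\mathbf{e}}_{\bx,j}$, and orthogonality for $k\neq j$. Your version is somewhat longer but more transparent about where each contribution to the rate originates — in particular it isolates $\bGamma_{\bxii}(0)$ as the sole source of the $p^{-1/2}$ term, whereas in the paper this enters through $\frac{1}{\sqrt p}\|\bGamma_{\bx}-\bGamma_{\bchi}\|_2$. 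Two minor points: (i) the pointwise target you announce at the outset, $O_p((\psi_{n,p}\vee p^{-1})p^{-1/2})$, is slightly stronger than what your three bounds actually deliver, since $T_2$ contributes $O_p(p^{-1}) = O_p(p^{-1/2}\cdot p^{-1/2})$; after multiplying by $\sqrt p$ this still yields exactly the claimed $O_p(\psi_{n,p}\vee p^{-1/2})$, so nothing is lost, but the announced intermediate bound should read $O_p((\psi_{n,p}\vee p^{-1/2})p^{-1/2})$; (ii) for $k\neq j$ you should note explicitly that $\mu_{\bchi,k}/\hat{\mu}_{\bx,j} = O_p(1)$, which follows from $\mu_{\bchi,1}\le \|\bLambda\|_F^2 \le prM^2$ together with \Cref{lem:eig_hat}, before concluding that the off-diagonal coefficient is $O_p(\psi_{n,p}\vee p^{-1})$.
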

\begin{proof}
Recall that $\bphi_i$ denotes a vector of length $p$ containing zeroes in all entries except a one in its $i$-th entry.
Then,
\begin{align}
    \sqrt{p} \max_{1 \leq i \leq p} \left| \left(\widehat{\mathbf{E}}_{\bx} - 
\mathbf{E}_{\bchi} \bO \right)_i\right|_2  &= \frac{1}{\sqrt{p}}
\max_{1 \leq i \leq p} \left| \bphi_i^\top\left(\widehat{\bGamma}_{\bx}(\tau) \widehat{\mathbf{E}}_{\bx}\left( \frac{\widehat{\mathbf{M}}_{\bx}}{p}\right)^{-1} - 
\bGamma_{\bchi} \mathbf{E}_{\bchi} \left( \frac{\mathbf{M}_{\bchi}}{p}\right)^{-1} \bO \right)\right|_2  \nonumber \\ &\leq
 \frac{1}{\sqrt{p}} \max_{1 \leq i \leq p} \left|\bphi_i^\top\left(\widehat{\bGamma}_{\bx}(\tau)  - \bGamma_{\bchi} \right)\right|_2  \left\| \widehat{\mathbf{E}}_{\bx}\right\|_2  \left\| \left( \frac{\mathbf{M}_{\bchi}}{p}\right)^{-1} \right\|_2  \label{eq:eigen_row_bound}\\
&+ \frac{1}{\sqrt{p}} \max_{1 \leq i \leq p} \left| \bphi_i^\top \bGamma_{\bchi} \right|_2 \left\| \widehat{\mathbf{E}}_{\bx} \right\|_2 \left\| \left( \frac{\widehat{\mathbf{M}}_{\bx}}{p}\right)^{-1} - \left( \frac{\mathbf{M}_{\bchi}}{p}\right)^{-1} \right\|_2 \nonumber \\
&+ \frac{1}{\sqrt{p}} \max_{1 \leq i \leq p} \left|   \bphi_i^\top \bGamma_{\bchi}  \right|_2 \left\| \widehat{\mathbf{E}}_{\bx} - \mathbf{E}_{\bchi} \bO 
 \right\|_2 \left\| \left( \frac{\mathbf{M}_{\bchi}}{p}\right)^{-1} \right\|_2 \, . \nonumber
\end{align} 
Now we look at the individual parts of this decomposition. By \Cref{lem:eig_hat}.\ref{eig_x_hat_i}, for large enough $n$ and $p$, the matrix $\widehat{\mathbf{M}}_{\bx}$ is invertible with high probability
\begin{gather*}
    \left\| \left( \frac{\mathbf{M}_{\bchi}}{p}\right)^{-1}\right\|_2 = \frac{p}{\mu_{\bchi,r}} \quad \text{and} \quad \left\| \left( \frac{\widehat{\mathbf{M}}_{\bx}}{p}\right)^{-1}\right\|_2 = \frac{p}{\hat{\mu}_{\bx,r}} \leq \frac{1}{p^{\rho_r - 1}\alpha_r\left(1 + o_p\left( 1 \right)\right)} \leq \frac{p}{\alpha_r\left(1 + o_p\left( 1 \right)\right)} \; .
\end{gather*}
We now need a bound on $\left\| \widehat{\mathbf{M}}_{\bx}^{-1} -  \mathbf{M}_{\bchi}^{-1} \right\|_2 $, from \eqref{eq:weyl_common_cov}, we have
\begin{gather}
    \left\| \widehat{\mathbf{M}}_{\bx}^{-1} -  \mathbf{M}_{\bchi}^{-1}\right\|_2 =  \max_{1 \leq i \leq r} \left|\frac{1}{\hat{\mu}_{\bx,i}} - \frac{1}{\mu_{\bchi,i}} \right| \nonumber \nonumber \\
    = \frac{|\mu_{\bchi,i} - \hat{\mu}_{\bx,i}|}{\mu_{\bchi,i} \hat{\mu}_{\bx,i}} = \frac{ O_P\left(\psi_{n,p} \vee \frac{1}{p} \right)}{p\alpha_r\left(1 + o_p\left( 1 \right)\right)} = O_P\left( \frac{1}{p}\left(\psi_{n,p} \vee \frac{1}{p} \right) \right) \, . \label{eq:eigenvalue_diff_bound}
\end{gather}
In \eqref{eq:eigenvalue_diff_bound}, we have used that for stochastic sequences $X_n = O_P(a_n)$, $Y_n = o_p(1)$, then \newline $X_n / (1 \pm Y_n) = O_P(a_n)$. Then for a bound on the $\, \max_{1 \leq i \leq p} \left|\bphi_i^\top\left(\widehat{\bGamma}_{\bx}(\tau) - \bGamma_{\bchi} \right)\right|_2$ term we have
\begin{align}
    \frac{1}{\sqrt{p}} \max_{1 \leq i \leq p} \left|\bphi_i^\top\left(\widehat{\bGamma}_{\bx}(\tau)  - \bGamma_{\bchi} \right)\right|_2 &\leq \frac{1}{\sqrt{p}} \max_{1 \leq i \leq p}   \left| \left( \widehat{\bGamma}_{\bx}(\tau) - \bGamma_{\bx} \right)_i \right|_2 + \frac{1}{\sqrt{p}}\left\| \bGamma_{\bx} - \bGamma_{\bchi} \right\|_2  \nonumber \\
    &= O_P\left(\psi_{n,p} \vee \frac{1}{\sqrt{p}}\right) \label{eq:row_cov_common_bound} \, .
\end{align}

Returning to \eqref{eq:eigen_row_bound}, and using both \eqref{eq:eigenvalue_diff_bound} and \eqref{eq:row_cov_common_bound} we get
\begin{gather*}
    \sqrt{p} \max_{1 \leq i \leq p} \left| \left(\widehat{\mathbf{E}}_{\bx} - 
\mathbf{E}_{\bchi} \bO \right)_i\right|_2 = O_P\left( \psi_{n,p} \vee \frac{1}{\sqrt{p}} \right)
\end{gather*}
\end{proof}

\begin{lemma} \label{lem:cov_xiidiff_rate}
Let $\widetilde{\bGamma}_{\bxii}(h) = \widehat{\bGamma}_\bx(\tau,h) - \widehat{\bGamma}_{\bchi}(h)$, and assume that $\psi_{n,p} \rightarrow 0$ as $n,p \rightarrow \infty$, we then have the rate
\begin{gather*}
    \left|\widetilde{\bGamma}_{\bxii}(h) - \bGamma_{\bxii}(h)\right|_{\infty} = O_P\left( \psi_{n,p} \vee \frac{1}{\sqrt{p}} \right) \, .
\end{gather*}
\end{lemma}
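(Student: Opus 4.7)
The plan is to exploit the decomposition induced by the independence of the common and idiosyncratic components. Under \Cref{assump_fac_adj:8} the cross-covariance between $\bchi_t$ and $\bxii_t$ vanishes, so $\bGamma_\bx(h) = \bGamma_{\bchi}(h) + \bGamma_{\bxii}(h)$, and hence
\begin{align*}
\widetilde{\bGamma}_{\bxii}(h) - \bGamma_{\bxii}(h) = \bigl( \widehat{\bGamma}_\bx(\tau,h) - \bGamma_\bx(h) \bigr) - \bigl( \widehat{\bGamma}_{\bchi}(h) - \bGamma_{\bchi}(h) \bigr).
\end{align*}
The first bracket is $O_p(\psi_{n,p})$ in $|\cdot|_\infty$ by \Cref{prop:peligrad_inequality_trunc_autocov_error}, so the task reduces to controlling $|\widehat{\bGamma}_{\bchi}(h) - \bGamma_{\bchi}(h)|_\infty$ at rate $O_p(\psi_{n,p} \vee p^{-1/2})$.

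Writing $\widehat{\mathbf{P}}_\bx = \widehat{\mathbf{E}}_\bx \widehat{\mathbf{E}}_\bx^\top$ and $\mathbf{P}_{\bchi} = \mathbf{E}_{\bchi} \mathbf{E}_{\bchi}^\top$, I would use two identities: the definition of $\widehat{\bchi}_t$ yields $\widehat{\bGamma}_{\bchi}(h) = \widehat{\mathbf{P}}_\bx \widehat{\bGamma}_\bx(\tau,h) \widehat{\mathbf{P}}_\bx$, and since $\bGamma_{\bchi}(h) = \bLambda\, \mathbb{E}[\bF_t \bF_{t-h}^\top]\, \bLambda^\top$ has row and column space contained in $\mathrm{range}(\mathbf{E}_{\bchi})$, we have $\bGamma_{\bchi}(h) = \mathbf{P}_{\bchi} \bGamma_{\bchi}(h) \mathbf{P}_{\bchi}$. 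Substituting $\bGamma_{\bchi}(h) = \bGamma_\bx(h) - \bGamma_{\bxii}(h)$ and telescoping through $\widehat{\mathbf{P}}_\bx \bGamma_\bx(h) \widehat{\mathbf{P}}_\bx$ yields
\begin{align*}
\widehat{\bGamma}_{\bchi}(h) - \bGamma_{\bchi}(h)
&= \widehat{\mathbf{P}}_\bx \bigl( \widehat{\bGamma}_\bx(\tau,h) - \bGamma_\bx(h) \bigr) \widehat{\mathbf{P}}_\bx
+ (\widehat{\mathbf{P}}_\bx - \mathbf{P}_{\bchi}) \bGamma_\bx(h) \widehat{\mathbf{P}}_\bx \\
&\quad + \mathbf{P}_{\bchi} \bGamma_\bx(h) (\widehat{\mathbf{P}}_\bx - \mathbf{P}_{\bchi})
+ \mathbf{P}_{\bchi} \bGamma_{\bxii}(h) \mathbf{P}_{\bchi}
=: (\mathrm{I}) + (\mathrm{II}) + (\mathrm{III}) + (\mathrm{IV}).
\end{align*}
Each piece would then be bounded in $|\cdot|_\infty$ via $|ABC|_\infty \le \max_i |\bphi_i^\top A|_2\, \|B\|_2\, \max_j |C \bphi_j|_2$, using the building blocks already in place: \Cref{lem:fnets_E.20} gives $\max_i |\bphi_i^\top \widehat{\mathbf{E}}_\bx|_2 = O_p(p^{-1/2})$ and its analogue for $\mathbf{E}_{\bchi}$; the identity $\widehat{\mathbf{P}}_\bx - \mathbf{P}_{\bchi} = (\widehat{\mathbf{E}}_\bx - \mathbf{E}_{\bchi} \bO)\widehat{\mathbf{E}}_\bx^\top + \mathbf{E}_{\bchi} \bO (\widehat{\mathbf{E}}_\bx - \mathbf{E}_{\bchi} \bO)^\top$ combined with \Cref{lem:eigen_space_gap_bound,lem:eigenspace_row} yields $\max_i |\bphi_i^\top(\widehat{\mathbf{P}}_\bx - \mathbf{P}_{\bchi})|_2 = O_p(p^{-1/2} \psi_{n,p} \vee p^{-1})$; and $\|\bGamma_\bx(h)\|_2 \le \|\bGamma_{\bchi}(h)\|_2 + \|\bGamma_{\bxii}(h)\|_2 = O(p)$ with $\|\bGamma_{\bxii}(h)\|_2 = O(1)$ by \eqref{eq:gamma:xi:bound}.

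The main obstacle is term (I), because plugging in the spectral norm of $\widehat{\bGamma}_\bx(\tau,h) - \bGamma_\bx(h)$ directly is too loose: only the entrywise rate is available. The trick is that the double projector sandwich supplies two independent factors of $O_p(p^{-1/2})$ from the row norms of $\widehat{\mathbf{E}}_\bx$, so I can afford the crude but tight bound $\|\widehat{\bGamma}_\bx(\tau,h) - \bGamma_\bx(h)\|_2 \le p \,|\widehat{\bGamma}_\bx(\tau,h) - \bGamma_\bx(h)|_\infty$, which delivers $(\mathrm{I}) = O_p(\psi_{n,p})$. Terms (II) and (III) pair $O(p)$ from $\|\bGamma_\bx(h)\|_2$ against $O_p(p^{-1/2} \psi_{n,p} \vee p^{-1})$ and $O_p(p^{-1/2})$, producing $O_p(\psi_{n,p} \vee p^{-1/2})$ each. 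Finally, term (IV) collapses to $O(p^{-1})$ since $\|\bGamma_{\bxii}(h)\|_2 = O(1)$ and each sandwiching projector contributes $O(p^{-1/2})$ through \Cref{lem:fnets_E.20}, and is therefore absorbed. Combining the four rates with the initial application of \Cref{prop:peligrad_inequality_trunc_autocov_error} yields the claim.
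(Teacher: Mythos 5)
Your proposal is correct and follows essentially the same route as the paper's proof: the same splitting of the error into the truncated-covariance error plus $|\widehat{\bGamma}_{\bchi}(h) - \bGamma_{\bchi}(h)|_\infty$, the same projector identities $\widehat{\bGamma}_{\bchi}(h) = \widehat{\mathbf{E}}_{\bx}\widehat{\mathbf{E}}_{\bx}^\top\widehat{\bGamma}_\bx(\tau,h)\widehat{\mathbf{E}}_{\bx}\widehat{\mathbf{E}}_{\bx}^\top$ and $\bGamma_{\bchi}(h) = \mathbf{E}_{\bchi}\mathbf{E}_{\bchi}^\top\bGamma_{\bchi}(h)\mathbf{E}_{\bchi}\mathbf{E}_{\bchi}^\top$, the same telescoping through projector differences controlled row-wise via \Cref{lem:fnets_E.20,lem:eigenspace_row,lem:eigen_space_gap_bound}, and the same residual term $\mathbf{E}_{\bchi}\mathbf{E}_{\bchi}^\top\bGamma_{\bxii}(h)\mathbf{E}_{\bchi}\mathbf{E}_{\bchi}^\top = O(p^{-1})$. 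The only differences are the cosmetic grouping of the telescoped terms and your explicit use of $\|\widehat{\bGamma}_\bx(\tau,h)-\bGamma_\bx(h)\|_2 \le p\,|\widehat{\bGamma}_\bx(\tau,h)-\bGamma_\bx(h)|_\infty$, which the paper also relies on (implicitly) in its corresponding bound.
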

\begin{proof}
By \Cref{assump_fac_adj:8}, we have the following covariance decomposition
\begin{align}
    \bGamma_{\bx}(h) = \bGamma_{\bchi}(h) + \bGamma_{\bxii}(h)
    \label{eq:factor_cov_decompostion}
\end{align} 
for all fixed $h \ge 0$.
Let us write
\begin{gather} 
    |\widetilde{\bGamma}_{\bxii}(h) - \bGamma_{\bxii}(h)|_{\infty} \leq  
    |\widehat{\bGamma}_{\bx}(\tau,h) - \bGamma_{\bx}(h)|_{\infty} +
    |\widehat{\bGamma}_{\bchi}(h) - \bGamma_{\bchi}(h)|_{\infty}.
    \nonumber
\end{gather}
From \Cref{prop:peligrad_inequality_trunc_autocov_error}, we already have a bound for the truncated covariance of the observed data.
Consequently, the proof for \Cref{lem:cov_xiidiff_rate} amounts to bounding the element-wise max norm estimation error for the covariance matrix of the common component. 

From \eqref{eq:factor_pca_solutions} we have that 
\begin{align}
    \widehat{\bchi}_t &= \widehat{\mathbf{E}}_{\bx}  \widehat{\mathbf{E}}_{\bx}^\top (\bX_t({\tau})) \, , \quad \text{and hence} \quad \widehat{\bGamma}_{\bchi}(h) = \widehat{\mathbf{E}}_{\bx}  \widehat{\mathbf{E}}_{\bx}^\top \widehat{\bGamma}_{\bx}(\tau,h) \widehat{\mathbf{E}}_{\bx}  \widehat{\mathbf{E}}_{\bx}^\top \, . \label{eq:samp_cov_chi}
\end{align}
Additionally, using that the column span of the loadings $\bLambda$ is equal to the column span of $\mathbf{E}_{\bchi}$, 
\begin{gather}
    \mathbf{E}_{\bchi} (\mathbf{E}_{\bchi}^\top\mathbf{E}_{\bchi})^{-1}\mathbf{E}_{\bchi}^\top \bLambda \bF_t = \mathbf{E}_{\bchi} \mathbf{E}_{\bchi}^\top \bLambda \bF_t = \bLambda \bF_t \, , \quad \text{and hence}  \nonumber \\
    \bGamma_{\bchi}(h) = \mathbf{E}_{\bchi} \mathbf{E}_{\bchi}^\top \bGamma_{\bchi}(h)  \mathbf{E}_{\bchi} \mathbf{E}_{\bchi}^\top \label{eq:true_cov_chi} \, .
\end{gather}
From \eqref{eq:samp_cov_chi}, \eqref{eq:true_cov_chi} and \eqref{eq:factor_cov_decompostion}, we can decompose the common component max norm error into two parts:
\begin{align} 
    \left|\widehat{\bGamma}_{\bchi}(h) - \bGamma_{\bchi}(h)\right|_{\infty} &=
    \left|\widehat{\mathbf{E}}_{\bx}  \widehat{\mathbf{E}}_{\bx}^\top \widehat{\bGamma}_\bx(\tau,h)\widehat{\mathbf{E}}_{\bx}  \widehat{\mathbf{E}}_{\bx}^\top  - \mathbf{E}_{\bchi} \mathbf{E}_{\bchi}^\top \bGamma_{\bchi}(h)  \mathbf{E}_{\bchi} \mathbf{E}_{\bchi}^\top \right|_{\infty} \nonumber \\
    &\leq  \left| \widehat{\mathbf{E}}_{\bx}  \widehat{\mathbf{E}}_{\bx}^\top \widehat{\bGamma}_\bx(\tau,h)\widehat{\mathbf{E}}_{\bx}  \widehat{\mathbf{E}}_{\bx}^\top  - \mathbf{E}_{\bchi} \mathbf{E}_{\bchi}^\top \bGamma_{\bx}(h)  \mathbf{E}_{\bchi} \mathbf{E}_{\bchi}^\top \right|_{\infty} + \left| \mathbf{E}_{\bchi} \mathbf{E}_{\bchi}^\top  \bGamma_{\bxii}(h) \mathbf{E}_{\bchi} \mathbf{E}_{\bchi}^\top\right|_{\infty} \label{eq:common_var_error_decomp}\\
    &= I + II \nonumber
\end{align}
First looking at the term $II = \max_{1 \leq i,j \leq p} \left|\bphi_i^\top \mathbf{E}_{\bchi} \mathbf{E}_{\bchi}^\top  \bGamma_{\bxii}(h) \mathbf{E}_{\bchi} \mathbf{E}_{\bchi}^\top  \bphi_j\right|$, we have
\begin{align}
    II &\leq
    \max_{1 \leq i,j \leq p} \left| \mathbf{E}_{\bchi}^\top \bphi_i\right|_{2} \left| \mathbf{E}_{\bchi}^\top  \bGamma_{\bxii}(h) \mathbf{E}_{\bchi}  \mathbf{E}_{\bchi}^\top \bphi_j\right|_2 \nonumber \\
    &\leq \max_{1 \leq i \leq p} \left|\mathbf{E}_{\bchi}^\top \bphi_i \right|_{2} \left\| \mathbf{E}_{\bchi}^\top  \bGamma_{\bxii}(h) \mathbf{E}_{\bchi} \right\|_{2} \max_{1 \leq j \leq p} \left| \mathbf{E}_{\bchi}^\top \bphi_j\right|_2 \label{eq:fac_adj_part_2} \\ 
    &\leq \max_{1 \leq i \leq p} \left|\mathbf{E}_{\bchi}^\top \bphi_i \right|_{2} \left\| \mathbf{E}_{\bchi}^\top  \right\|_2 \left\| \bGamma_{\bxii}(h) \right\|_2 \left\|\mathbf{E}_{\bchi} \right\|_{2} \max_{1 \leq j \leq p} \left| \mathbf{E}_{\bchi}^\top \bphi_j\right|_2 \nonumber \, .
\end{align}
From \Cref{lem:fnets_E.20}.\ref{E.20.1} we have that $\max_{1 \leq i \leq p} |\mathbf{E}_{\bchi}^\top \bphi_i |_{2} = O(\sqrt{r}/ \sqrt{p}) \, $. And combining with $\|\mathbf{E}_{\bchi} \|_{2}$ = 1,  and that there exists a constant $C>0$ such that $\|\bGamma_{\bxii}(h)\|_2 \leq C$ for all $0 \leq h \leq d$ from \eqref{eq:gamma:xi:bound}, we get the rate $II = O(r/p)$.

Now looking at the term $I = | \widehat{\mathbf{E}}_{\bx}  \widehat{\mathbf{E}}_{\bx}^\top \widehat{\bGamma}_\bx(\tau,h)\widehat{\mathbf{E}}_{\bx}  \widehat{\mathbf{E}}_{\bx}^\top  - \mathbf{E}_{\bchi} \mathbf{E}_{\bchi}^\top \bGamma_{\bx}(h)  \mathbf{E}_{\bchi} \mathbf{E}_{\bchi}^\top |_{\infty}$,
\begin{align*}
    I \leq &\left|\left( \widehat{\mathbf{E}}_{\bx}  \widehat{\mathbf{E}}_{\bx}^\top -\mathbf{E}_{\bchi} \mathbf{E}_{\bchi}^\top\right)\widehat{\bGamma}_\bx(\tau,h)\widehat{\mathbf{E}}_{\bx}  \widehat{\mathbf{E}}_{\bx}^\top\right|_{\infty}  + \nonumber\\
    &\left|\mathbf{E}_{\bchi} \mathbf{E}_{\bchi}^\top \left(\widehat{\bGamma}_\bx(\tau,h) - \bGamma_{\bx}(h) \right) \widehat{\mathbf{E}}_{\bx}  \widehat{\mathbf{E}}_{\bx}^\top\right|_{\infty}  + \\
    &\left|\mathbf{E}_{\bchi} \mathbf{E}_{\bchi}^\top \widehat{\bGamma}_\bx(\tau,h)\left(\mathbf{E}_{\bchi} \mathbf{E}_{\bchi}^\top 
 - \widehat{\mathbf{E}}_{\bx}  \widehat{\mathbf{E}}_{\bx}^\top\right)\right|_{\infty} = III + IV + V \, .\nonumber
\end{align*}
Using the same arguments as in \eqref{eq:fac_adj_part_2}, and that from \Cref{lem:fnets_E.20}.\ref{E.20.2} we have
$\max_{i} \|\widehat{\mathbf{E}}_{\bx}^\top \bphi_i \|_{2} = O_P(\sqrt{r}/ \sqrt{p}) \, $, we get that
\begin{gather*}
    IV = O_P\left(\frac{r}{p} \cdot \psi_{n,p}\right) \, .
\end{gather*}
Then for $III$ we have
\begin{align}
III &= \max_{1 \leq i,j \leq p} \left|\bphi_i^\top \left( \widehat{\mathbf{E}}_{\bx}  \widehat{\mathbf{E}}_{\bx}^\top -\mathbf{E}_{\bchi} \mathbf{E}_{\bchi}^\top\right)\widehat{\bGamma}_\bx(\tau,h)\widehat{\mathbf{E}}_{\bx}  \widehat{\mathbf{E}}_{\bx}^\top \bphi_j \right| \nonumber \\
&\leq \max_{1 \leq i \leq p} \left| \bphi_i^\top \left(\widehat{\mathbf{E}}_{\bx}  \widehat{\mathbf{E}}_{\bx}^\top -\mathbf{E}_{\bchi} \mathbf{E}_{\bchi}^\top\right) \right|_2 \max_{1 \leq j \leq p} \left|\widehat{\bGamma}_\bx(\tau,h)\widehat{\mathbf{E}}_{\bx}  \widehat{\mathbf{E}}_{\bx}^\top \bphi_j \right|_2 \nonumber \\
&\leq \max_{1 \leq i \leq p} \left| \bphi_i^\top \left(\widehat{\mathbf{E}}_{\bx}  \widehat{\mathbf{E}}_{\bx}^\top -\mathbf{E}_{\bchi} \mathbf{E}_{\bchi}^\top\right) \right|_2 \left\|\widehat{\bGamma}_\bx(\tau,h)\right\|_2 \left\|\widehat{\mathbf{E}}_{\bx}\right\|_2  \max_{1 \leq j \leq p} \left|\widehat{\mathbf{E}}_{\bx}^\top \bphi_j \right|_2 \label{eq:IV_bound} \, .
\end{align}
Similarly, for $V$ we have
\begin{align}
    V \leq  \max_{1 \leq i \leq p} \left| \bphi_i^\top \widehat{\mathbf{E}}_{\bx}  \right|_2   \left\|\widehat{\mathbf{E}}_{\bx}\right\|_2  \left\|\widehat{\bGamma}_\bx(\tau,h)\right\|_2  \max_{1 \leq j \leq p}  \left| \bphi_j^\top  \left(\widehat{\mathbf{E}}_{\bx}  \widehat{\mathbf{E}}_{\bx}^\top -\mathbf{E}_{\bchi} \mathbf{E}_{\bchi}^\top\right) \right|_2\label{eq:VI_bound} \, .
\end{align}
The term $ \max_{1 \leq i \leq p}  \left| \bphi_i^\top  \left(\widehat{\mathbf{E}}_{\bx}  \widehat{\mathbf{E}}_{\bx}^\top -\mathbf{E}_{\bchi} \mathbf{E}_{\bchi}^\top\right) \right|_2 $ appearing in \eqref{eq:IV_bound} and \eqref{eq:VI_bound} can be decomposed in the following way
\begin{align*}
\max_{1 \leq i \leq p} \left| \bphi_i^\top \left(\widehat{\mathbf{E}}_{\bx}  \widehat{\mathbf{E}}_{\bx}^\top -\mathbf{E}_{\bchi} \mathbf{E}_{\bchi}^\top\right) \right|_2 &\leq \max_{1 \leq i \leq p} \left| \bphi_i^\top \left(\widehat{\mathbf{E}}_{\bx}   -\mathbf{E}_{\bchi} \bO\right)\widehat{\mathbf{E}}_{\bx}^\top \right|_2 + \max_{1 \leq i \leq p} \left| \bphi_i^\top \mathbf{E}_{\bchi} \bO\left(\widehat{\mathbf{E}}_{\bx}   -\mathbf{E}_{\bchi} \bO\right)^\top \right|_2 \\
&= VII + VIII \, .
\end{align*}
Then by \Cref{lem:eigenspace_row} and \Cref{lem:eigen_space_gap_bound}.\ref{eigen_space_gap_bound_2} we have
\begin{align*}
VII  &\leq \max_{1 \leq i \leq p} \left|  \left(\widehat{\mathbf{E}}_{\bx}   -\mathbf{E}_{\bchi} \bO\right)_i \right|_2  =  O_P\left( \frac{1}{\sqrt{p}}\left(\psi_{n,p} \vee \frac{1}{\sqrt{p}} \right)\right)\\
VIII &\leq \max_{1 \leq i \leq p} \left|(\mathbf{E}_{\bchi} \bO)_i\right|_2 \left\| \left(\widehat{\mathbf{E}}_{\bx}   -\mathbf{E}_{\bchi} \bO\right) \right\|_2 =  O_P\left( \frac{r^{3/2}}{\sqrt{p}}\left(\psi_{n,p} \vee \frac{1}{p} \right)\right)
\end{align*}
and therefore
\begin{gather}
    \max_{1 \leq i \leq p} \left| \bphi_i^\top \left(\widehat{\mathbf{E}}_{\bx}  \widehat{\mathbf{E}}_{\bx}^\top -\mathbf{E}_{\bchi} \mathbf{E}_{\bchi}^\top\right) \right|_2 = O_P\left( \frac{1}{\sqrt{p}}\left(\psi_{n,p} \vee \frac{1}{\sqrt{p}} \right)\right) \label{eq:eigen_product_row_bound} \, .
\end{gather}
Returning to the bounds \eqref{eq:IV_bound} and \eqref{eq:VI_bound}, using \eqref{eq:eigen_product_row_bound} we have
\begin{align*}
    III &\leq \max_{1 \leq i,j \leq p}\left| \bphi_i^\top \left(\widehat{\mathbf{E}}_{\bx}  \widehat{\mathbf{E}}_{\bx}^\top -\mathbf{E}_{\bchi} \mathbf{E}_{\bchi}^\top\right) \right|_2 \left\|\widehat{\bGamma}_\bx(\tau,h)\right\|_2 \left\|\widehat{\mathbf{E}}_{\bx}\right\|_2  \left|\widehat{\mathbf{E}}_{\bx}^\top \bphi_j \right|_2 \\
    &\overset{(1)}{=} O_P \left( \left( \frac{1}{\sqrt{p}} \left(\psi_{n,p} \vee \frac{1}{\sqrt{p}} \right)\right)p\left(1 + \psi_{n,p} \right) \left(\frac{\sqrt{r}}{\sqrt{p}}\right)\right) \\
    &\overset{(2)}{=} O_P \left( \sqrt{r} \left(\psi_{n,p} \vee \frac{1}{\sqrt{p}} \right)\right) \, ,
\end{align*}
where for step $(1)$ we used that $\left\|\widehat{\bGamma}_\bx(\tau,h)\right\|_2 \leq 
p\left[ \left|\bGamma_\bx(h) \right|_{\infty} + \left|\widehat{\bGamma}_\bx(\tau,h) - \bGamma_\bx(h) \right|_{\infty} \right]
$, and that $\left|\bGamma_\bx(h) \right|_{\infty}$ is constant i.e $O(1)$. And step $(2)$, has used that $\psi_{n,p} \rightarrow 0$ as $n,p \rightarrow \infty$. With similar steps we can then show that
\begin{align*}
    V = O_P \left( \sqrt{r} \left(\psi_{n,p} \vee \frac{1}{\sqrt{p}} \right)\right) \, .
\end{align*}
Now, since the $O_P$ rates for $III$ and $V$ are the same, and are larger than the $O_P$ rate for $IV$, and $II$, by returning to the decomposition \eqref{eq:common_var_error_decomp}, we obtain the bound
\begin{gather*}
    \left|\widehat{\bGamma}_{\bchi}(h) - \bGamma_{\bchi}(h)\right|_{\infty} = O_P \left( \psi_{n,p} \vee \frac{1}{\sqrt{p}} \right) \, , \, \text{and hence} \\
     \left|\widetilde{\bGamma}_{\bxii}(h) - \bGamma_{\bxii}(h)\right|_{\infty} = O_P\left( \psi_{n,p} \vee \frac{1}{\sqrt{p}} \right) \, .    
\end{gather*}
\end{proof}

\begin{lemma} \label{lem:eig_est_rows_bound}
\begin{gather*}
    \max_{1 \leq i \leq p} \left| \bphi_i^\top \left(\widehat{\mathbf{E}}_{\bx}  \widehat{\mathbf{E}}_{\bx}^\top -\mathbf{E}_{\bx} \mathbf{E}_{\bx}^\top\right) \right|_2 = O_P\left( \frac{1}{\sqrt{p}}\left(\psi_{n,p} \vee \frac{1}{\sqrt{p}} \right)\right)  \, .
\end{gather*}
\end{lemma}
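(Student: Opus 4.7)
The plan is to reduce to the already-established bound \eqref{eq:eigen_product_row_bound} via a triangle inequality, treating the new term as a deterministic perturbation of population quantities. Specifically, I would write
\begin{align*}
\max_{1 \leq i \leq p} \left| \bphi_i^\top \left(\widehat{\mathbf{E}}_{\bx}  \widehat{\mathbf{E}}_{\bx}^\top -\mathbf{E}_{\bx} \mathbf{E}_{\bx}^\top\right) \right|_2
&\le \max_{1 \leq i \leq p} \left| \bphi_i^\top \left(\widehat{\mathbf{E}}_{\bx}  \widehat{\mathbf{E}}_{\bx}^\top -\mathbf{E}_{\bchi} \mathbf{E}_{\bchi}^\top\right) \right|_2 \\
&\quad + \max_{1 \leq i \leq p} \left| \bphi_i^\top \left(\mathbf{E}_{\bchi} \mathbf{E}_{\bchi}^\top -\mathbf{E}_{\bx} \mathbf{E}_{\bx}^\top\right) \right|_2.
\end{align*}
The first summand is exactly $O_p\big(\tfrac{1}{\sqrt p}(\psi_{n,p}\vee\tfrac{1}{\sqrt p})\big)$ by \eqref{eq:eigen_product_row_bound}, so only the second, purely deterministic summand needs attention, and it suffices to show that it is $O(1/p)$, which is absorbed into the target rate through the $\tfrac{1}{\sqrt p}\cdot\tfrac{1}{\sqrt p}$ branch.

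To handle the deterministic term, I would imitate the proof of \Cref{lem:eigen_space_gap_bound}\ref{eigen_space_gap_bound_2}, applying Theorem~2 of \citet{Yu2015} iteratively to the pair $(\bGamma_\bx,\bGamma_\bchi)$ in place of $(\widehat{\bGamma}_\bx(\tau),\bGamma_\bchi)$. This produces a diagonal orthogonal matrix $\bO'$ (with $\pm1$ entries) such that
\begin{equation*}
\|\mathbf{E}_\bx - \mathbf{E}_\bchi \bO'\|_F \;\le\; \frac{r \,\|\bGamma_\bx - \bGamma_\bchi\|_2}{\min\{\min_{1\leq i \leq r-1}(\mu_{\bchi,i}-\mu_{\bchi,i+1}),\, \mu_{\bchi,r}\}}.
\end{equation*}
By \Cref{assump_fac_adj:8}, $\bGamma_\bx - \bGamma_\bchi = \bGamma_\bxii(0)$ has spectral norm bounded by a constant, thanks to \eqref{eq:gamma:xi:bound}. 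Together with the pervasiveness in \Cref{assump:fac_adj_fac_loading}\ref{loading_assum}, which makes the denominator of order $p$, this yields $\|\mathbf{E}_\bx - \mathbf{E}_\bchi \bO'\|_F = O(1/p)$.

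With that in hand, I would decompose
\begin{equation*}
\mathbf{E}_\bx \mathbf{E}_\bx^\top - \mathbf{E}_\bchi \mathbf{E}_\bchi^\top
= (\mathbf{E}_\bx - \mathbf{E}_\bchi \bO')\mathbf{E}_\bx^\top + \mathbf{E}_\bchi \bO' (\mathbf{E}_\bx - \mathbf{E}_\bchi \bO')^\top
\end{equation*}
and bound the $i$-th row of each piece separately. For the first piece, I use $|\bphi_i^\top(\mathbf{E}_\bx - \mathbf{E}_\bchi \bO')\mathbf{E}_\bx^\top|_2 \le \|\mathbf{E}_\bx - \mathbf{E}_\bchi \bO'\|_F \|\mathbf{E}_\bx\|_2 = O(1/p)$. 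For the second, I use $|\bphi_i^\top \mathbf{E}_\bchi|_2 = O(1/\sqrt p)$ via \Cref{lem:fnets_E.20}\ref{E.20.1}, together with $\|\bO'\|_2 = 1$ and $\|\mathbf{E}_\bx - \mathbf{E}_\bchi \bO'\|_2 \le \|\mathbf{E}_\bx - \mathbf{E}_\bchi \bO'\|_F = O(1/p)$, giving $O(1/p^{3/2})$. Collecting these yields the deterministic $O(1/p)$ bound, completing the proof.

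The main obstacle is locating the right identifiability leverage: one must exploit that the gap between the $r$-th and $(r{+}1)$-th eigenvalues of $\bGamma_\bchi$ grows linearly in $p$, while $\|\bGamma_\bxii(0)\|_2$ stays bounded, so that the ratio in Davis--Kahan is $O(1/p)$ rather than $O(1)$. This is purely a consequence of the pervasiveness assumption and \eqref{eq:spec_bounds}, and no probabilistic argument is needed beyond what is already invoked for the first summand.
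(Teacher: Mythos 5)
Your proposal is correct and follows essentially the same route as the paper: the same triangle-inequality split, the first summand handled by \eqref{eq:eigen_product_row_bound}, and the second, deterministic summand controlled at $O(1/p)$ via Theorem~2 of \citet{Yu2015} applied to the pair $(\bGamma_\bx,\bGamma_\bchi)$ with the eigen-gap of order $p$ and $\|\bGamma_\bx-\bGamma_\bchi\|_2=\|\bGamma_\bxii(0)\|_2=O(1)$. The only cosmetic difference is that you bound the second term row-by-row through the decomposition $(\mathbf{E}_\bx-\mathbf{E}_\bchi\bO')\mathbf{E}_\bx^\top+\mathbf{E}_\bchi\bO'(\mathbf{E}_\bx-\mathbf{E}_\bchi\bO')^\top$, whereas the paper simply dominates the max row norm by $2\|\mathbf{E}_\bchi-\mathbf{E}_\bx\bO\|_2$; both yield the same $O(1/p)$ contribution absorbed by the $p^{-1/2}\cdot p^{-1/2}$ branch of the target rate.
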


\begin{proof}
By the triangle inequality we have
\begin{align*}
     \max_{1 \leq i \leq p} \left| \bphi_i^\top \left(\widehat{\mathbf{E}}_{\bx}  \widehat{\mathbf{E}}_{\bx}^\top -\mathbf{E}_{\bx} \mathbf{E}_{\bx}^\top\right) \right|_2 &\leq 
     \max_{1 \leq i \leq p} \left| \bphi_i^\top \left(\widehat{\mathbf{E}}_{\bx}  \widehat{\mathbf{E}}_{\bx}^\top -\mathbf{E}_{\bchi} \mathbf{E}_{\bchi}^\top\right) \right|_2 + \max_{1 \leq i \leq p} \left| \bphi_i^\top \left(\mathbf{E}_{\bchi}  \mathbf{E}_{\bchi}^\top -\mathbf{E}_{\bx} \mathbf{E}_{\bx}^\top\right) \right|_2 \\
     &= I + II \, .
\end{align*}
Now from \eqref{eq:eigen_product_row_bound} we have
\begin{align*}
    I = O_P\left( \frac{1}{\sqrt{p}}\left(\psi_{n,p} \vee \frac{1}{\sqrt{p}} \right)\right).
\end{align*}
Then for any orthogonal matrix $\bO$ we have that
\begin{align*}
     II \leq \left\|
     \left(\mathbf{E}_{\bchi}  \mathbf{E}_{\bchi}^\top -\mathbf{E}_{\bx} \mathbf{E}_{\bx}^\top\right) \right\|_2 \leq 2 \left\|
     \mathbf{E}_{\bchi} -\mathbf{E}_{\bx} \bO \right\|_2 \, ,
\end{align*}
and by Theorem 2 from \citet[][]{Yu2015} we have,
\begin{align*}
    \left\|
     \mathbf{E}_{\bchi} -\mathbf{E}_{\bx} \bO \right\|_2 \leq \frac{\left\|\bGamma_{\bx} - \bGamma_{\bchi}\right\|_2}{\mu_{{\bchi},r}} = O_P\left(\frac{1}{p}\right) \, ,
\end{align*}
putting together the bounds on $I$ and $II$ the conclusion follows.
\end{proof}

\subsection{Proof of \Cref{prop:FNETS_trunc_consistency}}

Throughout, we suppose \Cref{assump:VAR} and that $\mathcal{E}_{n, p}$ holds.
We denote the estimands of $\widehat{\bGamma}, \widehat{\bgamma}$, by 
\begin{align*}
\bGamma = \begin{bmatrix} \bGamma_{\bxii}(0) & \dots & \bGamma_{{\bxii}}(d-1) \\
\vdots & \ddots & \vdots \\
\bGamma_{{\bxii}}(d - 1)^\top & \dots & \bGamma_{{\bxii}}(0) \end{bmatrix} \text{ \ and \ }
\bgamma = \begin{bmatrix}
\bGamma_{{\bxii}}(1) \\
\bGamma_{{\bxii}}(2) \\
\vdots \\
\bGamma_{{\bxii}}(d)
\end{bmatrix},  
\end{align*}
respectively.
For all $\mathbf{v} \in \mathbb{R}^{dp}$, we have
\begin{align}
    \bbv^\top \widehat{\bGamma} \bbv  &= |\bbv^\top\bGamma\bbv - \bbv^\top(\bGamma -\widehat{\bGamma} )\bbv|
    \geq |\bbv^\top\bGamma\bbv| - |\bbv^\top(\widehat{\bGamma} -\bGamma )\bbv| \nonumber \\
    &\geq \Lambda_{\min }(\bGamma)|\bbv|_2^2 - |\widehat{\bGamma} - \bGamma|_{\infty}|\bbv|_1^2 \nonumber \\
    &\geq \Lambda_{\min }(\bGamma)|\bbv|_2^2 - C \left(\psi_{n,p} \vee p^{-1/2}\right)|\bbv|_1^2 \, .
    \nonumber
\end{align}
Further, by Proposition 2.3 in \citet{Basu2015}, we have that, for any $\ell \geq 1$
\begin{align*}
2 \pi \mathfrak{m}\left(\boldsymbol{\Sigma}_{\mathbf{x}}\right) \leq \Lambda_{\min }\left(\Upsilon_{\ell}^X\right) \leq \Lambda_{\max }\left(\Upsilon_{\ell}^X\right) \leq 2 \pi \mathcal{M}\left(\boldsymbol{\Sigma}_{\mathbf{x}}\right) \, ,
\end{align*}
where $\Upsilon_{\ell}^X = \operatorname{Cov}\left( \operatorname{vec}\left(\boldsymbol{\mathcal{Z}}^\top \right), \operatorname{vec}\left(\boldsymbol{\mathcal{Z}}^\top \right)\right)$, with $\boldsymbol{\mathcal{Z}} = \left[\bX_{\ell}, \dots , \bX_1\right]^\top$. Therefore, for $\ell=d$ we have $\Lambda_{\min}(\bGamma) \geq 2 \pi \mathfrak{m}\left(\boldsymbol{\Sigma}_{\mathbf{x}}\right)$. Furthermore, by \eqref{eq:gamma:xi:bound}, we have that $\Lambda_{\min}(\bGamma) \geq 2 \pi \mathfrak{m}\left(\boldsymbol{\Sigma}_{\mathbf{x}}\right) > 2\pi m_{\bxii} > 0$, where $m_{\bxii}$ is defined in \eqref{eq:gamma:xi:bound}.
Hence we have
\begin{align}
    \bbv^\top \widehat{\bGamma} \bbv \geq 2\pi m_{\bxii} |\bbv|_2^2 - C \left(\psi_{n,p} \vee p^{-1/2}\right)|\bbv|_1^2.
\label{eq:re}
\end{align}
Next, recall the notation in \eqref{eq:robust_var_lasso}, with $\widehat{\mathbb{A}} = \left[\widehat{\bbeta}_{1}, \dots, \widehat{\bbeta}_{p}\right]^\top$, we now additionally denote the rows of the true parameters by $\mathbb{A} = \left[\bbeta^{*}_{1}, \dots, \bbeta^{*}_{p}\right]^\top$. 
Then by the Yule-Walker equation, we have $\bgamma_{(j)} = \bGamma\bbeta^{*}_{j}$, such that
\begin{align*}
    |\widehat{\bgamma} _{(j)} - \widehat{\bGamma} \bbeta^{*}_{j}|_{\infty} &\leq |\widehat{\bgamma} _{(j)} - {\bgamma}_{(j)}|_{\infty} + |\bGamma\bbeta^{*}_{j} - \widehat{\bGamma} \bbeta^{*}_{j}|_{\infty} \nonumber \\
    &\leq |\widehat{\bgamma} _{(j)} - {\bgamma}_{(j)}|_{\infty} + |\bGamma -\widehat{\bGamma} |_{\infty} |\bbeta^{*}_{j}|_1 \nonumber \\
    &\leq C(|\bbeta^{*}_{j}|_1 + 1) \left(\psi_{n,p} \vee \frac{1}{\sqrt{p}}\right)  \, ,
\end{align*}
on $\mathcal{E}_{n, p}$, and thus
\begin{align}
    |\widehat{\bgamma} _{(j)} - \widehat{\bGamma} \bbeta^{*}_{j}|_{\infty} &\leq  C (\|\mathbb{A}\|_{\infty} + 1)\left(\psi_{n,p} \vee \frac{1}{\sqrt{p}}\right) \quad \forall \, 1 \leq j \leq p \, . \label{eq:db_bound}
\end{align}

Equipped with \eqref{eq:re} and~\eqref{eq:db_bound}, the remainder of the proof follows that of Proposition 4.1 from \citet{Basu2015}.
By definition of $\widehat{\bbeta}$ we have that
\begin{gather*}
    -2 \widehat{\bbeta}_{j}^\top \widehat{\bgamma} _{(j)} + \widehat{\bbeta}_{j}^\top \widehat{\bGamma} \widehat{\bbeta}_{j} + \lambda |\widehat{\bbeta}_{j}|_1 \leq -2\bbeta^\top\widehat{\bgamma} _{(j)} + \bbeta^\top \widehat{\bGamma} \bbeta + \lambda |\bbeta|_1 \quad
\end{gather*}
for any $\bbeta \in \mathbb{R}^{pd}$.
Setting $\bbeta = \bbeta^*_{j}$ and rearranging the above inequality, we get
\begin{gather*}
    \bv^\top \widehat{\bGamma} \bv \leq 2 \bv^\top \left(\widehat{\bgamma} _{(j)} - \widehat{\bGamma} \bbeta^*_{j}\right) + \lambda \left\{|\bbeta^*_{j}|_1 - |\bbeta^*_{j} + \bv|_1\right\} \, ,
\end{gather*}
where $\bv = \widehat{\bbeta}_{j} - \bbeta^*_{j}$.

Now let $S$ denote the index of non-zero entries of $\bbeta^*$, and $\bv_S$ the sub-vector of $\bv$ containing the elements indexed by the set $S$, we have
\begin{align}
    \bv^\top \widehat{\bGamma}  \bv 
    &\leq 2 |\bv|_1 \left|\widehat{\bgamma} _{(j)} - \widehat{\bGamma} \bbeta^*_{j}\right|_{\infty} + \lambda \left\{|\bv_S|_1 - |\bv_{S^c}|_1\right\} \nonumber\\
    &\leq  |\bv|_1 \frac{\lambda}{2}  + \lambda \left\{|\bv_S|_1 - |\bv_{S^c}|_1\right\} \label{eq:using_db_bound}\\
    &\leq  \frac{\lambda}{2}\left(|\bv_S|_1 + |\bv_{S^c}|_1 \right) + \lambda \left\{|\bv_S|_1 - |\bv_{S^c}|_1\right\} \nonumber\\
    &\leq \frac{\lambda}{2}\left(3|\bv_S|_1 - |\bv_{S^c}|_1\right) \label{eq:sparse_supp_bound} \\
    &\leq \frac{\lambda}{2}\left(4|\bv_S|_1 - |\bv|_1\right) \leq 2\lambda |\bv|_1 \nonumber \, ,
\end{align}
where \Cref{eq:using_db_bound} follows from by our choice of $\lambda$, and the deviation bound in \eqref{eq:db_bound}. In summary, we have that $\bv^\top \widehat{\bGamma}  \bv \leq 2\lambda |\bv|_1$, and from \eqref{eq:sparse_supp_bound} we have
\begin{gather}
        |\bv_{S^c}|_1 \leq 3|\bv_S|_1 \nonumber\\
        \implies |\bv|_1 \leq 4|\bv_S|_1 \leq 4\sqrt{s_{0,j}} |\bv|_2 \label{eq:v1_v2_bound}
\end{gather}
Then from~\eqref{eq:re}, a lower bound for $\bv^\top \widehat{\bGamma}  \bv$ can be obtained as
\begin{align}
    \bv^\top \widehat{\bGamma}  \bv &\geq 2\pi m_{\bxii} |\bv|_2^2 - C \left(\psi_{n,p} \vee p^{-1/2}\right) |\bv|_1^2 \nonumber\\
    &\geq  2\pi m_{\bxii} |\bv|_2^2 - C \left(\psi_{n,p} \vee p^{-1/2}\right) 16 s_{\text {\upshape max}} |\bv|_2^2 \label{eq:low_bound_quad_tol_sparse} \\
    &\geq \pi m_{\bxii}|\bv|_2^2 \, ,\label{eq:low_bound_quad}
\end{align}
where \eqref{eq:low_bound_quad_tol_sparse} follows from \eqref{eq:v1_v2_bound}, and \eqref{eq:low_bound_quad} from the assumption that $s_{\text {\upshape max}} C \left(\psi_{n,p} \vee p^{-1/2}\right) \leq \pi m_{\bxii}/16$. Now combining the lower and upper bounds we have
\begin{gather*}
    0 < \frac{\alpha}{2}|\bv|_2^2 \leq \bv^\top \widehat{\bGamma}  \bv \leq 2\lambda |\bv|_1 \leq 8\lambda \sqrt{s_{0,j}} |\bv|_2 \, ,
\end{gather*}
which then gives us
\begin{align*}
    |\widehat{\bbeta}_{(j)} - \bbeta^{*}_{j}|_2 \leq \frac{8\lambda \sqrt{s_{0, j}}}{\pi m_{\bxii}} \,  \quad \text{and} \quad   |\widehat{\bbeta}_{(j)} - \bbeta^{*}_{j}|_1 \leq \frac{32\lambda s_{0, j}}{\pi m_{\bxii}} \, ,
\end{align*}
and hence the concluding bounds
\begin{align*}
    |\widehat{\mathbb{A}} - \mathbb{A}|_2 \leq \frac{8\lambda \sqrt{s_{0}}}{\pi m_{\bxii}} \,  \quad \text{and} \quad   |\widehat{\mathbb{A}} - \mathbb{A}|_1 \leq \frac{32\lambda s_{0}}{\pi m_{\bxii}} \, .
\end{align*}

\end{document}